\newtheorem{theorem}{Theorem}
\newtheorem{assumption}{Assumption}
\newtheorem{claim}[theorem]{Claim}
\newtheorem{corollary}{Corollary}
\newtheorem{lemma}{Lemma}
\newtheorem{proposition}{Proposition}
\newcommand{\calP}{{\cal P}}
\newcommand{\ep}{\varepsilon}
\newcommand{\dN}{{{\bf N}}}
\newcommand{\dR}{{{\bf R}}}
\newcommand{\E}{{{\bf E}}}
\newcommand{\prob}{{{\bf P}}}
\newcommand{\calC}{{\cal C}}
\newcommand{\calE}{\mathcal{E}}
\newcommand{\calF}{\mathcal{F}}
\newcommand{\calH}{\mathcal{H}}
\newcommand{\calU}{\mathcal{U}}
\newenvironment{proof}[1][Proof]{\textbf{#1.} }{\ \rule{0.5em}{0.5em}}
\newcounter{figurecounter}
\begin{document}

%\title{Stationary social learning with random sampling}

\title{Stationary social learning in a changing environment}
\author{Rapha\"el L\'evy\thanks{HEC Paris.}, Marcin P\k{e}ski\thanks{University of Toronto}, Nicolas Vieille\thanks{HEC Paris.}}

\maketitle

\begin{abstract}
We consider social learning in a changing world. With changing states, societies can remain responsive only if agents regularly act upon fresh information, which drastically limits the value of observational learning. When the state is close to persistent, a consensus whereby most agents choose the same action typically emerges. However, the consensus action is not perfectly correlated with the state, because societies exhibit inertia following state changes. Phases of inertia may be longer when signals are more precise, even if agents draw large samples of past actions, as actions then become too correlated within samples, thereby reducing informativeness and welfare.
\end{abstract}

\section{Introduction}
The literature on social learning has extensively studied how agents can learn relevant information from observing others' actions, in a variety of situations. In particular, the literature has focused on two important related questions: first, when do informational cascades arise, i.e., agents simply imitate whatever their predecessor does, thereby ignoring their private information and  %which blocks the arrival of new information and possibly freezes
freezing the learning dynamics; %into repeatedly playing a wrong action; 
second, under what conditions does observational learning allow to aggregate information so that agents can safely herd to play the right action. The literature has been extremely fruitful in answering these questions. However, %it has left largely unexplored
little attention has been drawn to the possibility that the underlying state of nature might change over time.\footnote{Notable exceptions are \cite{OMS, acemoglu, tamuz, Golub}. See below for a detailed account on how our work relates to these papers.} Nevertheless, in most applications where social learning matters, e.g., technology adoption, restaurants, investment decisions, the optimal course of action is likely to change over time, and it is important to assess the aggregative properties of social learning in such a context. %For instance, if the cook or the manager of a restaurant is replaced without consumers observing it, how long will it take on average before this change becomes embedded into social information?

In addition, the possibility of state changes raises new interesting questions both from applied and theoretical perspectives. 
For instance, the dynamics of learning sheds light on how societies react to changes in the environment and on the process by which a technology or dominant consensus may be replaced by a new one.\footnote{Examples of  change in the dominant technology abound, ranging from the ``war of the currents" in the late 19th century, the ``quartz crisis" in watchmaking in the 1970s to Facebook overtaking MySpace as the dominant social network.} In addition to technology adoption, a noteworthy application is social contagion in collective action.\footnote{It has been documented both theoretically and empirically that collective action works like a cascade \citep{granovetter,lohmann}. \cite{aidt}, using data from the Swing riots in 19th century England, empirically establish the key role played by local contagion in a context where access to large-scale news was essentially limited.} %In particular, we show that paradigm shifts are characterized by an initial phase of inertia in which the old consensus is questioned only marginally followed by a snowball effect whereby the new consensus is reached very quickly once inertia stops.O
%In this respect, our result that paradigm shifts take long to take off but are then very swift sheds light on domino effects that are typical of mass political and social movements (e.g., Arab Spring...).
Indeed, mass political and social movements often feature a \textit{domino effect} whereby contagion rapidly accelerates, as was exemplified by the Arab Spring in the early 2010s. An interesting question that we explore in this paper is whether such domino effects can reflect the dynamics of social learning following unobservable changes in the environment. 
%\footnote{Provide examples: quartz crisis? Windows? connect to political economy literature on uprisings/strikes etc? Domino theory in politics. Domino effect. Behavioral contagion. Snowball effect. Ripple effect}
%
From a theoretical perspective, %as it shifts the set of questions of interest. For instance, cascades, be they correct or incorrect, must end whenever they exist, as agents must realize that an old consensus at some point becomes likely to be outdated, so that one cannot fully rely on information provided by observational learning. 
%
%Social information must be replenished. 
%
the possibility of state changes creates a tension between information aggregation and the need to be responsive to changes in the environment.
%Indeed, efficiency is measured by the fact that a large fraction of the population playing the right action (the optimal action in the current state). In turn, information is efficiently aggregated whenever observing others? actions is sufficient to induce agents to take the right action. In canonical social learning models, such efficient aggregation takes the form of an efficient cascade. That is, all agents copy whatever action they observe thereby relying only on observational learning, and this is the right action. With a changing state, we point to an intrinsic contradiction between aggregation and efficiency. To Illustrate this, s
Suppose, for instance, that all agents at some point play the correct action. Knowing this, %that the action they observe is perfectly correlated with the past state, 
new agents should imitate the action they observe as long as the probability that the state changes is not too high, thereby acting as in an informational cascade. But such a cascade cannot last for too long for %the observed action is perfectly correlated with the past state but not the current one. S
sooner or later, the state will change, and the information value of past actions will deplete. Therefore, efficiency requires society to be able not only to reach a consensus in which agents all take the same (correct) action, but also to collectively react to a potential change in the environment as swiftly as possible. Such reactivity imposes that public history be regularly replenished by fresh information coming from agents who act on their private signals. This essentially limits the extent of social learning, by capping how much agents can rely on others. The goal of this paper is to evaluate how this tension between information aggregation and the responsiveness of society to novelty shapes equilibrium welfare. 
%Instead, it must always be the case that at least some agents rely on their private signals in order to take into account the possibility for the state to change. 

\medskip

To do this, we develop a model with a continuum of agents where (a) the state of nature follows a Markov chain, (b) agents observe a finite sample of past actions, and (c) agents have access to a (possibly costly) signal that is informative about the current state. Though unnecessary for our results, the presence of costly information acquisition makes social learning even more desirable. Indeed, the potential welfare gains from social learning can then result not only from better-informed decisions but also from savings on information acquisition costs. Aggregation of information %through observational learning
in the presence of costly information acquisition raises the classical Grossman-Stiglitz paradox: if information was perfectly (or sufficiently well) aggregated, then no one would have incentives to acquire information, and hence information could not be aggregated. As seen above, this logic is even reinforced in the presence of changing states because information must be acquired all along to ensure that society remains reactive to potential state changes. %this paper, we precisely investigate whether observational learning allows welfare gains in a steady state environment  where the 
%
%  COSTLY INFO ACQ MAKES SOCIAL LEARNING MORE CRITICAL AS ONE COULD SAVE ON INFO ACQ COST IF SOCIAL INFO IS SUFFICIENTLY INFORMATIVE. BUT OF COURSE RAISES FREE RIDING. SINCE THIS IS AN IMPORTANT DIFFERENCE WITH MEG"s PAPER WE MAY WANT TO EMPHASIZE THIS POINT]
We investigate the interplay between information aggregation and information acquisition in steady-state equilibria where behavior %(information acquisition decision and action) 
does not depend on calendar time, and % -- i.e., two agents observing the same sample composition must choose the same course of action. %Steady state equilibria are a natural equilibrium concept in such an environment where any particular trajectory of beliefs or cascade must by essence be transient. 
%We 
we show that introducing a positive probability that the state changes, no matter how small, has a dramatic impact on the extent of social learning. 

\medskip
%First, we characterize equilibrium welfare when
We first analyze the simpler cases where agents sample one or two past actions for which we can fully characterize equilibrium behavior. At a steady-state equilibrium, agents acquire information with positive probability irrespective of their sample since otherwise public information would at some point become obsolete. More specifically, if agents %who observe a unanimous sample 
were not to acquire information (or to ignore their private signals in case they come for free), we show that the dynamics of the population would be inexorably attracted to a consensus where all agents play the same action regardless of the state, so such a consensus could not be informative.
 %A consequence is that, even when only a small fraction of the population acquires information, % (which is the case when the state is close to persistent),  the welfare of all agents cannot exceed the payoff of an agent with no access to observational learning. 
 As such, our model exhibits social learning in that past actions convey information, but they are not sufficiently informative to generically allow information aggregation, even when the state is arbitrarily persistent. % fully herd on past actions and thus save on the cost of information acquisition. 
This comes in strong contrast with \cite{BF}, who show that in a fixed-state model social learning eventually aggregates information as soon as agents sample two actions, even when signals have bounded strength. Not only is this result not robust to changing states, but in our case observing two actions does no better than observing only one action, and possibly does strictly worse. 
 % When private signals are binary, equilibrium welfare with samples of size and 2 are the same and correspond to the welfare of agent who has no access to social information of any sort. Even more surprisingly, with general signals, welfare exhibits a non-monotonicity in the size of the sample: 
 %observing a sample of two past actions is strictly better in terms of welfare than observing nothing, but strictly worse than observing only one single action from the past. 
 %
%Indeed, with samples of size two, the expected welfare is driven down by the presence of \textit{confused} agents who draw a sample composed of two different actions.
%
%\footnote{This contrasts with the counterpart results of \cite{BF}, who show that with a persistent state and free signals, there is asymptotic co nvergence to the correct action when agents sample two or more actions, while there is an incorrect cascade when they sample only one.  THIS SHOULD PROBABLY BE IN THE TEXT AND NOT HERE.} 
 %When private signals are binary, we show that the equilibrium welfare with observational learning equals that of an agent who has no access to social information of any sort.
\medskip

For larger sample sizes, we focus on the case where the state is almost fully persistent. We first characterize the aggregate behavior of the population and establish that a consensus typically emerges. That is, in any period, most agents are most likely to play the same action, but this consensus action changes over time as a result of the state dynamics. %This implies that, after state changes, the aggregate behavior of the society %dynamics of actions 
Transitions to a new consensus thus exhibit a \textit{snowball} effect: once the fraction of the population playing the correct action in the new state takes off, convergence to the new consensus is extremely swift. This does not necessarily imply that the equilibrium is efficient, however, because there is a social reaction lag: society initially remains stuck for some time in the old consensus before it swiftly transitions to the new one. The length of this phase of inertia, or alternatively, the correlation between the consensus action and the state, determines the efficiency of the equilibrium. In other words, social learning makes it possible to reach a consensus, but the key question of whether this consensus is sufficiently informative to allow future players to herd %(hence possibly save on the cost of acquiring information) 
remains. 

\medskip

In a third stream of results, we investigate this question %whether or not social learning is sufficient to induce agents to herd most of the time, hence save on the cost of acquiring information, 
%when the state is close to persistent, 
and outline a simple sufficient condition under which %show that when private signals are perfect (or close to), 
the consensus is not informative enough, so that the equilibrium welfare is no higher than if agents sampled only one single past action. 
%
%Indeed, when the state is close to persistent, agents almost surely observe  a unanimous sample.
%
%In order for welfare to be strictly larger than with no observational learning, it 
%We are thus looking for conditions under which
For the consensus action to be sufficiently correlated with the current state,  convergence to the consensus should be sufficiently slow to allow the population to quickly swing to the opposite consensus, should the state change. This requires that actions within samples be not too correlated and that observing at least a dissenting action within one's sample is sufficiently likely. One noteworthy case where this systematically fails to occur is perfect signals. In such a case, as soon as a majority of the population plays one action, the only possible source of dissent comes from people who acquire information, but in the case of perfect signals, these agents observe the same (correct) signal and all play the same (consensual) action.
Thus, the resulting consensus cannot be sufficiently informative. 
%and must accordingly acquire information with positive (though small) probability. 
When signals are perfect, or binary and close to perfect, the equilibrium welfare is then equal to the welfare of a single decision maker with no access to observational learning.  

However, when signals are sufficiently imprecise, it is possible to find an equilibrium in which agents who observe a unanimous sample herd and thus save on the cost of acquiring information. %\footnote{We explicitly exhibit such an equilibrium in case each agent samples 3 actions from the past.} 
This implies that, holding the value of acquiring information fixed, an increase in the precision of private information may depress welfare. 
This result highlights the dual role played by information acquisition in counterbalancing the forces of imitation. On the one hand, once a consensus is in place, the arrival of fresh and accurate information helps remain reactive to potential changes; on the other hand, as the consensus is being built, the arrival of fresh and contrarian information is instrumental in maintaining enough diversity of actions, which prevents the population from being later stuck in some irreversible %absorbing 
consensus. Consequently, private signals must be positively correlated with state%for the first effect to be at play
, but not too much. % for the second effect to operate. 
In a related vein, we show that welfare may be improved when agents observe actions from the further past. This tends to decrease the correlation of actions within samples, thereby fostering learning and hence opportunities for herding.  

\medskip

Overall, except in the special case in which signals are free and have unbounded strength, equilibrium welfare falls short of efficiency, even in the persistent limit, i.e., information is not fully aggregated. We show that this is an equilibrium feature. Indeed, a social planner with the power to dictate individual strategies could achieve a close to optimal steady-state welfare  where a negligible fraction of agents acquire information and the consensus action is (asymptotically) perfectly correlated with the state. Therefore, the possibility that the state changes exacerbates the wedge between the optimal %welfare that can be reached by a planner 
and the equilibrium welfare. %[REMOVE LAST SENTENCE?] % a dramatic impact on equilbrium welfare
%Therefore, information acquisition is central both to the extent that it regularly replenishes the public history with fresh information and because inaccurate signals allow to maintain enough diversity of actions, which prevents the population from being stuck in some absorbing consensus.
%Balances the force of imitation by bringing fresh information that is not only correlated with the state (which allows to swing to new consensus when needed) but not too much (which allows to maintain enough actions contrarian to the growing consensus)

\medskip

Our paper relates to the literature on observational learning pioneered by \cite{banerjee} and \cite{BHW}, and later by \cite{SS}.\footnote{See also \cite{SSsurvey} for a brief survey.} We assume that agents draw a finite sample of past actions, as in \cite{SSsampling} and \cite{BF}, who identify conditions on signals and sampling that lead to asymptotic learning. Compared to these papers, we allow the state to change, which drastically reduces the efficiency of social learning. As an illustration, \cite{BF} show that there is asymptotic convergence to the correct action when agents sample two or more actions. %, while there is an incorrect cascade when they sample only one. 
By contrast, we show that welfare is typically bounded away from efficiency when the state may change, even with an arbitrarily small probability. As is the case in our paper, \cite{BV} and \cite{ali} consider a setup in which private signals are costly. Again, our focus on nonpersistent states shifts the focus away from the question of asymptotic learning that is central in their papers. 

In addition, our paper relates to a stream of papers that consider social learning in a changing world. \cite{OMS} show that cascades must then end in finite time, but arise for sure when the state is persistent enough. %, a more recent stream of papers has started to investigate the properties of social learning in greater detail. 
In a different line of research, \cite{acemoglu} and \cite{tamuz} consider non-Bayesian models in which agents use a linear updating rule. 
Closest to our approach, \cite{Golub} and \cite{Meg} also develop stationary analyses of social learning. \cite{Golub} consider stationary equilibria in a Gaussian environment where agents in a network learn from their neighbors. They show that learning is improved when agents have heterogeneous neighbors who have access to signals of different precision. While such heterogeneity %regarding agents' signals or their position in a network 
is ruled out in our model where all agents are symmetric, we derive (counterintuitive) comparative statics results on the precision of signals that %are precisely driven by 
reflect the adverse welfare impact of an excessive correlation of actions, in line with their result. \cite{Meg} consider a Markovian environment where past actions may be misrecorded. % so that future agents sampling that action observe the wrong action. %Their stationary analysis bears similarities with ours in that misrecordings qualitatively has an impact similar to state changes. However, 
Their main focus is on whether agents put too much or too little weight on their private information compared to what a planner would do, while we investigate how the level of equilibrium welfare varies with sample size and the precision of signals. %Finally, we study more specifically the interplay between non-persistent states and costly information acquisition, which none of the above papers does. % [VERY HEAVY, TO BE STREAMLINED]

\medskip

This paper is organized as follows. We outline the essential ingredients of the model in Section \ref{sec prelim}. In  Section \ref{small}, we characterize equilibria for small samples, and then turn to the equilibrium analysis for arbitrary samples in the persistent limit case in Section \ref{general}. Section \ref{planner} addresses the planner's problem and Section \ref{concl} concludes.

\section{The model}\label{sec prelim}

 \subsection{States and actions}
We consider a social learning model in discrete time with binary actions and binary states. In each period, a continuum of short-lived agents choose an action $a$ from the set $A=\{0,1\}.$ The optimal action depends on the state of nature $\theta\in\Theta=\{0,1\}$: an agent obtains a payoff of 1 when his action matches the current state ($a=\theta),$ and a payoff of 0 otherwise.

Successive states over time $(\theta_t)$ follow a Markov chain. For simplicity, we assume this chain to be symmetric: for all $\theta\in \Theta$, $\prob(\theta_{t+1}\neq \theta\mid \theta_t= \theta)= \lambda,$ where $\lambda$ captures persistence in the state. States are \emph{iid} if $\lambda= \frac12$, and fully persistent if $\lambda =0$. We assume $\lambda\in \left( 0,\frac12\right):$  the state is persistent, %($\lambda<\frac12$) 
but not fully%persistent ($\lambda>0$)
.

\medskip

 %generates a payoff $\underline u_{\theta}<\overline u_{\theta}.$
%, that is, $u(\theta,\theta)> u(1-\theta,\theta):$ agents try to guess the current state.
%
%, and obtain a utility $u(a,\theta),$ where $\theta\in\Theta:=\{0,1\}$ denotes the state of nature
%But our model also stands apart, in that 

%Successive states are \emph{iid} if $\lambda=\frac12$, and fully persistent in the limit $\lambda\to 0$.  Throughout, we assume $\lambda\in \left( 0,\frac12\right)$. That is, the state is persistent $(\lambda<\frac12)$, but not fully persistent ($\lambda>0$).  

%At each $t$, a \emph{continuum} of short-lived agents (of unit mass 1) replace earlier ones. 

 \subsection{Timing}

The sequence of events unfolds as follows. In period $t,$ each new-born agent
\begin{enumerate}
\item observes (for free) a sample of $n$ actions drawn from the past,
\item decides whether to acquire additional information about $\theta_t$ at cost $c\geq 0,$
\item if information was acquired, observes a signal correlated with $\theta_t,$
\item picks an action $a\in\{0,1\}.$
\end{enumerate}
%,  next decides whether to get additional information, at cost $c\geq 0$.

 \subsection{Sampling}

We assume that each action within one's sample is randomly picked from some random earlier period $t-\tau$ ($\tau\geq 1$).\footnote{For the purpose of the steady-state analysis, we adopt the convenient abstraction of a doubly infinite timeline $t\in \mathbf{Z}$.} %picked uniformly at random from generation $t-\tau$.  
The lag $\tau$ follows a geometric distribution with parameter $\rho>0,$ i.e., the probability that an action is picked from generation $t-\tau$ is % $\prob(\tau=s)= 
$\rho(1-\rho)^{\tau-1}$ for each $\tau\geq 1$.\footnote{The only stationary recursive sampling process involves geometric weighting of the past, as argued by \cite{SSsampling}, so the focus on a geometric rate of decay is basically imposed in our steady-state analysis.} In the special case $\rho=1$, the sample is composed of actions played in the previous period only. %Only the action sampled is observed, and 
We assume that  $\tau$ is not observed, i.e., the vintages of the sampled actions are unknown. The sample composition can then be simply summarized by the count of each action within the sample. Actions are sampled independently within a sample, and across agents. Finally, we assume proportional sampling in that the probability of sampling one action in a given pool corresponds to the prevalence of this action in the pool.\footnote{It would be possible to consider alternative sampling procedures, as \cite{BF} and \cite{Meg} for instance do. However, we stick to proportional sampling for simplicity.} %also allow for alternative sampling methods.} 

\medskip

It is worth mentioning another narrative that leads to the same model. Under this alternative interpretation, agents are instead long-lived, and a fraction $\rho$ of the population is replaced by new agents in each period. Upon entering, each new agent observes a sample composed of $n$ actions chosen by agents in the current population, possibly acquires information on the current state, and then finally chooses an action. % This interpretation brings us closer to the search and matching literature 

 \subsection{Private signals and information acquisition}\label{sec SE}

We turn to the description of the information that agents can acquire, and derive their optimal information acquisition strategies. %All the material in this section is well-known. %we assume that agents all have access to a given statistical experiment that generates signals conditional on the current state. %Finally, the agent chooses an action. 
Following common practice (see, for instance, \cite{SS}), instead of describing signals using their conditional distributions given $\theta,$ we take an isomorphic view and identify signals with the posterior beliefs that they induce under the %steady state
uniform prior. % $p*=\frac12.$ %and the statistical experiment with the distribution of beliefs. 
Accordingly, we denote by $H_{\theta}$ the cdf of posterior beliefs when the state is $\theta.$ %Bayes' rule implies $\frac{dH_1(q)}{dH_0(q)}=\frac{q}{1-q}.$ 
We assume the distributions of signals to be symmetric across states, that is, $H_0(q)=1-H_1(1-q).$ In addition, since signals are informative, $H_1(q)<H_0(q)$ for some $q.$ Finally, we denote by $\underline q$ and $\overline q=1-\underline q$ the infimum and supremum of the  support of the unconditional distribution $H:= \frac12 H_0+\frac12 H_1$.%the signal generates a posterior below $q$ in state 1  

%$\bfH\in \Delta([0,1])$ the (unconditional) distribution of the posterior belief 
%$q$ (assigned to the state of nature being $\theta=1$) assuming a uniform prior. The law of iterated expectations imposes $\int_0^1 q d\bfH(q) = \frac12$ and conversely, any distribution $\bfH$ with expectation $\frac12$ corresponds to some distributions of signals given $\theta.$ 
%The conditional distributions $(\bfH_\theta)$ of beliefs are uniquely determined by $\bfH$. For each $\theta$, the distribution $\bfH_\theta\in \Delta([0,1])$ is absolutely continuous w.r.t. $\bfH$, with $d\bfH_1(q)= 2q d\bfH(q)$ and $d\bfH_0(q)= 2(1-q) d\bfH(q)$. Remark that $\bfH_1$ stochastically dominates $\bfH_0$ in the first-order sense: beliefs tend to be higher when the state of nature is indeed $\theta=1$.

Since the Markov chain $(\theta_t)_t$ is symmetric, the invariant distribution of $\theta$ puts equal probabilities on each state of the world. Therefore, an agent who observes no past action ($n=0$) and observes a signal $q$ has a posterior belief $q$. Instead, an agent with access to observational learning makes inferences based on his sample and holds an \emph{interim} belief, which we denote $p$. To be more specific, $p$ is the probability that the agent assigns to the current state being $\theta=1$ after observing his sample.

%Some of our results are more transparent for a symmetric setup,  in which (i) players match the most likely state $u(a,\theta)=1_{a=\theta}$, and (ii)  the distribution in state $\theta$ of the belief assigned to the state being $\theta$ is independent of $\theta\in \Theta$, that is  $\bfH_0([0,q])= \bfH_1([1-q,1])$ holds for each $q$. 

%In order to analyse incentives to acquire information, let us denote by $p$ the \emph{interim} beliefs of an agents, that is, the probability that he assigns to the current state being $\theta=1$ given the sample he observes at stage 1.% how an agent optimally 
%At equilibrium, each agent first forms anthat the current state is $\theta=1$ on the basis of her sample. 

If an agent with interim beliefs $p$ draws a 'signal' $q$, his posterior belief becomes $$\frac{pq}{pq+\left(1-p\right)\left(1-q\right)}.$$ Such an agent picks action 1 if and only if
\begin{equation*}
\frac{pq}{pq+\left(1-p\right)\left(1-q\right)}\geq\frac12\Leftrightarrow q\geq1-p, %\footnote{When $q=1-p$, the agent is indifferent between both actions.}
%\frac{pq}{pq+\left(1-p\right)\left(1-q\right)}\overline u_1+\frac{(1-p)(1-q)}{pq+\left(1-p\right)\left(1-q\right)}\underline u_0\geq\frac{pq}{pq+\left(1-p\right)\left(1-q\right)}\underline u_1+\frac{(1-p)(1-q)}{pq+\left(1-p\right)\left(1-q\right)}\overline u_0
\end{equation*}
with indifference when $q=1-p$.
%\begin{equation*}
%\Leftrightarrow q\geq q^*(p)\equiv\frac{(1-p)(\overline u_0-\underline u_0)}{(1-p)(\overline u_0-\underline u_0)+p(\overline u_1-\underline u_1)}
%\end{equation*}
%where $\frac{pq}{pq+\left(1-p\right)\left(1-q\right)}$ is the conditional probability that $\theta=1$ given the interim belief $p$ and the signal $q.$
Accordingly, the  probability of playing the right action $a=\theta$ when holding an interim belief $p$ and  acquiring information reads 
%\begin{eqnarray*}
% v(p)&:=&-c+p\left(\int_0^{q^*(p)} \underline u_1\,d\bfH_1(q)+\int_{q^*(p)}^1 \overline u_1\,d\bfH_1(q)\right)\\
%&+&(1-p)\left(\int_0^{q^*(p)} \overline u_0\,d\bfH_0(q)+\int_{q^*(p)}^1 \underline u_0\,d\bfH_0(q)\right)
%\end{eqnarray*}
%
\begin{equation}
v(p)=p\left(1-H_1(1-p)\right)+(1-p)H_0(1-p)
\end{equation}
%\begin{equation*}
% v(p)=-c+p\left(\int_0^{q^*(p)} \underline u_1\,d\bfH_1(q)+\int_{q^*(p)}^1 \overline u_1\,d\bfH_1(q)\right)+(1-p)\left(\int_0^{q^*(p)} \overline u_0\,d\bfH_0(q)+\int_{q^*(p)}^1 \underline %u_0\,d\bfH_0(q)\right)
%\end{equation*}
%Using $d\bfH_1(q)= 2q d\bfH(q)$ and $d\bfH_0(q)= 2(1-q) d\bfH(q),$ this can be rewritten
%\begin{equation*}
% v(p)=-c+\int_0^{q^*(p)} 2\left(pq\underline u_1+(1-p)(1-q)\overline u_0\right)\,d\bfH(q)+\int_{q^*(p)}^1 2\left(pq\overline u_1+(1-p)(1-q)\underline u_0\right)\,d\bfH(q)\end{equation*}
%\medskip
The function $v$ is convex, increasing on $[\frac12,1]$ and symmetric: $v(p)= v(1-p)$ for all $p.$
%\footnote{The convexity of $v$ is immediate given the positive value of information. Formally, it can be established using the property that $\frac{dH_1(q)}{dH_0(q)}=\frac{q}{1-q},$ which springs from Bayes' rule.} 
%
%\footnote{Indeed,
%\begin{equation*}
%v'(p)= \int_0^{q^*(p)} 2\left(q\underline u_1-(1-q)\overline u_0\right)\,d\bfH(q)+\int_{q^*(p)}^1 2\left(q\overline u_1-(1-q)\underline u_0\right)\,d\bfH(q)\
%\end{equation*}
%and 
%\begin{equation*}
%v''(p)= -\frac{\partial q^*}{\partial p}\left(q^*(p)(\overline u_1-\underline u_1)+(1-q^*(p))(\overline u_0-\underline u_0)\right)>0.\end{equation*}}
%
%The convexity of $v$ reflects the fact the value of information is positive.
%Notice that if the game is fully symmetric, that is, if $\overline u_{1}=\overline u_{0}$ and $\underline u_{1}=\underline u_{0},$ and $d\bfH(q)=d\bfH(1-q),$ then the function  $v$ is also symmetric: $ v(p)= v(1-p).$
%
The expected value from acquiring information for an agent with an interim belief $p$ is thus $v(p)-c.$

If instead the agent chooses to not acquire information, he will choose the action that matches the most likely state, i.e., play $a=1$ if and only if
$\displaystyle p\geq\frac12,
$
and then obtain a payoff equal to $u(p):=\max(p,1-p).$ Of course, since more information cannot hurt, $v(p)\geq u(p)$ for all $p.$
%The payoff associated with such a strategy reads $u(p):=\max(p\overline u_1+(1-p)\underline u_0,p\underline u_1+(1-p)\overline u_0)$. 

%\paragraph{Assumption:} $\exists p$ such that $v(p)>u(p).$ 
\begin{assumption}\label{Hyp1} $v(\frac12)-c>\frac12.$   \end{assumption}
%$\exists p$ such that $v(p)>u(p).$
This assumption ensures that  agents facing  maximal uncertainty about the state (i.e., with beliefs $p=\frac12$) have an incentive to acquire information. If this was not the case, then no one would ever acquire information. Past actions would then be fully uninformative, thereby precluding social learning. Assumption \ref{Hyp1} is thus minimal to make the problem interesting, and we thus maintain it throughout the paper.

\smallskip

Because $v$ is convex, $u$ is piecewise linear, and $v(p)=u(p)$ for $p\in\left\{0,1\right\}$, Assumption \ref{Hyp1} implies that there exists a unique $\hat p\in[\frac12,1]$ such that $v(p)-c>u(p)\Leftrightarrow p\in(1-\hat p,\hat p).$ 

%This assumption only rules out the trivial case where  $u(p)>v(p)-c$ for all $p$. In that case, no one would ever acquire information, past actions would be fully uninformative, precluding social learning. 

%$(\hat p_0,\hat p_1)$ such that $v(p)>u(p)\Leftrightarrow p\in(\hat p_0,\hat p_1).$

\paragraph{Remark} 
%\begin{itemize}
%\item If the game is symmetric, that is, if $\overline u_{1}=\overline u_{0},$ $\underline u_{1}=\underline u_{0},$ and $d\bfH(q)=d\bfH(1-q),$ then the function $v$ is also symmetric: $ v(p)= v(1-p)$ for all $p.$ In this case, $\hat p_0=1-\hat p_1.$ In such a case, let us simplify notation and denote $\hat p_1=\hat p>\frac12.$

%\item
As soon as $c>0,$ one has $\hat p<1$ for any signal distribution. However, even for $c=0,$ it is not guaranteed that $\hat p=1.$ Indeed, if signals have bounded strength, i.e., if $0<\underline q<\overline q<1,$ then $v(p)=u(p)$ for any $p\notin \left[\underline{q},\bar q\right].$ Acquiring information at those beliefs has no value because the agent plays the same action regardless of the signal $q$ that he observes. This implies that $\hat p<1.$ However, if $[\underline q,\overline q]=[0,1],$ then signals have unbounded strength and one has $ v(p)>u(p)$ for all $p\in(0,1):$ regardless of the interim belief $p,$ there is always a positive probability to draw signals that lead the agent to overturn his action, so that information is strictly valuable. One thus has $\hat p=1$ if and only if signals are free and have unbounded strength.

%If signals have bounded strength, that is, if $0<\underline q<\overline q<1,$ then for any $p\notin \left[\underline{q},\bar q\right]$, one has $ v(p)=u(p).$ Acquiring information at those beliefs has no value because the agent plays the same action regardless of the signal $q$ he observes. Instead, if $[\underline q,\overline q]=[0,1],$ signals have unbounded strength and one has $ v(p)>u(p)$ for all $p\in(0,1):$ regardless of the interim belief $p,$ there are always some signals that lead the agent to overturn his action, so that information is strictly valuable. This implies that, in the special case where $c=0$, $\hat p<1$ if signals have bounded strength, %$[u(p)=v(p)$ for all $p\notin(1-\hat p,\hat p)]$ while $\hat p=1$ if signals have unbounded strength.\footnote{If $c>0,$ it always holds that $\hat p<1.$ [COULD BE REMOVED]}% is always be

%\end{itemize}

\begin{figure}[htp]
   \begin{minipage}[c]{.46\linewidth}
      \includegraphics[scale=0.3]{VOIgeneral}\caption{Signals with unbounded strength}
   \end{minipage} \hfill
   \begin{minipage}[c]{.46\linewidth}
      \includegraphics[scale=0.3]{VOIbinary}\caption{Binary signals with precision $\pi$}
   \end{minipage}
\end{figure}

\medskip

We depict in Figures 1 and 2 the typical shape of $u(p)$ and $v(p)-c$ %two % symmetric environments where $\overline u_{\theta}=1-\underline u_{\theta}=1,$ which implies $u(p)=\max(p,1-p).$ 
%different environments. In Figure 1, 
when signals have unbounded strength (Figure 1) and when signals are binary with precision $\pi>\frac12$ (Figure 2). In the latter case, %$\bfH$ puts equal weights on the posterior beliefs $\pi$ and $1-\pi$, while 
there are only two posterior beliefs $q\in\left\{1-\pi,\pi\right\}$ %$ and $1-\pi$ %$\bfH_1$ (resp. $\bfH_0$) puts probability $\pi$ (resp. $1-\pi$) on the posterior belief $\pi$. This implies that 
and $v(p)=\max(p,1-p,\pi).$ This implies that $\hat p=\pi=c.$ Notice that in the case of perfect signals $(\pi=1),$ one thus has $\hat p=1-c.$

\medskip

\subsection{Equilibrium concept} 
We focus on symmetric steady-state equilibria. Informally, in a steady-state equilibrium, agents in different periods make the same probabilistic inferences and follow the same strategy. Let $p_k$ denote the (time-independent) interim belief of an agent who draws a sample composed of $k$ ones and $n-k$ zeros. %Informally, in an equilibrium, u
Upon observing their samples, agents play a strategy (i.e., an information acquisition strategy and an action plan) that is optimal given $p_k,$ and $p_k$ is derived using Bayes' rule from the invariant joint distribution of states and sample compositions. In addition, in a symmetric steady-state equilibrium, the strategy of an agent who observes a sample composed of $k$ ones is the mirror image of that of an agent who observes $n-k$ ones, that is, the equilibrium is unchanged when one relabels actions and states. 
%
% and $\sigma_k$ his strategy, where $\sigma_k$ captures the information acquisition strategy and the action plan of an agent with beliefs $p_k.$ Denoting $\sigma$ the family $(\sigma_k)_{0\leq k\leq n},$ an equilibrium steady-state is a pair $(\mu,\sigma),$ 
%where $\mu$ is the joint distribution over state and samples 
%such that the following fixed-point equilibrium conditions are met: (i) $\sigma$ is optimal given the interim beliefs $p_k$ over $\theta$ derived from $\mu$ and (ii) $\mu$ is time-invariant when agents act according to $\sigma$. 
While strategies and equilibrium steady states are formally defined and shown to exist in Section %\ref{general} 
\ref{subsec:Strategies-and-Equilibrium}, such an informal definition is sufficient for the next section, where we derive equilibria for small samples ($n\leq2).$ % [DEFINE SYMMETRIC EQUILIBRIUM HERE AS WELL?]
%\ref{subsec:Strategies-and-Equilibrium}
\medskip

Before turning to this, let us remark that Assumption \ref{Hyp1} implies that, for any sample size, at least some agents must acquire information with positive probability at an equilibrium steady-state. This is formally stated in Lemma \ref{lemm1} below.

\begin{lemma}\label{lemm1}
In any steady-state equilibrium, there exists $k\leq n$ such that $p_k\in[1-\hat p,\hat p].$
\end{lemma}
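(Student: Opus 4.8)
My plan is to argue by contradiction. Suppose that $p_k\notin[1-\hat p,\hat p]$ for every $k\le n$. Because the excerpt establishes $v(p)-c>u(p)\iff p\in(1-\hat p,\hat p)$, this supposition says that at \emph{every} sample composition the net value of information is strictly negative, so no agent ever acquires a signal; and since $p_k\neq\tfrac12$, each agent's action is uniquely determined by the sample, namely $a=1$ iff $p_k>\tfrac12$. The whole timeline is thus populated by agents who act solely on the basis of observed past actions.

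The key step is then to show that this state of affairs forces $p_k=\tfrac12$ for every sample that occurs with positive probability. The point is that, with no signal ever drawn, the state $(\theta_t)$ has no channel through which to influence behavior: every action is a deterministic function of the actions it samples, which are themselves functions of still earlier actions, so unwinding the recursion the entire action process is measurable with respect to the exogenous sampling randomness (the geometric lags and the choice of which past agents are sampled) and is independent of $(\theta_t)$. In particular the sample seen by a period-$t$ agent is independent of $\theta_t$, and since the symmetric chain has invariant marginal $\prob(\theta_t=1)=\tfrac12$, Bayes' rule gives $p_k=\prob(\theta_t=1\mid k)=\tfrac12$. Concretely, passing to the continuum the fraction $f_t$ playing $1$ obeys the deterministic recursion $f_t=G(x_t)$, where $x_t=\sum_{\tau\ge1}\rho(1-\rho)^{\tau-1}f_{t-\tau}$ is the pool fraction and $G(x)=\sum_{k}\binom{n}{k}x^{k}(1-x)^{n-k}\mathbbm{1}[p_k>\tfrac12]$; in a steady state the pool is frozen at a constant $x^\ast=G(x^\ast)$, each sample count is $\mathrm{Binomial}(n,x^\ast)$, and the binomial noise is independent of the state.

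To close, I would invoke Assumption \ref{Hyp1}: $v(\tfrac12)-c>\tfrac12=u(\tfrac12)$ places $\tfrac12$ strictly inside $(1-\hat p,\hat p)$, so $\hat p>\tfrac12$ and $\tfrac12\in[1-\hat p,\hat p]$. Picking any $k$ that arises with positive probability (at least one exists, since $\mathrm{Binomial}(n,x^\ast)$ charges some outcome) yields $p_k=\tfrac12\in[1-\hat p,\hat p]$, contradicting the supposition. Hence some $k\le n$ must satisfy $p_k\in[1-\hat p,\hat p]$.

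I expect the one genuinely delicate point to be the independence claim in the middle step, namely the rigorous justification --- within the doubly-infinite steady-state construction of Section \ref{subsec:Strategies-and-Equilibrium} --- that the action process is a function of the exogenous sampling randomness alone and hence independent of the Markov chain. The continuum law of large numbers makes this essentially transparent at the aggregate level, since the fraction-playing-one recursion is purely deterministic; what requires a little care is that beliefs at off-path sample counts are not disciplined by Bayes' rule, but this is harmless because the lemma only asks for a single $k$ and every on-path composition already delivers $p_k=\tfrac12$.
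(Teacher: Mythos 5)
Your proposal is correct and takes essentially the same route as the paper: the paper's own proof is just the three-line version of your argument — if $p_k\notin[1-\hat p,\hat p]$ for every $k$, information is never acquired, so samples are uninformative at any steady state and all beliefs drift back to $\frac12$, contradicting Assumption \ref{Hyp1} (which places $\frac12$ strictly inside $(1-\hat p,\hat p)$). One small caveat: your claim that the steady-state pool must freeze at a fixed point $x^\ast=G(x^\ast)$ is stronger than needed (an invariant measure of the deterministic dynamics could live on a cycle or other invariant set), but nothing is lost, since the argument really only uses that the action/sampling process is independent of $(\theta_t)$, which together with the mixing of the state chain ($\lambda>0$) forces $p_k=\frac12$ for every on-path $k$.
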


\begin{proof} Suppose that $p_k\notin[1-\hat p,\hat p]$ for all $k.$ Since information is then never acquired, %samples become ever less informative, and beliefs 'drift back' to $\frac12.$. More precisely, 
samples become uninformative at any steady-state, and beliefs 'drift back' to $p_k=\frac12$ for all $k$. A contradiction.\end{proof}
%Under Assumption \ref{Hyp1}, this is inconsistent with $p_k\notin[1-\hat p,\hat p].$
\medskip

Lemma \ref{lemm1} implies that information must at least be sometimes acquired. If signals are free and information is thus always acquired, then Lemma \ref{lemm1} implies that there is at least one sample composition $k$ such that information is valuable, i.e., ignoring one's signal is not strictly dominant. %(that is, such that $v(p_k)\geq u(p_k)).$
%Notice that in the case where the game is asymmetric and Assumption \ref{Hyp1} holds while Assumption \ref{Hyp2} does not, there is an equilibrium where information is never acquired. Such equilibria are uninteresting from a social learning perspective, hence Assumption \ref{Hyp2} to rule them out.

\section{Small samples}\label{small}

For small samples $(n\leq2)$, we are able to fully characterize equilibrium behavior. %derive a complete equilibrium characterization.
The case $n=0$ where agents do not sample any past action serves as a natural benchmark to assess the extent of social learning. When $n=0$,  agents have no feedback on previous actions, and hold an interim belief $\frac12$ at any steady-state. Assumption \ref{Hyp1} implies that they acquire information and thus obtain an expected payoff of $ v(\frac12)-c$. 

\medskip
 %We focus on symmetric equilibria in which the strategy of an agent who observes a sample composed of $k$ ones is the mirror image of the strategy of an agent who observes $n-k$ ones.%\footnote{A formal definition of a symmetric equilibrium is given in Section \ref{subsec:Strategies-and-Equilibrium}.}
%All the results here hold whether $c>0$ or $c=0$. 

 %$\hat p$. 
 %The case where agents sample \emph{one} action is analyzed in Section \ref{sec n=1} below and will  serve as an easy benchmark. 
% When agents sample \emph{two} actions each, the analysis is significantly more involved. It illustrates some of the issues associated with general sample sizes, yet it allows for a complete analysis.

\subsection{The case $n=1$}\label{sec n=1}
 
We assume here that agents sample only one past action. Proposition \ref{n1} below summarizes the main properties of the unique equilibrium. For expositional simplicity, the result is stated for the case $\rho=1,$ i.e., the action sampled in period $t$ was played in period $t-1$. 

\begin{proposition}\label{n1}
There is a unique symmetric equilibrium:
\begin{itemize}
\item If $\lambda\leq\frac{c}{2(\hat p+c)-1}\equiv \lambda_*,$  agents acquire information with probability  $\beta= \displaystyle \lambda \frac{2\hat p -1}{c(1-2\lambda)}\in(0,1)$.
\item If $\lambda>\lambda_*,$  agents acquire information with probability $1$.
\end{itemize}

\end{proposition}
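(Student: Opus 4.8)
The plan is to collapse the whole equilibrium to a single scalar and then split into the two regimes named in the statement. By symmetry of the equilibrium let $\alpha=\Prob(a=1\mid\theta_t=1)$, so that $\Prob(a=1\mid\theta_t=0)=1-\alpha$. Since $\rho=1$, the sampled action was played one period earlier; conditioning on $\theta_t=1$ and using the one-step transition ($\theta_{t-1}=1$ with probability $1-\lambda$, $=0$ with probability $\lambda$) together with proportional sampling gives $\Prob(\text{sample}=1\mid\theta_t=1)=(1-\lambda)\alpha+\lambda(1-\alpha)$. Because the invariant law of $\theta$ is uniform and the construction is symmetric, the unconditional probability of sampling a $1$ is $\tfrac12$, so Bayes' rule yields the interim belief $p_1=(1-2\lambda)\alpha+\lambda$ (and $p_0=1-p_1$). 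This bookkeeping is routine.

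Next I would pin down optimal behavior. Since $\lambda<\tfrac12$ one checks $\alpha\ge\tfrac12$ in any equilibrium, hence $p_1\ge\tfrac12$: if nobody acquired, actions would merely propagate through the chain and the only fixed point is the uninformative $\alpha=\tfrac12$, giving $p_1=\tfrac12<\hat p$, contradicting optimality (this is the content of Lemma \ref{lemm1}); an ``anti-correlated'' candidate with $\alpha<\tfrac12$ is excluded because both the all-acquire response $\alpha=v(p_1)\ge\tfrac12$ and the mixing response $\alpha=\hat p+\beta c>\tfrac12$ contradict it. With $p_1\ge\tfrac12$ the only decision is whether to acquire, so either $p_1<\hat p$ (acquire with probability one) or $p_1=\hat p$ (indifferent, acquire with probability $\beta$); the case $p_1>\hat p$ is ruled out as above.

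I would then solve the two regimes. In the mixing regime $p_1=\hat p$, the indifference condition gives $\alpha=(\hat p-\lambda)/(1-2\lambda)$. Computing $\alpha$ from the strategy and using the defining identity $v(\hat p)=\hat p+c$ produces the clean cancellation $\alpha=\hat p(1-\beta)+\beta v(\hat p)=\hat p+\beta c$; equating the two expressions gives $\beta=\lambda(2\hat p-1)/\big(c(1-2\lambda)\big)$. Since $\hat p>\tfrac12$ under Assumption \ref{Hyp1}, $\beta>0$ always, while $\beta<1$ rearranges exactly to $\lambda<\lambda_*$ with $\lambda_*=c/(2(\hat p+c)-1)$, and $\beta=1$ at $\lambda=\lambda_*$. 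In the interior regime, substituting $\alpha=v(p_1)$ into $p_1=(1-2\lambda)\alpha+\lambda$ yields the fixed-point equation $p_1=(1-2\lambda)v(p_1)+\lambda$.

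The main work is uniqueness and matching the regimes at $\lambda_*$. I would study $\Psi(p)=(1-2\lambda)v(p)+\lambda-p$, which is convex because $v$ is convex and $1-2\lambda>0$, with $\Psi(\tfrac12)>0$ and $\Psi(1)=-\lambda<0$; a convex function nonnegative at the left endpoint and negative at the right has a single sign change, so the fixed point $p_1^\ast$ is unique, with $\Psi>0$ to its left and $\Psi<0$ to its right. Evaluating $\Psi(\hat p)=c-\lambda\big(2(\hat p+c)-1\big)$, whose sign is positive iff $\lambda<\lambda_*$, locates $\hat p$ relative to $p_1^\ast$: for $\lambda>\lambda_*$ one gets $p_1^\ast<\hat p$, so the interior regime is the valid and unique equilibrium, whereas for $\lambda<\lambda_*$ the fixed point has $p_1^\ast>\hat p$, is inadmissible, and only the mixing equilibrium survives. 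The two descriptions coincide at $\lambda_*$ ($\beta=1$, $p_1^\ast=\hat p$), giving a single equilibrium for every $\lambda$. I expect the convexity/single-crossing argument for $\Psi$ and the sign computation at $\hat p$ to be the delicate points; the derivation of $\beta$ itself is immediate once the identity $v(\hat p)=\hat p+c$ is used.
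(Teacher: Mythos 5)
Your proof is correct and takes essentially the same route as the paper: the same belief equation $p_1=(1-2\lambda)\alpha+\lambda$, the same split into the indifference regime $p_1=\hat p$ (where $\alpha=\hat p+\beta c$ yields $\beta=\lambda(2\hat p-1)/\bigl(c(1-2\lambda)\bigr)$ and feasibility $\beta\leq 1$ gives $\lambda\leq\lambda_*$) and the all-acquire regime $p_1<\hat p$ solving $p_1=(1-2\lambda)v(p_1)+\lambda$. Your uniqueness step, via convexity of $\Psi(p)=(1-2\lambda)v(p)+\lambda-p$ with the endpoint signs and the sign of $\Psi(\hat p)$, is simply a more explicit rendering of the paper's terse remark that $v'<1$ on $[\tfrac12,\hat p]$ and $v(\tfrac12)-c>\tfrac12$ locate the root in $[\tfrac12,\hat p)$ exactly when $\lambda>\lambda_*$.
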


\medskip

The intuition is clear. When states are close to \emph{iid}, past actions cannot possibly convey much information about the current state, and information must be acquired with probability 1. As the state becomes more persistent, past actions may potentially become more informative. Still, to provide an incentive for later generations to acquire information, the sampled action should not be too informative, and the probability of buying information decreases as $\lambda$ decreases. In the persistent limit $\lambda \to 0$, agents acquire information with vanishing probability, and hence most likely rely on observational learning to make their decisions. % on the agent they sampled. %However, when the state of nature changes too often, agents must acquire information for past actions can never be enough informative about the current state. 

\medskip

\begin{proof} %Let be given an equilibrium.
By symmetry, one has $p_1=1-p_0,$ with $p_1\in \left[1-\hat p,\hat p\right]$ by Lemma \ref{lemm1}.

Consider an agent acting in period $t$, who samples the action $a_{t-1}$ played by some agent $B$, and assume $a_{t-1}=1$. Note that $p_1$ obeys the following equation:\footnote{In a steady state equilibrium, calendar time plays no role ($p_1$ is the same across periods): we only use the subscripts $t$ and $t-1$ to distinguish between current and past actions or states.} 
\begin{equation}\label{eq1}
 p_1=\prob(\theta_t= 1\mid a_{t-1}=1) = (1-\lambda)\prob(\theta_{t-1}=1\mid a_{t-1}=1)+\lambda\prob(\theta_{t-1}=0\mid a_{t-1}=1),
 \end{equation}
where $\prob(\theta_{t-1}=1\mid a_{t-1}=1)$ is the probability that agent $B$ has played the correct action.

Since the equilibrium welfare  exceeds $\frac12$, the sampled action  $a_{t-1}$ is positively correlated with the previous state $\theta_{t-1}$. In addition, because states are persistent,  sampling an action $a_{t-1}=1$ provides positive evidence that the current state is $\theta_t=1$, hence $p_1\in \left[\frac12,\hat p\right]$.

Assume first that $p_1=\hat p$. Agents are then indifferent between acquiring information or not. Let $\beta$ denote the probability of acquiring information. By the steady-state property, any agent from the previous generation $t-1$ also either held an interim belief $p_0$ or $p_1,$ and hence acquired information with probability $\beta.$  Since the action $a_{t-1}$ of agent $B$ matched $\theta_{t-1}$ with probability $v(p_k)$ if $B$ did acquire information, and with probability $u(p_k)$ if he did not, the overall probability that $B$ chose the correct action when holding an interim belief $p_k$ is
$\beta v(p_k)+(1-\beta)u(p_k).$ Since $v(p_k)-c=u(p_k)=\hat p$ for all $k,$ one can rewrite
 \[ \prob(\theta_{t-1}= 1\mid a_{t-1} =1)= \hat p+\beta c\]
 and derive from (\ref{eq1}) that
 \begin{equation*}
 \hat p=(1-\lambda)\left(\hat p+\beta c\right)+\lambda\left(1-\hat p-\beta c\right) \end{equation*}
 We conclude
 \begin{equation}
\beta= \displaystyle \lambda \frac{2\hat p -1}{c(1-2\lambda)}
\end{equation}
Since $\beta\in [0,1]$, it must  be that $\lambda\leq\frac{c}{2(\hat p+c)-1}:=\lambda_*.$
 
 %Thus the belief  of agent $i$ solves

\medskip

Assume now that $p_1<\hat p$, or equivalently, $ v(p_1)-c>u(p_1)=p_1.$ In that case, all agents acquire information with probability 1. %Since agents follow their signal, the  action $a_{t-1}$ sampled by a $t$-agent is a signal of precision $\pi$ on $\theta_{t-1}$. Using 
Then, 
 \[ \prob(\theta_{t-1}= 1\mid a_{t-1} =1)=  v(p_1)\]
 
 and (\ref{eq1}) now reads

 \begin{equation}\label{eq1bis}
 p_1= (1-\lambda)v(p_1)+\lambda(1- v(p_1)).
 \end{equation}
 
Since $v'(p_1)<1$ for $p_1\leq\hat p$ and since $ v(\frac12)-c>\frac12,$  (\ref{eq1bis}) has a (unique) root in $[\frac12,\hat p)$ if and only if $\lambda> \lambda_*=\frac{c}{2(\hat p+c)-1}.$ \end{proof}
%\bigskip
  
\paragraph{Welfare} 
As long as $\lambda\leq  \lambda_*,$ each agent is indifferent between acquiring information or not, that is, holds beliefs $\hat p$ or $1-\hat p.$ Thus, the equilibrium welfare is equal to $\hat p.$ For $\lambda> \lambda_*,$ the equilibrium welfare is given by $ v(p_1)-c,$ where $p_1$ is the solution of (\ref{eq1bis}). It is easy to see that welfare increases as the state becomes more persistent, i.e., as $\lambda$ decreases, and is equal to $\frac12$ in the \textit{iid} limit $(\lambda=\frac12).$ In the special case where signals are binary   with precision $\pi,$ since $ v(p)=\pi$ for all $p\in[1-\hat p,\hat p],$ welfare is equal to $\pi-c=\hat p$ for all values of $\lambda.$%\in[0,\frac12].$ 

 \paragraph{Remark 1}
If $\rho<1,$ $\lambda$ should be replaced  by the probability $\Lambda$ that the action sampled has been taken at a date when the state was different from that of today. Noting that the probability that the state that prevailed $m$ periods ago is different from the current state is equal to $\frac{1}{2}-\frac{1}{2}\left(1-2\lambda\right)^{m},$ $\Lambda$ is given by  
\begin{equation*}
%\Lambda=\dot{\frac12\sum_{m\geq1}\rho\left(1-\rho\right)^{m-1}\left(1-\left(1-2\lambda\right)^{m}\right)}=\frac{1}{2}\left(1-\frac{\rho(1-2\lambda)}{1-\left(1-\rho\right)\left(1-2\lambda\right)}\right)
\Lambda=\dot{\frac12\sum_{m\geq1}\rho\left(1-\rho\right)^{m-1}\left(1-\left(1-2\lambda\right)^{m}\right)}=\frac{\lambda}{1-\left(1-\rho\right)\left(1-2\lambda\right)}
\end{equation*}
%
%\dot{\frac12\sum_{m\geq1}\rho\left(1-\rho\right)^{m-1}\left(1-\left(1-2\lambda\right)^{m}\right)}$,which is equal to $\frac{1}{2}\left(1-\frac{\rho(1-2\lambda)}{1-\left(1-\rho\right)\left(1-2\lambda\right)}\right)$.  
$\Lambda$ is therefore the counterpart of $\lambda$ in the situation where actions may be sampled from the further past. Plainly, $\Lambda= \lambda$ if  $\rho=1$. % sampling comes from the most recent period), then
In the limit  $\rho\to0,$ actions are drawn uniformly from the past, and $\Lambda\to \frac12.$ Finally, holding $\rho$ fixed, $\Lambda\to0$ as $\lambda\to0.$
%$\Lambda$ in all the above. 
Since $\Lambda$ decreases in $\rho,$ welfare is (weakly)  increasing in $\rho:$ observing more recent actions facilitates social learning, thereby improving welfare. However, for $\Lambda\leq\lambda_*,$ welfare is independent of $\rho.$\footnote{This is both reminiscent and in contrast with \cite{SSsampling}, who show that, among the sampling procedures in which each agent samples a single past action, the welfare-maximizing one consists of sampling the most recent action. In our case, welfare is maximized when $\rho=1,$ but not strictly so: because the state is changing, agents at equilibrium are indifferent between acquiring information or not. Accordingly, the equilibrium informational content of past actions does not strictly increase when sampled actions are more recent.}

 \medskip
 
%\paragraph{Paragraph on social planner here?} 

%\medskip
 
\paragraph{Remark 2} While the equilibrium analysis is straightforward both in terms of welfare and equilibrium strategies, %it is extremely complex to provide a description of 
describing the aggregate behavior of the population proves extremely complex. Let $\chi_t$ denote the fraction of agents playing action 1 at date $t.$  For the sake of this discussion, consider the case of \emph{perfect} signals. 
% and $\phi_\theta$ the probability that an agent who acquires information chooses saction 1 in state $\theta$.\footnote{
%Thus, $\phi_1= v\left(\frac12\right)$ and $\phi_0=1-\phi_1$.}  hence $\phi_1= 1-\phi_0=1$. % ($\pi= 1$), and agents buy information w.p. $\sigma$. 
%Given perfect signals, $\phi_1= 1-\phi_0=1$ independently of the interim belief $p.$}
A $t$-agent  chooses action 1 if either she acquires information and $\theta_t= 1$, or does not, and observes $a_{t-1}= 1$. %This has probability $(1-\beta)x_{t-1} +\beta \phi_{\theta_t}$.  
Since there is a continuum of such agents, one has %$\chi_t= g_{\theta_t}(\chi_{t-1})$, where
%\[g_\theta(x):= x(1-\beta) +\beta \mathbbm{1}_{\theta=1}.\]
\[\chi_{t}= (1-\beta)\chi_{t-1} +\beta \mathbf{1}_{\theta_t=1}.\]

 %\[g_1(x)= x(1-\beta) +\beta.\]
 %  \[g_0(x)= x(1-\beta)\]
%see Figure

% \begin{center}
% Insert Figure
 %\end{center}
 
 There is extensive literature analyzing the dynamics of $(\chi_t)$ when successive states $(\theta_t)$ are \emph{iid}, with applications to stochastic growth models -- see \emph{e.g.} \cite{MMP} or \cite{BM}. % Battachary and Majumdar (2007)\  or Mitra, Montrucchio and Privileggi (2004) for a survey. 
 Under the \emph{iid} assumption, $(\chi_t)$ follows a Markov chain over $[0,1],$ and the (unique) invariant distribution of $(\chi_t)$ coincides with the distribution of the so-called random Erd\"os series whose properties are highly sensitive to $\beta$ \citep{erdos,solo,PS}. %: (i) for $\sigma>\frac12$, the support of $\nu$ is a Cantor set, (ii) for $\sigma=\frac12$, $\nu$ is the Lebesgue measure over $[0,1]$, (iii) for $x=\frac{\sqrt{5}-1}{2}$, $\mu$ is singular w.r.t. Lebesgue measure, see Erd\"os (1939)  and for a.e. $\sigma <\frac12$, $\nu$ is absolutely continuous w.r.t. Lebesgue measure, see Solomyak (1995) and Peres and Solomyak (1996).
While the results from this literature do not apply to our setup since they assume \emph{iid} states, they strongly suggest that the random sequence $(\chi_t)$ is a highly complex object. Even worse, $(\chi_t) $ is no longer a Markov chain as soon as $\lambda<\frac12$, but a hidden Markov chain.  %We come back to this issue later.% [SHORTEN?]
 
\bigskip

%We conclude with a brief discussion of the case where signals are no longer binary but are given by a general symmetric statistical experiment.\footnote{We delve here on the meaning of symmetry for a general statistical experiment. See Section \ref{sec SE}.}
% The equilibrium  analysis carries over, with  a caveat.  Like the case of binary signals, there is a cut-off $\hat p$ such that buying information is optimal iff one's interim belief $p$ is in $[1-\hat p, \hat p]$. Unlike the case of binary signals however, the action being chosen depends on $p$ (and on the outcome of the statistical experiment). As a consequence, the interim payoff $v(p)$ when buying information is no longer constant, but convex, with a minimum at $p=\frac12$, see Section \ref{sec SE}. 
 
 %As in the leading example, agents buy information for sure when $\lambda $ exceeds some value $\lambda_*$, and randomize for  $\lambda< \lambda_*$. 
% The equilibrium welfare increases from $v(\frac12)$ to $v(\hat p)=\hat p$  as $\lambda$ decreases from $\frac12$ to $\lambda_\star$, and is equal to $\hat p$  for $\lambda< \lambda_\star$.  
 
%If agents do not sample ($n=0$), interim beliefs are $\frac12$, hence the equilibrium welfare is $v(\frac12)$. Thus, equilibrium welfare is higher when $n=1$.

\paragraph{Remark 3} Our focus on equilibrium steady states creates a general equilibrium effect. Since agents in every period should behave the same, actions cannot be too informative because otherwise future generations will stop acquiring information. When $n=1$, this effect is maximal because there is only one possible interim belief for future generations (up to symmetry), which, from Lemma \ref{lemm1}, lies in  $[1-\hat p,\hat p].$ However, when $n>1$, there is a wider scope of possible sample compositions. Welfare gains would be achieved if agents who acquire information with some sample $k$ generate enough information so that agents with a sample composition $k'$ can herd and  save on the cost of acquiring information -- that is, if there exists $(k,k')$ such that $p_k\leq\hat p<p_{k'}.$ 
%.\footnote{Formally, this is the case if there exists $k,k'$ such that $p_k\leq\hat p<p_{k'}$.  [COULD BE REMOVED]} 
Ultimately, the efficiency of social learning reflects the magnitude of such information externalities across samples.%\footnote{See also \cite{Meg} on this.}

 \subsection{The case $n=2$}

%We now investigate more fully these externalities when the sample size is
In the case $n=2,$ agents who sample two different actions are confused. Their interim belief is $p_1=\frac12$ and they accordingly acquire information. In line with the previous remark, the efficiency of social learning then depends on whether the information produced by these agents is sufficient to ensure that %enough information is provided and 
agents who instead observe unanimous samples can herd. In other words, the key question is whether $p_2>\hat p.$ Casual intuition suggests that this should be the case when the state is sufficiently persistent.
%This is bad news, yet a complete analysis remains available when $n=2$.  At a symmetric equilibrium, an agent with a balanced sample has an interim belief of $\frac12$ and therefore acquires information for sure. A natural intuition then suggests that if states are quite persistent, these 'confused' agents generate enough information so that agents who observe an unbalanced sample are sufficiently confident that they can save on the cost of acquiring information. It happens that this intuition is incorrect, and that all agents must acquire information with positive probability at equilibrium, %and free-ride on the information generated by the confused $t$-agents. This suggests that interim equilibrium payoffs should  be $\hat p= \pi-c$ for 'confused' agents, and higher for  'non-confused' agents. This intuition is quite misleading,
%as Proposition \ref{th n=2} shows. %[MAYBE THIS SHOULD BE A PROPOSITION?]
 %
However, this intuition is incorrect, as we now show.  %and free-ride on the information generated by the confused $t$-agents. This suggests that interim equilibrium payoffs should  be $\hat p= \pi-c$ for 'confused' agents, and higher for  'non-confused' agents. This intuition is quite misleading,
%as Proposition \ref{th n=2} shows. %[MAYBE THIS SHOULD BE A PROPOSITION?]

 %the information on $\theta_t$ acquired by these 'confused' agents and reflected in their action choice will be quite valuable to $t+1$-agents: that is, the $t+1$-agents with an unbalanced sample should hold a precise belief, and free-ride on the information generated by the confused $t$-agents. This suggests that interim equilibrium payoffs should  be $\hat p= \pi-c$ for 'confused' agents, and higher for  'non-confused' agents. This intuition is quite misleading, as Theorem \ref{th n=2} shows.
 
 %\begin{theorem}[binary case]\label{th n=2}
 %Let $\lambda>0$ and $c\geq 0$ be arbitrary. In any equilibrium, payoffs are $\hat p$.  If $c>0$, all agents acquire information, with at least positive probability.
% \end{theorem}
 
  \begin{proposition}\label{th n=2}
 %Let $\lambda>0$ and $c\geq 0$ be arbitrary. 
 In any equilibrium, one has  $p_k\in[1-\hat p,\hat p]$ for all $k=0,1,2.$
 \end{proposition}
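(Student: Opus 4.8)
The plan is to show that the extreme sample $k=2$ (and by symmetry $k=0$) cannot generate an interim belief strictly above $\hat p$, by deriving a steady-state identity for $p_2$ analogous to equation (\ref{eq1}) and exploiting the fact that the informativeness of a unanimous sample is ultimately bounded by the informativeness of the single actions composing it. Suppose toward a contradiction that $p_2>\hat p$. By Proposition \ref{n1}'s logic and Lemma \ref{lemm1}, the confused agents ($k=1$, with $p_1=\tfrac12$) acquire information, and the remaining question is whether two concordant sampled actions $a=1$ push the belief past $\hat p$. I would write $p_2=\prob(\theta_t=1\mid \text{sample}=(1,1))$ and, conditioning on the previous state via the Markov kernel as in (\ref{eq1}), reduce the problem to bounding $\prob(\theta_{t-1}=1\mid \text{both sampled actions}=1)$.

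\medskip

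The key step is to control this conditional probability. Under proportional and independent within-sample sampling, the two sampled actions are drawn independently \emph{conditional on the past population composition}, but they are not independent unconditionally because they share the same underlying state. First I would express the probability that a single sampled action equals $1$ and matches the state in terms of the equilibrium fraction $\chi$ of agents playing $1$; then, using that each sampled action is correct with a probability that is itself a mixture of $v(p_k)$ and $u(p_k)$ over the previous generation's behavior, I would argue that observing two $1$'s cannot be more informative about $\theta_{t-1}$ than the most informative single action could be, \emph{because} if $p_2>\hat p$ then agents with sample $(1,1)$ herd and stop acquiring information. This is the crucial feedback: if $p_2>\hat p$, these agents contribute actions that are correct only with probability $u(p_2)=p_2$ rather than $v(p_2)$, which caps the informativeness that the next generation's unanimous samples can inherit. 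Plugging this cap back into the steady-state equation for $p_2$ should force $p_2\le\hat p$, contradicting the assumption.

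\medskip

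Concretely, I would set up the invariant joint distribution of $(\theta_{t-1},\chi_{t-1})$ and write $\prob(\theta_{t-1}=1\mid (1,1))$ as a ratio of expectations $\E[\chi_{t-1}^2\mid\theta_{t-1}=1]$-type terms over the symmetric normalization, then use the Markov step to relate $p_2$ to the average correctness of sampled actions in the previous generation. The self-consistency requirement is that the actions feeding into the sample are themselves generated by agents whose beliefs lie in the equilibrium belief set; if $p_2>\hat p$ were sustainable, the herding at $k=2$ would deprive the pool of fresh information precisely where it is most needed, and the resulting fixed-point equation for $p_2$ would have no solution above $\hat p$.

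\medskip

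The main obstacle I anticipate is the correlation of the two sampled actions through the common (hidden) state and the persistence of $\chi$ across periods: unlike the $n=1$ case, $p_2$ depends on the second moment of the population composition, and $(\chi_t)$ is only a hidden Markov chain (as noted in Remark 2), so a clean closed-form for $\E[\chi^2\mid\theta]$ is unavailable. The argument must therefore be made through inequalities rather than an explicit solve — bounding the within-sample correlation from above using the herding-induced cap on single-action correctness — rather than by exhibiting the invariant distribution. Making that bound tight enough to rule out every $p_2\in(\hat p,1]$, uniformly in $\lambda\in(0,\tfrac12)$, is the delicate part.
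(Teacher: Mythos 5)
There is a genuine gap, and it lies exactly where you flag the ``delicate part.'' Your plan is to close a static fixed-point/inequality argument: write $p_2$ through a steady-state Bayes identity and cap the informativeness of two concordant actions by that of a single action, the cap coming from the fact that herding agents are correct with probability $u(p_2)$ rather than $v(p_2)$. No such cap exists. The belief after a unanimous sample is governed by the likelihood ratio $\E\left[x^2\mid\theta_{t-1}=1\right]/\E\left[x^2\mid\theta_{t-1}=0\right]$, a second-moment object that is not controlled by average correctness (a first-moment object): when the conditional laws of $x$ concentrate near the endpoints, the second-moment ratio typically \emph{exceeds} the first-moment one (indeed, in the paper's continuous-time analysis the likelihood ratio $LR_k$ is strictly increasing in $k$; see Lemma \ref{lemm LR}). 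Worse, the moment hierarchy does not close: under invariance, $\E[x^2]$ evolves through fourth moments, and so on, so there is no finite system of steady-state identities in $p_2$ and ``average correctness'' to contradict. Finally, the principle you invoke is false as a general principle: Theorem \ref{thm blurry} exhibits an ESS (for $n=3$, imprecise binary signals) in which agents with unanimous samples herd and nevertheless $p_3>\hat p$. So any correct proof must exploit something specific to $n=2$ beyond the accounting you describe.

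What is missing is the paper's dynamical absorption argument. Under the contradiction hypothesis ($p_2>\hat p$, so unanimous-sample agents herd while balanced-sample agents acquire information), the population variable becomes a deterministic function of the state path: $x_{t+1}=g_{\theta_{t+1}}(x_t)$ with $g_\theta(x)=(1-\rho)x+\rho\left(x^2+2x(1-x)\phi_\theta\right)$. Near $x=0$, $\ln x_t$ moves each period by approximately $\ln\left((1-\rho)+2\rho\phi_{\theta_t}\right)$, and since $4\phi_0\phi_1=4\phi_1(1-\phi_1)<1$ these Markov-modulated increments have negative drift; a stopping-time/random-walk argument (Lemma \ref{lemm cemetery}) then shows $x_t\to\{0,1\}$ almost surely. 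Hence any invariant measure is supported on $\Theta\times\{0,1\}$, and since $\{x=0\}$ and $\{x=1\}$ are absorbing, invariance plus symmetry force it to be uniform on $\Theta\times\{0,1\}$ (Lemma \ref{lemm10}). The consensus is therefore uncorrelated with the state: the mechanism is not that correctness is ``capped'' at $u(p_2)$, it is that average correctness collapses to $\tfrac12$ and samples carry no information at all, so $p_k=\tfrac12$ for every $k$, contradicting $p_2>\hat p$. Your intuition that herding starves the pool of fresh information points in the right direction, but the proposal contains no argument delivering this degeneracy of the invariant distribution, which is the heart of the proof.
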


For $c>0$, this implies that agents are always willing to acquire information, irrespective of their sample. % For simplicity, we assume below that $\rho=1$. We stress however that Proposition \ref{th n=2} holds as stated if $\rho<1$ (and for all $\lambda >0$). 
If signals are free ($c=0),$ Proposition \ref{th n=2} implies that it can never be strictly optimal to ignore one's signal. The main intuition is as follows. Assume agents with a unanimous sample herd for sure at equilibrium. If the population is imbalanced in some period $t$, in that most agents play the same action, then an agent  acting in period $t+1$ will most likely observe a unanimous sample and herd. It turns out that 
the information generated by the agents who draw balanced samples at $t+1$ is then insufficient to allow the population to adjust to a potential state change, and  the imbalance is reinforced on average. In other words, the forces of imitation are too strong, and the population is inexorably attracted into a consensus where it can no longer be responsive to changes in the environment. 

The detailed proof of Proposition \ref{th n=2} is in the Appendix. We provide a sketch in the simpler case where $\rho=1$ below.
 
 \medskip
\begin{proof}[Proof Sketch] %We provide some intuition for Theorem \ref{th n=2}, assuming $c>0$ and $\rho=1$.  
%Let's restrict for the sake of the exposition to the case $\rho=1$ and 
%Let an equilibrium be given. 
By symmetry, agents with a balanced sample have an interim belief $p_1=\frac12$ and thus acquire information, from Assumption 1. Let $\phi_\theta:=1-H_\theta(\frac12)$ denote the probability that such agents  end up playing action 1 in state $\theta.$ %\footnote{Notice that $\phi_1=1-\phi_0= v(\frac12)>\frac12.$}%Let $\pi_\theta$ denote the probability that an agent with interim belief $\frac12$ plays action 1 when the state is $\theta.$ By symmetry, one has $\pi_1=1-\pi_0>\frac12.$\footnote{Notice that $\pi_1= v(\frac12)+c.$}
For the sake of contradiction, we assume that agents drawing a unanimous sample herd for sure (that is, $p_2>\hat p).$ Then the probability $\chi_t$ that a generic $t$-agent plays action 1 is 
\[\chi_{t-1}^2+2\chi_{t-1}(1-\chi_{t-1})\phi_{\theta_t} .\]
Since there is a continuum of agents, %the fraction 
$\chi_t$ %of such agents 
is given by $\chi_t= g_{\theta_t}(\chi_{t-1})$, where
\[g_\theta(\chi)= \chi^2 + 2\chi(1-\chi) \phi_\theta.\]
Because $\phi_1>\frac12>\phi_0$, $g_1(\chi) >\chi > g_0(\chi)$ for each $\chi\in (0,1)$: the popularity of action 1, $\chi_t,$ increases over time as long as $\theta_t= 1$, and decreases otherwise, as shown in the following graph.

\begin{center}%
\begin{tikzpicture}[scale=0.6]%
%
%\draw[step=1cm,dashed,gray] (0,0) grid (10,10);
%
%%% TOP - AXES
\draw [->] (0,0) -- (11,0) node[at end, right] {$\chi_{t-1}$}; \draw
(-0.2,-0.2) node{$0$};
\draw [->] (0,0) -- (0,11) node [at end, left] {$\chi_t$};%
\draw [-] (-0.1,10)--(0.1,10) node [at end, left] {$1$};%

\draw [-] (10,0.1)--(10,-0.1) node [at end, below] {$1$};%

 \draw plot [smooth] coordinates {(0,0) (1,1.45)(2,2.8) (3,4.05) (4,5.2) (5,6.25) (6,7.2) (6.25,7.421) (7,8.05) (7.421,8.37794) (8,8.8) (9,9.45) (10,10)};%;
%\draw [thick] (0,0) .. controls (2,2.8) .. controls (5,6.25) .. controls (8,8.8) .. controls (10,10);
%
\draw (5,7) node [above]{{\footnotesize $g_1(\chi_{t-1})$}};
\draw (5,3) node [below]{{\footnotesize $g_0(\chi_{t-1})$}};

%node [above] {$g_1(x_{t-1})$}
 \draw plot [smooth] coordinates {(0,0)(1,0.55) (2,1.2) (3,1.95) (3.5,2.362)(4,2.8) (4.5,3.2625) (5,3.75) (5.72681,4.5) (6.81254,5.72681)(7.69846,6.81254) (8,7.2)  (8.37794,7.69846) (9,8.55)(10,10)}; %node [below] {$g_0(x_{t-1})$};

\draw [smooth] (0,0) -- (10,10);
\draw [->,dotted,red] (5, 5) -- (5,6.25) ;
\draw [->,dotted] (5,6.25) -- (6.25,6.25) ;

\draw [->,dotted,red] (6.25,6.25)-- (6.25,7.421) ;

\draw [->,dotted] (6.25,7.421) --(7.421,7.421) ;

\draw [->,dotted,red] (7.421,7.421) -- (7.421,8.37794);

\draw [->,dotted] (7.421,8.37794)-- (8.37794,8.37794);

\draw [->,dotted,cyan] (8.37794,8.37794)-- (8.37794,7.69846);

\draw [->,dotted] (8.37794,7.69846)-- (7.69846,7.69846);

\draw [->,dotted,cyan] (7.69846,7.69846)--(7.69846,6.81254);

\draw [->,dotted] (7.69846,6.81254)--(6.81254,6.81254);

\draw [->,dotted,cyan] (6.81254,6.81254)--(6.81254,5.72681);

\draw [->,dotted] (6.81254,5.72681)--(5.72681,5.72681);

\draw [->,dotted,cyan] (5.72681,5.72681)--(5.72681,4.5);

\draw [->,dotted] (5.72681,4.5)--(4.5,4.5);

\draw [->,dotted,cyan] (4.5,4.5)--(4.5,3.2625);

\draw [->,dotted] (4.5,3.2625)--(3.2625,3.2625);

\draw [->,thick] (8.37794,6)--(8.37794,7);

\draw (8.37794,6) node [below] {{\tiny State change}};

\end{tikzpicture}%
\end{center}%

%\textit{Fix an arbitrary period $\bar t$ and assume that $x_{\bar t}\in (0,1)$. Since any finite sequence of states has a positive probability, the fraction $x$ reaches values arbitrarily close to 0 and 1 with positive probability after $\bar t$. However, the sequence $(x_t)$ gets eventually trapped around a boundary point, with probability 1. That is, $(x_t)$ converges almost surely and the limit is equal to 0 or to 1. [Raphael: is this paragraph really useful in the text? I am tempted to cut.]}

%Here is why. 
For $\chi$ close to  $0$, the ratio $g_\theta(\chi)/\chi$ is approximately equal to $2\phi_\theta$, hence $\ln \chi$ increases or decreases by a fixed amount, $\ln 2\phi_\theta$ in every period. Thus, as soon as $\chi_t$ is close to zero, the stochastic process $(\ln \chi_t)_t$ approximately follows a random walk, except that the increments $\ln 2\phi_{\theta_t}$ of the walk are not \emph{iid} over time, but follow a symmetric Markov chain.  

Using $4\phi_0\phi_1=4\phi_1(1-\phi_1)<1$, we derive that $\ln 2\phi_1 <-\ln 2\phi_0$, that is, step sizes are \emph{higher} (in absolute value) when $\ln \chi_t$ \emph{decreases}. This means that the random walk has a downward drift. Consequently, if $\chi_t\leq \ep$ for some $t$, there is a positive probability, bounded away from zero, that the popularity $\chi$ will never exceed $\ep$ after $t$. If instead $\chi_t\in (\ep,1-\ep)$ for some $t$,  $\chi$ eventually reaches $[0,\ep]\cup [1-\ep,1]$ with probability 1. This implies that $\chi_t\to \{0,1\}$, almost surely: at the steady state, all agents play the same action ($\chi\in \{0,1\}$). Since the state of nature keeps changing over time, this implies in turn that  samples are uninformative at the steady state. However, in that case, all agents would rather acquire information, even with a unanimous sample. A contradiction. %[THE LAST BIT IS LEMMA 1, SO NOT SURE IT IS STRICTLY NECESSARY. ENOUGH TO SAY THAT ARGUMENT VIOLATES $p_2>\hat p.$]
\end{proof}

\medskip

At equilibrium, agents with a unanimous sample thus acquire information with positive probability $\beta>0$. The value of $\beta$ is given by an equilibrium condition. Intuitively, $\beta$ should be high enough so that the population does not become trapped in some irreversible consensus, and low enough so that future generations still have an incentive to acquire information. 
%(not too much information about the current state is generated). 
However, unlike for $n=1$, there is no explicit formula for $\beta$.%\footnote{The existence of such  $\beta$ follows from an equilibrium existence result we prove later. Uniqueness is unclear. [COULD BE REMOVED]}

\medskip
  
\paragraph{Welfare} 
%In the case where $c>0,$ Theorem \
From Proposition \ref{th n=2}, $p_2\leq\hat p.$ At equilibrium, interim beliefs are either $p_1=\frac12$, $p_2\in \left[\frac12, \hat p\right]$, or $p_0=1-p_2$. Hence, the equilibrium welfare is a convex combination of $v(\frac12)-c$ and of $v(p_2)-c.$ If signals are binary with precision $\pi,$ then $v(p)=\pi$ for all $p\in(1-\hat p,\hat p),$ and welfare is equal to $\pi-c=\hat p.$ If, however, signals are nonbinary, then the expected welfare is strictly smaller than $\hat p.$ 

\smallskip

The result below summarizes our findings on the equilibrium welfare for $n=0,1,2.$
%as long as $n\leq 2$.

%We can now compare equilibrium welfare for $n=0,1,2.$

\begin{corollary}\label{comparison}
With binary signals, the equilibrium welfare is equal to $\hat p$ for all $n\leq2$ and $\lambda >0$. 
%$\hat p=\pi-c$ for all $n\leq2.$
With nonbinary signals, the equilibrium welfare if $n=2$ is strictly larger than if $n=0.$ However, it is strictly lower than if $n=1$ as soon as $\lambda\leq\lambda_*.$
\end{corollary}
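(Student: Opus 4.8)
The plan is to reduce all three assertions to the welfare decompositions already recorded in the text. For $n=0$ the welfare is $W_0=v(\frac12)-c$; for $n=1$ with $\lambda\le\lambda_*$ Proposition \ref{n1} gives $W_1=\hat p$; and for $n=2$ the welfare is the convex combination
\[
W_2 = w\,\bigl(v(p_2)-c\bigr) + (1-w)\,\bigl(v(\tfrac12)-c\bigr),
\]
where $w\in[0,1]$ is the steady-state probability of drawing a unanimous sample and $p_2\in[\frac12,\hat p]$ is the interim belief it induces (by symmetry $v(p_0)=v(p_2)$, so the two unanimous realizations are pooled). Since every interim belief lies in $[1-\hat p,\hat p]$, acquiring information is optimal and each bracketed term is genuinely the agent's expected payoff. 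With these three formulas in hand, each claim becomes a comparison of convex combinations of values of $v$.

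For the binary case I would use that, with precision $\pi$, one has $v(p)=\pi$ for every $p\in[1-\hat p,\hat p]$: the two signal realizations $1-\pi,\pi$ sit outside this interval, so the extra information never overturns the herding decision there, and $v$ is flat. As $\frac12$ and $p_2$ both lie in $[1-\hat p,\hat p]$, this gives $v(\frac12)=v(p_2)=\pi$, hence $W_0=W_2=\pi-c=\hat p$; combined with $W_1=\hat p$ this yields the first statement for all $n\le2$ and all $\lambda>0$.

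For the nonbinary case the argument rests on three strict facts. First, $v$ is strictly increasing on $[\frac12,\hat p]$: differentiating its definition shows $v$ is strictly convex exactly where the unconditional belief distribution $H$ is active (for a density, $v''(p)=2h(1-p)$), and nonbinary signals place positive probability strictly inside $(1-\hat p,\hat p)$, which binary signals do not. Second, $p_2>\frac12$: in any equilibrium acquiring agents match the state with probability exceeding $\frac12$, so a unanimous sample is strictly informative. Third, $0<w<1$: unanimous samples occur with probability at least $\frac12$, while the equilibrium never collapses into a permanent consensus (precisely what Proposition \ref{th n=2} and $\beta>0$ secure), so $\chi$ takes interior values with positive probability and mixed samples arise with positive probability. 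Granting these, Claim 2 is immediate from $W_2-W_0=w\,(v(p_2)-v(\tfrac12))>0$; and Claim 3 follows from $v(p_2)-c\le v(\hat p)-c=\hat p$ together with the strict inequality $v(\frac12)-c<v(\hat p)-c=\hat p$ and $w<1$, which give $W_2<\hat p=W_1$ as soon as $\lambda\le\lambda_*$ (the restriction is needed only so that $W_1=\hat p$, since for $\lambda>\lambda_*$ one has $W_1=v(p_1)-c<\hat p$ and the comparison is ambiguous).

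The main obstacle is establishing the three strict facts rather than the inequalities themselves, which are routine convex-combination estimates once the facts hold. The genuinely delicate step is $w<1$: I must rule out that the steady-state population sits in $\{0,1\}$ almost surely. I would argue by contradiction—if $\chi\equiv1$, then all samples are unanimous and, the state being recurrent, uninformative, so $p_2=\frac12$; but then acquiring agents play $0$ with positive probability in state $0$, contradicting $\chi\equiv1$—so interior values of $\chi$ occur and mixed samples carry positive weight. The second delicate point is pinning down that \emph{nonbinary} forces signal mass strictly inside $(1-\hat p,\hat p)$, equivalently that $v$ is not flat on $[\frac12,\hat p]$; this is exactly where the binary/nonbinary dichotomy does all the work, and it is the only place the hypothesis on the signal structure enters.
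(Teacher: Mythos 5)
Your proposal is correct and follows essentially the same route as the paper: the corollary is obtained there directly from the welfare decompositions stated just before it ($W_0=v(\tfrac12)-c$, $W_1=\hat p$ for $\lambda\leq\lambda_*$ — and, via flatness of $v$, for all $\lambda$ in the binary case — and $W_2$ a convex combination of $v(\tfrac12)-c$ and $v(p_2)-c$), with the binary case settled by $v$ being constant on $[1-\hat p,\hat p]$ and the nonbinary case by the failure of that flatness. The three strict facts you isolate ($p_2>\tfrac12$, $0<w<1$, and strict monotonicity of $v$ on the relevant range) are precisely what the paper leaves implicit — including the identification of ``nonbinary'' with ``signal mass strictly inside $(1-\hat p,\hat p)$,'' a looseness you rightly flag but which is the paper's own rather than a defect of your argument.
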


%The interim equilibrium payoff is therefore $\pi-c$, irrespective of $\lambda >0$. 
% Thus, the equilibrium welfare is independent of $n$, as long as $n\geq 2$ and this holds even when information is costless, and independently of the degree of state persistency.
% This stands in contrast with results of Banerjee and Fudenberg (2004). 
%When the state is  fully persistent,  they show that there is asymptotically completel learning,  assuming  that each short-lived agent from a continuum of agents, samples $n\geq 2$ actions from the previous generation, and receives a binary signal. In a sense, Banerjee and Fudenberg's result differs from ours in the order in which limits are taken. Our steady state can be interpreted as a limit play, as time passes. Theorem \ref{th n=2} shows that as $\lambda\to 0$, this limit play does not exhibit full learning. 
 
%As Theorem \ref{th n=2 general} makes clear, these conclusions are partly an artifact of the binary assumption. 

%  \begin{theorem}[general signal distributions]\label{th n=2 general}
% Let $\lambda>0$ and $c>0$ be arbitrary. At any (symmetric) equilibrium, all agents acquire information, with at least positive probability. Equilibrium welfare may be strictly lower than %$v(\hat p)$.
% \end{theorem}
 
Corollary \ref{comparison} highlights two striking results on welfare. If signals are binary, the welfare is the same when observing a sample of size 1 or 2 as when there is no observational learning ($n=0).$ In other words, while there is observational learning at equilibrium (some agents do herd), it does not generate any welfare gain compared to the situation with no opportunities for social learning. Instead, when signals are nonbinary, equilibrium welfare is nonmonotonic in the sample size, at least for small $\lambda$ and $n.$ %Indeed, equilibrium welfare  is  $v(\frac12)$ for $n=0$,  $v(\hat p)$ for $n=1$ (for $\lambda$ small), and strictly lower than  $v(\hat p)$ for $n=2$. 
This is all the more surprising as such a result holds when the state is arbitrarily close to persistent, that is, when the % most  for  when the free-riding 
logic of observational learning acts most forcefully. 
%Let us remark that in the case where $c=0,$ Theorem \ref{th n=2} states that a necessary condition is that agents who observe unanimous samples sometimes use their signals, that is, do not systematically mimic what agents in their sample did. This implies that $p_a\leq\hat p,$ so welfare is also smaller than $\hat p$ in this case.% such that $v(p_a)\geq u(p_a),$ which is always true 
% \subsection{Welfare comparison}

\subsection{Comparison to fixed state models}

%What's wrong with the casual intuition is that the steady-state fraction of confused agents is an equilibrium object. As Theorems \ref{th n=2} and \ref{th n=2 general} make clear, this fraction will adjust as $\lambda\to 0$, in such a way that agents with a non-balanced sample hold a belief of $\hat p$ (or $1-\hat p$). 

%REWRITE HERE.
%For general signals, the interim payoff is $v(\frac12)$ with a balanced sample, at at most $v(\hat p)$ with an unbalanced one. Since the invariant probability of getting a balanced sample is (strictly) positive, expected equilibrium payoffs are below $v(\hat p)$ as soon as $v(\frac12)< v(\hat p)$. 

To assess the consequences on social learning of having an evolving state, it is instructive to confront our results to %those from 
the literature on social learning with a fixed state. The most relevant benchmark here is \cite{BF}, who study the long-run dynamics of behavior of a population with a fixed state and costless signals. Essentially their model coincides with ours when $\lambda=0$ and $c=0.$ Since Propositions \ref{n1} and \ref{th n=2} hold whether information acquisition is costly or not, the key difference between our analysis and \citeauthor{BF}'s lies in the possibility that the state changes. To make the comparison as transparent as possible, let us then consider our results in the case where signals are free ($c=0$). 

In the case $n=1,$ $\lambda_*=0$ when $c=0,$ so $p_1$ is given by the solution of \eqref{eq1bis}. One remarks that $p_1\to\hat p$ as $\lambda$ goes to $0.$ Accordingly, if signals have unbounded strength, that is, $\hat p=1,$ the unique steady-state corresponds to an efficient cascade where all agents play the correct action. This result is the same as in \cite{BF}, and is fully in line with  \cite{SS}. In the case where $\hat p<1$ (signals have bounded strength), our result slightly contrasts with \cite{BF}, who show that there is a continuum of equilibrium steady-states -- namely, any $p_1\geq\hat p$ is an equilibrium. In any such equilibrium, there is a fraction $p_1$ of agents that play the correct action in each period, and newborn agents mimic the action that they sample, so that the fraction of agents playing the correct action remains constant. For $p_1\geq\hat p,$ signals are uninformative at these beliefs, so herding is fully rational -- as then $v(p_1)=u(p_1).$ In the parlance of the canonical model, such an equilibrium corresponds to a cascade in which each agent essentially ignores his private information and mimics whatever the predecessor he samples does.\footnote{Notice that since there is a continuum of agents, not all agents play the same action, even in a cascade.} Such equilibrium multiplicity is impossible as soon as the state changes ($\lambda>0)$ because it cannot be that agents systematically ignore their signals. In other words, Lemma \ref{lemm1} then applies, imposing % that information be at least sometimes valuable at the steady-state equilibrium 
$p_1\leq\hat p,$ which reduces the set of equilibria to a unique one where $p_1=\hat p.$ %in a such an equilibrium where signals are fully ignored, beliefs still drift towards $\frac12$ to account for the possible changes in the state. no matter how small, Lemma \ref{lemm1} applies,

%By contrast, in our context where $\lambda$ is positive, no matter how small, Lemma \ref{lemm1} applies, imposing $p_1\leq\hat p,$ hence the equilibrium becomes unique. % Notice also that, when $c=0,$
%$\lamNotice that this continuum of steady-state equilibria exists only for $\lambda=0.$ As soon as 
%
%An interesting situation is the situation where $c=0$ and $\lambda\to0$ which essentially corresponds to the limit case of \cite{BF} where the state might change with arbitrarily small probability. In this case, 
%the solution $p_1$ is such that $p_1= v(p_1),$ goes to 
%If signals have bounded strength, however, that is, $\hat p<1,$ the continuum of equilibria derived in \cite{BF} obtains because $\lambda=0,$ but disappears as soon as $\lambda>0,$ as discussed above.

\smallskip
 
The main contrast with \cite{BF} arises when $n=2.$ While \citeauthor{BF} establish that, under a minimal assumption on the strength of signals, there is asymptotically complete learning, we establish that information aggregation with a state close to persistent only arises when $\hat p=1,$ that is, when signals have unbounded strength.\footnote{Thus, with unbounded signals, a changing state impairs information aggregation neither when $n=1$ nor $n=2.$ As we will see later, this result will also hold for all $n.$} However, as soon as signals have bounded strength ($\hat p<1),$ welfare is bounded away from 1, even when $\lambda$ is arbitrarily close to 0. 
%
%we derive that welfare is bounded away from 1 as soon as $\hat p<1,$ even when $\lambda$ is arbitrarily close to 0. Clearly, costly information acquisition, by implying that $\hat p<1,$ limits welfare. With free signals, we derive that social learning is (asymptotically, as $\lambda \to 0$) fully efficient when signals have unbounded strength (as then $\hat p=1$). In such a case, the fact that the state may change with a small probability does not impair the efficiency of social learning.
%However, if signals have bounded strength $(\hat p<1$), welfare is bounded away from efficiency with changing states, while social learning would be efficient with a fixed state. 
Accordingly, introducing even a small probability that the state changes has a strong (adverse) impact on the ability to learn from others. %In the next section, we analyse equilibrium behavior in the general case, and show that the $\hat p$ bound we obtain on welfare partially extends to the general case.
Another implication is that observing more actions compensates for limited signal quality with a fixed state, but exacerbates the inefficiency with a changing state.%[TRANSITION HERE: CHANGING STATES ALSO REVERTS MONOTONICITY OF SAMPLE SIZE. ESSENTIALLY KEY DIFFERENCE BETWEEN n=1 AND $n>1$ IS THAT CORRELATION MATTERS].

\section{The general case: equilibrium analysis}\label{general}
The result of Corollary \ref{comparison} that welfare is generically larger for $n=1$ than for $n=2$ illustrates a key difference between the case $n=1$ and the general case $n\geq2$ that we analyze in this Section. 
%The comparison between the cases $n=1$ and $n=2$ and the result that  suggests that
%This central section is devoted to the case of arbitrary sample sizes $n\geq2$ that feature a key difference with the case $n=1.$%and the general case $n\geq2.$
%\footnote{For the sake of the exposition, we develop the argument in the special case where $\rho=1.$ [COULD GO TO TEXT]} % the analysis is structurally more complex 
%
When $n=1$, the interim belief at date $t$ of an agent who observes $a_{t-1}$ is derived from the joint distribution of $\theta_{t-1}$ and $\chi_{t-1}$ but involves only the \emph{expected} value of $\chi_{t-1}$ in each state. That is, interim beliefs reflect how often \textit{on average} a random agent from the past plays the right action. %by first computing the distribution of $\theta_{t-1}$ given $a_{t-1}$, then adjusting to account for the evolution of the state between periods $t-1$ and $t$. The former computation only involves the expected fraction of agents playing actions 0 and 1, that is, the \emph{expected} value of $\chi_{t-1}$ in each state. 
However, as soon as $n\geq 2$, %this is no longer true and  
the conditional distribution of $\theta_{t-1}$ given a sample $k$ involves the correlation of actions within a typical sample. 
In particular, when past actions become more correlated with the state, they also become more correlated among each other, which reduces the informational content of the samples. %This second (indirect) effect arises only if $n\geq2.$ 

Formally, the likelihood ratio of the belief assigned to $\theta_{t-1}=1$ is given by (in the case $\rho=1)$
\begin{equation}\label{form LR}
 \frac{\int_0^1 \chi^k(1-\chi)^{n-k}d\mu_1(\chi)}{\int_0^1 \chi^k(1-\chi)^{n-k}d\mu_0(\chi)}
\end{equation}
where $\mu_\theta$ is the distribution of $\chi_{t-1}$ if $\theta_{t-1}= \theta.$ 
Such a belief involves all $k$-th moments of $\chi_{t-1}$ for $k\leq n;$ in addition, the evolution of $\chi^k$ over time involves even higher powers of $\chi$, as can be checked.

Consequently, the steady-state equilibrium equations involve the entire joint distribution of $(\theta_t,\chi_t)$. 
As we have seen in Section \ref{sec n=1} (Remark 2), this distribution is a highly complex object, which leaves little hope to be able to fully describe equilibrium steady-states in general. %We instead  focus on the 
This is why we focus on the persistent limit case ($\lambda \to 0)$, which we view as the most interesting case, and for which we are still able to derive significant results.

\medskip

This section is organized as follows. We start with a formal definition of strategies and equilibrium steady-states, and establish equilibrium existence. We then state our main results in the persistent limit, along two lines. We first establish a 
%state the consensus result. This is a 
general result on the distribution of sample compositions, hence on the aggregate behavior of the population. We next turn to results on the correlation between states and samples, i.e., on the equilibrium welfare. %Finally, we provide additional (partial) results on individual behavior at equilibrium.

%We assume throughout that either signals are costly ($c>0$), or that the distribution of beliefs has bounded precision.\footnote{In the concluding comments, we discuss steady-state analysis in the context of free signals with unbounded precision.}
%Formally, Assumption \ref{assu:Signals} is maintained throughout.
%\begin{assumption}
%\label{assu:Signals}Either  $c>0$,
%or $\bfH\left(\left[h_{\min},h_{\max}\right]\right)=1$ for some $0<h_{\min}<h_{\max}<1$.
%\end{assumption}

\subsection{Strategies and equilibrium\label{subsec:Strategies-and-Equilibrium}}

A strategy specifies whether to buy information as a
function of one's sample, and a lottery over actions as a function of %the available information, including 
one's sample and, if relevant, the acquired signal. For conciseness, we identify the composition of a sample  with the count $k$ of ones within the sample. A strategy is thus a pair $\sigma=(\beta,\alpha)$
of (measurable) maps, with $\beta:\{0,\ldots,n\}\to [0,1]$ and $\alpha:\{0,\ldots,n\}\times[0,1]\to\Delta\left(\{0,1\}\right)$, with the understanding that $\beta(k)$ is the probability of acquiring information upon observing sample $k,$ and $\alpha(k,q)$ is the probability of playing action $1$ upon observing sample $k$ and drawing a signal $q\in [0,1]$. Note that not acquiring information is equivalent to drawing a signal $q=\frac12$ with probability 1, so an agent with sample $k$ who does not acquire information plays 1 with probability $\alpha(k,\frac12).$ % by convention.
%with $\beta:\{0,\ldots,n\}\to [0,1]$ and $\alpha:\{0,\ldots,n\}\times\left([0,1]\cup\{\emptyset\}\right)\to\Delta\left(A\right)$, with the understanding that $\beta(k)$ is the probability of acquiring information when drawing a sample composed of $k$ 'ones' and $n-k$ 'zeros', and $\alpha(k,s)$ is the probability of playing action $1$ as a function of the sample $k$ and the information $s\in [0,1]\cup \{\emptyset\}$.
%We here identify the sample composition with the number $k\in\{0,\ldots,n\}$ of 1's in the sample, and denote the absence of a signal by $\emptyset$.
%
%
%To be precise, let a strategy $\sigma$ be given. A
Therefore, an agent who observes a sample $k$ %, the probability of buying information is $\beta(k)$, and the probability of playing
plays action 1 in state $\theta$ with probability 
\begin{equation}
\phi_{\theta}\left(k\right):=\beta(k)\int_0^1\alpha(k,q)dH_{\theta}(q)+(1-\beta(k))\alpha(k,\frac12).\label{eq:elpha}
\end{equation}

To express interim beliefs at equilibrium, we need to determine the prevalence of ones in the pool of actions from which one samples. If $\rho=1,$ this pool only consists of actions from the previous period, so this prevalence coincides with the fraction of agents who played 1 in the past period, namely $\chi_{t-1}.$ However, in the general case $\rho\leq 1,$  sampled actions may be older, and the prevalence corresponds to a weighted average of the popularities of action 1 in all past periods (see Eq. \eqref{motionX} below). Therefore, the pair $(\theta_t,\chi_t)$ does not have a Markovian structure if $\rho<1$, and hence is no suitable state variable for our analysis. The relevant state variable is actually the pair $(\theta_t,x_t),$ where $x_t$ is the probability that a given action sampled at date $t+1$ is one (that is, $x_t$ is the prevalence of ones in the pool from which one samples at date $t+1$), which instead does have a Markovian structure.\footnote{We use the convention that $x_t$ corresponds to the prevalence at $t+1$ to make sure that $x_t=\chi_t$ when $\rho=1.$ This ensures consistency with the discussion of the case $n=2,$ where we sketched the proof in the case $\rho=1$.}
% and thus define $g$ as a function of the current state and the prevalence of ones in the previous period, exactly as we derive here (see Eq. \eqref{motionX1}).} 
%\footnote{The apparent oddity in the time index for $x_t$ is motivated by the fact that we prefer to make sure that $x_t=\chi_t$ when $\rho=1,$  and to  ensure consistency with the discussion of the case $n=2$.}

Notice that 
\begin{equation}\label{motionX}
x_{t}=\displaystyle \sum_{m\geq 1}\rho(1-\rho)^{m-1}\chi_{t+1-m}= (1-\rho) x_{t-1} +\rho \chi_t
\end{equation}

Since the sample composition at date $t$ follows a Binomial distribution with parameters $n$ and $x_{t-1}$ ($k\sim B(n,x_{t-1})$), it follows, taking expectations, that the fraction of agents choosing action 1 in period $t$ reads
\begin{equation}\label{zx}
 \chi_{t}=  \sum_{k=0}^{n}{n \choose k}x_{t-1}^{k}(1-x_{t-1})^{n-k}\phi_{\theta_{t}}(k).
%\[ \chi_{t}=  \sum_{k=0}^{n}{n \choose k}x_{t-1}^{k}(1-x_{t-1})^{n-k}\phi_{\theta_{t}}(k), \]
\end{equation}

Combining \eqref{motionX} and \eqref{zx}, we derive
\begin{equation}\label{motionX1}
x_{t}=g_{\theta_{t}}\left(x_{t-1}\right),
\end{equation}
\begin{comment}
 \begin{eqnarray}
x_{t} & =(1-\rho)x_{t-1}+\rho \chi_{t} \nonumber\\
& =:g_{\theta_{t}}\left(x_{t-1}\right).\label{motionX1}
\end{eqnarray}
\end{comment}
where 
\begin{equation}\label{defg}
g_\theta(x)= (1-\rho) x +\rho\sum_{k=0}^{n}{n \choose k}x^{k}(1-x)^{n-k}\phi_{\theta}(k).
\end{equation}

Thus, given a strategy $\sigma$, $x_{t}$ is a deterministic function of $x_{t-1}$ and of $\theta_{t}$.%\footnote{Notice that in the case $\rho=1,$ we have $x_t=\chi_{t-1}$ from (\ref{motionX}). This is why, in the proof sketch of Proposition \ref{}, the $g$ function is defined as a function of $\theta_t$ and $\chi_{t-1}.$ This slight lag in time indices in the definition of $g$ allows us to express beliefs and welfare in a much simpler way. Indeed, in order to have $x_t=\chi_t,$ we would need to define $x_t$ as the probability that a given action within a sample drawn at date $t+1$ is $1.$ This would in turn imply that beliefs and welfare at date $t$ depend on $(\theta_{t},\theta_{t+1},x_t),$ which, we believe, complicates the exposition in an undesirable way.}%when $\rho=1

%\footnote{The apparent oddity in the time index for $x_t$ is motivated by the fact that we prefer to make sure that $x_t=\chi_t$ when $\rho=1$, and to ensure consistency with earlier sections.} 

\begin{comment}
More generally, however, when $\rho<1$, we need to distinguish the two variables. % fraction $x_{t}^{0}$ of agents playing action 1 in period $t$ from the probability 
Specifically, we have then

\begin{equation}
x_{t}=\displaystyle \sum_{m\geq 0}\rho(1-\rho)^{m}\chi_{t-m}= (1-\rho) x_{t-1} +\rho \chi_t\end{equation}
%that any action sampled at time $t+1$ is 1.

Since the sample at date $t+1$ $k\sim B(n,x_t)$, it follows taking expectations that
 the fraction of agents choosing action 1 in period $t+1$ is 
\[ \chi_{t+1}=  \sum_{k=0}^{n}{n \choose k}x_t^{k}(1-x_t)^{n-k}\phi_{\theta_{t+1}}(k), \]
and therefore, 
\begin{align*}
x_{t+1} & =(1-\rho)x_{t}+\rho \chi_{t+1} \\
& =:g_{\theta_{t+1}}\left(x_{t}\right).
\end{align*}
where 
\[g_\theta(x)= (1-\rho) x +\rho\sum_{k=0}^{n}{n \choose k}x^{k}(1-x)^{n-k}\phi_{\theta}(k) .\]

That is, $x_{t+1}$ is a deterministic function of $x_t$ and of $\theta_{t+1}$. %Importantly, the relevant state variable for the analysis is the pair $(\theta_t,x_t)$. 
%Indeed, fixing a strategy $\sigma$, the pair $(\theta_t,x_t)$ follows a Markov chain over the state space $\Theta\times [0,1]$, while $(\theta_t,\chi_t)$ does not have a Markovian structure. 
\end{comment}
\medskip 
 %In the case of binary signals, this expression simplifies to\footnote{This assumes that agents follow their signal whenever they buy information, and follow the majority whenever they don't.}
% \[x_{t+1}= (1-\rho)x_t + \rho \sum_{k=0}^n {n\choose k} x^k_t (1-x_t)^{n-k}\left\{\pi_{\theta_t}\beta(k) +(1-\beta(k)) 1_{k\geq \frac{n}{2}}\right\}.\]

 %When $\rho=1$, agents sample from the previous period, hence the fraction of agents playing action 1 in period $t$ coincides with the probability that agents in period $t+1$ sample action 1. 

%With these notations, the sample of an agent in period $t+1$ follows a binomial distribution $B(n,x_t)$ with parameters $n$ and $x_t$.

% \medskip

We can now formally define an equilibrium. An \emph{equilibrium steady-state} (ESS) is a pair $(\mu,\sigma)$ where
$\mu\in\Delta(\Theta \times[0,1])$ is an invariant measure for the
Markov chain $(\theta_t,x_t)$ induced by $\sigma$,
and $\sigma$ is optimal given $\mu$. 

The optimality condition on $\sigma$ reads %denoting $q_k$ the (interim) belief  given a sample $k$,
\begin{description}
\item[C1] $\beta(k)=1$ if $p_k\in\left(1-\hat p,\hat p\right)$
and $\beta(k)=0$ if $p_k\notin\left[1-\hat p,\hat p\right]$.
\item[C2] $\alpha(k,q)=1$ if $q>1-p_k$ and $\alpha(k,q)=0$ if $q<1-p_k$ 
\end{description}
where 
%\begin{equation}
%p_k=\prob(\theta_t=1\mid k)
%\end{equation}
$p_k=\prob(\theta_t=1\mid k)$ is the belief over the current state when drawing sample $k.$ %\footnote{Notice that, because the prevalence of ones at date $t$ is $x_{t-1},$ the ratio $\frac{\int_0^1 x_{t-1}^k(1-x_{t-1})^{n-k}d\mu(\theta_{t-1}=1,x_{t-1})}{\int_0^1 x_{t-1}^k(1-x_{t-1})^{n-k}d\mu(\theta_{t-1}=1,x_{t-1})+\int_0^1 x_{t-1}^k(1-x_{t-1})^{n-k}d\mu(\theta_{t-1}=0,x_{t-1})}$ corresponds to the conditional probability that the previous state was 1 given the current sample $k.$ Thus, one needs to take into account the transition probability between two successive states to pin down $p_k.$}
%
%[NICOLAS: THIS, TOGETHER WIH THE FOONOTE, IS AWFUL (AND UNFAIR :-). HERE IS A SUGGESTION, IF WE CHOOSE TO STICK TO THIS CONVENTION].
%where $p_k=\prob(\theta_t=1\mid k)$ is the belief over the current state. 
Formally, one has $p_k=(1- \lambda) \prob(\theta_{t-1}=1\mid k) +\lambda \prob(\theta_{t-1}=0\mid k)$, where
\begin{equation}
\frac{ \prob(\theta_{t-1}=1\mid k)}{ \prob(\theta_{t-1}=0\mid k)}= \frac{ \int_0^1 x^k(1-x)^{n-k}d\mu(1,x)}{\int_0^1 x^k(1-x)^{n-k}d\mu(0,x)}.
\end{equation}

The invariance equation for $\mu$  reads
%$\alpha(1\mid k,q)=1$ if ${\tt p}\left(q_{k},q\right)>p_1^{*}$ and
%$\alpha(1\mid k,q)=1$ if $q\geq q^*(p_k)$ and $\alpha(1\mid k,q)=1$ if $q\geq q^*(p_k)$ 
 %${\tt p}\left(q_{k},q\right)<p_0^{*}$. 
 \begin{description}
 \item[C3] $\mu(\theta,X)=\left(1-\lambda\right)\mu\left(\theta,g_{\theta}^{-1}\left(X\right)\right)+\lambda\mu\left(1-\theta,g_{\theta}^{-1}\left(X\right)\right)$ for any measurable $X\subset[0,1].$
 \end{description}

We focus on symmetric equilibria and  require in addition that $\mu$ and $\sigma$ treat the two states and actions symmetrically. Formally:
 \begin{description}
 \item[C4] $\beta(k)=\beta(n-k)$ and $\alpha(k,q)= 1-\alpha(n-k,1-q)$ for each $k$ and $q$.
  \item[C5] $\mu$ is invariant under the transformation $(\theta,x)\mapsto (1-\theta,1-x)$. 
 \end{description}

Let $G(\lambda,\rho)$ denote the game. The equilibrium welfare coincides with the expected payoff of a typical agent, obtained by subtracting the cost of information acquisition from the expected probability of matching the state. 
The former is given by
\[c \int_{\left[0,1\right]}\sum_{k}{n\choose k} x_{t}^{k}\left(1-x_{t}\right)^{n-k}\beta(k) d\mu(\Theta,x_t).\]
The latter is given by 

 % is determined by the correlation between $\chi_t$ and $\theta_t$. 
\begin{align*}
\intop_{\Theta\times\left[0,1\right]}\sum_{k}{n\choose k} x_{t}^{k}\left(1-x_{t}\right)^{n-k}\left[\mathbf{1}_{\theta_{t+1}=1}\phi_{\theta_{t+1}}\left(k\right)+\mathbf{1}_{\theta_{t+1}=0}\left(1-\phi_{\theta_{t+1}}\left(k\right)\right)\right]d\tilde\mu\left(\theta_{t+1},x_{t}\right),
\end{align*}
where $\tilde\mu\left(\theta_{t+1},x_{t}\right)=(1-\lambda)\mu\left(\theta_{t},x_{t}\right)+\lambda \mu\left(1-\theta_{t},x_{t}\right)$
%\tilde\mu(\theta_{t+1},x_t)$ 
is the invariant measure for $x_t$ and the subsequent state $\theta_{t+1}.$%   derived from  $\mu$ and the probability $\lambda$ of state changes.

%[R: I PREFER THE VERSION WITH ALL TERMS UNDER SAME INTEGRAL. N: I THINK THAT SPLITTING IS HELPFUL FOR THE UNDERSTANDING, BUT NO STRONG FEELING. R: NO STRONG FEELING BUT I THINK ONE COULD POSSIBLY KEEP THE TEXT AS IT IS (MENTION THAT THERE ARE TWO TERMS) AND HAVE A SINGLE EQUATION WITH BOTH TERMS]

%[R: Why are there $t$ subscripts in $x_t$?]

%[R to M: YOU WANTED TO WRITE DOWN THE EXPRESSION AS A FUNCTION OF $\mu$ and not $\hat\mu.$ DID YOU HAVE IN MIND $d\tilde\mu\left(\theta_{t+1},x_{t}\right)=(1-\lambda)d\mu\left(\theta_{t},x_{t}\right)+\lambda d\mu\left(1-\theta_{t},x_{t}\right)?$]
%[R to N: I DON'T UNDERSTAND WHY THIS IS NOT THE SAME MEASURE IN BOTH TERMS. IT SEEMED TO ME THAT THE EXPECTED COST OF INFO ACQUISITION ALSO DEPENDED ON THE JOINT DISTRIB $\tilde\mu(\theta_{t+1},x_{t})$ AND NOT ON $\mu(\theta,x)$. AM I MISSING SOMETHING?]

%The equilibrium welfare is determined by the correlation between $\chi_t$ and $\theta_t$. 

\bigskip

Before exploring equilibrium behavior, we first establish equilibrium existence. %need to make sure that an equilibrium exists, as we establish in Theorem \ref{thm existence}. 
 \begin{theorem}\label{thm existence}
There exists a symmetric equilibrium steady state. 
\end{theorem}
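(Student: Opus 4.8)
The plan is to recast an ESS as a fixed point of a correspondence and apply the Kakutani--Fan--Glicksberg theorem. Because the dynamics and payoffs depend on a strategy $\sigma$ only through the reduced profile $\phi=(\phi_\theta(k))$ of \eqref{eq:elpha}, and because C4 recovers $\phi_0$ from $\phi_1$ via $\phi_0(k)=1-\phi_1(n-k)$, I would track the pair $(\mu,\phi)$ in the compact convex set $K:=M^{\mathrm{sym}}\times[0,1]^{n+1}$, where $M^{\mathrm{sym}}\subset\Delta(\Theta\times[0,1])$ is the weak-$*$ compact convex set of measures obeying C5. A direct computation shows that imposing $\phi_0(k)=1-\phi_1(n-k)$ forces the symmetry $g_0(x)=1-g_1(1-x)$ on the maps of \eqref{defg}, so the symmetric subspace is preserved by the dynamics. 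I would then define $T(\mu,\phi)=I(\phi)\times\mathrm{BR}(\mu)$, where $I(\phi)$ is the set of symmetric invariant measures of the chain $x_t=g_{\theta_t}(x_{t-1})$ induced by $\phi$, and $\mathrm{BR}(\mu)$ is the set of reduced profiles generated by strategies satisfying C1--C2 against the beliefs $p_k$ read off from $\mu$. A fixed point of $T$ is exactly a symmetric ESS.

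The measure side is the benign part. For fixed $\phi$ the map $x\mapsto g_\theta(x)$ is continuous on the compact interval, so $I(\phi)\neq\emptyset$ by Krylov--Bogolyubov; symmetrizing any invariant $\mu$ with its image under $(\theta,x)\mapsto(1-\theta,1-x)$ (again invariant, by the symmetry of $g_\theta$) yields a symmetric one. The set $I(\phi)$ is convex and weak-$*$ compact, and has closed graph in $\phi$: since $g_\theta(x)$ is jointly continuous in $(x,\phi)$ (polynomial in $x$, affine in $\phi$), the transition operator acts continuously on bounded continuous test functions, and invariance passes to weak-$*$ limits.

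The strategy side carries the difficulty. Given $\mu$, the belief $p_k$ is an affine image of the ratio $\int x^k(1-x)^{n-k}\,d\mu(1,x)\big/\int x^k(1-x)^{n-k}\,d\mu(0,x)$; each integral is weak-$*$ continuous in $\mu$, the set of optimal reduced profiles is convex (mixtures of optimal strategies are optimal and $\sigma\mapsto\phi$ is affine), and it is upper hemicontinuous in $p_k$ once one fills in the indifference values at the thresholds $p_k\in\{1-\hat p,\hat p\}$ and at any atoms of $H_\theta$. The genuine obstacle is that the denominator $\int x^k(1-x)^{n-k}\,d\mu(\Theta,x)$ vanishes for interior $k$ when the $x$-marginal concentrates on $\{0,1\}$ --- precisely the degenerate full-consensus measures produced by the herding dynamics of Proposition \ref{th n=2}. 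There $p_k$ is ill-defined and $\mathrm{BR}$ is not obviously u.h.c.

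To handle this I would perturb and pass to the limit. In the perturbed game $G^{\ep}$ I constrain each action lottery to $\alpha(k,q)\in[\ep,1-\ep]$; then $\phi_\theta(k)\in[\ep,1-\ep]$ forces $g_\theta([0,1])\subseteq[\rho\ep,1-\rho\ep]$, so every invariant measure is supported in $[\rho\ep,1-\rho\ep]$, all denominators exceed $(\rho\ep)^n>0$, and the beliefs $p_k$ are continuous in $\mu$. On $K$ the correspondence $T^{\ep}$ is then nonempty- and convex-valued with closed graph, and Kakutani--Fan--Glicksberg yields a symmetric equilibrium $(\mu^{\ep},\phi^{\ep})$ of $G^{\ep}$. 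Finally I would let $\ep\to0$, extracting a subsequence with $\mu^{\ep}\to\mu^{*}$ weak-$*$, $\phi^{\ep}\to\phi^{*}$, and $p_k^{\ep}\to\bar p_k$ for every $k$. Invariance of $\mu^{*}$ for $g^{\phi^{*}}$ follows from the closed graph above, and the vanishing tremble constraint relaxes, so $\phi^{*}$ is optimal at the limiting beliefs. The only point to dispatch is beliefs at samples whose probability collapses: if $\int x^k(1-x)^{n-k}\,d\mu^{*}(\Theta,x)=0$, that sample occurs with probability zero, so $\phi^{*}(k)$ affects neither the invariance equation C3 (whose integrand is $\mu^{*}$-null) nor welfare, and optimality there is vacuous; at every $k$ of positive probability the denominator is positive, $\bar p_k$ is the genuine belief, and C1--C2 hold by upper hemicontinuity. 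Hence $(\mu^{*},\phi^{*})$ is a symmetric ESS. The step I expect to be most delicate is exactly this limit: controlling beliefs along the vanishing perturbation and confirming that optimality survives at every positive-probability sample.
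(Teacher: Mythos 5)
Your proposal is correct in substance, but it takes a genuinely different route from the paper's, and the difference is precisely at the point you identified as the crux. Both arguments are Kakutani-based, and the two ``benign'' correspondences (invariant measures for a given reduced profile $\phi$, and optimal reduced profiles given beliefs) are treated essentially identically. Where you diverge is in handling the ill-posedness of Bayes' rule at consensus measures. The paper never perturbs: it enlarges the fixed-point space to \emph{triples} $(\mu,p,\phi)$, carrying the belief system $p=(p_k)$ as a separate coordinate, and defines consistency not by a ratio but by the cross-multiplied equation $P_{1}(k\mid\mu)(1-p_{k})=P_{0}(k\mid\mu)p_{k}$. When sample $k$ has zero probability under $\mu$, this equation holds for \emph{every} $p_k\in[0,1]$, so the belief correspondence $\Psi_2$ stays nonempty-, convex-valued and upper hemicontinuous even at degenerate measures, and a single application of Kakutani delivers the ESS directly. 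You instead keep beliefs out of the fixed-point variable, restore continuity by the tremble $\alpha(k,q)\in[\ep,1-\ep]$, and recover the equilibrium as $\ep\to0$; your limit step then has to re-derive by hand exactly what the cross-multiplication trick gives for free, namely that at vanishing-probability samples any limit belief is consistent and the limiting behavior is optimal against it. The paper's route is more economical (one fixed point, no limit games); yours buys a marginally stronger object --- beliefs at zero-probability samples that arise as limits of genuine Bayesian beliefs, a trembling-hand-style refinement --- at the cost of the extra bookkeeping you acknowledge.

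One imprecision should be fixed: $T^{\ep}$ is \emph{not} well-defined (let alone closed-graph) on all of $K$, because for an arbitrary symmetric $\mu$ concentrated on $\Theta\times\{0,1\}$ the beliefs remain ill-defined even in the perturbed game; the support bound $[\rho\ep,1-\rho\ep]$ you derive applies only to invariant measures of the perturbed dynamics, not to every element of $M^{\mathrm{sym}}$. The remedy is immediate: restrict the domain to $K^{\ep}:=M^{\mathrm{sym}}_{\ep}\times[\ep,1-\ep]^{n+1}$, where $M^{\mathrm{sym}}_{\ep}$ is the (convex, weak-$*$ compact) set of symmetric measures supported on $\Theta\times[\rho\ep,1-\rho\ep]$. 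By construction $I(\phi)\subseteq M^{\mathrm{sym}}_{\ep}$ for every admissible $\phi$ and $\mathrm{BR}^{\ep}(\mu)\subseteq[\ep,1-\ep]^{n+1}$, so $T^{\ep}$ maps $K^{\ep}$ into itself and Kakutani applies there; the rest of your argument is unaffected.
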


The proof (in the Appendix) uses a standard fixed-point argument. %It appears in the Appendix.

\subsection{Aggregate behavior: a consensus result}

We first derive a general result regarding the aggregate behavior in the population. It asserts that, at any ESS, the distribution of $x$ becomes arbitrarily concentrated around 0 and 1 in the persistent limit $(\lambda\to0).$

\begin{theorem}\label{th consensus}
%\begin{theorem}[Asymptotic Consensus]\label{th consensus}
Let $n\geq 2$ %and $\rho>0$ be given, 
and let $(\mu_\lambda, \sigma_\lambda)$ be any ESS of $G(\lambda,\rho)$. Then, as $\lambda\to 0$, the marginal of $\mu_\lambda$ over  $x\in [0,1]$  converges to the uniform distribution over the two-point set $\{0,1\}$.\footnote{Limits are understood in the sense of weak convergence of probability measures over $[0,1].$}
\end{theorem}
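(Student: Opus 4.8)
The plan is to pass to the persistent limit through a compactness argument and to identify the limit of $\mu_\lambda$ with an invariant measure of a \emph{deterministic} system. Since probability measures on the compact space $\Theta\times[0,1]$ are weakly sequentially compact and the finitely many equilibrium parameters $\beta_\lambda(k),\phi_\theta^{\lambda}(k),p_k^{\lambda}$ all live in compact sets, I would first extract a sequence $\lambda_j\to0$ along which $\mu_{\lambda_j}\to\mu_*$ weakly, $\phi_\theta^{\lambda_j}(k)\to\phi_\theta^*(k)$, and hence, by \eqref{defg}, $g_{\theta}^{\lambda_j}\to g_\theta^*$ uniformly. Letting $\lambda_j\to0$ in the invariance identity C3, the cross-state term of order $\lambda$ disappears and the limit satisfies $\mu_*(\theta,X)=\mu_*\big(\theta,(g_\theta^*)^{-1}(X)\big)$; that is, the conditional law $\mu_*(\cdot\mid\theta)$ is invariant under the \emph{single} deterministic map $g_\theta^*$. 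Because each $g_\theta$ maps $[0,1]$ into itself and is nondecreasing (the coefficients in \eqref{defg} are nonnegative once one checks, by a monotone-likelihood-ratio argument, that $p_k$ and hence $\phi_\theta(k)$ are nondecreasing in $k$), every orbit converges monotonically to a fixed point and any invariant measure is supported on the fixed-point set. Hence $\mu_*(\cdot\mid1)$ is carried by the fixed points of $g_1^*$ and, by the symmetry C5, $\mu_*(\cdot\mid0)$ by their reflections $1-x$. It then remains only to show that the fixed points carrying mass are exactly $0$ and $1$, after which C5 forces the weights $\tfrac12,\tfrac12$; running this along every subsequence gives convergence of the whole family.

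To locate the mass I would study the fixed points of $g_1^*$. Linearising at the endpoint $x=0$ yields the multiplier $(g_1^*)'(0)=(1-\rho)+\rho\,n\,\phi_1^*(1)$ (using $\phi_\theta^*(0)=0$), and symmetrically at $x=1$; the factor $n$ is exactly where the hypothesis $n\geq2$ creates a nonlinearity absent when $n=1$. The key computation, generalising the inequality $4\phi_0\phi_1<1$ of the $n=2$ sketch, is that the product of the two endpoint multipliers across a state-$1$/state-$0$ pair is strictly less than $1$. Since the symmetric chain spends equal time in each state, this makes $\ln x$ (resp. $\ln(1-x)$) a random walk with strictly negative drift toward that endpoint, so that once $x$ enters a small neighbourhood of an endpoint it remains within it for an expected time of order $1/\lambda$, leaving only when a sufficiently long run of the opposite state carries it across to the other endpoint. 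Coupled with a uniform-drift estimate — that on $[\ep,1-\ep]$ the gaps $g_1(x)-x$ and $x-g_0(x)$ are bounded below, so $x$ crosses the deep interior in a number of steps bounded independently of $\lambda$ — this shows that the fraction of time spent in $[\ep,1-\ep]$ is $O(\lambda)$, i.e. $\mu_\lambda([\ep,1-\ep])\to0$.

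The genuine difficulty, and the step that must invoke the optimality conditions C1--C2 rather than the dynamics alone, is to rule out interior fixed points of $g_1^*$ that carry mass, since the uniform-drift estimate above presupposes no such fixed point on $[\ep,1-\ep]$. A priori $g_1$ need not satisfy $g_1>\mathrm{id}$ throughout $(0,1)$: if information is still acquired on unanimous samples then $\phi_1(n)<1$, so $g_1(1)<1$ and there is a fixed point $\bar x_\lambda<1$ just below the endpoint; worse, the alternating maps $(g_1,g_0)$ could in principle lock the population into an interior two-cycle $\{x^*,1-x^*\}$, a non-consensus steady state the theorem must exclude. I would break this by a bootstrap. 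A weak a priori concentration estimate shows that the interim belief at a (near-)unanimous sample becomes extreme, $p_n\to1$, as $\lambda\to0$; by C1 an extreme belief with $p_n>\hat p$ forces $\beta(n)\to0$, hence $\phi_1(n)\to1$, $g_1(1)\to1$, and the near-endpoint fixed point $\bar x_\lambda\to1$; feeding this back sharpens the concentration and, simultaneously, drives $\phi_\theta^*(0)\to0$ as assumed above. For a putative interior two-cycle bounded away from the endpoints I would use the self-consistent likelihood ratios, which at $\{x^*,1-x^*\}$ take the explicit form $p_k/(1-p_k)=\big(x^*/(1-x^*)\big)^{2k-n}$, to contradict the fixed-point equation $\mathbb{E}_{K\sim\mathrm{Bin}(n,x^*)}[\phi_1^*(K)]=x^*$ at a symmetric equilibrium.

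I expect the circular dependence in the previous paragraph to be the main obstacle: equilibrium information acquisition vanishing on unanimous samples is what produces concentration, yet one needs concentration to pin down the beliefs that make it vanish. The delicate point within it is controlling the contribution of the information-acquiring agents (the $\beta(k)$ terms) uniformly in $\lambda$, so that the bootstrap closes and no mass can survive on any interior fixed point.
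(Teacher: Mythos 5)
Your route (compactness, passage to the limit maps $g_\theta^*$, support on fixed points, endpoint-drift analysis) is genuinely different from the paper's, but two of its load-bearing steps fail. First, your ``key computation'' --- that the product of the two endpoint multipliers $\prod_{\theta}\bigl((1-\rho)+\rho n\,\phi_\theta^*(1)\bigr)$ is strictly below $1$ --- is not a fact but a parameter restriction: at an ESS where agents with sample $k=1$ acquire information, this product is exactly the left-hand side of the hypothesis of Theorem \ref{th learning sym} (\textbf{Assumption A}; condition \eqref{SCbinary} for binary signals), which fails for instance when $n=3$, $\pi\le\frac23$, $\rho=1$, and fails for every large $n$ since it grows like $n^2\rho^2\pi(1-\pi)$. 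Theorem \ref{th consensus} carries no such hypothesis, so negative drift at the endpoints cannot be the mechanism in general: when the product exceeds $1$, the population still concentrates near $\{0,1\}$, but because escaping a (wrong) consensus requires climbing back from exponentially small $x$ --- during which $x$ stays below $\ep$ --- not because of absorption; your step ``it remains within it for an expected time of order $1/\lambda$, leaving only when a sufficiently long run of the opposite state carries it across'' is simply false there. Second, the monotonicity of $g_\theta^*$ rests on an asserted MLRP property (that $p_k$, hence $\phi_\theta(k)$, is nondecreasing in $k$) which is never proved and is not known to hold at an arbitrary ESS; this is precisely why the paper has to restrict to \emph{regular} equilibria in Theorem \ref{th learning sym} and admits it cannot rule out non-regular ones. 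Without monotonicity, invariant measures of the limit map need not sit on fixed points at all.

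The step you yourself flag as the crux --- ruling out interior fixed points carrying mass --- is where the proposal cannot close. Your bootstrap starts from ``$p_n\to1$,'' but this is false in the paper's main regime: Propositions \ref{lem:beliefs not too high} and \ref{lem:beliefs not too low} show that $p_n=\hat p<1$ \emph{exactly} for small $\lambda$ under Assumption A, with $\beta(n)>0$ pinned down by indifference, so no ``extreme belief $\Rightarrow\beta(n)=0$'' argument can get started; the circularity you worry about is real and is not resolved. The paper's own proof sidesteps all of the dynamics. It establishes the stronger quantitative bound of Theorem \ref{th consensus general}, namely $\int x(1-x)\,d\mu\le K\Lambda$ for \emph{every} $(\lambda,\rho)$ and every ESS, by a pure value-of-information argument: copying the first sampled action already earns at least $\kappa^*-\Lambda\ge w^*-\Lambda$ (Lemma \ref{lem:.Key informational}), so the marginal value of the second sampled action is at most $\Lambda$; yet if the two sampled actions disagreed with non-negligible probability, the deviation ``copy when they agree, acquire information when they disagree'' would gain at least $\tfrac12\,\prob\bigl(a^{(1)}\neq a^{(2)}\bigr)\bigl(v(\tfrac12)-c-\tfrac12\bigr)-O(\Lambda)$ over copying, using the martingale property of beliefs and the affineness of $u$ on $[0,\tfrac12]$. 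Hence $\prob\bigl(a^{(1)}\neq a^{(2)}\bigr)=2\int x(1-x)\,d\mu=O(\Lambda)$, which is exactly the interior-mass statement your approach needs as an input; symmetry then forces the uniform limit on $\{0,1\}$. Any repair of your argument would have to import an equilibrium estimate of this kind anyway, at which point the dynamical machinery is no longer doing the work.
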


When the state is close to fully persistent, $x_t$ is then close to 0 or 1. That is, in a typical period, an agent most likely observes a unanimous sample composed only of ones or zeros.  Actions within one's sample thus tend to be highly correlated. This directly implies that %agents from the past most likely played the same action (since actions sampled are just actions taken in the past). 
%Thus, Theorem \ref{th consensus} implies that in a typical period,  
%most likely most agents take the same action: 
$\chi_t$ is also close to 0 or 1.
In other words, choices in a given period are highly correlated within the population, and the correlation becomes perfect in the limit $\lambda\to0$: society asymptotically achieves a consensus. 
%The expression \emph{Asymptotic Consensus} needs to be interpreted this way. 
However, because the state changes, this approximate consensus is not permanent. That is, the consensus evolves as the result of the state dynamics.% (see the discussion below).

%Another by-product of Theorem \ref{th consensus} is that agents acquire information with a vanishingly small probability as $\lambda\to 0$.  Indeed, agents who acquire information at equilibrium [have an interim belief in $[1-\hat p,\hat p]$, and then] play both actions with a probability bounded away from 1. Such agents must be rare, for otherwise $\chi_t$ would not be close to 0 or to 1. [R: REMOVE PART IN BRACKETS? DOES THIS ARGUMENT HOLD WITH PERFECT SIGNALS?]
%In particular, whenever a significant fraction of agents acquire information in period, then $\chi_t$ is bounded away from 0 and 1. By Theorem \ref{th consensus}, this has low probability.
% Section \ref{sec welfare}. 

\medskip
Theorem \ref{th consensus} follows immediately from the more general Theorem \ref{th consensus general} below.

\begin{theorem}
\label{th consensus general} Let $n\geq2.$ There exists a constant $K<\infty$ such that %the following holds.
%For every $\lambda,\rho>0$ and every ESS $(\mu,\sigma)$ of $G(\lambda,\rho)$, one has 
\[
\int_{\Theta\times [0,1]} x\left(1-x\right)d\mu\left(\theta,x\right)\leq K\Lambda.
\]
for every $\lambda,\rho>0$ and every ESS $(\mu,\sigma)$ of $G(\lambda,\rho).$
\end{theorem}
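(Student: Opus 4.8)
The plan is to convert the claim into a drift estimate for the Lyapunov function $V(x)=x(1-x)$ under the deterministic dynamics $x_t=g_{\theta_t}(x_{t-1})$ of \eqref{motionX1}. Writing $P_\theta(x)=\sum_{k}\binom{n}{k}x^k(1-x)^{n-k}\phi_\theta(k)$ and $\Delta_\theta=P_\theta-\mathrm{id}$, so that $g_\theta(x)=(1-\rho)x+\rho P_\theta(x)$, a direct expansion gives the one-step identity $V(g_\theta(x))-V(x)=\rho\,\Delta_\theta(x)(1-2x)-\rho^2\Delta_\theta(x)^2$. I would first record the structural properties of $g_\theta$ that follow from \eqref{defg} and from the symmetry conditions \textbf{C4}--\textbf{C5}: each $g_\theta$ is nondecreasing (since $\phi_\theta(k)$ is nondecreasing in $k$), $g_1\ge g_0$, and $g_1(x)=1-g_0(1-x)$. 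Because $\mu$ is invariant for $(\theta_t,x_t)$ by \textbf{C3} and the state persists with probability $1-\lambda$, stationarity of $\mathbb{E}_\mu[V]$ reads
\[
(1-\lambda)\int\big(V(x)-V(g_\theta(x))\big)\,d\mu(\theta,x)=\lambda\int\big(V(g_{1-\theta}(x))-V(x)\big)\,d\mu(\theta,x),
\]
whose right-hand side is at most $\lambda/4$ since $0\le V\le \tfrac14$. The whole problem is thus to bound the left-hand ``aligned'' drift from below by a multiple of $\int V\,d\mu$, up to an $O(\Lambda)$ error.

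The core is a region decomposition. Call $(\theta,x)$ \emph{aligned} when the state matches the nearer corner (say $\theta=1$ and $x$ above the relevant interior threshold of $g_1$) and \emph{misaligned} otherwise. On the aligned region $g_\theta$ contracts $x$ monotonically toward the corner matching $\theta$, so $V(x)-V(g_\theta(x))\ge 0$ and, away from the immediate neighbourhood of $x=\tfrac12$, is bounded below by a multiple of $\rho\,V(x)$. A pair can be misaligned only in the wake of a state change, and since each application of $g_\theta$ moves $x$ by $O(\rho)$, the chain needs $O(1/\rho)$ periods to cross back through the middle; as state changes occur at stationary rate $\lambda$, the misaligned/middle region carries $\mu$-mass $O(\lambda/\rho)$, which is exactly of order $\Lambda=\lambda/\big(\rho+2\lambda(1-\rho)\big)$ (this is where the general-$\rho$ rate $\Lambda$, rather than $\lambda$, must appear). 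On that region $V\le\tfrac14$, so it contributes $O(\Lambda)$ to $\int V\,d\mu$. Feeding both estimates into the stationarity identity and solving for $\int V\,d\mu$ yields $\int V\,d\mu\le K\Lambda$ with $K$ independent of $\lambda,\rho$.

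The main obstacle is that contraction toward the corners is genuinely a property of \emph{equilibrium}, not of the dynamics alone: for an arbitrary strategy with $\phi_1(0)>0$ the map $g_0$ has an attracting fixed point bounded away from $0$, and $\mu$ would not concentrate. What rescues the argument is the feedback in \textbf{C1}--\textbf{C3}: once $\mu$ is concentrated near the corners, the interim belief $p_n$ after a unanimous sample is pushed into the herding region $p_n\ge\hat p$ (and symmetrically $p_0\le1-\hat p$), so agents with a unanimous sample herd, $\phi_1(n)\to1$ and $\phi_0(0)\to0$, and the corners become asymptotically absorbing fixed points of $g_\theta$. This couples the concentration of $\mu$ with the herding behaviour of $\sigma$ circularly, and breaking that loop while keeping $K$ uniform across all $(\lambda,\rho)$ and all equilibria is the delicate part. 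I would resolve it either by a bootstrap (first extract a crude concentration directly from \textbf{C3}, then upgrade it to the sharp corner behaviour and reinsert it into the drift bound) or, as in the $n=2$ sketch, by assuming toward a contradiction that $\int V\,d\mu$ is not $O(\Lambda)$ and deriving that samples become asymptotically uninformative, contradicting the belief equations. Finally, the neighbourhood of $x=\tfrac12$---where $V'=0$ and the drift of $V$ is only second order---cannot be controlled by multiplicative contraction and must be lumped into the $O(\Lambda)$ middle mass instead.
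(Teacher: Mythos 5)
Your route (a Lyapunov/drift analysis of the equilibrium dynamics) is genuinely different from the paper's, but as it stands it contains a false quantitative claim and an unresolved circularity that together leave the theorem unproven. The false claim is that the misaligned region carries $\mu$-mass $O(\lambda/\rho)$ because the chain "needs $O(1/\rho)$ periods to cross back through the middle." At an ESS with $\hat p<1$ this is wrong: after a state change the population lingers in the \emph{wrong consensus} for a number of periods governed not by $\rho$ but by the endogenous information-acquisition probabilities, which at equilibrium are calibrated to be tiny (of order $\lambda$, as in Proposition \ref{n1}); indeed the paper's Theorem \ref{th learning sym} shows the stationary mass of the misaligned (wrong-consensus) region converges to $\tfrac12(1-\hat p)$, bounded away from zero, as $\lambda\to0$. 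So your decomposition collapses: the only region whose mass is actually $O(\Lambda)$ is the middle region where $V(x)=x(1-x)$ is not small — but that is precisely the statement to be proven, which is the circularity you yourself flag as "the delicate part." Moreover, on the wrong-consensus region the drift of $V$ is \emph{positive} (the dynamics expand away from that corner at an endogenous multiplicative rate), and with $\Theta(1)$ mass sitting there the stationarity identity cannot be closed without controlling that rate — which is exactly the hard equilibrium analysis the paper reserves for Theorems \ref{th learning sym} and \ref{thm blurry}, not for this result.

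The paper's actual proof bypasses the dynamics entirely with a short revealed-preference / value-of-information argument. The key lemma: the strategy of copying a single sampled action yields at least $w^*-\Lambda$, since the sampled agent matched the state with probability $\kappa^*\geq w^*$ and the state has changed since with probability $\Lambda$. Hence the \emph{marginal} value of observing a second action on top of the first is at most $\Lambda$. On the other hand, if the two sampled actions disagree with probability $q$, the agent's belief upon disagreement is $\tfrac12$ (by symmetry and exchangeability), where acquiring information is strictly valuable by Assumption \ref{Hyp1}: it is worth $d:=v(\tfrac12)-c-\tfrac12>0$. Using the martingale property of beliefs and the affineness of $u$ on $[0,\tfrac12]$, the incremental value of the second observation is then at least of order $qd$ minus $O(\Lambda)$ terms, forcing $q\leq\frac{3}{d}\Lambda$. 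Since $\int_{\Theta\times[0,1]}x(1-x)\,d\mu=\prob(a^{(1)}=0,\,a^{(2)}=1)=\tfrac12 q$, the theorem follows with $K=\frac{3}{2d}$, uniformly over $(\lambda,\rho)$ and over equilibria — precisely the uniformity your approach struggles to obtain, because the paper's argument uses only optimality of $\sigma$ against $\mu$ and never needs to know how $\mu$ concentrates.
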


%Recall that $\mu$ is the steady-state distribution of $\left(\theta,x\right)$ over time, where the population state $x$ is the probability that a randomly sampled individual plays action 1. 

 % randomly sampled agents choose different actions.  
Recall that $\Lambda=\frac{\lambda}{1-\left(1-\rho\right)\left(1-2\lambda\right)}$ %\frac{1}{2}\left(1-\frac{\rho(1-2\lambda)}{1-\left(1-\rho\right)\left(1-2\lambda\right)}\right)$ 
is the probability that a typical sampled action has been taken under a different state than that of today. 
Theorem \ref{th consensus general} thus generically provides an upper bound on the probability that two actions within a sample differ.

%Theorem \ref{th consensus general} and Corollary \ref{th consensus} capture the following intuition. For $\lambda >0 $ small, the state $\theta$ changes rarely, at random times. If $x_t$ is low and the state switches to $\theta_t=1$, the popularity of action 1 builds up, first very slowly, then quickly once a non-negligible fraction of agents play action 1, until it approaches $x=1$. Of course, the state of nature may switch back to $\theta=0$ before the popularity of action 1 reaches high levels, and the complex, random evolution of $x_t$ over time depends on the relative durations of visits to states $\theta=0$ and $1$  respectively. 

%Explain!!!!!! 
%\begin{itemize}
%\item Theorem is not about learning. Learning may or may not happen. It
%is about behavior. (Next result is about learning.)
%\item Assumption \ref{assu:Signals} is not necessary for Theorem \ref{th consensus general}
%to hold. If Assumption \ref{assu:Signals} holds, than very few acquire/use
%information in equilibrium.
%\end{itemize}
%the transition from one consensus to another typically involves three phases once the state of nature changes. 

%DO THE NEXT TWO PARAGRAPHS MAKE SENSE? PLEASE FILL FREE TO CHANGE THEM OR ERASE THEM

\medskip\noindent

\begin{proof}[Intuition and Proof outline] The intuition for Theorems \ref{th consensus} and \ref{th consensus general} is  simple.  Fix an equilibrium $(\mu,\sigma)$ and denote $\kappa^*$ the (steady-state) fraction of agents whose action matches the state.  $\kappa^*$ differs from the equilibrium payoff $w^*$ in that information costs are not accounted for.

Consider a generic agent in period $t$, and assume that she observes her sample in sequence,  $a ^{(1)},\ldots,a ^{(n)}$. One strategy $\sigma_1$ available to her is to simply replicate the first action in the sample, $a ^{(1)}$. For concreteness, assume that $a ^{(1)}=1$. This strategy would yield a payoff of $\kappa^*$ if the state were to be invariant. With a changing state, it yields \[w(\sigma_1):= \prob(\theta_t=1\mid a^{(1)} =1)\geq \kappa^*-\Lambda.\]

Since $\kappa^* \geq w^*$, and since no strategy yields more than the equilibrium payoff $w^*$, this implies that the marginal gain in observing the second action $a^{(2)}$ is at most $\Lambda$. 

In turn, this implies that $a ^{(1)}$ and $a ^{(2)}$ coincide with high probability when $\Lambda$ is small, as we now argue.
Indeed, consider an alternative strategy $\sigma_2$ consisting of copying $a ^{(1)}$ if the second sampled action confirms the first action  ($a^{(1)} = a^{(2)} $) and acquiring information otherwise ($a^{(1)} \neq a^{(2)} $). Since the  belief of the agent is $1/2$ when observing $a ^{(1)}\neq a ^{(2)}$, her conditional payoff is $v(1/2)-c$ in that case. Therefore, the agent's payoff $w(\sigma_2)$ under the alternative strategy $\sigma_2$ is a convex combination of $v(1/2)-c$ and of $\prob(\theta_t=1\mid a ^{(1)}= a ^{(1)}=1)$, where the weights are the  conditional probabilities of $a ^{(2)}=0$ and of $a ^{(2)}=1$ given $a ^{(1)}=1$.

On the other hand, the martingale property of beliefs ensures that $w(\sigma_1)= \prob(\theta_t=1\mid a^{(1)} =1)$ is a convex combination of $1/2$ and of $\prob(\theta_t=1\mid a ^{(1)}= a ^{(1)}=1)$, with the \emph{same} weights. Since $v(1/2)-c > 1/2$, and since $w(\sigma_2)\leq\kappa^*\leq w(\sigma_1)+\Lambda$, it follows that the probability $\prob(a ^{(2)}=0\mid a ^{(1)}=1)$ that the second action will contradict the first action is at most of the order of $\Lambda$.

To conclude, recall that $a ^{(1)}$ and $a ^{(2)}$ are independent draws from a Bernoulli distribution with parameter $x $, where $x $ is first drawn according to $\mu$. 
Since $a ^{(1)}$ and $a ^{(2)}$ coincide with high probability, it must be that $x $ is quite close to 0 or to 1, with high $\mu$-probability.
\end{proof}

\bigskip

Theorems \ref{th consensus} and \ref{th consensus general} describe not only the aggregate behavior of the population in the persistent limit -- society almost always achieves a consensus -- but also its dynamics following a change in the environment. For instance, suppose that the state is $\theta=0$ and almost all agents play $a=0.$ Consider what happens when $\theta$ changes to 1. The transition to the new consensus is characterized by different phases. In a first phase, the old consensus persists despite the state change. Indeed, most agents observe a unanimous sample and most likely herd. This inertia makes society little reactive. At some point, however, sufficiently many agents will realize that the state is likely to have changed, and the consensus breaches. %fraction of agents playing 1 will take off. 
Once the fraction of agents playing 1 takes off, the transition to the new consensus where almost everyone plays 1 is %. Theorem \ref{th consensus general} implies that the transition between the two consensus is 
extremely swift. In other words, there is a %Starting from a consensus nce the proportion of agents who play one action starts taking off, it conc(\structure
\textit{domino} effect whereby the popularity of action 1 snowballs.\footnote{Of course, it possible that the state reverts to 0 before the new consensus is reached, but this typically happens with negligible probability when $\lambda$ is small.} Notice, however, that the overall transition can be long, because society can be stuck for some time in the phase of inertia. The length of the phase of inertia determines the correlation between the consensus action and the state and, ultimately, the equilibrium welfare. % $x$ with $\theta$?

\subsection{Equilibrium welfare}\label{sec welfare}

Theorem \ref{th consensus} establishes that, when $\lambda$ is small, agents most likely observe a unanimous sample. In addition to the insights that it provides on the aggregate behavior of the population, this result is useful to characterize welfare in the persistent limit. Indeed, given that almost all agents observe a unanimous sample, the ex ante welfare coincides with the expected payoff of an agent observing such a (unanimous) sample. Thus, there are two possibilities. Either such an agent does not acquire information and thus obtains a welfare given by $p_0=p_n>\hat p,$ or he does, and his welfare is $\hat p.$\footnote{Implicit here is that an agent who acquires information does it with a probability less than 1. As seen formally in the case $n=1,$ it is straightforward to see that there cannot be an equilibrium in which almost all agents acquire information with probability 1 when $\lambda$ is close to 0.}  %In order to achieve a welfare s
However, as illustrated by the case $n=2,$ whenever agents observing a unanimous sample do not acquire information and simply herd on the unanimous action, there is a risk that the society collectively gets trapped in some irreversible  -- and hence uninformative -- consensus, thereby precluding social learning. In Theorem \ref{th learning sym}, we 
provide a sufficient condition under which 
%show 
such dynamics also arise %the logic of the case $n=2$ may carry over to 
in the general case. %. %, and 
%outline a sufficient condition under which such a dynamics also arises in the general case.
%[thereby limiting equilibrium welfare to $\hat p.$ This typically occurs when the convergence to the consensus is too \emph{fast}. However, this need not always be the case, and we exhibit in Theorem \ref{thm blurry} % that such a result does not always obtain and that 
%an equilibrium in the case $n=3$ where welfare is bounded away from $\hat p.$ 
%Equilibrium payoffs will then be approximately $\hat p$ \emph{unless} interim beliefs upon receiving a unanimous sample are fairly precise and agents find it optimal \emph{not} to acquire information. If this is the case, agents herd with a unanimous sample, and  the population can only  adjust to a changing state as the result of the rare agents with a non-unanimous sample acquiring information. 
% Indeed, in order for the consensus to be sufficiently informative to allow agents to herd, it must the case that enough information is aggregated as the consensus is built. % for otherwise, the forces of imitation take over and 
%as for instance is then case when $n=2.$ 
%
%More specifically, Theorem \ref{th learning sym} provides a sufficient condition under which the result of Proposition \ref{th n=2} in the case $n=2$ carries over to larger sample sizes.  
While the logic of the proof is close to that of Proposition \ref{th n=2}, the proof of Theorem \ref{th learning sym} is technically significantly more involved. In particular, it is only valid under a restriction on the ESS, namely that it is \emph{regular}. We define an ESS $(\mu,\sigma)$ to be \emph{regular} if $p_{n-1}\geq 1-\hat p.$\footnote{Note that if $n=2$, then at a symmetric equilibrium $p_1=\frac{1}{2}$, so any such equilibrium must be regular.} Though we cannot formally rule out non-regular ESS, we view such equilibria as highly pathological because in such a tentative equilibrium, an agent would strictly prefer not to acquire information and to choose the minority action 0 upon observing $n-1$ ones and 1 single zero.%\footnote{, we tend to view them as pathological.}

%In our first statement below, we assume $\rho=1$ and a symmetric setup.
\begin{theorem}\label{th learning sym}
Assume $\hat p<1.$ Let $n\geq 2$ and let $(\mu_\lambda,\sigma_\lambda)$ be a regular ESS of $G(\lambda,\rho)$ with associated welfare $w^*_{\lambda}.$ 
%and $\rho$ be given. 
If $\left(1-\rho+ n \rho H_0(1-\hat p)\right)\left(1-\rho + n \rho H_1(1-\hat p)\right)<1,$ then $p_0=1- \hat p$ and $p_n= \hat p$ for $\lambda$ small enough. %For $\lambda>0$,
\end{theorem}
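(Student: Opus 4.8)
The plan is to argue by contradiction and reproduce, for general $n$ and $\rho$, the consensus-trap mechanism behind Proposition \ref{th n=2}. The welfare dichotomy in the persistent limit leaves only $p_n=\hat p$ (indifference) or $p_n>\hat p$ (strict herding): the case $p_n<\hat p$ is excluded because it would force almost every agent to acquire information, which cannot survive in a steady state once $\lambda$ is small. So it suffices to rule out a sequence $\lambda\to 0$ along which $p_n>\hat p$. On such a sequence unanimous samples are herded upon, so $\phi_{\theta}(0)=0$ and $\phi_{\theta}(n)=1$; reading off \eqref{defg}, this makes both $x=0$ and $x=1$ fixed points of $g_0$ and $g_1$, i.e.\ absorbing consensus points. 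The goal is then to show that the hypothesis forces the Markov process $(x_t)$ to converge almost surely to $\{0,1\}$, so that sampled actions become asymptotically uninformative about $\theta_t$; this contradicts $p_n>\hat p$ and forces $p_n=\hat p$, whence $p_0=1-\hat p$ by symmetry (C4--C5).

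First I would carry out the local analysis of $g_\theta$ at the two consensus points. Writing $y=1-x$ and linearizing \eqref{defg} at $x=1$, only the one-dissent composition $k=n-1$ contributes at first order (every composition $k\le n-2$ enters at order $y^2$), giving $\left.\tfrac{d}{dy}\bigl(1-g_\theta(1-y)\bigr)\right|_{y=0}=(1-\rho)+n\rho\bigl(1-\phi_\theta(n-1)\bigr)$, and by the threshold rule C2 an $(n-1)$-agent who acquires information plays the minority action $0$ with probability $1-\phi_\theta(n-1)=H_\theta(1-p_{n-1})$; the analysis at $x=0$ is the mirror image and yields the same product. Along the chain $(\theta_t)$ the quantity $\ln x_t$ (near $0$, resp.\ $\ln(1-x_t)$ near $1$) therefore behaves as a random walk whose increments $\ln g_{\theta_t}'(\cdot)$ are not i.i.d.\ but driven by the symmetric Markov chain $(\theta_t)$. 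Because the chain spends equal time in each state, the net drift over a full state cycle is negative precisely when the product of the two state-dependent slopes is below $1$; a recurrence argument identical in spirit to Proposition \ref{th n=2} (the downward-drifting walk stays trapped near the absorbing point with probability bounded away from zero, while from the interior it reaches a neighborhood of $\{0,1\}$ almost surely) then delivers $x_t\to\{0,1\}$ a.s.

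The crux --- and the step I expect to be the main obstacle --- is to show that the hypothesis $\bigl(1-\rho+n\rho H_0(1-\hat p)\bigr)\bigl(1-\rho+n\rho H_1(1-\hat p)\bigr)<1$ actually controls the slope product $\prod_\theta\bigl[(1-\rho)+n\rho H_\theta(1-p_{n-1})\bigr]$. This requires pinning down the limit of the one-dissent belief $p_{n-1}$: I would prove that, in the persistent limit and under regularity, $p_{n-1}\to\hat p$ (equivalently $p_1\to1-\hat p$), so that $H_\theta(1-p_{n-1})\to H_\theta(1-\hat p)$ and the slope product converges to the left-hand side of the hypothesis, which is $<1$; by continuity the product then stays below $1$ for all small $\lambda$ and the walk argument applies. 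Economically this says the marginal dissenting agents become asymptotically indifferent between herding and acquiring information, so that the dissent they generate is the \emph{slowest} rate compatible with equilibrium --- exactly the rate appearing in the hypothesis. (This also explains why Proposition \ref{th n=2} needs no side condition: there $p_1=\tfrac12$, and for binary signals $H_\theta$ is flat between its two atoms, so that $H_\theta(\tfrac12)=H_\theta(1-\hat p)$ automatically.) Establishing $p_{n-1}\to\hat p$ is delicate because it is a self-consistency property of the ESS linking the informativeness of near-unanimous samples to the dissent rate they produce; I would extract it from the invariance equation C3 together with the concentration estimate $\int x(1-x)\,d\mu\le K\Lambda$ of Theorem \ref{th consensus general}. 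The remaining work is routine but must be done with care: handling general $\rho$ through $\Lambda$, controlling the Markov-dependent increments of the walk, and verifying the passage from the interior to the boundary region.
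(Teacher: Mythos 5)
Your skeleton matches the paper's: first exclude $p_n<\hat p$, then exclude $p_n>\hat p$ by showing that herding on unanimous samples would trap the population in an uninformative consensus, with the theorem's hypothesis entering through the product of the linearized slopes of $g_\theta$ at the consensus points. But the step you yourself single out as the crux is a genuine gap, and it cannot be repaired in the form you propose. Nothing in the ESS conditions, regularity, or the contradiction hypothesis pins down $p_{n-1}$: it may lie anywhere in $[1-\hat p,1]$, and in particular $p_{n-1}>\hat p$ (one-dissent agents herd too) is perfectly consistent, in which case dissent near the consensus is even scarcer and your claimed limit $p_{n-1}\to\hat p$ fails. Worse, the lemma is incompatible with the very dynamics you intend to derive: if trapping occurs, all samples become uninformative and every $p_k$ tends to $\tfrac12$, not to $\hat p$. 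The paper needs no such limit. It uses only the one-sided bound that regularity supplies: $p_{n-1}\geq 1-\hat p$, equivalently $p_1\leq\hat p$, which gives $\phi_\theta(0)=0$ and $\phi_\theta(1)\leq 1-H_\theta(1-\hat p)$ (Claim \ref{lem:bound signals}), and the product condition is then applied to this bound (Lemma \ref{lem:g linear}). Your instinct that the stated product does not literally coincide with the equilibrium slope product is fair --- the paper itself elides the gap between $H_\theta(1-\hat p)$ and $H_\theta(\hat p)$, which coincide for binary signals and whenever no signal mass lies in $(1-\hat p,\hat p]$ --- but the correct bridge is the worst-case bound from regularity, not an asymptotic indifference property of the marginal agents.

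The second gap is the route to the contradiction itself. You propose a global, pathwise argument --- $x_t\to\{0,1\}$ almost surely, ``identical in spirit'' to Proposition \ref{th n=2} --- but only half of that proof survives for general $n$. The trapping half (the negative-drift random walk near an absorbing consensus) generalizes; the escape-from-the-interior half used $g_0<\mathrm{id}<g_1$ on all of $(0,1)$, a structural fact special to $n=2$ where every balanced-sample agent acquires information. For $n\geq3$ the equilibrium behavior at intermediate samples is unrestricted, $g_\theta-\mathrm{id}$ can change sign, and the process may have interior rest points, so almost-sure convergence to the boundary cannot be asserted. The paper's Step 2 (Proposition \ref{lem:beliefs not too low}) avoids sample paths entirely: it is a purely local analysis of the stationary cdfs $F_\theta$ near $x=0$, combining the invariance equations (Lemma \ref{lemm mon}) with the linear bounds (Lemma \ref{lem:g linear}, where the hypothesis is used in the form $g_1\circ g_0<\mathrm{id}$ near $0$) in an algebraic sup/inf argument (Lemma \ref{lem:path counting}) to conclude $F_0=F_1=\mathrm{const}$ on $(0,x^*]$; near-unanimous samples of zeros are then uninformative, forcing $p_0$ up to $\tfrac12>1-\hat p$, the contradiction. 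Note also that your one-sentence dismissal of the case $p_n<\hat p$ hides real work: the paper devotes Proposition \ref{lem:beliefs not too high} to it, via a compactness argument over sequences of ESS and an analysis of the limit dynamics at the boundary, not the informal claim that universal information acquisition ``cannot survive.''
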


Theorem \ref{th learning sym} allows us to pin down the limit distribution $\mu_{\mathrm{lim}}:=  \lim_{\lambda \to 0} \mu_\lambda$.  Indeed, we know from Theorem \ref{th consensus} that $\mu_{\mathrm{lim}}$ is concentrated on $(\theta,x)\in \Theta\times \{0,1\}$ and from the symmetry requirement, that the two marginals of $\mu_{\mathrm{lim}}$ are uniform. Theorem \ref{th learning sym} implies in addition that $\mu_{\mathrm{lim}}(\theta,x)= \frac12 \times \hat p$ when $x=\theta$. Thus, on average, society spends a fraction $1-\hat p$ of time in the phases of inertia, i.e., in an incorrect consensus. % ($1-\hat p$ thus measures the average length of the phases of inertia). 

%[ISN'T IT BETTER TO USE $x$ INSTEAD OF $a$?]

For small values of $\lambda$, agents who draw a unanimous sample are thus indifferent between acquiring information or not. %, and their (interim) welfare is $\hat p$.  
Since, by Theorem \ref{th consensus}, almost all agents observe a unanimous sample in the limit $\lambda \to 0,$ Theorem \ref{th learning sym} immediately implies the following corollary.

\begin{corollary}\label{corol}
Under the assumptions of Theorem \ref{th learning sym}, one has $\underset{\lambda\to0}\lim \,w_{\lambda}^*=\hat p.$
\end{corollary}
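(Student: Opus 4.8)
The plan is to prove Theorem~\ref{th learning sym} by contradiction, following the logic of the sketch for Proposition~\ref{th n=2} but adapted to general $n$. Suppose the conclusion fails, so that along some sequence $\lambda\to 0$ we have $p_n>\hat p$; by regularity and symmetry this means agents drawing a \emph{unanimous} sample strictly prefer to herd (they play the unanimous action without acquiring information). The key is to study the deterministic map $x_{t}=g_{\theta_t}(x_{t-1})$ defined in \eqref{defg} under this herding hypothesis, in the limit $\lambda\to 0$. When $x$ is close to $0$ (so that the population is near the consensus $x=0$), the dominant terms in $g_\theta(x)$ come from samples that are unanimous (all zeros, contributing the herding mass) and from samples with a single one (count $k=1$), whose agents acquire information because $p_1$ is bounded away from the herding region. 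First I would linearize $g_\theta$ near $x=0$: writing $g_\theta(x)\approx m_\theta\, x$, the slope $m_\theta=g_\theta'(0)$ captures how fast the minority action grows or decays in state $\theta$. A direct computation from \eqref{defg}, using that an agent with a single dissenting observation acquires information and plays $1$ with probability $H_\theta(1-\hat p)$ in state $\theta$ (the probability the signal overturns the belief $p_1\le \hat p$), gives $g_\theta'(0)=1-\rho+n\rho H_\theta(1-\hat p)$.

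The heart of the argument is then a random-walk / Lyapunov analysis of $(\ln x_t)$. As in the $n=2$ sketch, near the consensus the increments of $\ln x_t$ are approximately $\ln m_{\theta_t}$, and the state $\theta_t$ follows the symmetric Markov chain. The relevant stability condition is whether the product of the two slopes exceeds or falls below $1$: the hypothesis $\bigl(1-\rho+n\rho H_0(1-\hat p)\bigr)\bigl(1-\rho+n\rho H_1(1-\hat p)\bigr)<1$ is exactly $m_0 m_1<1$, i.e.\ $\tfrac12\ln m_0+\tfrac12\ln m_1<0$. This says the two-step (one period in each state) expected log-change of $x$ is negative, so the walk $\ln x_t$ has a downward drift toward $-\infty$, i.e.\ toward the consensus $x=0$. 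I would make this rigorous by showing that, conditional on being started near $x=\varepsilon$ with $\theta$ in its stationary distribution, there is positive probability bounded away from zero (uniformly in small $\lambda$) that $x_t$ never escapes $[0,\varepsilon]$ again, while from any interior point $x$ eventually reaches $[0,\varepsilon]\cup[1-\varepsilon,1]$ almost surely. Together these force the invariant measure to concentrate on $\{0,1\}$ so tightly that the consensus is absorbing in the limit $\lambda\to0$; hence the sampled action becomes perfectly correlated with the \emph{past} consensus and, since the state keeps changing, uninformative about the \emph{current} state. But then an agent observing a unanimous sample would have interim belief $p_n\to\tfrac12$, contradicting $p_n>\hat p$ (recall $\hat p>\tfrac12$). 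This contradiction yields $p_n\le\hat p$, and combined with regularity and Lemma~\ref{lemm1} (which forces $p_n\ge\hat p$ at a unanimous-herding-free equilibrium) pins down $p_n=\hat p$ and $p_0=1-\hat p$ for small $\lambda$.

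The main obstacle is making the random-walk heuristic rigorous when $\lambda>0$ is positive but small. Unlike the pure $n=2$ sketch, the increments are not exactly $\ln m_{\theta_t}$: there are higher-order corrections in $x$ (the full binomial sum in \eqref{defg}) and, crucially, the belief $p_1$ and the resulting slopes $m_\theta$ are themselves equilibrium objects that depend on $\mu_\lambda$ and must be controlled uniformly as $\lambda\to0$. I would handle this by establishing uniform (in $\lambda$) bounds: first that $p_1$ stays in a neighborhood of the value making $p_1\le\hat p$ so that single-dissent agents genuinely acquire information, and second that the linearization error $|g_\theta(x)-m_\theta x|\le C x^2$ is uniformly controlled so it does not overturn the sign of the drift on a region $[0,\varepsilon]$. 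The delicate point is the interaction between the two limits $x\to 0$ and $\lambda\to0$: one must choose the consensus band $\varepsilon$ and then $\lambda$ small enough that the contribution of state switches (which occur with per-period probability $\lambda$) to the escape probability is negligible relative to the deterministic downward drift $m_0m_1<1$. This is exactly where the strict inequality in the hypothesis is used, giving a spectral gap that survives the perturbation. Because the full binomial dynamics and the equilibrium feedback through $\mu_\lambda$ are genuinely more intricate than in the $n=2$ case, the detailed estimates are relegated to the Appendix; here I have only indicated the structure of the argument and the role of the slope-product condition.

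Corollary~\ref{corol} then follows immediately: under the stated hypotheses Theorem~\ref{th learning sym} gives $p_n=\hat p$ for small $\lambda$, so an agent drawing a unanimous sample is indifferent between acquiring information and herding, obtaining payoff exactly $\hat p=v(\hat p)-c=u(\hat p)$. Since Theorem~\ref{th consensus} guarantees that the $\mu_\lambda$-probability of observing a non-unanimous sample vanishes as $\lambda\to0$, the ex ante welfare $w^*_\lambda$ is a convex combination of this unanimous-sample payoff $\hat p$ and the payoffs on a vanishing-mass set, whence $\lim_{\lambda\to0}w^*_\lambda=\hat p$.
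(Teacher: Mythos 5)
Your final paragraph is exactly the paper's proof of this corollary: Theorem \ref{th learning sym} makes agents with unanimous samples indifferent between acquiring information and herding (conditional payoff exactly $\hat p = v(\hat p)-c = u(\hat p)$), Theorem \ref{th consensus} makes non-unanimous samples vanishingly rare as $\lambda\to 0$, and since payoffs on that vanishing-mass set are bounded, $w^*_\lambda \to \hat p$. The three preceding paragraphs, in which you sketch a re-derivation of Theorem \ref{th learning sym} itself, are unnecessary — the corollary's hypotheses entitle you to invoke that theorem as given (and your sketch of it is looser than the paper's two-step appendix argument, e.g.\ Lemma \ref{lemm1} alone does not force $p_n \geq \hat p$) — but this redundancy creates no gap in the corollary's proof, which is correct and coincides with the paper's.
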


In the persistent limit, equilibrium welfare thus exactly equals $\hat p,$ i.e., welfare is the same as in the case where $n=1.$ Remarkably, when signals are binary, welfare is even the same as in the case $n=0,$ that is, when there is no observational learning of any sort. Observing others' actions is informative, but not sufficiently so to allow agents to herd for sure, and social learning does not generate any extra value. 

For the sake of comparative statics, let us look at the case in which signals are binary with precision $\pi.$ The condition in Theorem \ref{th learning sym} then reads
%
%\footnote{We emphasize that this condition is not optimal.}
%\[n^2\left(\left(1-\rho\right)+\rho \pi\right)\times \left(\left(1-\rho\right)+\rho \left(1-\pi\right)\right)<1.\]
\begin{equation}\label{SCbinary}
\left(1-\rho+n\rho \pi\right)\times \left(1-\rho+n\rho \left(1-\pi\right)\right)<1.
\end{equation}

%Consistently with Proposition \ref{th n=2}, this inequality always holds when $n=2.$ Interestingly, when $\rho=1$ and $\pi=1$, the inequality also holds for every $n\geq 2$. Thus, when agents sample from the previous period and have access to perfectly informative signals, the conclusion of Theorem \ref{th learning sym} holds irrespective of the sample size: there is no social learning when the state is persistent enough.

%\bigskip

%In terms of comparative statics, t
This inequality is easier to satisfy when $n$ is lower, $\rho $ is higher and $\pi$ is higher. The intuition is as follows. %As hinted above, it must the case that enough information is aggregated as the consensus is built
%In order for a unanimous sample to be sufficiently informative to lead agents to herd, it must be that enough information is aggregated as the consensus is being built. 
%[N: I AM STILL WORRIED ABOUT THE PREVIOUS SENTENCE: HERE IS A SUGGESTED CHANGE] 
For observing a unanimous sample to be sufficiently informative to induce agents to herd for sure, it must be that the convergence to a consensus is sufficiently slow. Otherwise, the chances of drawing a non-unanimous sample vanish too quickly and the population cannot adjust to state changes.\footnote{This concept of slow convergence is all relative. Indeed, as explained above, transitions towards the consensus are extremely rapid. What we mean to stress here is that if agents drawing unanimous samples do not acquire information, then transitions are the only instances when information can be aggregated; hence, they should not be \textit{too} rapid.} Slow convergence arises when actions within one's sample are not too correlated, that is, it is sufficiently likely that some \textit{contrarian} action (opposite to the future consensus) is observed. Clearly, this is less likely to happen as $\pi$ increases. Indeed, if signals are more precise, the signals observed by those agents who acquire information are more correlated with the true state, and hence are more correlated among each other.  For expositional simplicity, consider the extreme case of perfect signals and suppose that the state is 0. In such an instance, a consensus in which all agents play 0 will emerge. As this consensus is being built, the only possible source of contrarian actions (agents playing 1) is agents who observe a majority of ones and follow the crowd. However, as soon as a significant fraction of agents play 0, which will occur sooner or later, the fraction of agents playing one will vanish extremely quickly, inducing an excessively fast convergence to the consensus, thereby reducing its informativeness. % will be extremely rapid.  %an agent who acquires information in period $t$ is more likely to get a signal that contradicts the prevailing consensus and to pick the contrarian action. 
%To figure out more specifically why this induces an excessively fast convergence, 
%
%The impact of $\rho $ is related. 

In a similar vein, when $\rho$ decreases, the time periods from which actions are sampled become less correlated.\footnote{In the limit case $\rho=1,$ all actions are drawn from the same (previous) period.} Consequently, it becomes more likely to draw actions from time periods at which the states (and hence the prevailing consensus) were different, which in turn tends to decrease the correlation of actions within samples. % which Then, the weight of the distant past becomes more important in the sampling, the different actions being sampled are less correlated, and the probability of a unanimous sample decreases.  Thus, both $n$ and $\rho$ affect the fraction of agents acquiring information in period $t$. 

%The different parameters $n,\rho$ and $\pi$ have a direct impact on the convergence speed. Assume that $\theta_t=0$ and that the population $x_t$ is close to zero in some period $t$. If $n$ is high, the probability of a unanimous sample decreases, and the probability of choosing a contrarian action thus increases. 

%The impact of $\pi$ is slightly different. 

%If signals are fuzzy, an agent who acquires information in period $t$ is more likely to get a signal that contradicts the prevailing consensus and to pick the contrarian action. 

Finally, the probability of observing a contrarian action decreases as the sample size $n$ decreases, so that \eqref{SCbinary} is more likely to be satisfied with smaller samples. \eqref{SCbinary} notably always holds when $n=2,$ consistent with Proposition \ref{th n=2}. When $\rho=1$ and $\pi=1$, \eqref{SCbinary} also holds irrespective of the sample size $n.$ Welfare is then equal to $\hat p=1-c.$ Strikingly, this is as much welfare as when all agents systematically acquire information, i.e., when there are no opportunities for observational learning $(n=0).$ If $c=0,$ this is immaterial: there is perfect learning, and welfare is maximized (equal to 1).\footnote{More generally, while Theorem \ref{th learning sym} only holds for $\hat p<1,$ Corollary \ref{corol} actually does also hold for $\hat p=1.$ In this case where signals are free and have unbounded strength, %. All agents always observe a signal and there is always a probability of signals leading to play against their (even unanimous) samples. In such case, 
information is perfectly aggregated when $\lambda$ is close to 0 -- indeed, as in the cases where $n=1$ and $n=2.$} 
However, as soon as $c>0,$ although there is social learning at equilibrium (actually, most agents do herd), information aggregation is limited because the few agents acquiring information do not generate enough information to improve welfare compared to the case where $n=0.$ 
%when signals are free and have unbounded strength, i.e., $\hat p=1,$ Corollwelfare 
%
%when agents sample from the previous period and have access to perfectly informative signals, there is no social learning in that agents get the same expected payoff as if they were acquiring information for sure (that is, what would happen if there was no observational learning. 
%lmost all agents acquire information with positive probability. They 
%Almost everyone in the society takes the correct action, but all need to incur the cost of information acquisition. %, thereby ignoring the information embedded in past actions, although it is highly correlated with the current state.
%Notice that the reason why too precise signals possibly limit welfare is because information is costly. 
%
%If $c=0,$ and signals are perfect, one has $\hat p=1,$ and Corollary \ref{corol} then states that $w^*=1$ for any $n:$ there is perfect learning. 
 %welfare is $w^*=1-c$ and the equilibrium involves no social learning. 
 %There would be possible welfare gains if agents could sometime rely on observational learning to save on information acquisition costs. 
 A natural question arising from Theorem \ref{th learning sym} is whether the level of welfare of $\hat p$ is an absolute upper bound, or whether it is possible to obtain strictly higher welfare at equilibrium. For this to happen, it is necessary that information is acquired only upon observing (some) non-unanimous samples. %While we cannot give an answer in full generality, w
 We exhibit in Theorem \ref{thm blurry} an example showing that this is possible even when the sample size is as small as $n=3$. 

\begin{theorem}\label{thm blurry}
Fix $\rho$. Assume $n=3$ and binary signals with precision $\pi\leq \frac23$.

There exists $\lambda_0>0$ such that the following holds for each $\lambda<\lambda_0$. For $\ep>0$ small enough, the game $G(\lambda\ep,\rho \ep)$ has an  ESS $(\mu,\sigma)$ in which the equilibrium payoff is bounded away from $\hat p$ as $\lambda \to 0$. 
\end{theorem}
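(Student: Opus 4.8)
\textbf{Proof proposal for Theorem \ref{thm blurry}.}

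The plan is to construct an explicit ESS for $n=3$ in which agents who draw a unanimous sample herd, agents who draw a balanced-but-nonunanimous sample (two of one action, one of the other) acquire information, and crucially the \emph{informational} content of observing $k=2$ versus $k=1$ is strong enough that the payoff from the herding agents strictly exceeds $\hat p$. Because the condition \eqref{SCbinary} fails for $n=3$ precisely when $\pi$ is large, I work in the regime $\pi\le \frac23$ where it \emph{can} fail, so that an equilibrium with herding on unanimous samples but active learning on mixed samples is feasible. The joint rescaling $G(\lambda\ep,\rho\ep)$ is the key device: sending $\ep\to 0$ while holding the ratio $\lambda/\rho$ fixed keeps $\Lambda$ of order $\lambda$ (so the consensus/concentration results of Theorem \ref{th consensus} still bite), but drives $\rho\to 0$, which by the discussion following \eqref{SCbinary} decorrelates the time periods from which actions are sampled and thereby lowers within-sample correlation. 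First I would set up the candidate strategy profile $\sigma$ via the maps $\beta,\alpha$ of \eqref{eq:elpha}, pin down the induced $\phi_\theta(k)$ for binary signals, and write down the map $g_\theta$ of \eqref{defg}.

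The core of the argument is a fixed-point / invariant-measure analysis of the Markov chain $(\theta_t,x_t)$ governed by \eqref{motionX1}. I would first analyze the deterministic dynamics $x\mapsto g_\theta(x)$ in each state: show that $g_1$ has an attracting fixed point near $1$ and $g_0$ one near $0$, and (this is where $\pi\le\frac23$ and small $\rho$ matter) verify that the approach to these fixed points is \emph{slow} — slow enough that, conditional on a recent state change, the chain spends an appreciable amount of time in the interior, generating a nonnegligible mass of mixed samples. Using the rescaling, in the limit $\ep\to 0$ the one-step map $g_\theta$ is a small perturbation of the identity ($g_\theta(x)=x+\rho\ep(\cdots)$), so the discrete dynamics are well-approximated by a continuous-time flow; I would pass to this limit to compute the limiting invariant measure $\mu$ explicitly enough to evaluate $p_1,p_2$ and the welfare. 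The payoff computation then reduces to verifying the optimality conditions \textbf{C1}--\textbf{C2}: that $p_2>\hat p$ (so unanimous-ish herding is optimal) while $p_1\in(1-\hat p,\hat p)$ (so mixed-sample agents strictly want to buy information), together with the symmetry conditions \textbf{C4}--\textbf{C5} which hold by construction, and the invariance \textbf{C3} which holds because $\mu$ is built as the invariant measure of the chain.

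The main obstacle I expect is the existence-and-consistency step: the thresholds $p_1,p_2$ are computed \emph{from} the invariant measure $\mu$, but $\mu$ itself depends on $\sigma$ (through $\phi_\theta$), which depends on which regions of belief trigger information acquisition — so the construction is genuinely a fixed point, not a direct computation. I would handle this by a continuity/degree argument in the spirit of Theorem \ref{thm existence}: parametrize candidate profiles by the herding probability $\beta$ on unanimous samples, show that the induced $p_2(\beta)$ varies continuously and crosses $\hat p$, and invoke the rescaling limit to make the relevant quantities tractable. A secondary difficulty is confirming that the equilibrium payoff stays \emph{bounded away} from $\hat p$ uniformly as $\lambda\to 0$ rather than merely exceeding it for fixed $\ep$; this requires controlling the welfare expression
\[
w^* = \intop_{\Theta\times[0,1]}\sum_{k}\binom{n}{k}x^{k}(1-x)^{n-k}\bigl[\text{match prob.}\bigr]\,d\tilde\mu - c\,\E[\beta],
\]
and showing the saving on information costs by the herding mass, combined with the strict informativeness of $p_2>\hat p$, yields a gap that does not collapse in the limit. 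The convexity and strictness $v(\tfrac12)-c>\tfrac12$ from Assumption \ref{Hyp1}, exactly as exploited in the proof outline of Theorem \ref{th consensus general}, is what guarantees this gap is strictly positive.
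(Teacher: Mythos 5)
Your overall architecture coincides with the paper's: rescale to $G(\lambda\ep,\rho\ep)$ so that the one-step map becomes a small perturbation of the identity, pass to a continuous-time limit in which invariant densities can be computed explicitly (the paper solves $(f_\theta h_\theta)'=\lambda\left(f_{1-\theta}-f_\theta\right)$ in closed form), run an intermediate-value argument over a one-parameter family of strategies, and transfer the resulting equilibrium back to discrete time via a convergence result for invariant measures plus a tightness bound that keeps mass off $\{0,1\}$. So the skeleton is the right one.

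However, the equilibrium construction itself contains a genuine error that would make the verification step fail. First, your optimality conditions are mutually contradictory: by symmetry $p_1=1-p_2$, so requiring $p_1\in(1-\hat p,\hat p)$ is the same as requiring $p_2\in(1-\hat p,\hat p)$, which is incompatible with your requirement $p_2>\hat p$; moreover $p_2>\hat p$ would force $\beta(2)=0$ by \textbf{C1}, contradicting your own prescription that mixed-sample agents acquire information. The correct targets are $p_2=\hat p$ \emph{exactly} (indifference on mixed samples, which is what sustains an interior acquisition probability there) and $p_3>\hat p$ (strict herding on unanimous samples); the welfare gap over $\hat p$ comes entirely from $p_3$, not from $p_2$. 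Second, and as a consequence, your fixed-point parametrization uses the wrong free variable: you vary the herding probability on \emph{unanimous} samples, but any equilibrium in which that probability is interior must have $p_3=\hat p$, and then the payoff is exactly $\hat p$ --- precisely what you are trying to beat; and if you instead fix it at one, nothing guarantees that $p_2\leq\hat p$, so optimality for mixed samples can fail. The paper fixes herding-with-probability-one on unanimous samples and tunes the acquisition probability $b$ on mixed samples over $\left(\frac23,\frac{1}{3(1-\pi)}\right)$ so that $LR_2(b)=\hat p/(1-\hat p)$; strict monotonicity of $k\mapsto LR_k(b)$ (from the monotone likelihood ratio property $f_1^b/f_0^b=-h_0/h_1$ increasing) then yields $p_3>\hat p$ for free, and the boundary estimates $LR_2\left(\frac23\right)=1$ and $\lim_{b\to 1/(3(1-\pi))}\lim_{\lambda\to0}LR_2(b)=+\infty$ provide both the crossing and a gap that is uniform as $\lambda\to 0$. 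This is also where $\pi\leq\frac23$ really enters --- it makes $\frac{1}{3(1-\pi)}\leq 1$, so the required mixing probabilities are feasible --- not through \eqref{SCbinary}, which under the rescaling fails for every $\pi$ anyway; and the uniform welfare gap is not a consequence of Assumption \ref{Hyp1}, but of these likelihood-ratio estimates combined with the concentration of the invariant measure on unanimous samples.
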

%[IS IT NECESSARY THAT $\lambda$ IS SMALL EVEN IF EPSILON CAN BE TAKEN ARBITRARILY SMALL?]

The proof of Theorem \ref{thm blurry} is based on a simple idea, namely to view the game $G(\lambda,\rho)$ as a discretized version of a continuous time game when $\lambda$ is small. The reason why the continuous-time analysis is simpler is that invariant distributions can be identified explicitly, unlike in discrete time. However, the proof involves several technical complications. We sketch here the central mechanics of the proof and some of the technical issues. All details are relegated to the appendix. 
%The essence of the (omitted) proof is to consider a continuous-time game in which agents get replaced continuously at rate $\rho,$ and to view the game $G(\lambda,\rho)$ as a discretized version of the continuous time game when $\lambda$ small. In the continuous time game, it is possible to characterize the motion of the population state using differential equations, and we are able to prove that the continuous-time game has an equilibrium such that agents with unanimous samples do not acquire information, thereby free-riding on the information generated by agents with balanced samples. In turn, we use a continuity argument to show that an equilibrium payoff of the continuous time game is the limit of an equilibrium payoff in the discrete game as $\lambda$ is small.

%Notice also that the continuous time equilibrium can be taken as a limit of the discrete time equilibrium only if $\rho$ is small, that is, one observes action from the more distant past. This tends to reduce correlation of actions within samples. In turn, this allows to maintain enough diversity of actions, thereby making unanimous samples less likely and thus more informative. Accordingly, there are three ways to keep samples sufficiently diverse: large samples, imperfect signals, or sampling sufficiently far in the past. 

\bigskip\noindent
\begin{proof}[Proof Sketch]
Fix $\lambda>0$ and $\rho=1$.  Between any two consecutive periods of $G(\lambda \ep,\rho \ep)$, the state changes with probability $\lambda\ep$, a fraction $\rho\ep=\ep$ of agents is renewed, and the state population evolves according to the equation 
$x_{\mathrm{new}}= (1- \ep) x_{\mathrm{old}} + \ep g_\theta(x_{\mathrm{old}}).$\footnote{As earlier, $g_\theta(x)$ is the probability that a generic agent chooses action 1 in state $\theta$, given $x$.} % [REMOVE, OR IN TEXT?}
Equivalently, 
\[\frac{x_{\mathrm{new}}-x_{\mathrm{old}}}{\ep}=  h_\theta(x_{\mathrm{old}}),\]
with $h_\theta(x):=g_\theta(x)-x$.

The main idea is to view $G(\lambda\ep,\rho\ep)$  as a discretized version of a continuous-time game $\Gamma(\lambda,\rho)$ where each period has duration $\ep$. In the continuous-time game $\Gamma(\lambda,\rho)$, the state changes at \emph{rate} $\lambda$ per unit of time, and new agents arrive at rate $\rho$. Incoming agents first observe a sample of size $n$ from the current population, before making choices.  For a given strategy, the continuous-time process $(\theta_t,x_t)$ follows a piecewise deterministic Markov process:  between two consecutive jumps of $(\theta_t)$,  say at dates $\tau<\tau'$, the $x$-component moves continuously over $[0,1]$, according to the differential equation 
\[x'(t)= h_{\theta_\tau}(x(t)).\]
%To give a concrete meaning for this interpretation, w
We prove in the  appendix  the following technical result: if $\sigma_\ep$ is a strategy and $\mu_\ep \in \Delta(\Theta\times[0,1])$ is time-invariant  for $\sigma_\ep$ in $G(\lambda\ep,\rho\ep)$ and if $\sigma=\lim_{\ep\to 0} \sigma_\ep$, then any (weak) limit point of $(\mu_\ep)$ is  time-invariant  for the process $(\theta_t,x_t)$ induced by $\sigma$ in $\Gamma(\lambda,\rho)$. 
 
 \smallskip
 
The bulk of the proof has to do with the analysis of $\Gamma(\lambda,\rho)$. The reason why the analysis is simpler in continuous time is that invariant distributions can be identified explicitly, unlike in discrete time. Indeed, fix a strategy and assume that the invariant distribution in state $\theta$ has a $C^1$ density $f_\theta$. Consider the following argument on  the mass of those agents 'located'  in  $A:=\{1\}\times [\bar x,\bar x +dx]$, where $dx$ is small.  Over a short time interval of duration $dt$, agents 'located' in $\{1\}\times [\bar x- h_1(\bar x)dt,\bar x]$ enter $A$ from the left (or exit  if $h_1(\bar x)<0$). Hence the mass of agents entering  from the left is (approximately) $f_1(\bar x )h_1(\bar x) dt$, while the mass of agents exiting to the right is $f_1(\bar x+dx )h_1(\bar x+dx) dt$. 

In addition, the state changes with probability  $\lambda dt$, hence the mass of agents 'departing' to state 0 is $\lambda dt\times f_1(\bar x)dx$, and  the mass of agents entering $A $ from state 0 is 
$\lambda dt\times f_0(\bar x)dx$. 

Time invariance implies that 
\[\left(f_1\left(\bar x+dx \right)h_1\left(\bar x+dx\right) -f_1\left(\bar x\right)h_1\left(\bar x\right) \right)dt = \lambda\left(f_0(\bar x)-f_1(\bar x)\right)dx dt,\]
which yields 
\begin{equation}\label{mass}\left(f_1h_1\right)'=\lambda (f_0-f_1)\end{equation}
 and, for similar reasons, $\left(f_0h_0\right)'=\lambda (f_1-f_0)$.\footnote{These equations may be viewed as a one-dimensional version of the standard mass conservation equation for compressible fluids, allowing for 'phase transitions'. In physics textbooks, this equation is usually  written $\mbox{div} (f \vec{v})=0$.}

%As usual, the formal version of this heuristic discussion takes the form of a verification theorem: if some $C^1$ densities satisfy (\ref{mass}), then the corresponding measure is time-invariant. 

Consider the case $n=3$. We look for strategies $\sigma_b$ such that an agent who observes a unanimous sample herds (i.e., does not acquire information and follows the crowd), while an agent with a balanced sample acquires information with probability $\displaystyle b\in \left[\frac{1}{3(1-\pi)},\frac{1}{3\pi}\right].$\footnote{For $b\notin  \left[\frac{1}{3(1-\pi)},\frac{1}{3\pi}\right]$ it can be checked that the process $x(t)$ either converges to 0 or to 1, or remains bounded away from 0 and 1.} A critical observation is that for such values of $b$, there is a unique invariant distribution $\mu\in \Delta\left(\Theta\times (0,1)\right)$ that puts no mass on $x=0$ and $x=1.$ Existence is shown by solving (\ref{mass}) and using a verification argument. Uniqueness is established by showing that the process $(\theta_t,x_t)$, when sampled at the successive times where the state changes, is an irreducible process in $\Theta\times (0,1)$ and therefore has at most one invariant measure. 

In addition, $h_\theta$ has a simple form as a polynomial of degree 3, which allows to solve for $f_\theta$. An analysis of $f_\theta$ for the limit values  $b=\frac{1}{3(1-\pi)}$ and $b=\frac{1}{3\pi}$ shows that for small $\lambda$, there is one intermediate value of $b\in  \left[\frac{1}{3(1-\pi)},\frac{1}{3\pi}\right]$ such that the interim belief when sampling $k=2$ is equal to $\hat p$. In addition, $\frac{f_1}{f_0}= -\frac{h_0}{h_1}$ is increasing. In turn, this implies that $p_0<p_1=1-\hat p<p_2=\hat p< p_3$. This shows that (i) the strategy $\sigma_b$ is an ESS, and (ii) welfare exceeds $\hat p$. Proving that the game $G(\lambda \ep,\rho\ep)$ has an ESS that converges to $\sigma_b$ raises yet other technical issues that are relegated to the appendix. 
\end{proof}

\bigskip

%One remark is in order. 
%One can draw from Theorems \ref{th learning sym} and \ref{thm blurry} a 
The results of this section have noteworthy implications regarding the substitutability between the quality of private signals and the efficiency of information aggregation. To illustrate this, consider the case of binary signals with precision $\pi.$ %Theorems \ref{th learning sym} and \ref{thm blurry} imply that, h
An implication of Theorems \ref{th learning sym} and \ref{thm blurry} is that, holding the net value of acquiring information $\pi-c$ fixed, increasing $\pi$ may depress welfare. Indeed,  welfare cannot be larger than $\hat p=\pi-c$ 
when signals are too precise but may exceed $\hat p$ when $\pi$ is sufficiently small. This illustrates the dual role played by information acquisition (private signals) in counterbalancing the forces of imitation. On the one hand, the arrival of fresh and accurate information is necessary to remain reactive to potential changes; on the other hand, the arrival of fresh and imperfect information cultivates diversity of actions, which prevents the population from being later stuck in some absorbing consensus. By reducing the correlation of actions within samples, imperfect signals are instrumental in maintaining enough dissent within the population.\footnote{In this respect, \cite{Golub} also underline the importance of having agents with sufficiently diverse signal distributions for information aggregation.} %Accordingly, private signals must be positively correlated with state, but not too much. 
In the same spirit, welfare may be larger when actions are sampled further in the past ($\rho$ is small). This comparative statics result contrasts with the result we derived in the case where $n=1,$ and thus suggests that the finding of \cite{SSsampling} that welfare is larger when samples are drawn from more recent periods does not hold for larger samples when the state can change. 
This reversal in comparative statics between the cases $n=1$ and $n\geq2$ illustrates once again the key role played by the correlation of actions within samples. A higher $\rho$ and $\pi$ increase the correlation of actions with the true state. When $n=1,$ the impact on the efficiency of social learning is unambiguously positive; but as soon as $n\geq2,$ % this also implies that 
actions within samples also become more correlated, which reduces informativeness, and hence welfare. %[R: THIS LOOKS REDUNDANT, REARRANGE? N: I AM NOT SURE ABOUT THE USEFULNESS OF THIS PARAGRAPH. R: OK, I LL THINK TWICE WHETHER ONE CAN DO BETTER BEFORE ERASING]

%positive is potentially offset by thethat more correlated samples 

%\bigskip

%\begin{comment} As the sample size $n$ increases, samples reflect the population state more accurately (at least when $\rho=1$). It is thus tempting to assume that the population state $x_t$ in any given period $t$ is observed by the $t+1$-cohort, to get some insights about the possible equilibrium structure for large $n$. As we however report in the online Appendix, the game with observed $x$ need not have an ESS. 
%\end{comment}

%\begin{comment}: We conjecture that equilibria exist such that interim beliefs are \emph{weakly} increasing in the sample.  Some support towards this conjecture comes from further partial results relative to the continuous-time game $\Gamma(\lambda,\rho)$.\footnote{Details are available from the authors.}
%\end{comment}

\section{The planner's problem}\label{planner}

%Our results suggest that the equilibrium welfare remains bounded away from
We have shown in the previous sections that, at least under a wide range of circumstances, the equilibrium welfare remains bounded away from 1 even when the state is arbitrarily  persistent. One key question is whether this is an equilibrium feature or an unescapable feature of the environment. Indeed, it could be that the combination of costly information acquisition and changing states intrinsically reduces the maximal welfare attainable: a high welfare requires that few agents %make heavy use of the information contained in their sample, thereby implying that 
acquire information, i.e., herding is prominent. Meanwhile, a high welfare requires that the population remains highly responsive whenever the state changes, hence that a sufficient weight is put on new information, so that the tension between these two objectives is consubstantial to the environment. 

\smallskip

To address this issue, we analyze the problem of a social planner who faces the constraints of the environment, but is otherwise free to choose any strategy. Let $\sigma$ be the strategy dictated by the planner, $\mu$ be an invariant distribution of states and samples for $\sigma,$ and $W_\lambda(\sigma,\mu)$ denote the corresponding welfare.%, as a function of $\lambda$.
%\footnote{The difference with the equilibrium analysis is thus that the planner can impose a strategy $\sigma$ that is not optimal given the beliefs on the state induced by $\mu.$ [COULD BE REMOVED]} % [ADD SOMEWHERE: We take a different notation for welfare here than in section XX to underline}
 
\begin{theorem}\label{thm:social planner}There exists $(\sigma_{\lambda},\mu_\lambda)$
such that $\underset{\lambda\to0}\lim W_\lambda\left(\sigma_{\lambda},\mu_{\lambda}\right)=1$.
\end{theorem}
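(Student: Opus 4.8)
The plan is to exploit the planner's freedom from incentive constraints to engineer a strategy whose induced dynamics track the state almost perfectly while almost nobody pays for information. Recall that the welfare decomposes as the steady-state fraction of agents whose action matches the state, minus $c$ times the steady-state fraction who acquire information. It therefore suffices to exhibit strategies driving the first term to $1$ and the second to $0$ as $\lambda\to0$.

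I would first reduce the limit statement to a family of \emph{fixed} strategies via a diagonal argument. For a fixed strategy $\sigma$, the pair $(\theta_t,x_t)$ is a Markov chain on $\Theta\times[0,1]$ whose $x$-component moves deterministically through the continuous maps $g_\theta$ of \eqref{defg} between the rare (rate $\lambda$) switches of $\theta$; since the space is compact and $g_\theta$ is Feller, an invariant measure $\mu_\lambda$ exists by Krylov--Bogolyubov. As $\lambda\to0$ the chain spends almost all of its time relaxing under $g_{\theta_t}$, so, by a metastability argument, any weak limit of $\mu_\lambda$ is supported on the stable fixed points of $g_0,g_1$ that are actually reached after a switch. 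Hence, provided each $g_\theta$ has a unique interior fixed point $x^*_\theta$ that globally attracts $(0,1)$, symmetry gives $\lim_{\lambda\to0}W_\lambda(\sigma,\mu_\lambda)=x^*_1-c\,a(\sigma)$, where $a(\sigma)$ is the acquisition probability evaluated at the consensus. I would then build strategies $\sigma^{(m)}$ with $x^*_1\to1$ and $a(\sigma^{(m)})\to0$, so their limits satisfy $w_m\to1$, and finally pick $\lambda_m\downarrow0$ so that setting $\sigma_\lambda=\sigma^{(m)}$ for $\lambda\in(\lambda_{m+1},\lambda_m]$ yields $W_\lambda\to1$.

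The construction of $\sigma^{(m)}$ is the heart of the argument (here $n\ge2$). I would have almost all agents herd, i.e.\ play the crowd's action on (near-)unanimous samples, so that a consensus forms essentially for free, while a small fraction $b$ of agents with intermediate samples acquire information and play their signal, injecting exactly the state-correlated dissent needed to un-stick the consensus after a switch. Since $\phi_\theta(k)$ is nondecreasing in $k$, each $g_\theta$ is monotone, so iterates converge monotonically and it remains only to control fixed points. Writing $g_1(x)-x$ in the Bernstein basis with coefficients $\phi_1(k)-k/n$ and choosing the herding/acquisition thresholds so that this coefficient sequence changes sign exactly once, the variation-diminishing property of Bernstein polynomials forces a single interior fixed point $x^*_1$, which is then automatically globally attracting; tuning $b$ down toward the threshold where $nb(1-\phi_1)=1$ makes both endpoints repelling, pushes $x^*_1\to1$, and sends the consensus acquisition probability to $0$.

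The step I expect to be the main obstacle is making this work uniformly across signal distributions, and in particular keeping the consensus \emph{strictly} off the boundary. If $x^*_1$ were glued to $1$, then after a long spell $1-x_t$ would be exponentially small and the post-switch transition would take time of order $1/\lambda$ rather than $O(\log)$, leaving a non-negligible fraction of time in the wrong consensus; only an off-boundary $x^*_1=1-\delta^*$ with $\delta^*>0$ fixed gives transition length $O(\log(1/\delta^*))$, hence negligible relative to the inter-switch time $1/\lambda$. This is delicate precisely when signals are very precise ($1-\phi_1$ small), since then acquirers almost never dissent within the current consensus and cannot on their own hold $x^*_1$ away from $1$; there I would have the planner inject a small amount of \emph{deliberate}, state-independent dissent $\gamma$ at unanimous samples, which costs only $O(\gamma)$ in matching and nothing in information, and send $\gamma\to0$ along the sequence $m$. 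Finally, I would make the metastable concentration of $\mu_\lambda$ on the two consensus points quantitative, via a uniform bound on the expected transition time and on the probability that a second switch interrupts a transition, so as to control both the matching term and the information cost incurred during those brief transitions.
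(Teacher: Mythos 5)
Your overall architecture---fix a $\lambda$-independent strategy, show by metastability that its invariant measures concentrate on interior fixed points $x^*_\theta$ as $\lambda\to0$, then diagonalize over strategies with $x^*_1\to1$ and vanishing acquisition at the consensus---is viable and genuinely different from the paper, which instead ties the strategy to $\lambda$ (acquisition probability $\lambda/h$ at unanimous samples, where $h:=1-H_1(\frac12)$, rising \emph{linearly} to $1$ at balanced samples) and proves concentration not by metastability but by comparison with an explicit dominating process that resets to the wrong consensus at every switch, yielding the quantitative bound $W\geq 1-A\lambda(1-\ln\lambda)$. However, your concrete construction fails, and it fails exactly at the step you dispatch with the variation-diminishing property. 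With a constant acquisition rate $b$ at intermediate samples and majority-herding otherwise, the Bernstein coefficients of $g_1(x)-x$ are $c_k=bh-k/n$ for $0<k<n/2$, $c_{n/2}=b(h-\frac12)$, and $c_k=1-b(1-h)-k/n$ for $n/2<k<n$. Once you tune $b$ down toward $1/(n(1-h))$ (your repelling-endpoint threshold), this sequence has \emph{three} sign changes, not one, whenever $n/2>h/(1-h)+1$, and the extra roots are really there: for $n=10$, $h=0.6$, $b=0.25$ one computes $g_1(0.05)\approx 0.060>0.05$ but $g_1(0.1)\approx 0.098<0.1$, so $g_1$ has a \emph{stable} fixed point near $x\approx0.07$. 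By symmetry $g_0$ then has a stable fixed point near $0.93$, so after a switch to state $0$ the population starting from the old consensus $x\approx1$ gets stuck at $x\approx0.93$ instead of transiting: society spends half its time with $93\%$ of agents playing the wrong action, and welfare is pinned near $\frac12$, not $1$. Repelling endpoints plus monotonicity simply do not deliver a unique, globally attracting interior fixed point.

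The missing idea is that the dissent injected by acquirers must \emph{grow with the number of contrarian actions in the sample}. This is what the paper's profile $\beta(k)$ linear in $k$ (up to $n/2$) achieves: all Bernstein coefficients of $g_1(x)-x$ then equal $\frac{k}{n}(2h-1)$ or $\frac{n-k}{n}(2h-1)$, hence are strictly positive, so $g_1(x)>x$ on all of $(0,1)$---no interior rest point in the correct state---and since $g_1'(0)=2h>1$ the post-switch transition takes only $O(\log(1/\lambda))$ periods (or $O(\log(1/\gamma))$ in your fixed-strategy scheme). If you replace the constant-$b$ design by such an increasing profile, keeping your small dissent at unanimous samples, your metastability-plus-diagonalization plan does go through, with two caveats worth recording: it yields no convergence rate, whereas the paper's reset-process comparison gives an explicit one valid for every small $\lambda$; and you have implicitly set $\rho=1$ throughout, while for $\rho<1$ the matching term involves $\chi_{t+1}=\frac{1}{\rho}\left(x_{t+1}-(1-\rho)x_t\right)$ rather than $x$ itself, so an additional estimate (the paper's final lemma, showing the discrepancy is $O(\lambda)$) is needed.
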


A welfare of 1 is the highest possible level of welfare that is obtained only when the actions of all agents match the state and, yet, no one pays for information. Under $(\sigma_\lambda,\mu_\lambda)$, both components of the welfare are thus asymptotically optimized: there is a vanishing fraction of agents acquiring information, and most agents most often choose the correct action.\footnote{The proof (in the appendix) shows that $W\left(\sigma_{\lambda},\mu_{\lambda}\right)\geq 1-A\lambda\left(1-\ln\lambda\right)$, for some  $A<+\infty$ that does not depend on $\lambda$. This provides a lower bound for the first best welfare for any $\lambda >0.$}

The (symmetric) strategy $\sigma_\lambda$ that we construct has the following features. With a sample 
$k=0,$ %\footnote{That is, when all actions in the sample are 0.} 
agents acquire information with a probability of the order of $\lambda$. The probability of acquiring information increases linearly with $k$ when $k\leq n/2$, and agents acquire information  with probability 1 with a fully balanced sample $k=n/2$ (assuming $n$ even for convenience). %\footnote{Assuming here that $n$ is even for convenience. [ONE COULD PROBABLY REMOVE]} By symmetry, the probability of acquiring information linearly decreases from 1 when $k=n/2$ to $\lambda$ when $k=n.$ %The strategy $\sigma_\lambda$ is symmetric, so the case $k>n/2$ is similar. 
Agents who do not acquire information replicate the most common action in their sample, while agents who do acquire information play action 1 if and only if their 'signal' $q$ exceeds $\frac12:$ %,\footnote{Recall that signals are identified with the posterior belief they induce when holding a uniform prior.} 
agents acquiring information thus play action 1 with probability $1-H_\theta(\frac12)$ regardless of their sample $k.$

Assume for the sake of discussion that the population is in a near-consensus $x\simeq 0$, and that the state switches to $\theta=1$. Since the fraction of agents acquiring information is always at least $\lambda$, the number of agents choosing 1 starts increasing. As we show, after roughly $\lambda\ln \left(1/\lambda\right)$ stages, a significant fraction of agents observe samples $k\geq 1$. At this point, and because the probability of acquiring information increases with $k$, the dynamics of $x_t$ accelerate, and the switch to a near-consensus 
$x\simeq 1$ takes only finitely many stages. 
This implies that for small values of $\lambda$, any invariant distribution $\mu_\lambda$ assigns most weight to near-consensus states $x\simeq 0$ or $x\simeq 1$ that are (almost) perfectly correlated with $\theta$: with high $\mu_\lambda$-probability, most agents choose the correct action. This implies in turn that on average most agents receive a unanimous sample, in which case they buy information with probability $\lambda$. This ensures that the fraction of agents who acquire information is of the order of $\lambda$. 

Therefore, the fact that it is impossible to efficiently aggregate information through social learning is an equilibrium phenomenon. Intuitively, agents observing unanimous samples acquire too little information at equilibrium, because if they were to acquire more information future generations would have no incentive whatsoever to acquire information, thereby fully freezing learning. A direct consequence of this shortfall of information acquisition is a lower reactivity to state changes which materializes in the form of a significant amount of time spent in obsolete consensus.

\section{Conclusion}\label{concl}

We consider a general model of social learning with binary actions and states in which states change over time, information is possibly costly, and agents draw finite samples of past actions. We show that, under a wide range of situations, the possibility that the state changes drastically limits the value of social learning. %In particular, with binary signals that are too precise, it is impossible to do better than with no observational learning. 
This possible inefficiency is related to the tension between information aggregation and the need for society to react to potential state changes. Responsiveness to state changes requires that agents regularly acquire fresh information, while efficiency imposes that information acquisition is minimal. The problem is that, while the equilibrium fraction of agents acquiring information is vanishingly small in the persistent limit, the dynamics of behavior induced is such that actions tend to be too correlated among each other, but not enough correlated with the state, hence less social learning and a lower welfare. In these circumstances, welfare is no larger than in the case where only one action is sampled in which case the correlation of actions within the sample is by definition immaterial. %We also show, however, that the efficiency of social learning improves when agents obtain less precise signals and sample actions from the further past, as this decreases the correlation of actions within samples. 

While our paper clearly suggests that the planner could improve welfare by dictating a policy that improves the reactivity of agents while keeping information acquisition at a small rate, an interesting and natural follow-up question has to do with implementation. How could welfare improvements be obtained by the planner by designing the environment in which agents operate? In particular, a natural question is related to the feedback mechanism that maximizes the efficiency of social learning in such a context. We leave this question for future research.

\bibliography{Biblio-LPV}

\newpage

\appendix
%\section*{Appendix}

\begin{center}
{\huge \textbf{FOR ONLINE  PUBLICATION}}
\end{center}
 \begin{appendices}
 \section{The case $n=2$: Proof of Proposition \ref{th n=2}}
\setcounter{equation}{0} \renewcommand{\theequation}{A.\arabic{equation}}
%Consider an ESS $(\mu,\sigma)$ with $\sigma = (\beta,\alpha)$. By symmetry, $p_1= \frac12$ hence $\beta_1= 1$. We only need to prove that agents with a unanimous sample do not herd for sure. Indeed, if $c>0$, this implies that agents  always acquire information w.p.p. while if $c=0$, this implies that interim beliefs  $p_0,p_2$ with a unanimous sample lie in $[1-\hat p,\hat p]$.
Suppose that agents with a unanimous sample herd for sure, that is, $p_2>\hat p.$ 
%
%We argue by contradiction and assume that agents with a unanimous sample herd for sure.
%
Since agents who observe a balanced sample $k=1$ hold beliefs $p_1=\frac12$ and thus acquire information, the fraction of agents choosing action 1 in period $t+1$ reads
\begin{equation*}
 \chi_{t+1}= x_t^2 +2x_t(1-x_t) \phi_{\theta_{t+1}},
 \end{equation*}

where $x_t$ is the probability that any action sampled at date $t+1$ is 1, and $\phi_\theta$ is the probability of playing action 1 in state $\theta$ when holding an interim belief $\frac12.$  Note that $\phi_\theta= 1-H_\theta(\frac12)$ if $H_\theta$ is continuous at $\frac12$, and $\phi_\theta= 1-\frac12\left(H_\theta(\frac12)+H_\theta(\frac12)_-\right)$ otherwise. In any case, $\phi_1=1-\phi_0>\frac12$.

The prevalence $x_t$  of ones  in the pool from which one samples at date $t+1$ is given by% and thus define $g$ as a function of the current state and the prevalence of ones in the previous period, exactly as we derive here (see Eq. \eqref{motionX1}).} 
%\footnote{The apparent oddity in the time index for $x_t$ is motivated by the fact that we prefer to make sure that $x_t=\chi_t$ when $\rho=1,$  and to  ensure consistency with the discussion of the case $n=2$.}
%Notice that 
\begin{equation*}
x_{t}=\displaystyle \sum_{m\geq 1}\rho(1-\rho)^{m-1}\chi_{t+1-m}= (1-\rho) x_{t-1} +\rho \chi_t.
\end{equation*}
%hence has a Markovian structure.

%Since the sample composition at date $t$ follows a Binomial distribution with parameters $n$ and $x_{t-1}$ ($k\sim B(n,x_{t-1})$), it follows, taking expectations, that 

Putting things together, the sequence $(x_t)$ follows the recursive equation 
\[x_{t+1}= (1-\rho) x_t +\rho\left\{ x_t^2 +2x_t(1-x_t) \phi_{\theta_{t+1}}\right\}.\]
%where $\phi_\theta$ is the probability of playing action 1 in state $\theta$ when holding an interim belief $\frac12$. Recall that $\phi_1=1-\phi_0>\frac12$. Note that $\phi_\theta= = 1-H_\theta(\frac12)$ if $H_\theta$ is continuous at $\frac12$, and $\phi_\theta= 1-\frac12\left(H_\theta(\frac12)+H_\theta(\frac12)_-\right)$ otherwise. 

We follow the proof sketch from the main body, and prove that the sequence $(x_t)$ is convergent.

\begin{lemma}\label{lemm cemetery} The sequence $(x_t)$ converges a.s., with $\lim_{t\to +\infty} x_t \in \{0,1\}$.
\end{lemma}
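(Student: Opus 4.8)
The plan is to read the recursion as $x_{t+1}=g_{\theta_{t+1}}(x_t)$ with
\[ g_\theta(x)=(1-\rho)x+\rho\bigl(x^2+2x(1-x)\phi_\theta\bigr), \]
where each $g_\theta$ is continuous on $[0,1]$, fixes $\{0,1\}$, and satisfies $g_1(x)>x>g_0(x)$ for $x\in(0,1)$. Since $(\theta_t)$ is a symmetric two-state Markov chain, the pair $(x_t,\theta_t)$ is Markov. I would prove convergence by showing that the only recurrent behavior is absorption near one of the two endpoints, via two claims: (i) from any interior point the process reaches $[0,\ep]\cup[1-\ep,1]$ in finite time almost surely; and (ii) once near an endpoint the process stays there forever with probability bounded away from zero and, on that event, converges to that endpoint. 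The strong Markov property together with a Borel--Cantelli argument then forces eventual absorption, so $x_t\to\{0,1\}$ a.s.

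The engine behind (ii) is a drift computation. Near $x=0$ one has $g_\theta(x)/x=c_\theta+O(x)$ with $c_\theta:=(1-\rho)+2\rho\phi_\theta$, so $\ln x_{t+1}-\ln x_t\approx\ln c_{\theta_{t+1}}$. Writing $s:=\rho(2\phi_1-1)>0$ gives $c_1=1+s$ and $c_0=1-s$, whence
\[ c_0c_1=1-s^2<1,\qquad\text{equivalently}\qquad \tfrac12(\ln c_0+\ln c_1)<0. \]
Thus the increments of $\ln x_t$ have \emph{negative} stationary mean: the up-steps (state $1$) are smaller in absolute value than the down-steps (state $0$). By the symmetry $x\mapsto 1-x$, which swaps $\phi_0\leftrightarrow\phi_1$, the same negative drift governs $\ln(1-x_t)$ near $x=1$, so both endpoints are attracting in this sense.

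The main technical obstacle is that $\ln x_t$ is not a walk with i.i.d.\ increments but a \emph{Markov-modulated} one, and the linearization $g_\theta(x)/x\approx c_\theta$ is valid only near the boundary, so the drift is useless once the process wanders into the interior. To absorb the modulation I would solve the scalar Poisson equation for the two-state chain, i.e.\ find $\psi(0),\psi(1)$ with $\sum_{\theta'}P(\theta,\theta')\bigl(\ln c_{\theta'}+\psi(\theta')\bigr)=\psi(\theta)+d$ for each $\theta$, where $d=\tfrac12(\ln c_0+\ln c_1)<0$ and $P$ is the transition matrix; this is solvable since the chain is irreducible. Then, for $\ep$ small enough that the $O(x)$ error keeps the averaged increment strictly negative on $[0,\ep]$, the process $M_t:=\ln x_t+\psi(\theta_t)-d\,t$ is, up to the controlled boundary error, a supermartingale as long as $x_t\le\ep$. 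A standard maximal-inequality argument then yields (ii): started small, with positive probability the barrier $\ep$ is never crossed, and on that event $\ln x_t\to-\infty$ (so $x_t\to 0$) by the ergodic theorem applied to $\tfrac1T\sum_{s\le T}\ln c_{\theta_s}$.

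For the interior-escape claim (i), I would use that on the compact set $[\ep,1-\ep]$ the displacements $g_1(x)-x$ and $x-g_0(x)$ are continuous and strictly positive, hence bounded below by a positive constant, so a sufficiently long run of a single state --- which has positive probability over any fixed horizon by irreducibility and aperiodicity of $(\theta_t)$ --- pushes $x$ out of $[\ep,1-\ep]$; a geometric-trials argument makes the exit time a.s.\ finite. Assembling the pieces, each interior excursion ends at a boundary neighborhood, and each visit there is permanently absorbed with probability at least some fixed $\eta>0$; were the interior visited infinitely often, the conditional Borel--Cantelli lemma would nonetheless force absorption, a contradiction, so $(x_t)$ is eventually trapped near one endpoint and converges to it. I expect the delicate point to be making the supermartingale comparison rigorous exactly at the interface between "near the boundary'' (drift valid) and "interior'' (drift invalid), that is, controlling the overshoot when the process exits $[0,\ep]$.
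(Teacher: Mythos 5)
Your proposal is correct and follows the same overall architecture as the paper's proof: the same two claims (a.s. escape from $[\ep,1-\ep]$ via a long run of one state, and permanent absorption near an endpoint with probability bounded away from zero), the same assembly via conditional Borel--Cantelli, and the same key drift inequality --- your derivation of $c_0c_1<1$ from $c_0+c_1=2$ is in fact cleaner than the paper's tangent-to-hyperbola argument for $\phi^\rho_0\phi^\rho_1<1$. Where you genuinely diverge is in how the Markov-modulated drift is exploited. The paper dominates pathwise, $g_\theta(x)\leq\tilde\phi_\theta x$ near $0$ with $\tilde\phi_0\tilde\phi_1<1$, so that $\ln x_t-\ln x_{\tau^{\mathrm{out}}}$ is bounded by a walk $W_t$ started at $0$, and then regenerates at the successive state-switching times: the cycle increments $X_j$ are \emph{iid} with mean $\frac{1}{\lambda}(\ln\tilde\phi_0+\ln\tilde\phi_1)<0$, reducing everything to the standard fact that an iid negative-drift walk satisfies $\prob(\sup_j S_j\geq 0)\leq 1-a$. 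You instead solve the Poisson equation and run a supermartingale/maximal-inequality argument, which is equally standard but puts the burden exactly where you flag it: a maximal-inequality bound degrades as the starting point approaches the barrier $\ln\ep$, so you must separately verify that entries into $[0,\ep)$ are uniformly "good" --- which they are, because any downward exit from $[\ep,1-\ep]$ necessarily occurs under $\theta=0$ (this is why the paper builds the state condition into the definition of $\tau_m^{\mathrm{out}}$), so a few initial down-steps create a uniform gap below the barrier before the drift estimate is invoked. The paper's construction sidesteps this overshoot issue entirely, since its comparison walk starts at zero relative to the entry point and the bound $\prob(\sup_t W_t<0)\geq a$ is uniform in the entry position; this is the main simplification its route buys. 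Conversely, your ergodic-theorem conclusion that $x_t\to 0$ on the no-exit event is slightly more direct than the paper's implicit nested-$\ep$ argument, and your Poisson-equation technique would generalize more readily to state chains with more than two states, where the paper's explicit cycle decomposition becomes cumbersome.
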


\begin{proof}
As a preparation, set  $\phi^\rho_\theta := (1-\rho)+\rho \times 2\phi_\theta$ and observe that $\phi^\rho_0\phi^\rho_1<1$. Indeed, $(\phi^\rho_0,\phi^\rho_1)\in \dR^2$ is a convex combination of $(1,1)$ and of $(2\phi_0,2\phi_1)$, therefore lies on the straight line with equation  $y_0+y_1= 2$. This line is tangent to the (half-)hyperbola $\calC$ of equation $y_0y_1=1$ ($y_0,y_1>0$) at the point $(1,1)$, and strictly 'below' $\calC$, except at the point of tangency. Since $\rho>0$, it follows that $\phi^\rho_0\phi^\rho_1<1$.  

Choose $\tilde \phi_\theta >\phi^\rho_\theta$ such that $\tilde \phi_0\tilde \phi_1<1$, and $\ep_0>0$ such that 
$\phi_\theta^\rho +\rho x<\tilde \phi_\theta$ for each $\theta$ and $x<\ep_0$. Note that $g_\theta(x)= x\left(1-\rho +\rho x + 2(1-x)\rho \phi_\theta\right) \leq \tilde \phi_\theta x$ for $x<\ep_0$.
\medskip

Let $\ep<\ep_0$ be arbitrary. We define two increasing and interlacing sequences $(\tau_m^{\mathrm{in}})_m$ and $(\tau_m^{\mathrm{out}})_m$ of possibly infinite stopping times. We first set
\[\tau_1^{\mathrm{out}}:=\inf\{t\geq 0: x_t<\ep \mbox{ and } \theta_t=0,\mbox{ or } x_t>1-\ep \mbox{ and } \theta_t= 1\},\]
and 
\[\tau_1^{\mathrm{in}}:=\inf\{t\geq \tau_1^{\mathrm{out}}: x_t\in [\ep,1-\ep]\},\]
with $\inf\emptyset =+\infty$. The stopping times $\tau_1^{\mathrm{out}}$ and $\tau_1^{\mathrm{in}}$ are essentially the first exit and entry times in $[\ep,1-\ep]$, except for the extra condition on the exit state in the definition of $\tau_1^{\mathrm{out}}$.  For $m\geq 1$, we set
\[\tau_{m+1}^{\mathrm{out}}:=\inf\{t\geq \tau_m^{\mathrm{in}}: x_t<\ep \mbox{ and } \theta_t=0,\mbox{ or } x_t>1-\ep \mbox{ and } \theta_t= 1\},\]
and 
\[\tau_{m+1}^{\mathrm{in}}:=\inf\{t\geq \tau_{m+1}^{\mathrm{out}}: x_t\in [\ep,1-\ep]\}.\]

\medskip

Below, we show that whenever the sequence $(x_t)$ enters the interval $[\ep,1-\ep]$, it almost surely leaves it in finite time.

\begin{claim}\label{claim1.1}
One has $\prob(\tau^{\mathrm{out}}_{m+1}<+\infty \mid \tau_m^{\mathrm{in}}<+\infty)=1$ for each $m$.
\end{claim}

\begin{proof}[Proof of the claim]
For $x\in [\ep,1-\ep]$ and $\theta\in \Theta$, one has 
\begin{equation}\label{eq10}|g_\theta(x)-x|= \rho(2\phi_1-1)(x-x^2)\geq \rho\left(2\phi_1-1\right) \ep(1-\ep).\end{equation}
Thus, if $x_t \in [\ep,1-\ep]$, the difference $x_{t+1}-x_t$ is bounded away from zero, positive if $\theta_{t+1}=1$, negative otherwise.  Set $N:=\displaystyle \lceil\frac{1}{\rho\ep((1-\ep)2\phi_1-1)}\rceil$. From (\ref{eq10}) and  the choice of $N$, it follows that if $x_t\in [\ep,1-\ep]$, one has 
$x_{t+N}\geq 1-\ep$ if $\theta_{t+1}=\cdots = \theta_{t+N}= 1$, and $x_{t+N}\leq \ep$ if $\theta_{t+1}=\cdots = \theta_{t+N}= 0$.

Since the probability that $\theta_{t+1}=\cdots = \theta_{t+N}$ is $(1-\lambda)^N$, it follows that 
\[\prob\left(\tau_{m+1}^{\mathrm{out}}\leq t+N\mid \tau_m^{\mathrm{in}} \leq t <\tau_{m+1}^{\mathrm{out}}\right) \geq (1-\lambda)^N,\]
which implies 
\[\prob\left(\tau_{m+1}^{\mathrm{out}}\geq t+jN\mid \tau_m^{\mathrm{in}} \leq t <\tau_{m+1}^{\mathrm{out}}\right) \leq \left(1-(1-\lambda)^N\right)^j\]
for each $j$, and the result follows when $j\to +\infty$.
\end{proof}
\medskip

In the next statement, $(\calH_t)_t$ is the filtration induced by $(\theta_t, x_t)_t$ and $\calH_{\tau_m^{\mathrm{out}}}$ is the stopped filtration at time $\tau_m^{\mathrm{out}}$. We show that the probability that $x_t$ ever re-enters the interval $[\ep,1-\ep]$ once it leaves it, is bounded away from 1.

\begin{claim}\label{claim2.1}
There exists $a>0$ such that $\prob\left(\tau_m^{\mathrm{in}}= +\infty\mid \calH_{\tau_m^{\mathrm{out}}}\right)\geq a$, w.p. 1 on the event $\tau_m^{\mathrm{out}}<+\infty$.
\end{claim}

\begin{proof}[Proof of the claim]
Fix a finite history $h$ of length $t$ such that $\tau_m^{\mathrm{out}} =t$.  W.l.o.g., assume $x_t<\ep$ and $\theta_t= 0$. By the Markov property, we may assume w.l.o.g. that $t=0$ and $m=1$. We define an auxiliary sequence $(W_t)$ of random variables by 
$W_0= 0$ and $W_{t+1}= W_t + \ln \tilde \phi_{\theta_{t+1}}$ for $t\geq 1$. Since $x_{t+1}\leq x_{t}  \tilde \phi_{\theta_{t+1}}$ for each $t$, one has 
\[W_t \geq \ln x_{t}-\ln x_0\]
for each $t\leq \tau_1^{\mathrm{in}}$. This implies that $\tau_1^{\mathrm{in}} \geq \inf\{t\geq 1: W_t\geq 0\}$, and $\prob(\tau_1^{\mathrm{in}}<+\infty) \leq \prob\left(\sup_{t\geq 1} W_t \geq 0\right)$.
To show that the latter probability is bounded  away from 1, we introduce the successive dates at which the state changes: we set first $\psi_0=0$, $\psi_1:=\inf\{t>0:\theta_t= 1\}$ and, for $m\geq 1$, 
$\psi_{2m}=\inf\{t>\psi_{2m-1}: \theta_t=0\}$ and $\psi_{2m+1}=\inf\{t>\psi_{2m}: \theta_t=1\}$. Finally we denote by 
\[X_j:= W_{\psi_{2j+2}}-W_{\psi_{2j}}\]
the (algebraic) increase in $W$ between $\psi_{2j}$ and $\psi_{2j+2}$. 

Observe that $(W_t)$ decreases between $\psi_{2j}$ and $\psi_{2j+1}$ and  increases  between $\psi_{2j+1}$ and $\psi_{2j}$ hence 
\[\sup_t W_t \geq 0 \Leftrightarrow \sup_j \left(X_0+\cdots +X_j\right) \geq 0.\]
By construction, the r.v.'s $(X_j)$ are \emph{iid} with $\E[X_1]= \displaystyle \frac{1}{\lambda} \left(\ln \tilde \psi_1 +\ln \tilde \psi_0\right) <0$.
The sequence $(X_0+\cdots +X_j)_j$ is therefore a simple random walk with negative drift, which implies 
\[\prob\left(\sup_j\left(X_1+\cdots +X_j\right)\geq 0\right)\leq 1-a \mbox{ for some }a>0.\]
\end{proof}

\medskip

Claims \ref{claim1.1} and \ref{claim2.1} imply that 
\[\prob\left(\tau_{m}^{\mathrm{out}} <+\infty \mbox{ and }  \tau_{m+1}^{\mathrm{in}}=+\infty\mbox{ for some }m\right)=1.\]
Thus, for each $\ep>0$, there is a random time $T_0$ such that either $x_t<\ep$ for all $t\geq T_0$, or $x_t>1-\ep$ for all $t\geq T_0$. This concludes the proof of Lemma \ref{lemm cemetery}.
\end{proof}

\begin{lemma}\label{lemm10}
The only invariant measure for $(\theta_t,x_t)$ is the uniform distribution over $\Theta\times \{0,1\}$. 
\end{lemma}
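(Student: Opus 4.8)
The plan is to combine Lemma \ref{lemm cemetery} with a symmetry and irreducibility argument. By Lemma \ref{lemm cemetery}, starting from any initial condition the process $(x_t)$ converges almost surely to a point in $\{0,1\}$. Since the state $(\theta_t)$ keeps switching (because $\lambda>0$), the pair $(\theta_t,x_t)$ cannot settle at a single point, so any invariant measure $\mu$ must be supported on $\Theta\times\{0,1\}$. It therefore suffices to determine the weights that $\mu$ assigns to the four points $(0,0),(0,1),(1,0),(1,1)$.

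First I would verify that, under the herding dynamics, the four corner points $(\theta,x)\in\Theta\times\{0,1\}$ are indeed fixed under the deterministic maps $g_\theta$: one checks directly from $g_\theta(x)=(1-\rho)x+\rho\{x^2+2x(1-x)\phi_\theta\}$ that $g_\theta(0)=0$ and $g_\theta(1)=1$ for each $\theta$. Thus on the support the only remaining randomness is the switching of $\theta$. The invariance equation \textbf{C3} restricted to this two-point $x$-set becomes a finite linear system: writing $m_{\theta,x}:=\mu(\{(\theta,x)\})$, invariance under the Markov kernel that flips $\theta$ with probability $\lambda$ while keeping $x$ fixed (since $g_\theta$ fixes $0$ and $1$) gives $m_{\theta,x}=(1-\lambda)m_{\theta,x}+\lambda m_{1-\theta,x}$, forcing $m_{0,x}=m_{1,x}$ for each $x\in\{0,1\}$.

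Next, to pin down the remaining freedom between $x=0$ and $x=1$, I would invoke the symmetry requirement \textbf{C5}: $\mu$ is invariant under $(\theta,x)\mapsto(1-\theta,1-x)$. This maps $(0,0)\leftrightarrow(1,1)$ and $(0,1)\leftrightarrow(1,0)$, so $m_{0,0}=m_{1,1}$ and $m_{0,1}=m_{1,0}$. Combined with the previous relations $m_{0,0}=m_{1,0}$ and $m_{0,1}=m_{1,1}$, all four weights are forced to be equal, hence each equals $\tfrac14$. This is exactly the uniform distribution over $\Theta\times\{0,1\}$, giving the claim.

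The main obstacle is the step from almost-sure convergence of $(x_t)$ to the structural conclusion that \emph{every} invariant measure is supported on the corners; one must rule out the possibility of an invariant measure placing mass on transient points of $(0,1)$. I would argue this by a standard ergodic/transience argument: Lemma \ref{lemm cemetery} shows that for each $\ep>0$ the set $[\ep,1-\ep]$ is visited only finitely often almost surely from any start, so under any invariant measure the expected long-run fraction of time spent in $[\ep,1-\ep]$ is zero, forcing $\mu([\ep,1-\ep]\times\Theta)=0$ for all $\ep$; letting $\ep\to0$ yields $\mu((0,1)\times\Theta)=0$. One small care point is that the symmetry relation \textbf{C5} is what breaks the tie between the two absorbing consensus states, so the argument genuinely needs the symmetric-equilibrium restriction rather than bare invariance.
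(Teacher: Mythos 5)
Your proposal is correct and follows essentially the same route as the paper's proof: Lemma \ref{lemm cemetery} plus invariance pins the support of $\mu$ to $\Theta\times\{0,1\}$, the invariance equation at the absorbing corners forces $\mu(0,a)=\mu(1,a)$, and the symmetry requirement \textbf{C5} then yields the uniform distribution (you are also right that symmetry is genuinely needed, since otherwise $\frac12\left(\delta_{(0,0)}+\delta_{(1,0)}\right)$ would be invariant as well). The only cosmetic difference is that you establish the support restriction via long-run occupation times, whereas the paper equates $\mu\left(\Theta\times[\ep,1-\ep]\right)$ with $\prob\left(x_t\in[\ep,1-\ep]\right)$ and lets $t\to+\infty$; both rest on the same almost-sure convergence.
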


\begin{proof}
By Lemma \ref{lemm cemetery}, any invariant measure is concentrated on $\Theta\times \{0,1\}$.\footnote{Indeed, by the invariance property, $\mu (\Theta\times [\ep,1-\ep]) =\prob\left(x_t\in [\ep,1-\ep]\right)$ for each $t$. Since for each $\ep>0$  the RHS converges to zero as $t\to +\infty$, one has  $\mu(\Theta\times (0,1))=0$.}
Since the sets $\{x=0\}$ and $\{x=1 \}$ are absorbing for  $(\theta_t, x_t)$, one has for $a\in \{0,1\}$
\begin{eqnarray*}
\mu(0,a) &=& \prob\left((\theta_{t+1},x_{t+1})= (0,a)\right) \\
&=& (1-\lambda)  \prob\left((\theta_{t},x_{t})= (0,a)\right) + \lambda \prob\left((\theta_{t},x_{t})= (1,a)\right) \\
&=& (1-\lambda) \mu(0,a)+ \lambda \mu(1,a).
\end{eqnarray*}
Hence $\mu(0,a)= \mu(1,a)$: $\mu$ is a product distribution. Since $\mu$ is symmetric, it must be the uniform distribution.
\end{proof}
\medskip

We can now conclude.  By Lemma \ref{lemm10}, the sample  is non-informative: $p_k= \frac12$ for each $k=0,1,2$. But agents would then rather buy information for sure: this is the desired contradiction. 

%We have thus shown that at any ESS, one has $p_1=\frac12$, and $p_0,p_2\in [1-\hat p,\hat p]$. In the binary case, this implies that the interim payoff is $\pi-c$ for each $k$.
%In the general case, this implies that the interim payoff is \emph{below} $\hat p$ if $k=1$, and \emph{at most} $\hat p$ for $k\in \{0,2\}$. Since the (invariant) probability of getting a balanced sample $k=1$ is positive,\footnote{Otherwise, the marginal over $[0,1]$ of $\mu$ would be concentrated on $\{0,1\}$, and a contradiction would follow as before.} the expected equilibrium is below $\hat p$.

%%%%%%%%%%%%%%%%%%%%%%%%%%%%%%%%%%%%%%%%%
\section{Equilibrium existence: Proof of Theorem \ref{thm existence}}
%%%%%%%%%%%%%%%%%%%%%%%%%%%%%%%%%%%%%%%%%
\setcounter{equation}{0} \renewcommand{\theequation}{B.\arabic{equation}}

We prove the existence of an  ESS $(\mu,\sigma)$ by means of a fixed-point
argument on an auxiliary space $\Sigma$ of (symmetric) triples
$(\mu,b,\phi)$. Define:
\begin{itemize}
\item $M$ to be the set of distributions $\mu\in\Delta(\Theta \times [0,1])$
that are invariant under the transformation $\left(\theta,x\right)\rightarrow\left(1-\theta,1-x\right)$. 
\item $B$ to be the set of \emph{belief systems} $p=(p_{k})\in[0,1]^{\{0,\ldots,n\}}$
such that $p_{k}=1-p_{n-k}$ for each $k$, with the interpretation
that $p_{k}$ is the interim belief with a sample  $k$. 
\item $\Phi$ to be the set of $\phi=(\phi_{\theta,k})\in[0,1]^{\Theta\times\{0,\ldots,n\}}$
such that $\phi_{\theta,k}=1-\phi_{1-\theta,n-k}$ for each $k$, with the interpretation
that $\phi_{\theta,k}$ is the  probability of playing action
1, when in state $\theta$ and sampling $k$. 
\end{itemize}
$M$ is compact metric when endowed with the topology of weak convergence,
and $\Sigma:=M\times B\times \Phi$ is convex compact with the product
topology. We define a set-valued map $\Psi:\Sigma\to\Sigma$ by $\Psi(\mu,p,\phi):=\Psi_{1}(\phi)\times\Psi_{2}(\mu)\times\Psi_{3}(p)$,
where $\Psi_{1},\Psi_{2},\Psi_{3}$ are defined next.

\subsubsection*{Definition and properties of $\Psi_{1}$}

Let $\phi\in \Phi$ be given. It induces a (symmetric) Markov chain $(\theta_{t},x_{t})$
in the usual way, with $(x_{t})$ obeying the recursive equation
\[
x_{t+1}=(1-\rho)x_{t}+\rho\sum_{k=0}^{n}{n \choose k}x^{k}(1-x)^{n-k}\phi_{\theta_{t+1},k}.
\]
We set $\Psi_{1}(\phi):=\{\mu\in M:\ \mu\mbox{ is an invariant measure for }(\theta_{t},x_{t})\}$.

\begin{lemma} The map $\Psi_{1}$ is uhc, with non-empty convex values.
\end{lemma}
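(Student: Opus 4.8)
The plan is to establish that $\Psi_1$ is upper-hemicontinuous with nonempty convex values, which is exactly the property needed to invoke the Kakutani--Fan--Glicksberg fixed-point theorem for the composite map $\Psi$. First I would address \emph{nonemptiness}. For a fixed $\phi\in\Phi$, the transition dynamics of $(\theta_t,x_t)$ are well-defined: $\theta_t$ follows the given symmetric Markov chain on $\Theta$, and conditional on $(\theta_{t},x_{t-1})$, the value $x_t$ is the deterministic function given by the recursion displayed above. Thus $(\theta_t,x_t)$ is a Feller-continuous Markov chain on the compact metric space $\Theta\times[0,1]$ (continuity of the transition kernel follows because the RHS of the recursion is a polynomial in $x$, hence continuous, and $\Theta$ is finite). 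By the Krylov--Bogolyubov argument, every Feller chain on a compact metric space admits an invariant probability measure, so the set of invariant measures is nonempty. To land inside $M$ (the symmetric measures), I would observe that because $\phi$ satisfies the symmetry condition $\phi_{\theta,k}=1-\phi_{1-\theta,n-k}$, the map $(\theta,x)\mapsto(1-\theta,1-x)$ intertwines the chain with itself; hence if $\mu$ is invariant, so is its image under this involution, and then $\tfrac12(\mu+\text{image})$ is a symmetric invariant measure. This shows $\Psi_1(\phi)\cap M\neq\emptyset$.

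Next I would verify \emph{convexity} of the values, which is immediate: the invariance condition C3 (equivalently, $\mu=\mu\circ P^{-1}$ for the transition kernel $P$) is linear in $\mu$, and the symmetry constraint defining $M$ is also a linear (affine, in fact linear) restriction; the intersection of two convex sets of measures is convex. So $\Psi_1(\phi)$ is a convex subset of the compact set $M$, and being closed (see below) it is compact, hence in particular nonempty-valued with convex compact values.

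The substantive step is \emph{upper-hemicontinuity}. Since $M$ is compact metric in the weak topology, uhc with closed values is equivalent to having a closed graph. So I would take sequences $\phi^{(j)}\to\phi$ in $\Phi$ and $\mu^{(j)}\to\mu$ weakly with $\mu^{(j)}\in\Psi_1(\phi^{(j)})$, and show $\mu\in\Psi_1(\phi)$. The key is that the transition kernel $P_\phi$ depends continuously on $\phi$: writing the invariance as $\int f\,d\mu^{(j)}=\int (P_{\phi^{(j)}}f)\,d\mu^{(j)}$ for every $f\in C(\Theta\times[0,1])$, I would pass to the limit on both sides. The left side converges to $\int f\,d\mu$ by weak convergence. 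For the right side I would argue that $P_{\phi^{(j)}}f\to P_\phi f$ \emph{uniformly}: since $g_\theta$ depends on $\phi_{\theta,\cdot}$ through the polynomial $\sum_k\binom{n}{k}x^k(1-x)^{n-k}\phi_{\theta,k}$, the map $\phi\mapsto g_\theta$ is continuous in the sup-norm on $[0,1]$, and composing the uniformly continuous $f$ with the uniformly convergent $g_{\phi^{(j)}}$ yields uniform convergence of $P_{\phi^{(j)}}f$ to $P_\phi f$. Uniform convergence of the integrand combined with weak convergence of the measures then gives $\int P_{\phi^{(j)}}f\,d\mu^{(j)}\to\int P_\phi f\,d\mu$, so $\int f\,d\mu=\int P_\phi f\,d\mu$ for all $f\in C(\Theta\times[0,1])$, i.e.\ $\mu$ is invariant for $P_\phi$. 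Symmetry of $\mu$ is preserved under weak limits since $M$ is weakly closed, so $\mu\in\Psi_1(\phi)$.

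The main obstacle I anticipate is the interchange of limits on the right-hand side, namely justifying $\int P_{\phi^{(j)}}f\,d\mu^{(j)}\to\int P_\phi f\,d\mu$ when \emph{both} the integrand and the measure vary with $j$. The clean way to handle it is the uniform-convergence estimate: bounding $\bigl|\int P_{\phi^{(j)}}f\,d\mu^{(j)}-\int P_\phi f\,d\mu\bigr|$ by $\|P_{\phi^{(j)}}f-P_\phi f\|_\infty+\bigl|\int P_\phi f\,d\mu^{(j)}-\int P_\phi f\,d\mu\bigr|$, where the first term vanishes by the uniform continuity argument above and the second vanishes because $P_\phi f$ is a fixed continuous function and $\mu^{(j)}\to\mu$ weakly. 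A minor technical care is that if some $H_\theta$ has an atom at a relevant threshold, $\phi_{\theta,\cdot}$ could in principle enter discontinuously through the action rule, but here $\phi$ is a free variable in the compact set $\Phi$ rather than derived from the belief, so no such discontinuity arises in $\Psi_1$; the delicate measurability issues are instead confined to the maps $\Psi_2$ and $\Psi_3$ treated subsequently.
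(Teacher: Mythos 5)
Your proof is correct and follows essentially the same route as the paper's (which is explicitly "standard, and only sketched" there): Feller continuity of the two-point-support kernel $P_\phi$, compactness of $M$, linearity of the invariance condition for convexity, and continuity of $(\phi,\mu)\mapsto\mu P_\phi$ for the closed-graph/uhc step. The only difference is in the nonemptiness step — the paper applies the Tychonov fixed-point theorem to $\mu\mapsto\mu P_\phi$ directly on $M$, while you use Krylov--Bogolyubov plus symmetrization — and your explicit verification that the involution $(\theta,x)\mapsto(1-\theta,1-x)$ intertwines the chain actually supplies a detail the paper leaves implicit, namely that $P_\phi$ maps symmetric measures to symmetric measures.
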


\begin{proof} The proof is standard, and only sketched. For given
$\phi$, denote $P_\phi(z,dz')$ the one-step transition probability
of $(\theta_{t},x_{t})$: for each $z=(\theta,x)\in\Theta\times [0,1]$,
$P_\phi(z,dz')$ has a two-point support.

For fixed $\phi$ and $f\in C(\Theta\times [0,1])$, the map ${\displaystyle Tf(z):=\int_{\Theta\times [0,1]}f(z')P_\phi(z,dz')}$
is continuous in $z$. This implies that the map $\mu\in M\mapsto\mu P_\phi$
is continuous in the weak-{*} topology, and thus has a fixed point
by Tychonov Theorem. Thus, $\Psi_{1}(\phi)\neq\emptyset$.

The same argument shows that $(\phi,\mu)\in \Phi\times M\mapsto\mu P_\phi$
is continuous as well (and linear in $\mu$). This completes the proof.

\end{proof}

\subsubsection*{Definition and properties of $\Psi_{2}$}

Let $\mu\in M$. The probability of  sample $k\in\{0,1,\ldots,n\}$
in state $\theta$ is $P_{\theta}(k\mid\mu):={\displaystyle \int_{[0,1]}{n\choose k} x^k (1-x)^{n-k}\mu_{\theta}(dx)}$,
where $\mu_{\theta}$ is the  distribution of $x$ in state  
$\theta$.\footnote{Symmetry of $\mu$ implies that the marginal probability of $\theta$
is $\frac{1}{2}$, hence this conditional distribution is well-defined.} Denote by $\Psi_{2}^{k}(\mu)$ the set of all beliefs $p_{k}\in[0,1]$
that are consistent with Bayesian updating, 
\[
\Psi_{2}^{k}(\mu):=\left\{ p_{k}\in[0,1]:P_{1}(k\mid\mu)(1-p_{k})=P_{0}(k\mid\mu)p_{k}\right\} ,
\]
and 
\[
\Psi_{2}(\mu):=\left\{ p\in\prod_{k}\Psi_{2}^{k}(\mu):\ p_{k}=1-p_{n-k}\mbox{ for each }k\right\} .
\]

Lemma \ref{lemm psi2} below is immediate.

\begin{lemma}\label{lemm psi2} The map $\Psi_{2}:M\to B$ is uhc,
with non-empty convex values. \end{lemma}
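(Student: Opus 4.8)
The plan is to verify the three required properties—non-emptiness of values, convexity of values, and upper hemicontinuity—directly from the defining linear equation. The starting observation is that for fixed $\mu$ and $k$, the membership condition $P_1(k\mid\mu)(1-p_k)=P_0(k\mid\mu)p_k$ is affine in $p_k$, so $\Psi_2^k(\mu)$ is always an interval: either the singleton $p_k=\frac{P_1(k\mid\mu)}{P_0(k\mid\mu)+P_1(k\mid\mu)}$ when the sample $k$ has positive total probability $P_0(k\mid\mu)+P_1(k\mid\mu)>0$, or the whole interval $[0,1]$ when that probability vanishes (the equation then reads $0=0$). In particular each $\Psi_2^k(\mu)$ is convex and non-empty, and the product $\prod_k\Psi_2^k(\mu)$ is a convex box. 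Since $\Psi_2(\mu)$ is the intersection of this box with the affine subspace $\{p:p_k+p_{n-k}=1\}$, it is automatically convex; the only real content is to show it is non-empty.

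For non-emptiness I would first record the symmetry identity forced by $\mu\in M$. Because $\mu$ is invariant under $(\theta,x)\mapsto(1-\theta,1-x)$, the conditional $\mu_1$ is the image of $\mu_0$ under $x\mapsto 1-x$; substituting $y=1-x$ in the integral and using ${n \choose n-k}={n \choose k}$ then gives $P_1(k\mid\mu)=P_0(n-k\mid\mu)$ and $P_0(k\mid\mu)=P_1(n-k\mid\mu)$. Two consequences follow. First, whenever $k$ has positive total probability so does $n-k$, and the two singleton values satisfy $p_k+p_{n-k}=\frac{P_1(k\mid\mu)+P_0(k\mid\mu)}{P_0(k\mid\mu)+P_1(k\mid\mu)}=1$, so the symmetry constraint is met automatically. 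Second, $k$ has zero total probability iff $n-k$ does, so the ``free'' indices come in symmetric pairs $\{k,n-k\}$ (or the self-paired $k=n/2$ when $n$ is even). For each such free pair I pick any $p_k\in[0,1]$ and set $p_{n-k}=1-p_k$; for a self-paired free index I set $p_{n/2}=\tfrac12$. This produces a point of $\prod_k\Psi_2^k(\mu)$ satisfying $p_k+p_{n-k}=1$, so $\Psi_2(\mu)\neq\emptyset$.

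For upper hemicontinuity, since the target space $B$ is compact it suffices to show $\Psi_2$ has a closed graph. I would take $\mu^{(j)}\to\mu$ weakly with $p^{(j)}\in\Psi_2(\mu^{(j)})$ and $p^{(j)}\to p$, and argue $p\in\Psi_2(\mu)$. The key continuity input is that the integrand $x\mapsto{n \choose k}x^k(1-x)^{n-k}$ is continuous and bounded on $[0,1]$, and that the conditionals $\mu^{(j)}_\theta\to\mu_\theta$ weakly: this follows from weak convergence on $\Theta\times[0,1]$ together with the fact that, by symmetry, the $\theta$-marginal is always uniform, so testing against the continuous function $\mathbf 1_{\{\theta\}}(\cdot)f(x)$ is legitimate. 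Hence $P_\theta(k\mid\mu^{(j)})\to P_\theta(k\mid\mu)$ for every $\theta,k$, the affine membership equations pass to the limit, and the linear symmetry constraint $p_k=1-p_{n-k}$ is preserved under limits, giving $p\in\Psi_2(\mu)$.

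The main obstacle—and the reason the statement is phrased with a set-valued map and uhc rather than a continuous function—is the degeneracy at samples of probability zero, where the Bayesian posterior is undefined and $\Psi_2^k$ jumps discontinuously to all of $[0,1]$. The plan handles this by (i) observing that the defining equation remains well-posed (as $0=0$) so values stay non-empty intervals, and (ii) using the symmetry identity $P_1(k\mid\mu)=P_0(n-k\mid\mu)$ to guarantee that the zero-probability indices are symmetric, so the constraint $p_k=1-p_{n-k}$ can always be satisfied even on the degenerate coordinates.
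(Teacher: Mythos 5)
Your proof is correct. The paper itself offers no argument here\,---\,it simply declares the lemma ``immediate''\,---\,and your direct verification is precisely the routine check the authors had in mind: the singleton-or-$[0,1]$ structure of each $\Psi_2^k(\mu)$ gives convexity, weak convergence of the conditionals $\mu_\theta^{(j)}$ (using that the $\theta$-marginal is pinned at $\tfrac12$ by symmetry) plus boundedness of $x\mapsto \binom{n}{k}x^k(1-x)^{n-k}$ gives the closed graph, and compactness of $B$ upgrades this to upper hemicontinuity. The one place where the lemma is not quite trivial\,---\,non-emptiness when some sample $k$ has zero probability, where the Bayes posterior is undefined\,---\,is exactly the point you treat carefully, via the identity $P_1(k\mid\mu)=P_0(n-k\mid\mu)$ forcing degenerate indices to come in symmetric pairs, so your write-up is a faithful and complete filling-in of the gap the paper waves away.
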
 

\subsubsection*{Definition and properties of $\Psi_{3}$}

Fix a system $p\in B$  of beliefs. Given a sample $k$, and a state $\theta$, we
define $\Psi_{3}^{k,\theta}(p_{k})$ as the set of probabilities  of playing
action 1 in state $\theta$ that may arise  when holding the belief $p_{k}$, then acquiring and using information in an optimal way. Formally, 
\begin{itemize}
\item $\Psi_{3}^{k,\theta}(p_{k})=\{1\}$ if $p_{k}>\hat p$. 
\item $\Psi_{3}^{k,\theta}(\hat p):=[1-H_{\theta}(1-\hat p),1]$. 
\item $\Psi_{3}^{k,\theta}(p_{k})=[1-H_{\theta}(1-p_{k}),1-H_{\theta}(1-p_{k})_-]$
if $p_{k}\in(1-\hat p,\hat p)$. 
\item $\Psi_{3}^{k,\theta}(1-\hat p):=[0,1-H_{\theta}(1-\hat p)_-]$. 
\item $\Psi_{3}^{k,\theta}(p_{k})=\{0\}$ if $p_{k}<1-\hat p$. 
\end{itemize}
We next set 
\[
\Psi_{3}(p):=\{\phi\in \Phi:\ \phi_{\theta,k}\in\Psi_{3}^{\theta,k}(p_{k})\mbox{ for each }(\theta,k)\}.
\]

Lemma \ref{lemm psi3} below is immediate.

\begin{lemma}\label{lemm psi3} The map $\Psi_{3}:B\to \Phi$ is uhc,
with non-empty convex values. \end{lemma}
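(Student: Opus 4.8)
The plan is to exploit that $\Phi$ is compact, so that establishing non-empty convex values together with a closed graph suffices for upper hemicontinuity. The key structural observation is that $\Psi_{3}(p)$ is the intersection of the product set $\prod_{(\theta,k)}\Psi_3^{k,\theta}(p_k)$ with the affine subspace $L:=\{\phi\in[0,1]^{\Theta\times\{0,\ldots,n\}}:\phi_{\theta,k}=1-\phi_{1-\theta,n-k}\text{ for each }(\theta,k)\}$ that encodes the symmetry requirement. I would treat the three required properties in turn.

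For convexity, each scalar value $\Psi_3^{k,\theta}(p_k)$ is a (possibly degenerate) closed subinterval of $[0,1]$, so the product is convex, and intersecting with the affine subspace $L$ preserves convexity. For non-emptiness I would first verify the compatibility of the pointwise constraints with the symmetry constraint. Using $p_{n-k}=1-p_k$ together with the signal symmetry $H_0(q)=1-H_1(1-q)$, one checks, case-by-case across the five branches of the definition, that $\phi\in\Psi_3^{k,\theta}(p_k)$ if and only if $1-\phi\in\Psi_3^{n-k,1-\theta}(p_{n-k})$; that is, the involution $(\theta,k,\phi)\mapsto(1-\theta,n-k,1-\phi)$ carries each value set onto the corresponding \emph{mirror} value set. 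Consequently any equivariant selection --- for instance taking $\phi_{\theta,k}$ to be the midpoint of $\Psi_3^{k,\theta}(p_k)$, whose mirror is then automatically the midpoint of $\Psi_3^{n-k,1-\theta}(p_{n-k})$ --- satisfies the symmetry constraint and hence lies in $\Psi_3(p)$, which is therefore non-empty.

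For upper hemicontinuity it is enough, since $\Phi$ is compact, to show the graph is closed: given $p^{(m)}\to p$ in $B$ and $\phi^{(m)}\in\Psi_3(p^{(m)})$ with $\phi^{(m)}\to\phi$, I must show $\phi\in\Psi_3(p)$. The symmetry constraints defining $L$ are closed conditions, so the limit $\phi$ again lies in $L$; it remains to show that each coordinate correspondence $p_k\mapsto\Psi_3^{k,\theta}(p_k)$ has closed graph. On the open branches $p_k>\hat p$, $p_k<1-\hat p$, and $p_k\in(1-\hat p,\hat p)$ this is immediate. The only real work is at the two thresholds $p_k=\hat p$ and $p_k=1-\hat p$, where I would check that the one-sided limits of the neighbouring branches are contained in the \emph{fattened} threshold value. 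Concretely, as $p_k\downarrow\hat p$ the outer singleton $\{1\}$ sits inside $[1-H_\theta(1-\hat p),1]$, while as $p_k\uparrow\hat p$ the interior interval $[1-H_\theta(1-p_k),1-H_\theta(1-p_k)_-]$ converges, using right-continuity of $H_\theta$ and the left-limit notation in the definition, into that same threshold set; the analogous check holds at $1-\hat p$.

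The main obstacle is precisely this threshold verification in the presence of atoms of $H_\theta$. The fattening of $\Psi_3^{k,\theta}$ at $\hat p$ and $1-\hat p$ --- replacing the discontinuous jump by the full interval spanning the acquire-information and herd regimes --- is engineered so that the left and right one-sided limits of the adjacent branches both land inside the threshold value, which is exactly what closed-graphness demands. Tracking the left-limits $H_\theta(\cdot)_-$ through these limiting procedures is the fiddly but routine part, and the definition of the correspondence has been set up so that the bookkeeping closes; the symmetry of $H$ then guarantees that the threshold values are themselves consistent with the involution used in the non-emptiness step, so that the closed-graph property and the symmetry constraint do not conflict.
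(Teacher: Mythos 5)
Your proof is correct; the paper itself offers no argument for this lemma (it simply declares it immediate), and your verification --- interval-valued coordinates intersected with the symmetry subspace for convexity, the mirror involution $(\theta,k,\phi)\mapsto(1-\theta,n-k,1-\phi)$ plus an equivariant (midpoint) selection for non-emptiness, and closed graph together with compactness of $\Phi$ for upper hemicontinuity, with the threshold checks at $\hat p$ and $1-\hat p$ closing thanks to the deliberate fattening of the value sets there --- is exactly the routine check the authors treat as immediate. The only caveat is notational rather than substantive: when $H_\theta$ has atoms, the stated symmetry $H_0(q)=1-H_1(1-q)$ must be read with the appropriate left limits (as written it would force continuity of $H_\theta$), which is precisely the bookkeeping you flag and which does close as you claim.
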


\paragraph{Conclusion:}

By Kakutani Theorem, $\Psi$ has a fixed point $(\mu,p,\phi)$. The definition of $\Psi_{3}$ implies the existence of a strategy $\sigma=\left(\beta,\alpha\right)$
which induces probabilities $\phi$ such that, thanks to the definition of $\Psi_{2}$, $\sigma$ is a best response to interim beliefs induced
by the distribution $\mu$. In turn, the definition of $\Psi_{1}$ ensures that $\mu$ is an invariant distribution for $\sigma$. Symmetry is
guaranteed by definition. Hence, $(\mu,\sigma)$ is an ESS.

%%%%%%%%%%%%%%%%%%%%%%%%%%%%%%%%%%%%%%%%%
\section{The consensus result: Proof of Theorem \ref{th consensus general}}
\setcounter{equation}{0} \renewcommand{\theequation}{C.\arabic{equation}}

Let $(\lambda,\rho)$ and an ESS $(\mu,\sigma)$ of $G(\lambda,\rho)$ be given, with equilibrium payoff $w^*$. Consider a thought experiment in which an agent in period $t$  observes the actions $a^{(1)},\ldots, a^{(n)}$ in her sample in some random order, and denote by $\calF_m$ the information structure induced by the observation of  the first $m$ actions. We view $\calF_m$ as a distribution over (interim) beliefs, hence $\calF_m\in \Delta([0,1])$. With such notation, $w^*= w(\calF_n)$, where $w:=\max(u,v-c)$. Because more information cannot hurt,  we have
\begin{equation}
w^{*}=w\left(\mathcal{F}_{n}\right)\geq\cdots\geq w\left(\mathcal{F}_{1}\right)\text{ and }w\left(\mathcal{F}_{m}\right)\geq u\left(\mathcal{F}_{m}\right).\label{eq:VI basic inequalities}
\end{equation}

We look more closely at the information structures $\calF_0,\calF_1$ and $\calF_2$, and introduce some notation. We denote by $\nu:=\int_{\Theta\times [0,1]} (1- x) d\mu(\theta,x)$ the probability of $a^{(1)}=0$, and by $q_a=\prob(\theta_{t}=1\mid a^{(1)}=a)$ the interim belief when first sampling $a\in A$. We also denote by $\nu_{a}$  the conditional probability of next sampling $a^{(2)}=0$ given $a^{(1)}=a$, and by $q_{aa'}:=\prob(\theta_{t}=1\mid (a^{(1)}, a^{(2)})= (a,a'))$ the interim belief given the first two actions in the sample. Using the notation  $\left(a_{1}\right)^{p_{1}}....\left(a_{k}\right)^{p_{k}}$ to describe the finite-support probability distribution that assigns probability $p_m$ to $a_m$, we thus have
\begin{align}
\mathcal{F}_{0} & =\left(\frac12\right)^{1},\label{eq:information 0}\\
\mathcal{F}_{1} & =\left(q_{0}\right)^{\nu}\left(q_{1}\right)^{\left(1-\nu\right)},\label{eq:information 1}\\
\mathcal{F}_{2} & =\left(q_{00}\right)^{\nu\nu_{0}}\left(q_{01}\right)^{\nu\left(1-\nu_{0}\right)}\left(q_{10}\right)^{\left(1-\nu\right)\nu_{1}}\left(q_{11}\right)^{\left(1-\nu\right)\left(1-\nu_{1}\right)}.\label{eq:information 2}
\end{align}

 Because
$a^{(1)}$ and $a^{(2)}$  are exchangeable, we
have
\[
q_{01}=q_{10}\text{ and }\nu\left(1-\nu_{0}\right)=\left(1-\nu\right)\nu_{1}.
\]
By the martingale property of beliefs, the expected belief in all three information structures is the same
and equal to 
\begin{equation}
\frac12=\nu q_{0}+\left(1-\nu\right)q_{1}=\nu\nu_{0}q_{00}+2\nu\left(1-\nu_{0}\right)q_{01}+\left(1-\nu\right)\left(1-\nu_{1}\right)q_{11}.\label{eq:average belief}
\end{equation}

In addition, by symmetry, one has $\nu=\frac12,$ which yields $q_0=1-q_1,$ $q_{00}=1-q_{11},$ $\nu_0=1-\nu_1$ and $q_{01}=q_{10}=\frac12.$

The result relies on the fundamental observation below.
\begin{lemma}
\label{lem:.Key informational}
One has $w^{*}-\Lambda\leq u\left(\mathcal{F}_{1}\right)$.
\end{lemma}
\begin{proof}
Consider the strategy consisting of replicating $a^{(1)}$. The payoff from such a strategy $u\left(\mathcal{F}_{1}\right)=(1-\Lambda)\kappa^*+\Lambda(1-\kappa^*)$ where $\kappa^*$ is the equilibrium probability that a random agent plays the correct action. It follows that
\begin{eqnarray*}
u\left(\mathcal{F}_{1}\right)&=&\kappa^*-\Lambda+2\Lambda(1-\kappa^*)\\
&\geq&w^*-\Lambda,
\end{eqnarray*}
using $\kappa^*\geq w^*$ ($w^*$ differs from $\kappa^*$ by the expected cost of information acquisition).

\end{proof}

\bigskip
Together with (\ref{eq:VI basic inequalities}), Lemma \ref{lem:.Key informational}  implies
\begin{equation}\label{information}
w(\calF_1)-u(\calF_1)\leq \Lambda \mbox{ and } w(\calF_2)-w(\calF_1)\leq \Lambda.
\end{equation}

\bigskip

%In the statement of Claim \ref{claim conseq}, $u:[0,1]\rightarrow\dR$ is a tangent line to $u$ at $p=q_{0}$, i.e., $u\left(\cdot\right)$ is affine, with $u\leq u$ and $u\left(q_{0}\right)=u\left(q_{0}\right)$. Note that $u\leq w$, since $u\leq w$. Also, note that $u^{\prime}\left(q_0\right)=-1$ due to $u^{\prime}\left(q_{0}\right)=-1$. 

%The second inequality follows from (\ref{information}) and Lemma \ref{lemm lower}. Indeed, 
Observe that 
\begin{eqnarray*}
\Lambda&\geq & w\left(\mathcal{F}_{1}\right)-u\left(\mathcal{F}_{1}\right)\\
&=& \nu\left[w\left(q_{0}\right)-u\left(q_{0}\right)\right]+\left(1-\nu\right)\left[w\left(q_{1}\right)-u\left(q_{1}\right)\right].\\
&= & w\left(q_{0}\right)-u\left(q_{0}\right),
\end{eqnarray*}
using $\nu=\frac12$ and $u(q_1)=u(q_0)$ and $w(q_1)=w(q_0)$ by symmetry.
\bigskip

Notice that, since $\theta$ and $a$ are positively correlated in any ESS, and states are persistent, one has $q_{1}>q_{0}.$ This implies $q_0<\frac12<q_1.$ We  now  conclude. Starting from  (\ref{information}), one has the following inequalities:
\begin{eqnarray*}
\Lambda &\geq &  w\left(\mathcal{F}_{2}\right)-w\left(\mathcal{F}_{1}\right)\nonumber \\
&= &  \E_\mu\left[ w(q_{a^{(1)}a^{(2)}})-w(q_{(a^1)}) \right] \nonumber  \\
&= & \prob\left(a^{(1)}= 0\right) \E\left[ w(q_{a^{(1)}a^{(2)}})-w(q_{(a^1)})\mid a^{(1)}= 0\right] +\prob\left(a^{(1)}= 1\right) \E\left[ w(q_{a^{(1)}a^{(2)}})-w(q_{(a^1)})\mid a^{(1)}= 1\right]\nonumber \\
&\geq& \frac12\left[\nu_{0}w\left(q_{00}\right)+\left(1-\nu_{0}\right)w\left(q_{01}\right)-w\left(q_{0}\right)\right] \\
%&=&\frac12w\left(q_{00}\right)+\frac12w\left(q_{01}\right)-w\left(q_{0}\right)\\
&= &  \frac12\left[\nu_{0}w\left(q_{00}\right) +\left(1-\nu_{0}\right)u \left(q_{01}\right)- w\left(q_0\right)\right] +\frac12(1-\nu_0)d\label{eq:VI ineq 1}, \\
\end{eqnarray*}
where $d:=w(q_{01})-u(q_{01})=w(\frac12)-u(\frac12)>0$ by Assumption 1. 

Using $w(q_0)\leq u(q_0)+\Lambda$ and $u(q_{00})\leq w(q_{00}),$ this implies
\begin{equation*}
\Lambda\geq\frac12\left[\nu_{0}u\left(q_{00}\right) +\left(1-\nu_{0}\right)u \left(q_{01}\right)- u\left(q_0\right)\right]-\frac12\Lambda+\frac12(1-\nu_0)d
\end{equation*}

Finally, using $q_{00}\leq q_{01}=\frac12,$ the fact that $u$ is affine on $(0,\frac12)$ implies
 \[\nu_{0}u\left(q_{00}\right)+\left(1-\nu_{0}\right)u \left(q_{01}\right)=
u\left( \nu_0 q_{00}+\left(1-\nu_{0}\right)q_{01}\right)=u \left(q_{0}\right).\] %we infer
We conclude
\begin{eqnarray*}
\Lambda &\geq &  -\frac12\Lambda+\frac12(1-\nu_0)d
\end{eqnarray*}

This implies $
1-\nu_{0}\leq\frac{3}{d}\Lambda.$
Since
\[
\int_{\Theta\times [0,1]} x\left(1-x\right)d\mu\left(x\right)=\prob(a^{(1)}=0,a^{(2)}= 1)= \nu\left(1-\nu_{0}\right),\]

we conclude that \[
\int_{\Theta\times [0,1]} x\left(1-x\right)d\mu\left(x\right)\leq\frac{3}{2d}\Lambda.\]

\section{No Social Learning: Proof of Theorem  \ref{th learning sym}}
\setcounter{equation}{0} \renewcommand{\theequation}{D.\arabic{equation}}

We proceed in two steps. We first prove that for any ESS and provided $\Lambda$ is small enough, the interim beliefs with a unanimous sample satisfy  $p_0\leq 1-\hat p$  and $p_n\geq \hat p$.  This is the content of Proposition  \ref{lem:beliefs not too high} below, which holds in full generality when $\hat p<1$. We next prove in Proposition \ref{lem:beliefs not too low} that, under the additional assumptions of Theorem \ref{th learning sym}, the reverse inequalities also hold. 

\subsection{Step 1}

\begin{proposition}
\label{lem:beliefs not too high} There exists
$\Lambda^{*}>0$ such that for every $(\lambda,\rho)$ s.t. $\Lambda\leq \Lambda_*$ and every ESS $(\mu,\sigma)$ of $G(\lambda,\rho)$, one has $p_0\leq 1-\hat p$
(and $p_n\geq  \hat p$).
\end{proposition}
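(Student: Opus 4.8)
The plan is to argue by contradiction: assume some ESS $(\mu,\sigma)$ of $G(\lambda,\rho)$ with $\Lambda$ small satisfies $p_0>1-\hat p$, and show that this forces $p_0\to 0$ as $\Lambda\to 0$, contradicting $p_0>1-\hat p>0$ (recall $\hat p<1$). First I would extract the structural consequence of the hypothesis. Since Assumption \ref{Hyp1} forces $\hat p>\tfrac12$, we have $p_0<\tfrac12<\hat p$, so $p_0\in(1-\hat p,\hat p)$ and an agent sampling $k=0$ strictly prefers to acquire information. Moreover the likelihood ratio in \eqref{form LR} is increasing in $k$ (a monotone-likelihood-ratio property of the Binomial, together with the positive correlation of state and actions at any ESS used in Appendix C), so $p_k\geq p_0>1-\hat p$ for every $k$. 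Hence no sample leads an agent to play the minority action for free: each agent either herds on the majority or acquires information. In state $1$ this yields a uniform bound $\phi_1(k)\geq\phi_1(0)=1-H_1(1-p_0)=:c_0>0$ for all $k$, where positivity holds because information is strictly valuable at $p_0<\hat p$ (a belief-reversing signal has positive probability). Consequently, by \eqref{defg},
\[
g_1(x)=(1-\rho)x+\rho\sum_{k=0}^{n}\binom{n}{k}x^k(1-x)^{n-k}\phi_1(k)\geq\rho c_0\quad\text{for all }x\in[0,1],
\]
and by symmetry (C5) $g_0(x)\leq 1-\rho c_0$. Since $x_t=g_{\theta_t}(x_{t-1})$, the chain never visits state $1$ with $x<\rho c_0$ nor state $0$ with $x>1-\rho c_0$; that is, $\mu(\{1\}\times[0,\rho c_0))=\mu(\{0\}\times(1-\rho c_0,1])=0$.

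The heart of the argument is to control the \emph{inertia mass}, i.e. the state-$1$ mass sitting at low $x$, and to show it is $O(\Lambda)$. I would use a mass-balance (Lyapunov) identity for $\psi(x):=(1-x)^n$. Writing $A:=\int_{\Theta\times[0,1]}\mathbf 1_{\theta=1}\psi(x)\,d\mu$, stationarity (C3) and $x_t=g_{\theta_t}(x_{t-1})$ give
\[
A=(1-\lambda)\,\E_\mu\big[\mathbf 1_{\theta=1}\psi(g_1(x))\big]+\lambda\,\E_\mu\big[\mathbf 1_{\theta=0}\psi(g_1(x))\big],
\]
which rearranges to $\E_\mu[\mathbf 1_{\theta=1}(\psi(x)-\psi(g_1(x)))]=\lambda\,\E_\mu[(\mathbf 1_{\theta=0}-\mathbf 1_{\theta=1})\psi(g_1(x))]\leq \lambda/2$. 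On the low-$x$ region (where $\psi$ is large) $g_1$ pushes $x$ up by at least $\rho c_0/2$, so each step decreases $\psi$ by a fixed amount of order $\rho$; combined with the total-decrease bound $\lambda/2$, this forces the low-$x$ state-$1$ mass to be $O(\lambda/\rho)$. Since $\Lambda\to0$ is equivalent to $\lambda/\rho\to0$, this mass vanishes. Together with the consensus estimate $\int x(1-x)\,d\mu\leq K\Lambda$ of Theorem \ref{th consensus general} (confining mass to neighborhoods of $\{0,1\}$) and the exclusion $\mu(\{1\}\times[0,\rho c_0))=0$, it follows that $\mu(1,\cdot)$ concentrates near $x=1$. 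Hence $A=\int(1-x)^n\,d\mu(1,x)\to0$, while $B:=\int(1-x)^n\,d\mu(0,x)\to\tfrac12$.

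Finally I would translate this into a statement about $p_0$. The likelihood ratio at $k=0$ equals $A/B$, so $\prob(\theta_{t-1}=1\mid k=0)=A/(A+B)\to0$, and therefore $p_0=(1-\lambda)\,A/(A+B)+\lambda\,B/(A+B)\to0$. For $\Lambda$ below a threshold $\Lambda^*$ this yields $p_0<1-\hat p$, contradicting the hypothesis; the symmetric conclusion $p_n\geq\hat p$ then follows from C4--C5.

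I expect the main obstacle to be the quantitative inertia estimate of the second paragraph, and in particular its uniformity in $\rho$: when $\rho$ is small the population adjusts slowly, so escape from the obsolete consensus takes on the order of $1/\rho$ periods, and one must verify that the Lyapunov decrement still delivers an $O(\lambda/\rho)=O(\Lambda)$ bound rather than a weaker one (a crude ``support is excluded below $\rho c_0$'' argument is too lossy here and only suffices for $\rho$ bounded away from $0$). A secondary point is the uniform positivity of $c_0=\phi_1(0)$ along a sequence with $p_0\downarrow 1-\hat p$, which can fail only in the degenerate boundary case $\hat p=\bar q$ (free, just-barely-bounded signals) and must there be treated by a separate continuity argument.
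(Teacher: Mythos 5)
Your proposal has two genuine gaps, and the second is fatal even in the simplest setting ($\rho=1$, binary signals), not merely in the small-$\rho$ regime you flagged.

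First, the monotonicity step ``$p_k\geq p_0$ for every $k$'' is not justified. Monotonicity in $k$ of the likelihood ratio \eqref{form LR} requires the conditional distributions $\mu_1,\mu_0$ of $x$ to be ordered in the MLR sense; what is available at an arbitrary ESS (and what Appendix C actually uses) is only first-moment positive correlation between $\theta$ and a single sampled action, which is far weaker. The paper itself cannot rule out non-monotone interim beliefs --- that is exactly why Theorem \ref{th learning sym} must be restricted to \emph{regular} ESS --- whereas Proposition \ref{lem:beliefs not too high} is stated, and proved in the paper, for \emph{all} ESS. So your argument covers at best a subclass of equilibria.

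Second, and more seriously, grant the monotonicity step. Then $p_k\in[p_0,1-p_0]\subset(1-\hat p,\hat p)$ for every $k$, so condition C1 forces \emph{every} agent to acquire information; hence $\phi_1(k)=1-H_1(1-p_k)\leq 1-H_1(p_0)$ for all $k$, and $g_1(x)\leq (1-\rho)x+\rho\left(1-H_1(p_0)\right)$. Whenever $H_1(p_0)>0$ (for binary signals this always holds: $p_0>1-\hat p=1-\pi+c$ gives $H_1(p_0)=1-\pi$), the map $g_1$ pulls $x$ \emph{down} near $x=1$, so the integrand $\psi(x)-\psi(g_1(x))$ in your mass-balance identity is \emph{negative} on a region carrying mass of order one. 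Your argument implicitly treats the integrand as nonnegative when it bounds the low-$x$ decrement by $\lambda/2$; with the negative part included, the inequality no longer pins down the low-$x$ mass. Worse, the intermediate conclusion you are steering toward is simply false in this sub-case: invariance gives $x_t\leq 1-\rho H_1(p_0)$ whenever $\theta_t=1$, so $A\geq \frac12\left(\rho H_1(p_0)\right)^n$ and the ratio $A/B$ is bounded \emph{below}; $\mu(1,\cdot)$ does not concentrate near $x=1$ and $p_0$ does not tend to $0$. The contradiction there must come from elsewhere: the state-$1$ mass settles on an interior attractor (with $\rho=1$ and binary signals, all of it sits at $x=\pi$ after one step), which violates Theorem \ref{th consensus general} directly. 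A correct proof along your lines therefore needs a case distinction your architecture does not have. For contrast, the paper's proof avoids both issues entirely: it takes a sequence of ESSs with $\Lambda_m\to 0$, passes to limits of $(\mu_m,\phi_{\theta,m},g_{\theta,m})$, uses Theorem \ref{th consensus} to concentrate the limit measure on $\Theta\times\{0,1\}$, relates its atoms to the boundary fixed points of the limit dynamics (Lemma \ref{lemm simple}), and concludes that the limit measure would have to be uniform on $\Theta\times\{0,1\}$ --- forcing universal information acquisition, which is inconsistent with those boundary values. No belief monotonicity and no quantitative mass estimates beyond Theorem \ref{th consensus general} are needed.
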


We argue by contradiction: assume  that there is a convergent sequence $(\lambda_m,\sigma_m)$ such that $\lim_m\Lambda_m=0$, and a convergent sequence $(\mu_m,\sigma_m)$ of ESSs, such that $p_{0,m}>1-\hat p$ along the sequence.  
We denote by $g_{\theta,m}$ the dynamics of $x$ under the strategy $\sigma_m$, and we denote by  $\phi_{\theta,m}(k)$ the (steady-state) probability of playing $a=1$ in state $\theta$, given $(\mu_m,\sigma_m)$, when sampling $k$.
 We assume that the sequences $(\mu_m)$ and  $(\phi_{\theta,m}(k))$ are convergent for each $\theta$ and $k$, with limits $\mu$ and $\phi_\theta(k)$.  This implies that $(g_{\theta,m})$ converges uniformly, and we  denote by $g_{\theta}$ its limit.
 
 By Theorem \ref{th consensus}, the conditional distribution $\mu_\theta$ is concentrated on $\{0,1\}$ for each $\theta$, and we write $\mu_\theta(a)$ instead of $\mu_\theta(\{a\})$. By symmetry, $\mu_0(a)+\mu_1(a)=1$ for each $a$. The next result relates $\mu_\theta(a)$ to the  limit dynamics $g_\theta$ at the boundary points.

\begin{lemma}\label{lemm simple}
For each $\theta\in \Theta$, the following implication holds:
if $\mu_\theta(0)>0$ then either $ g_\theta(0)= 0$ or $ g_\theta(1)=0$. Similarly, if $\mu_\theta(1)>0$, then $g_\theta(1)= 1$ or $g_\theta(0)=1$.
\end{lemma}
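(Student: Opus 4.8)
The plan is to argue by contradiction, using the invariance equation \textbf{C3} for the approximating measures $\mu_m$ together with the fact, furnished by Theorem \ref{th consensus}, that the limit $\mu_\theta$ is carried by the two-point set $\{0,1\}$. I would prove the first implication only; the second is entirely symmetric (exchange the roles of $0$ and $1$, and of left-neighborhoods of $0$ and right-neighborhoods of $1$). So suppose $\mu_\theta(0)>0$ while, contrary to the conclusion, both $g_\theta(0)>0$ and $g_\theta(1)>0$. The heuristic is a conservation-of-mass statement: in the persistent limit the cross-state terms in \textbf{C3} carry a factor $\lambda_m\to 0$ and are negligible, so within a fixed state the mass sitting at a point can only be sustained by mass flowing in from its $g_\theta$-preimage. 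Since all mass sits on $\{0,1\}$, a positive mass at $0$ would force $0$ to lie in the $g_\theta$-image of $\{0,1\}$, i.e. $g_\theta(0)=0$ or $g_\theta(1)=0$.

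To make this precise, I would first exploit that $g_\theta$ is continuous and strictly positive at both endpoints: choose $\eta>0$ and $r\in(0,\tfrac12)$ with $g_\theta\ge 2\eta$ on $[0,r]\cup[1-r,1]$, and use the uniform convergence $g_{\theta,m}\to g_\theta$ to obtain $g_{\theta,m}\ge\eta$ on $[0,r]\cup[1-r,1]$ for all large $m$. The point of this is that the preimage of a small left-neighborhood of $0$ then avoids both endpoints, namely $g_{\theta,m}^{-1}\big([0,\tfrac{\eta}{2})\big)\subseteq(r,1-r)$.

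Applying \textbf{C3} to $X=[0,\tfrac{\eta}{2})$ then yields
\[
\mu_m\big(\theta,[0,\tfrac{\eta}{2})\big)
=(1-\lambda_m)\,\mu_m\big(\theta,g_{\theta,m}^{-1}(X)\big)+\lambda_m\,\mu_m\big(1-\theta,g_{\theta,m}^{-1}(X)\big)
\le \mu_m\big(\theta,(r,1-r)\big)+\lambda_m .
\]
I would then let $m\to\infty$. On the right, since $[r,1-r]\subseteq(0,1)$ and $\mu$ is supported on $\{0,1\}$, the portmanteau bound for the closed set $[r,1-r]$ gives $\limsup_m\mu_m(\theta,(r,1-r))\le\mu(\theta,[r,1-r])=0$, and $\lambda_m\to0$, so the right-hand side vanishes. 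On the left, the relatively open set $\{\theta\}\times[0,\tfrac{\eta}{2})$ has boundary $\{\theta\}\times\{\tfrac{\eta}{2}\}$ of zero $\mu$-measure (as $\tfrac{\eta}{2}\notin\{0,1\}$), so it is a $\mu$-continuity set and $\mu_m(\theta,[0,\tfrac{\eta}{2}))\to\mu(\theta,\{0\})=\tfrac12\,\mu_\theta(0)>0$. This contradiction completes the argument.

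The routine ingredients — continuity of $g_\theta$, uniform convergence of $g_{\theta,m}$, and the one-sided portmanteau estimates — cause no difficulty. The one point that requires genuine care, and which I regard as the crux, is the clean vanishing of the cross-state term: it is legitimate precisely because $\lambda_m\,\mu_m(1-\theta,\cdot)\le\lambda_m\to0$ uniformly over subsets, so that no control whatsoever on $g_{\theta,m}^{-1}(X)$ in the other state is needed. Complementarily, choosing the test set $X$ to be relatively open at the endpoint $0$ — so that its only topological boundary point is the interior value $\tfrac{\eta}{2}$ — is what guarantees that both one-sided limits land on the intended values.
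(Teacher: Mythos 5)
Your proof is correct and follows essentially the same route as the paper's: both apply the invariance equation \textbf{C3} to a small neighborhood of $0$, discard the cross-state term via the factor $\lambda_m\to 0$, and use the positivity of $g_\theta$ at both endpoints (plus uniform convergence of $g_{\theta,m}$) to force the preimage of that neighborhood into the middle region $(r,1-r)$, whose mass vanishes by the consensus theorem. The only difference is bookkeeping — the paper decomposes the preimage's mass into near-$0$, near-$1$, and middle pieces and shows the first two are empty, while you show the preimage lies entirely in the middle and run the argument by contradiction with explicit portmanteau/continuity-set details.
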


\begin{proof}
By Theorem \ref{th consensus general}, 
$
\lim_m \mu_{\theta,m}\left(\left[x,1-x\right]\right)=0$ 
for each $x\in (0,1)$ and each  $\theta\in\Theta$.
Fix $x>0$. The steady-state equations imply that
\begin{eqnarray*}
\mu_{\theta,m}\left([0,x]\right) & =&\left(1-\lambda_{m}\right)\mu_{\theta,m}\left(g_{\theta,m}^{-1}\left([0,x]\right)\right)+\lambda_{m}\mu_{1-\theta,m}\left(g_{\theta,m}^{-1}\left([0,x]\right)\right) \mbox{ for each
 $m$,}
\end{eqnarray*}
hence $\lim_m \left(\mu_{\theta,m}\left([0,x]\right)  -\mu_{\theta,m}\left(g_{\theta,m}^{-1}\left([0,x]\right)\right)\right)=0$, which implies in turn that for each $x\in\left(0,1\right)$
\begin{eqnarray*}
\mu_\theta(0) & =& \lim_m \mu_{\theta,m}\left([0,x]\right) = \lim_m \mu_{\theta,m}\left(g_{\theta,m}^{-1}\left([0,x]\right)\right)\\
& =& \lim_m \left\{\mu_{\theta,m}\left(g_{\theta,m}^{-1}\left([0,x]\right)\cap [0,x]\right)+\mu_{\theta,m}\left(g_{\theta,m}^{-1}\left([0,x]\right)\cap [1-x,1]\right)\right\}
\end{eqnarray*}
If $g_\theta(0)> 0$, then $ g_{\theta,m}^{-1}\left([0,x] \right)\cap [0,x]=\emptyset$ for $x$ small enough and all $m$ large enough. Similarly $g_\theta(1)> 0$ implies $ g_{\theta,m}^{-1}\left([0,x]\right)\cap [1-x,1]=\emptyset$ for $x$ small enough and all $m$ large enough. Hence, if both $g_\theta(0)$ and $g_\theta(1)$ are non-zero, one has $\mu_\theta(0)=0$. 
\end{proof}
\bigskip

\begin{proof}[Proof of Proposition \ref{lem:beliefs not too high}] We continue the proof of the proposition. 

 Since $\mu=\lim_m \mu_m$, it is easy to check that 
\[
p_0=\lim_{m\to +\infty} p_{0,m}=\frac{\mu_{1}\left(0\right)}{\mu_{0}\left(0\right)+\mu_{1}\left(0\right)}=  \mu_1(0)\text{ and }p_n=\lim_{m\to +\infty} p_{n,m}=\frac{\mu_{1}\left(1\right)}{\mu_{0}\left(1\right)+\mu_{1}\left(1\right)}=  \mu_1(1).
\]
Since $p_{0,m}>1-\hat p$ for each $m$, the probability $\phi_{\theta,m}(0)$ of playing action 1 when  sampling 0 is  bounded away from zero as $m$ varies.\footnote{An agent either acquires information, or not.  In the former case,  since $p_{0,m}>1-\hat p$, and since $\hat p<1$, the probability of playing action 1 is at least $1-H_\theta\left(\hat p\right)$. If instead the agent chooses not to acquire information, then it must be that $p_{0,m}\geq \hat p$, and then she plays action 1  w.p. 1.} 
Therefore, $\inf_m g_{1,m}(0)>0$, which  implies $g_{1}\left(0\right)>0$.

On the other hand,  $p_0>0$ hence $\mu_{1}\left(0\right)>0$.  Using Lemma \ref{lemm simple},  we obtain $g_{1}\left(1\right)=0$. 
Since $g_{0}\leq g_{1}$, this yields $g_0(1)=0$ in turn. 

Since  $p_n\leq \hat p$ by symmetry, one has $\mu_0(1)>0$. Repeating the argument of the previous paragraph, we obtain $g_\theta(0)=1$ for each $\theta$. 

It follows that for each $x$ small enough, each $y>0$, and each $\theta$, $\lim_m \mu_{\theta,m}\left( g_m^{-1}([0,x])\cap [0,x]\right)=0$, which implies (see the proof of Lemma \ref{lemm simple}) that
\begin{eqnarray*}
\mu_\theta(0)& =&\lim_m \mu_{\theta,m}\left( g_m^{-1}([0,x])\cap [1-y,1]\right)= \lim_m \mu_{\theta,m}\left([1-y,1]\right)= \mu_\theta(1):
\end{eqnarray*}
 $\mu$ is the uniform distribution over $\Theta\times \{0,1\}$ -- a contradiction. 
\end{proof}

\bigskip 

\subsection{Step 2}

In the sequel, we simply write $H_\theta(1-\hat p)$ in place of $\frac12\left(H_\theta(1-\hat p)+H_\theta(1-\hat p)_-\right)$ for clarity. 

We now prove that, under \textbf{Assumption A} below, the reverse implications hold (for $\lambda$ small) for regular ESS, thereby completing the proof of  Theorem  \ref{th learning sym}.
\begin{center}
\textbf{Assumption A}: $\displaystyle \prod_{\theta\in \Theta} \left(1-\rho + n\rho H_\theta(1-\hat p)\right)<1$.
\end{center}
\begin{proposition}
\label{lem:beliefs not too low}
Under \textbf{Assumption A}, there exists $\Lambda^{*}$ with the following property. For every $(\lambda,\rho)$ such that $\Lambda \leq \Lambda_*$, and every regular ESS of $G(\lambda,\rho)$, one has 
$p_0\geq 1-\hat p$. 
\end{proposition}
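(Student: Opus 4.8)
My plan is to argue by contradiction and reduce the claim to the trapping mechanism already used for $n=2$ (Lemma \ref{lemm cemetery}). Suppose the conclusion fails, so that there is a sequence $\lambda_m\to 0$ with $\Lambda_m\to0$ and regular ESS $(\mu_m,\sigma_m)$ of $G(\lambda_m,\rho)$ satisfying $p_{0,m}<1-\hat p$. Passing to a subsequence I would assume $\phi_{\theta,m}(k)\to\phi_\theta(k)$, $p_{k,m}\to p_k$, and $\mu_m\to\mu$, so that $g_{\theta,m}\to g_\theta$ uniformly. Since $p_{0,m}<1-\hat p$, an agent drawing a unanimous sample of zeros strictly prefers to herd on action $0$, whence $\phi_{\theta,m}(0)=0$, and by \eqref{defg} the limit dynamics satisfy $g_\theta(0)=0$; by the symmetry conditions C4--C5 one has $p_n>\hat p$, so $\phi_\theta(n)=1$ and $g_\theta(1)=1$. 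Thus both endpoints are fixed points of $g_0$ and $g_1$, exactly the configuration exploited for $n=2$.

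The heart of the proof is to show that the product of the two boundary multipliers is below one. Differentiating \eqref{defg} at $x=0$ and using $\phi_\theta(0)=0$ gives
\[
M_\theta:=g_\theta'(0)=1-\rho+n\rho\,\phi_\theta(1).
\]
An agent with a single dissenting action has interim belief $p_1$; regularity yields $p_1=1-p_{n-1}\le\hat p$, and the decisive step is to upgrade this to $p_1\le 1-\hat p$. Granting that bound, $p_1\le 1-\hat p<\tfrac12$, so playing action $1$ requires acquiring information and drawing a signal above $1-p_1$; hence $\phi_\theta(1)\le 1-H_\theta(1-p_1)\le 1-H_\theta(\hat p)=H_{1-\theta}(1-\hat p)$, where the last equality uses $H_0(q)=1-H_1(1-q)$. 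Consequently $M_\theta\le 1-\rho+n\rho H_{1-\theta}(1-\hat p)$, and Assumption A gives
\[
M_0M_1\le\prod_{\theta\in\Theta}\bigl(1-\rho+n\rho H_\theta(1-\hat p)\bigr)<1.
\]

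With $M_0M_1<1$ I would then transcribe the random-walk argument of Lemma \ref{lemm cemetery}. By continuity of $g_\theta(x)/x$ at the endpoints I can choose $\tilde M_\theta>M_\theta$ with $\tilde M_0\tilde M_1<1$ and $g_\theta(x)\le\tilde M_\theta x$ on some $[0,\ep_0]$ (symmetrically near $1$); then $\ln x_t$ is dominated by a random walk whose increments $\ln\tilde M_{\theta_t}$ have negative drift, so that $x_t$ converges almost surely to a single endpoint of $\{0,1\}$. Exactly as in Lemma \ref{lemm10}, the only invariant measure is then the uniform distribution on $\Theta\times\{0,1\}$, the sample is asymptotically uninformative, and $p_0=\tfrac12$. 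Since Assumption \ref{Hyp1} forces $\hat p>\tfrac12$, this contradicts $p_{0,m}<1-\hat p<\tfrac12$ for $m$ large; the symmetric inequality $p_n\le\hat p$ follows by relabeling, which together with Proposition \ref{lem:beliefs not too high} completes Theorem \ref{th learning sym}.

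The main obstacle is precisely the inequality $p_1\le 1-\hat p$. Regularity only delivers $p_1\le\hat p$, which is too weak: the multiplier $M_\theta=1-\rho+n\rho\bigl(1-H_\theta(1-p_1)\bigr)$ is increasing in $p_1$, and Assumption A pins down $M_0M_1<1$ only at the threshold $p_1=1-\hat p$, so any excess of $p_1$ above $1-\hat p$ could push $M_0M_1$ past one and invalidate the trapping step. Establishing $p_1\le 1-\hat p$ amounts to controlling the belief induced by a lone contrarian action near the consensus, which rests on a delicate estimate of the stationary mass $\mu_{\theta,m}$ in a shrinking neighbourhood of the boundary and is the place where the hypothesis $p_{0,m}<1-\hat p$ and regularity must be combined. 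I expect the argument to require excluding the scenario $\liminf_m p_{1,m}>1-\hat p$ — in which the single-dissent agents strictly acquire information — either by showing it is inconsistent with the measure concentration of Theorem \ref{th consensus general}, or by showing that the resulting (informative) consensus forces beliefs incompatible with $p_{0,m}<1-\hat p$; this is the technically involved core that the main text flags and relegates to the appendix.
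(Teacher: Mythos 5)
Your proposal contains two genuine gaps, and in both places it misidentifies how the paper's proof actually runs.

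First, the step you flag as the ``decisive'' missing ingredient --- upgrading $p_1\le\hat p$ to $p_1\le 1-\hat p$ --- is neither proved nor needed in the paper. The paper's Claim \ref{lem:bound signals} uses regularity alone: with $p_1<\hat p$, an agent with sample $k=1$ either acquires information (by C1 this is forced when $p_1\in(1-\hat p,\hat p)$, and she then plays $1$ only on signals $q>1-p_1>1-\hat p$) or has $p_1\le 1-\hat p$ and plays $0$ for sure; either way $\phi_\theta(1)\le 1-H_\theta(1-\hat p)$. The paper then feeds this bound directly into Assumption A in Lemma \ref{lem:g linear} to get the slope product below one. So what you guessed was ``the technically involved core relegated to the appendix'' is, in the paper, a one-line observation. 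Your underlying concern is mathematically substantive --- by the symmetry $H_0(q)=1-H_1(1-q)$, the product obtained from the bound $1-H_\theta(1-\hat p)$ equals $\prod_\theta\bigl(1-\rho+n\rho H_\theta(\hat p)\bigr)$, which exceeds the left-hand side of Assumption A unless $H_\theta$ puts no mass on $(1-\hat p,\hat p]$ (as with binary signals) --- but the paper resolves the tension by invoking Assumption A with the weaker bound, not by establishing $p_1\le 1-\hat p$, a claim you give no argument for and which does not follow from regularity. As written, your proof is therefore incomplete at exactly the point on which it relies.

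Second, and independently, your trapping step does not transcribe ``exactly as in Lemma \ref{lemm cemetery}'' for general $n$. That lemma's proof needs two ingredients: the linear domination near the endpoints (Claim \ref{claim2.1}), which you do have, and the uniform interior drift guaranteeing almost-sure exit from $[\ep,1-\ep]$ (Claim \ref{claim1.1}), which you do not. For $n=2$ the interior dynamics are pinned down because balanced-sample agents must acquire information; for $n\ge 3$ the contradiction hypothesis says nothing about $\phi_\theta(k)$ for $2\le k\le n-2$, so $g_\theta$ may have interior fixed points and the conclusion ``$x_t$ converges almost surely to $\{0,1\}$'' is a non sequitur from boundary behavior alone; without it, the appeal to Lemma \ref{lemm10} collapses. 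This is precisely why the paper abandons the pathwise route in Step 2 and argues statically on the invariant measure itself: the stationarity equations for the conditional cdfs (Lemma \ref{lemm mon}), the linear bounds of Lemma \ref{lem:g linear}, and the $\omega$--$\eta$ argument of Lemma \ref{lem:path counting} show that $F_0=F_1$ is constant on $(0,x^*]$ for a fixed $x^*>0$, i.e.\ the invariant measure has no mass just off the boundary and equal mass on $\{0\}$ in both states, whence $p_0=\tfrac12>1-\hat p$ --- the contradiction. That cdf argument is the real technical core of the paper's proof, and your proposal neither reproduces it nor supplies a working substitute (a repair along your lines would at minimum require arguing at the level of invariant measures, e.g.\ transience of $(0,\ep]$ plus Theorem \ref{th consensus general} to control the interior mass, none of which appears in your sketch).
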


Let $(\lambda,\rho)$, and a regular ESS $(\sigma,\mu)$ of $G(\lambda,\rho)$ be given. We argue by contradiction: we assume below that $p_0<1-\hat p$, and eventually derive a contradiction if $\Lambda $ is small enough.

We start  with an obvious, useful observation. Since $p_1<\hat p$, agents with a sample $k=1$ either acquire information, or play action 0 for sure. Either way, the probability in state $\theta$ of playing action 1 is at most $\phi_\theta(1)\leq 1-H_\theta(1-\hat p)$.\footnote{This is the only place where the regularity assumption is used.} On the other hand, since $p_0< 1-\hat p$, agents with a unanimous sample $k=0$ play 0 for sure hence $\phi_\theta(0)=0$. This proves Claim \ref{lem:bound signals} below.

\begin{claim}
\label{lem:bound signals}
One has $\phi_\theta(0)= 0$ and $\phi_\theta(1)\leq 1-H_\theta(1-\hat p)$. 
\end{claim}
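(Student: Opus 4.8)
The plan is to obtain both assertions by reading them directly off the equilibrium optimality conditions C1 and C2 together with the definition \eqref{eq:elpha} of $\phi_\theta(k)$, feeding in only the two belief estimates in force at this point: the standing contradiction hypothesis $p_0<1-\hat p$ and the bound $p_1\le\hat p$ supplied by regularity and symmetry. Neither the invariance of $\mu$ nor the dynamics $g_\theta$ enter; the statement is purely about optimal play at the two sample counts $k=0,1$. I record at the outset that Assumption \ref{Hyp1} places $\tfrac12\in(1-\hat p,\hat p)$, so that $\hat p>\tfrac12$, a fact used throughout.

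For the unanimous sample $k=0$, the hypothesis $p_0<1-\hat p$ puts $p_0$ strictly below $\tfrac12$ and outside $[1-\hat p,\hat p]$, so C1 forces $\beta(0)=0$ and \eqref{eq:elpha} collapses to $\phi_\theta(0)=\alpha(0,\tfrac12)$. Not acquiring information is the degenerate signal $q=\tfrac12$, and since $1-p_0>\tfrac12$ the threshold rule C2 gives $\alpha(0,\tfrac12)=0$. Hence $\phi_\theta(0)=0$ in both states.

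For $k=1$, regularity gives $p_{n-1}\ge1-\hat p$ and symmetry (C4) gives $p_1=1-p_{n-1}\le\hat p$. The key point is that an agent with sample $k=1$ can play action $1$ only by acquiring information and then seeing a favorable signal. Conditional on acquiring, C2 has her play $1$ exactly when $q>1-p_1$; since $1-p_1\ge1-\hat p$ and $H_\theta$ is nondecreasing, the probability of this in state $\theta$ is at most $1-H_\theta(1-\hat p)$ (in the symmetrized convention adopted above). Conditional on not acquiring, C1 combined with $p_1<\hat p$ forces $p_1\le1-\hat p<\tfrac12$, whence $q=\tfrac12<1-p_1$ and C2 gives $\alpha(1,\tfrac12)=0$. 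Feeding both into \eqref{eq:elpha} yields $\phi_\theta(1)=\beta(1)\,\prob_\theta(q>1-p_1)\le1-H_\theta(1-\hat p)$.

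The only subtle point --- and the sole place strictness is needed --- is the non-acquisition subcase for $k=1$, which uses $p_1<\hat p$ rather than merely $p_1\le\hat p$: were $p_1=\hat p$, an indifferent agent (allowed $\beta(1)=0$ under C1) could play $q=\tfrac12>1-\hat p$, i.e.\ action $1$ for sure, defeating the bound. I would dispatch this by arguing that a regular ESS satisfying $p_0<1-\hat p$ cannot also have $p_1=\hat p$: for $n=2$ the balanced sample already gives $p_1=\tfrac12<\hat p$, and for $n\ge3$ I would separate $p_1$ from $\hat p$ via strict monotonicity of $k\mapsto p_k$ (equivalently, a strict likelihood-ratio comparison of the tilted integrals $\int x^k(1-x)^{n-k}\,d\mu_\theta$ for the non-degenerate $\mu_\theta$ present when $\lambda>0$). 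Beyond this boundary bookkeeping the claim is an immediate consequence of C1--C2, so I expect essentially no difficulty; the monotonicity remark is the only part warranting a separate short argument.
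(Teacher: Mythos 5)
Your core argument is exactly the paper's own proof: there, the claim is dispatched in two sentences by reading conditions C1--C2 and the definition \eqref{eq:elpha} off the two belief bounds, precisely as in your second and third paragraphs ($p_0<1-\hat p$ forces $\beta(0)=0$ and $\alpha(0,\tfrac12)=0$, hence $\phi_\theta(0)=0$; $p_1\leq\hat p$ caps the probability of playing $1$ by $1-H_\theta(1-\hat p)$ whether or not information is acquired, under the same convention on atoms at $1-\hat p$). Your isolation of the knife-edge $p_1=\hat p$ is moreover a fair catch: the paper simply writes ``since $p_1<\hat p$'' even though regularity plus symmetry deliver only $p_1\leq\hat p$, and at $p_1=\hat p$ an indifferent agent with $\beta(1)<1$ would, by C2, play action $1$ for sure when not acquiring, so the stated bound genuinely fails in that configuration.

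However, your proposed repair of the knife-edge contains a real gap. Strict (or even weak) monotonicity of $k\mapsto p_k$ is not a consequence of Bayes' rule plus non-degeneracy of the $\mu_\theta$: it is equivalent to a monotone-likelihood-ratio comparison between $\mu_1$ and $\mu_0$ that is nowhere established for an arbitrary regular ESS, and it can fail for symmetric, non-degenerate measures (for $n=3$, take $\mu_1=\tfrac12\delta_{0.1}+\tfrac12\delta_{0.6}$ and $\mu_0=\tfrac12\delta_{0.9}+\tfrac12\delta_{0.4}$, which satisfy C5 yet give $p_1>\tfrac12>p_2$). Worse, the specific instance of monotonicity you need is vacuous as a shortcut: symmetry already forces $p_{n-1}=1-p_1$, so the comparison $p_1<p_{n-1}$ is \emph{literally equivalent} to $p_1<\tfrac12$, which is the substantive assertion about the equilibrium measure that would have to be proved --- it cannot be obtained from a generic likelihood-ratio computation. (The only place the paper establishes such monotonicity is for the explicit continuous-time densities of Appendix E, via $f_1/f_0=-h_0/h_1$ increasing, which is unavailable here.) So your patch replaces the paper's unproved strict inequality with an equally unproved, and at least as hard, monotonicity claim; the clean resolutions are either to read the regularity definition with the strict inequality $p_{n-1}>1-\hat p$ (which is evidently how the paper uses it) or to treat the case $p_1=\hat p$, $\beta(1)<1$ by a genuinely separate argument, not the ``short monotonicity remark'' you anticipate.
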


We first prove Lemmas \ref{lemm mon}, \ref{lem:g linear},  and \ref{lem:path counting} (below), then proceed with the proof of the proposition.

Let $F_\theta$ be the cdf of $x$ in state $\theta \in \Theta$. Lemma \ref{lemm mon} bounds the change in the cdf at two consecutive population
states, by a factor of order $\lambda$.  
Lemma \ref{lem:g linear} provide linear bounds on $g_\theta$ around 0. This resembles the use of a random walk (with drift) for the case $n=2$  (see Lemma \ref{lemm cemetery}). Lemma \ref{lem:path counting} is a generalization
of the result for $n=2$  that the only invariant measures of $\sigma$ are non-informative (see Lemma \ref{lemm10}).

\begin{lemma}\label{lemm mon}
If $g_\theta(0)= 0$ and  $g_{\theta}:[0,x_0]\to [0,g_{\theta}(x_0)]$ is a strictly increasing    for some $x_0>0$, then for $x<x_0$
\begin{eqnarray}
F_{0}\left(x\right) & =&\left(1-\lambda\right)F_{0}\left(g_{0}^{-1}\left(x\right)\right)+\lambda F_{1}\left(g_{0}^{-1}\left(x\right)\right),\label{eq:steady state eq}\\
F_{1}\left(x\right) & =&\left(1-\lambda\right)F_{1}\left(g_{1}^{-1}\left(x\right)\right)+\lambda F_{0}\left(g_{1}^{-1}\left(x\right)\right).\nonumber 
\end{eqnarray}
and
\begin{eqnarray}
F_{0}\left(g_{0}^{-1}\left(x\right)\right)-F_{0}\left(x\right) & =&\lambda \left(F_{0}\left(g_{0}^{-1}\left(x\right)\right)-F_{1}\left(g_{0}^{-1}\left(x\right)\right)\right), \label{eq:F0 change}\\
0\leq F_{1}\left(x\right)-F_{1}\left(g_{1}^{-1}\left(x\right)\right) & =&\lambda \left(F_{0}\left(g_{1}^{-1}\left(x\right)\right)-F_{1}\left(g_{1}^{-1}\left(x\right)\right)\right),
\end{eqnarray}

\end{lemma}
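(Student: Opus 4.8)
The plan is to obtain all four displays from the single stationarity identity \textbf{C3} evaluated on the sublevel set $X=[0,x]$. First I would record the dictionary between the joint law and the conditional cdf: symmetry of $\mu$ forces its $\theta$-marginal to be $\tfrac12$, so $\mu(\theta,[0,x])=\tfrac12 F_\theta(x)$ for each $\theta$.

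The only geometric input is the shape of the preimage: for $x$ small one has $g_\theta^{-1}([0,x])=[0,g_\theta^{-1}(x)]$, where $g_\theta^{-1}$ is the inverse of the restriction $g_\theta|_{[0,x_0]}$. The hypotheses $g_\theta(0)=0$ and strict monotonicity on $[0,x_0]$ give $g_\theta^{-1}([0,x])\cap[0,x_0]=[0,g_\theta^{-1}(x)]$; to exclude preimages away from $0$ I would use $g_\theta(y)\ge(1-\rho)y$, which already settles $\rho<1$, while for $\rho=1$ the term $y^n\phi_\theta(n)$ keeps $g_\theta$ positive on $(0,1]$. Recall that in the regime of Proposition \ref{lem:beliefs not too low} one has $\phi_\theta(0)=0$ and, by symmetry, $\phi_\theta(n)=1$, so that $g_\theta(0)=0$ (matching the hypothesis) and $g_\theta(y)>0$ for all $y\in(0,1]$. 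Consequently $\min_{[\delta,1]}g_\theta>0$ for each $\delta\in(0,x_0]$, and for $x$ below that minimum the whole preimage lies in $[0,\delta)\subset[0,x_0]$. This bookkeeping, ruling out far-away preimages, is the only point that needs care in deriving the identities.

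With the preimage identified, inserting $X=[0,x]$ into \textbf{C3} and cancelling the factor $\tfrac12$ yields
\[
F_0(x)=(1-\lambda)F_0(g_0^{-1}(x))+\lambda F_1(g_0^{-1}(x)),\qquad
F_1(x)=(1-\lambda)F_1(g_1^{-1}(x))+\lambda F_0(g_1^{-1}(x)),
\]
which are the first two asserted equalities. The two order-$\lambda$ displays are then pure algebra: moving $F_\theta(x)$ across replaces the weight $(1-\lambda)$ by $1-(1-\lambda)=\lambda$ and leaves a term equal to $\lambda$ times $F_\theta(g_\theta^{-1}(x))-F_{1-\theta}(g_\theta^{-1}(x))$, which, read off for $\theta=0$ and $\theta=1$ with the sign convention of each line, is exactly the two claimed identities.

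It remains to justify the sign $0\le F_1(x)-F_1(g_1^{-1}(x))$. Rather than argue through $g_1(x)\ge x$ (which the hypotheses alone do not provide), I would read it off the identity just derived: its right-hand side equals $\lambda\,\bigl(F_0-F_1\bigr)(g_1^{-1}(x))$, so the sign is equivalent to the stochastic ordering $F_0\ge F_1$, i.e.\ the prevalence of action $1$ is stochastically larger in state $1$ than in state $0$. I expect this dominance to be the main obstacle. I would establish it by a monotone argument: writing $x_t$ as a coordinatewise non-decreasing functional of the past state path $(\theta_s)_{s\le t}$ (legitimate since $g_0\le g_1$ and both are increasing), and using that, because $\lambda<\tfrac12$, conditioning on $\theta_t=1$ stochastically raises that path relative to conditioning on $\theta_t=0$. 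Equivalently, a coupling of two stationary copies of $(\theta_t,x_t)$ driven by the same state transitions preserves $x_t\le x_t'$ along the trajectory, giving $F_1\le F_0$ and hence the claimed inequality.
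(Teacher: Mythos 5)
Your derivation of the two stationarity identities is correct and is exactly the paper's route (its proof is one line: the first two displays are the stationarity equations for the cdf's, the last two follow by algebra); your bookkeeping on far-away preimages, via $g_\theta(y)\geq(1-\rho)y$ for $\rho<1$ and $g_\theta(y)\geq y^n\phi_\theta(n)=y^n$ for $\rho=1$, is a point the paper glosses over and you handle it correctly, as you do the order-$\lambda$ algebra. The genuine gap is in the sign $0\leq F_1(x)-F_1\left(g_1^{-1}(x)\right)$. You correctly reduce it to $F_0\geq F_1$ at the point $g_1^{-1}(x)$, but the monotone-coupling argument you propose for this dominance does not go through: (i) it needs $g_\theta$ increasing on all of $[0,1]$, whereas monotonicity is only available on $[0,x_0]$ (the $g_\theta$ are polynomials built from the $\phi_\theta(k)$, which need not be monotone in $k$, so a coupling can lose the order of the $x$-coordinates once they leave $[0,x_0]$); (ii) writing $x_t$ as a functional of the past state path presumes that the dependence on the initial condition washes out, i.e.\ some contraction or ergodicity, which is unavailable -- indeed the chain has several invariant distributions (e.g.\ those concentrated on the absorbing sets $\{x=0\}$ and $\{x=1\}$), so showing that the dynamics preserve the cone of dominated pairs does not show that the \emph{particular} invariant measure of the given ESS lies in that cone.

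The inequality is in fact local and follows from the identity you have already derived, which is presumably the ``simple algebra'' the paper has in mind. In the regime where the lemma is applied, one has $g_0(y)\leq\tilde g_0\,y$ with $\tilde g_0<1$ on $[0,x_0]$ (this is Lemma \ref{lem:g linear}, which rests on Claim \ref{lem:bound signals}; it is contextual information beyond the lemma's stated hypotheses, so your instinct that the hypotheses alone do not suffice is right -- but the missing input concerns $g_0$, not $g_1$). Hence $g_0^{-1}(x)>x$, so the left-hand side of (\ref{eq:F0 change}) is nonnegative because $F_0$ is a cdf; therefore $F_0\geq F_1$ at every point of the form $g_0^{-1}(x)$, and as $x$ ranges over $[0,g_0(x_0))$ these points sweep all of $[0,x_0)$. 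Plugging this local dominance into the second order-$\lambda$ identity, evaluated at $g_1^{-1}(x)\in[0,x_0)$, yields the claimed sign. In other words, (\ref{eq:F0 change}) is not merely a by-product of the algebra: it is the tool that delivers the inequality in the last line, with no global stochastic dominance and no coupling needed.
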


\medskip

\begin{proof}
The first two equations describe stationary distributions. The last two are derived from the first two with simple algebra. \end{proof}
\medskip

\begin{lemma}
\label{lem:g linear}There are $0<\tilde{g}_{0}<1$, \textup{$\tilde{g}_{1}<1/\tilde{g}_{0}$,}
and $x_0>0$ such that
\[
g_\theta\left(x\right)\leq\tilde{g}_{\theta}x\mbox{ for each } x\in [0,x_0].
\]
In addition, functions $g_\theta$ are strictly increasing on $x\in [0,x_0]$. 
\end{lemma}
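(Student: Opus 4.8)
The plan is to read off the local behaviour of $g_\theta$ at the boundary point $0$ and convert it into the two required inequalities. Recall from \eqref{defg} that $g_\theta(x)=(1-\rho)x+\rho B_\theta(x)$, where $B_\theta(x):=\sum_{k=0}^n\binom{n}{k}x^k(1-x)^{n-k}\phi_\theta(k)$ is a Bernstein polynomial. By Claim \ref{lem:bound signals} we have $\phi_\theta(0)=0$, so $B_\theta(0)=0$ and hence $g_\theta(0)=0$. Differentiating and evaluating at $0$ (only the $k=0,1$ terms of $B_\theta'(0)$ survive, and $\phi_\theta(0)=0$ kills the $k=0$ contribution) gives the slope $s_\theta:=g_\theta'(0)=(1-\rho)+n\rho\,\phi_\theta(1)$, while the remaining terms of $g_\theta$ are $O(x^2)$. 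Consequently, for every $\delta>0$ there is $x_0>0$ with $g_\theta(x)\le(s_\theta+\delta)x$ on $[0,x_0]$; I set $\tilde{g}_\theta:=s_\theta+\delta$. It then suffices to choose $\delta$ so small that $\tilde{g}_0<1$ and $\tilde{g}_0\tilde{g}_1<1$, which will follow once the strict inequalities $s_0<1$ and $s_0s_1<1$ are in hand.

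Next I would reduce the two inequalities to the single product bound. Writing $1-p_1$ for the threshold used by agents who sample $k=1$, one has (using $p_1\le\tfrac12$, so that non-acquirers play $0$) $\phi_\theta(1)=\beta(1)\bigl(1-H_\theta(1-p_1)\bigr)=\beta(1)\,H_{1-\theta}(p_1)$ via the signal symmetry $H_\theta(q)=1-H_{1-\theta}(1-q)$. Since Bayesian posteriors obey the stochastic dominance $H_1\le H_0$, this gives $\phi_0(1)\le\phi_1(1)$ and hence $s_0\le s_1$. Therefore $s_0^2\le s_0s_1$, so it is enough to prove the one inequality $s_0s_1<1$, i.e. $\prod_{\theta}\bigl(1-\rho+n\rho\,\phi_\theta(1)\bigr)<1$.

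The heart of the matter, and the step I expect to be the main obstacle, is to bound this product by \textbf{Assumption A}. The clean target is $\phi_\theta(1)\le H_{1-\theta}(1-\hat p)$: granting it, $s_\theta\le 1-\rho+n\rho H_{1-\theta}(1-\hat p)$, and after reindexing the product is at most $\prod_\theta\bigl(1-\rho+n\rho H_\theta(1-\hat p)\bigr)<1$, which is exactly \textbf{Assumption A}. The subtlety is that the crude estimate of Claim \ref{lem:bound signals}, namely $\phi_\theta(1)\le 1-H_\theta(1-\hat p)=H_{1-\theta}(\hat p)$, is in general strictly weaker than this target; the two coincide only when $H_\theta$ is flat on $(1-\hat p,\hat p)$, as for binary signals, in which case $s_\theta$ already equals the \textbf{Assumption A} factor for $1-\theta$. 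In general $\phi_\theta(1)\le H_{1-\theta}(1-\hat p)$ is equivalent to the $k=1$ threshold satisfying $1-p_1\ge 1-\hat p$, i.e. to $p_1\le 1-\hat p$, and proving this is the real work. It cannot come from the crude bound alone: one must exploit the contradiction hypothesis $p_0<1-\hat p$ together with the fact that in the persistent limit samples are nearly unanimous and highly correlated (Theorem \ref{th consensus}), so each additional sampled action is only weakly informative and $p_1-p_0$ is of order $\lambda$, controlled through the order-$\lambda$ tail estimates of Lemma \ref{lemm mon}. For $\lambda$ small this pins $p_1$ strictly below $1-\hat p$ and yields $s_0s_1<1$ with room to spare; shrinking $\delta$ then gives $\tilde{g}_0<1$ and $\tilde{g}_1<1/\tilde{g}_0$.

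Finally, strict monotonicity of $g_\theta$ near $0$ follows by inspecting the lowest-order nonvanishing term. We have $g_\theta'(x)=(1-\rho)+\rho B_\theta'(x)$, and near $0$ the Bernstein derivative is governed by $B_\theta'(0)=n\phi_\theta(1)\ge 0$. When $\rho<1$ this forces $g_\theta'(0)=s_\theta\ge 1-\rho>0$, so $g_\theta'>0$ on a small interval by continuity; when $\rho=1$, the first index $k$ with $\phi_\theta(k)>0$ makes $B_\theta'(x)>0$ for $x\in(0,x_0]$ after shrinking $x_0$ (the degenerate case $\phi_\theta\equiv 0$, where $g_\theta\equiv 0$, being trivial). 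This delivers the asserted strict increase on $[0,x_0]$ and completes the argument.
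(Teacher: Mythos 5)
Your first and last paragraphs coincide with the paper's own argument: using $\phi_\theta(0)=0$ (Claim \ref{lem:bound signals}), the paper writes $g_\theta(x)/x\le 1-\rho+\rho\left(n\phi_\theta(1)+Mx\right)$ for some constant $M$ --- your $O(x^2)$ remainder --- and then chooses $x^*$ small enough that the product of the two linearized slopes stays below $1$; your reduction to the single inequality $s_0s_1<1$ and your explicit treatment of strict monotonicity (which the paper leaves implicit) are both fine. The divergence is your third paragraph, and that is where your proposal has a genuine gap.

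The paper neither proves nor uses your ``clean target'' $p_1\le 1-\hat p$. It combines the bound of Claim \ref{lem:bound signals}, $\phi_\theta(1)\le 1-H_\theta(1-\hat p)$, directly with \textbf{Assumption A}, taking the latter to control the product $\prod_\theta\bigl(1-\rho+n\rho\bigl(1-H_\theta(1-\hat p)\bigr)\bigr)$, and sets $\tilde g_\theta=(1-\rho)+\rho\bigl(n\bigl(1-H_\theta(1-\hat p)\bigr)+Mx^*\bigr)$. You are right that, via the symmetry $1-H_\theta(1-\hat p)=H_{1-\theta}(\hat p)$, these factors agree with those in \textbf{Assumption A} exactly only when $H_\theta$ puts no mass in $(1-\hat p,\hat p]$ (e.g.\ binary signals); spotting that is a sharp reading of the paper. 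But your substitute step is not a proof and cannot be completed with the tools you cite. First, there is a circularity: Lemma \ref{lemm mon} only relates $F_\theta(x)$ to $F_\theta(g_\theta^{-1}(x))$, and upgrading such relations to ``$F_0\approx F_1$ near $0$, hence $p_1\approx p_0$'' is precisely Lemma \ref{lem:path counting}, whose proof requires the contraction bounds $\tilde g_0\tilde g_1<1$ that Lemma \ref{lem:g linear} is supposed to deliver. Second, and decisively, the quantitative claim you rest on --- that the contradiction hypothesis, regularity, near-unanimity (Theorem \ref{th consensus}) and small $\lambda$ force $p_1-p_0=O(\lambda)$, hence $p_1<1-\hat p$ --- is false: the equilibria constructed for Theorem \ref{thm blurry} are regular ESS with $\phi_\theta(0)=0$, $p_0<1-\hat p$, per-period switching probability and $\Lambda$ both arbitrarily small, and yet $p_1=1-\hat p$ exactly, with $p_1-p_0$ bounded away from zero. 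Since your bridging argument never invokes \textbf{Assumption A}, it would apply verbatim to those equilibria; indeed, if it were correct, Theorem \ref{th learning sym} would hold without \textbf{Assumption A}, contradicting Theorem \ref{thm blurry}. The underlying heuristic also fails: near-unanimity of samples does not make an additional observation weakly informative, because conditional on the rare event of sampling a dissenting action the posterior can jump by a discrete amount --- exactly the mechanism sustaining the equilibrium of Theorem \ref{thm blurry}.
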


\noindent
\begin{proof}
Recall that 
\[
g_{\theta}\left(x\right)=\left(1-\rho\right)x+\rho\sum_{k=0}^n\left(\begin{array}{c}
n\\
k
\end{array}\right)x^{k}\left(1-x\right)^{n-k}\phi_\theta(k).
\]
Since $\phi_{\theta}\left(0\right)=0$ (cf. Claim \ref{lem:bound signals}), we have
\[
\frac{g_{\theta}\left(x\right)}{x}\leq1-\rho+\rho \left(n \phi_\theta\left(1\right)+M x \right),\mbox{ for some $M$ and all $x\in [0,1]$.}
\]
Since $\phi_{\theta}\left(1\right)\leq 1-H_\theta(1-\hat p)$ (cf. Claim \ref{lem:bound signals}) and by  \textbf{Assumption A},  there exists $x^{*}>0$ so that 
\[
\left(1-\rho+\rho\left(n \phi_0(1)+Mx^{*}\right)\right)\left(1-\rho+\rho\left(n\phi_1(1)+Mx^{*}\right)\right)<1.
\]
The result follows, with $\tilde{g}_{\theta}=(1-\rho) + \rho \left(n \left(1-H_\theta\left(1-\hat p\right)\right)+Mx^{*}\right)$.
\end{proof}

\begin{lemma}
\label{lem:path counting} Let  $x_0$ be as in Lemma \ref{lem:g linear} and let $x^{*}=g_0(x_0)<x_0$. One has
\[
\mu_{0}\left(\{0\}\right)=\mu_{0}\left([0,x^{*}]\right)=\mu_{1}\left(\{0\}\right)=\mu_{1}\left([0,x^{*}]\right).
\]
\end{lemma}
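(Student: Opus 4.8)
The plan is to read everything off the stationarity (cdf) equations of Lemma \ref{lemm mon}, exploiting the two structural facts that hold under the contradiction hypothesis $p_0<1-\hat p$. First, by Claim \ref{lem:bound signals} one has $\phi_\theta(0)=0$, so $g_\theta(0)=(1-\rho)\cdot 0+\rho\phi_\theta(0)=0$ and the point $x=0$ is \emph{absorbing} for $(\theta_t,x_t)$ in both states. Second, by Lemma \ref{lem:g linear} the map $g_0$ is a strictly increasing contraction on $[0,x_0]$ ($g_0(x)\le\tilde g_0 x$ with $\tilde g_0<1$), and the product condition $\tilde g_0\tilde g_1<1$ reproduces the ``random walk with negative drift in $\log x$'' picture used for $n=2$ in Lemma \ref{lemm cemetery}. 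Writing $F_\theta$ for the cdf of $x$ in state $\theta$ and $D:=F_0-F_1$, specializing the first identity of Lemma \ref{lemm mon} at $x=g_0(y)$ gives, for every $y\le x_0$, $F_0(g_0(y))=F_0(y)-\lambda D(y)$; since $g_0(y)\le y$ forces the left side to be $\le F_0(y)$, we read off $D\ge 0$ on $(0,x_0]$.

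First I would establish the balance $\mu_0(\{0\})=\mu_1(\{0\})$. Set $x^{(0)}=x_0$ and $x^{(j+1)}=g_0(x^{(j)})$, so that $x^{(1)}=x^{*}$ and, because $g_0$ contracts, $x^{(j)}\le\tilde g_0^{\,j}x_0\downarrow 0$. Telescoping the identity above along this orbit yields $F_0(x^{(j)})=F_0(x_0)-\lambda\sum_{i<j}D(x^{(i)})$. Letting $j\to\infty$ and using right-continuity of $F_0$ at $0$ (so $F_0(x^{(j)})\to\mu_0(\{0\})$), the series $\sum_i D(x^{(i)})$ converges; hence its terms tend to $0$. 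But $x^{(i)}\to 0$ and right-continuity give $D(x^{(i)})\to F_0(0)-F_1(0)=\mu_0(\{0\})-\mu_1(\{0\})$, so $\mu_0(\{0\})=\mu_1(\{0\})$. The same telescoping records that $\mu_0((0,x_0])=\lambda\sum_i D(x^{(i)})$.

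Next I would show there is no mass in $(0,x^{*}]$, i.e. $\mu_\theta((0,x^{*}])=0$; combined with the balance this gives the four equalities. The route mirrors the passage from Lemma \ref{lemm cemetery} to Lemma \ref{lemm10}: starting the stationary chain from $\mu$, invariance gives $\mu_\theta([\ep,x^{*}])=\prob_\mu(\theta_t=\theta,\,x_t\in[\ep,x^{*}])$ for every $t$, so it suffices to prove this probability tends to $0$. The linear bound $g_\theta(x)\le\tilde g_\theta x$ with $\tilde g_0\tilde g_1<1$ makes $(\log x_t)$ behave, as long as $x_t\in[0,x_0]$, like a random walk with strictly negative drift whose increments follow the symmetric state chain; by the argument of Lemma \ref{lemm cemetery} the process, once it enters $[0,x_0]$, is absorbed towards $0$ with probability bounded away from zero and otherwise leaves $[0,x_0]$ from above. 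Running this entry/exit (``path-counting'') bookkeeping shows $x_t$ cannot accumulate in any $[\ep,x^{*}]$, so the stationary mass there vanishes, giving $\mu_\theta((0,x^{*}])=0$.

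The hard part will be this last step, and for two reasons. Lemma \ref{lem:g linear} controls $g_\theta$ only on $[0,x_0]$, so — unlike the $n=2$ case, where $g_\theta$ is an explicit quadratic and one gets $x_t\to\{0,1\}$ globally — I cannot invoke a global convergence statement; I must instead confine the entry/exit bookkeeping to a neighborhood of $0$ and rule out that mass re-enters $(0,x^{*}]$ infinitely often from the uncontrolled middle region, using that entry into $(0,x^{*}]$ can only occur through $(0,x_0]$ (because $g_\theta$ maps points just above $x_0$ above $x^{*}$). Second, since $\tilde g_1$ may exceed $1$, state $1$ can be locally expanding, so the argument genuinely needs the negative-drift random walk rather than a monotone contraction; this is precisely where the product condition $\tilde g_0\tilde g_1<1$, itself a consequence of Assumption A, is indispensable.
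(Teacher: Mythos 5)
Your first step (the balance $\mu_0(\{0\})=\mu_1(\{0\})$, obtained by telescoping $F_0(g_0(y))=F_0(y)-\lambda D(y)$ along the orbit $x^{(j)}=g_0^j(x_0)$ and using right-continuity at $0$) is correct, and it is a genuinely different, rather neat way to get that particular equality. The gap is in your second step, which is where the real content of the lemma lies. You propose to port the probabilistic absorption argument of Lemmas \ref{lemm cemetery} and \ref{lemm10}, but the engine of that argument is a \emph{global} convergence statement ($x_t\to\{0,1\}$ a.s.\ from every initial condition), which is exactly what is unavailable here: Lemma \ref{lem:g linear} controls $g_\theta$ only on $[0,x_0]$, and under the contradiction hypothesis nothing is known about $g_\theta$ on $(x_0,1]$ (in particular, no monotonicity). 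Your patch --- ``entry into $(0,x^{*}]$ can only occur through $(0,x_0]$ because $g_\theta$ maps points just above $x_0$ above $x^{*}$'' --- is unjustified: without global monotonicity, a point far above $x_0$ may well map into $(0,x^{*}]$. And even granting it, showing that the \emph{stationary} mass of $[\ep,x^{*}]$ vanishes requires more than ``absorption towards $0$ with probability bounded away from zero'': trajectories can exit $[0,x_0]$ upward, wander in the uncontrolled middle region, and re-enter $[\ep,x^{*}]$ infinitely often, so you would need an occupation-time/renewal estimate (expected time spent in $[\ep,x^{*}]$ per excursion versus return frequency) that the available hypotheses do not provide. You flag both obstacles yourself, but flagging them is not overcoming them; as it stands, the main step is a plan, not a proof.

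The paper's proof avoids the dynamics entirely and stays within the one-step stationarity identities of Lemma \ref{lemm mon}. It introduces the two extremal quantities $\omega=\inf_{x\le x_0} F_1(x)/F_0(x)$ and $\eta=\sup_{x\le x^{*}} F_0(g_0^{-1}(x))/F_0(x)-1$, and derives from \eqref{eq:steady state eq} and \eqref{eq:F0 change} the coupled inequalities $\omega\ge \lambda/(\lambda+\eta)$ and $\eta+1\le 1/(1-\lambda+\lambda\omega)$, which force $\eta^2(1-\lambda)\le 0$, hence $\eta=0$ and then $\omega=1$. This yields $F_0(x)=F_0(g_0^{-1}(x))=F_1(x)$ for $x<x^{*}$, and iterating $g_0$ (whose orbit from $x^{*}$ decreases to $0$) gives $F_0=F_1\equiv\mathrm{const}$ on $(0,x^{*}]$, which is the lemma. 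The negative-drift condition $\tilde g_0\tilde g_1<1$ enters only through the pointwise inequality $g_1(g_0(x))<x$ (used to compare $F_0(g_1^{-1}(x))$ with $F_0(g_0(x))$) --- the analytic shadow of your random walk --- and no control of trajectories through the middle region is ever needed. This is precisely why the probabilistic route that works for $n=2$ was replaced by an analytic one for general $n$; if you want to salvage your approach, the telescoping identity you already exploit in step 1, applied in both states rather than only under $g_0$, is the natural entry point into the paper's $\omega$--$\eta$ computation.
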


\noindent
\begin{proof}
Define 
\begin{equation}
\omega =\inf_{x\leq g^{-1}_0(x^{*})}\frac{F_{1}\left(x\right)}{F_{0}\left(x\right)}\text{ and }\eta=\sup_{x\leq x^{*}}\frac{F_0(g_{0}^{-1}(x))}{F_0(x)}-1.
\end{equation}
The inequality in (\ref{eq:F0 change}) implies that $\omega\leq1$ and, clearly, $\eta\geq0$.

Take $x\leq g_0^{-1}(x^{*})=x_0$. By dividing the second equation of (\ref{eq:steady state eq}) by $F_0(g_1^{-1}(x))$ and noticing that $F_0(g_1^{-1}(x))>F_0(g_0(x))$, we obtain
\begin{equation}
\frac{F_1(x)}{F_0(x)} (\eta+1) \geq \frac{F_1(x)}{F_0(x)}\frac{F_0(x)}{F_0(g_0(x))}  \geq \frac{F_1(x)}{F_0(x)} \frac{F_0(x)}{F_0(g_1^{-1}(x))}= \left(1-\lambda\right)\frac{F_1(g_1^{-1}(x))}{F_0(g_1^{-1}(x))}+\lambda \geq \left(1-\lambda\right)\omega+\lambda.\nonumber
\end{equation}
Because the above holds for each $x\leq g_0^{-1}(x^{*})$, we get $\omega (\eta+1) \geq \left(1-\lambda\right)\omega+\lambda$ and thus, $\omega\geq \frac{\lambda}{\lambda+\eta}$.

By dividing the first equation of (\ref{eq:steady state eq}) by $F_0(g_0^{-1}(x))$, we obtain
\begin{equation}
\frac{F_0(x)}{F_0(g_{0}^{-1}(x))} \geq \left(1-\lambda\right)+\lambda \omega.\nonumber
\end{equation}
Because the above holds for each $x\leq x^{*}$, it must be that
\begin{equation}
\eta +1 = \sup_{x\leq x^{*}}\frac{F_0(g_{0}^{-1}(x))}{F_0(x)} \leq \frac{1} {1-\lambda+\lambda \omega}. \nonumber
\end{equation}
Combining with $\omega\geq \frac{\lambda}{\lambda+\eta}$, we get
\begin{eqnarray}
\nonumber \eta  &\leq& \frac{1}{1-\lambda +\frac{\lambda^2}{\lambda+\eta}}-1=\frac{\lambda+\eta}{\lambda+\eta(1-\lambda)}-1=\frac{\lambda\eta}{\lambda+\eta(1-\lambda)}.  
\end{eqnarray}
After multiplying both sides of the inequality by the denominator and subtracting $\lambda\eta$, we obtain $\eta^2(1-\lambda)\leq 0$. 

Hence $\eta=0$, which further implies that $\omega=1$. Recalling the definitions of $\eta$ and $\omega$, we get
$F_{0}\left(x\right) =  F_{0}\left(\tilde{g}_{0}^{-1}(x)\right) =  F_{1}\left(x\right)$ for each $x<x^{*}$. By iterating over applications of function $g_0()$, and noticing that, for each $0<x<x^{*}$, there exists $m$ such that $g_0^m(x^*)\leq x\leq g_0^{m-1}(x^*)$, we obtain that  
\begin{equation}
F_0(x^{*})=F_0\left(g_0^m(x^*)\right) \leq F_0(x) \leq F_0\left(x^*\right), \nonumber
\end{equation}
which implies that $F_0(x)=F_0(x^*)=F_1(x)$. This ends the proof of the Lemma. 
\end{proof}

\medskip
We now proceed with the proof of Proposition \ref{lem:beliefs not too low}.

\medskip\noindent

\begin{proof}[Proof of Proposition \ref{lem:beliefs not too low}] We aim at a contradiction for  $\Lambda$ small enough.
By Lemma \ref{lem:path counting},
one has  \[
\mu_{0}\left(\{0\}\right)=\mu_{0}\left([0,x^{*}]\right)=\mu_{1}\left(\{0\}\right)=\mu_{1}\left([0,x^{*}]\right),
\]
where $x^*>0$ is independent of $\Lambda$.

This implies $p_0=\frac12> 1-\hat p$ -- a contradiction.\end{proof}

\section{The Continuous-Time Approximation: Proof of Theorem \ref{thm blurry}}
\setcounter{equation}{0} \renewcommand{\theequation}{E.\arabic{equation}}

The proof of Theorem \ref{thm blurry} is divided in two parts. We first focus on the analysis of the continuous-time game $\Gamma(\lambda,\rho)$, and establish the existence of an ESS  $(\sigma,\mu)$ with the desired properties. We next address the convergence issue \emph{per se}, and prove that for small $\ep>0$, the discrete-time $G(\lambda\ep,\rho\ep)$ has an equilibrium $(\sigma_\ep,\mu_\ep)$ close to $(\sigma,\mu)$.

\subsection{The analysis of the continuous-time game $\Gamma(\lambda,\rho)$}

\subsubsection{The continuous-time game $\Gamma(\lambda,\rho)$.}

We recall the (slightly informal) description of $\Gamma(\lambda,\rho)$ and provide a formal definition of ESS.

The timeline is $\dR_+$. The state follows a (continuous-time) Markov process $(\theta_t)_{t\geq 0}$ over $\Theta$, with constant switching rate $\lambda$ per unit of time. Agents get replaced at rate $\rho>0$. W.l.o.g., we normalize $\rho$ to $\rho=1$.

Incoming agents observe first a sample of size $n$ from the current population, next decide whether or not to acquire binary information with precision $\pi$, and choose an action.  We focus on strategies that acquire information w.p. 1 when $k=\frac{n}{2}$, and assume that agents choose the majority action in their sample when they don't acquire information, and follow their signal when they do. Consequently,  strategies are simply functions $\sigma:\{0,\ldots, n\}\to [0,1]$.

Given a strategy  $\sigma \in [0,1]^{n+1}$ and $\theta$, we define $h_\theta:[0,1]\to [0,1]$ by
$h_\theta(x):= \rho\left(g_\theta(\theta)-x\right)$, with 
\[g_\theta(x):= \sum_{k=0}^n {n\choose k} x^k(1-x)^{n-k} \left(\sigma_k\left(\theta\pi+(1-\theta)(1-\pi)\right) +(1-\sigma_k)1_{k>n/2}\right).\footnote{The function $g_\theta$ depends on the strategy $\sigma$. For simplicity, we make no reference to it in the notation $g_\theta$, unless needed.}\]

The strategy $\sigma$ defines a piecewise deterministic Markov process $(Z_t):=(\theta_t,x_t)$ over $\Theta\times [0,1]$, where the second component $x_t$ moves continuously over $[0,1]$, according to the vector field $h_{\theta_t}$ -- that is, $x(t)=x_t$
obeys the differential equation $x'(t)=h_{\theta_t}(x(t))$ between two consecutive state changes. 

\medskip

The process $(x_t)$ admits an explicit description, see \cite{benaim}. Denote by $0=\tau_0<\tau_1<\cdots$ the successive  jumps of $(\theta_t)$ and by $u_i:=\tau_{i+1}-\tau_i$ the 'inter-arrival' times. The r.v.'s $(u_i)$ are i.i.d., and follow an exponential distribution with parameter $\lambda$.  Since $|\Theta|=2$, the state at time $\tau_i$ is $\alpha_i:= \theta_0+i\mbox{ mod }2$. 

For $\theta\in \Theta$, let $\Phi^\theta:\dR_+\times [0,1]\to [0,1]$ be the flow associated with the vector field $h_\theta,$ that is, $\Phi^\theta(s,x)$ is the value at time $s$ of the solution to the Cauchy problem $x'(t)=h_\theta(x(t))$, $x(0)=x$. Since $x(t)$ follows the vector field $h_{\theta_0}$ from $\tau_0$ to $\tau_1$, one has $x_{\tau_1}= \Phi^{\theta_0}(\tau_1,x_0)$ and more generally, $x_{\tau_{k+1}}= \Phi^{\alpha_{k}}(u_k,x_{\tau_k})$ for $k\geq 0$.

Thus, when starting from  $(\theta_0,x_0)$, one has
\begin{equation}\label{explicit}x_t=\Phi^{\alpha_k}_{t-\tau_k}\circ \Phi^{\alpha_{k-1}}_{u_{k-1}}\circ \cdots \circ \Phi^{\alpha_0}_{u_0}(x_0) \mbox{ on the event $\tau_k\leq t< \tau_{k+1}$},\end{equation}
where $\Phi_s^\theta= \Phi^\theta(s,\cdot)$.  For  fixed $t$, $x_t$ is a  function of $(\theta_0,x_0)$ and of the inter-arrival times $(u_i)$.

For later reference, we note that if $x\in (0,1)$, then $\Phi^\theta(s,x)\in (0,1)$ for each $\theta$ and all $s\geq 0$. 
\medskip

The extension to $\Gamma(\lambda,\rho)$ of the ESS concept is straightforward: a pair $(\mu,\sigma)$ is an ESS if (i) $\mu$ is an invariant measure for the Markov process $(\theta_t,x_t)$ induced by $\sigma$, (ii) $\sigma$ is optimal given the interim beliefs deduced from $\mu$ and (iii) $\sigma$ and $\mu$ treat the two states and actions symmetrically.

\subsubsection{An ODE verification result for invariant distributions}

In the main body, we argued heuristically that the densities of an invariant measure, if they exist, satisfy a simple ODE. We here state and prove the corresponding verification statement. 

\begin{lemma}\label{verif}
Assume that $f_\theta:(0,1)\to \dR_+$ is $C^1$, that  $\displaystyle \int_{(0,1)} f_\theta dx= 1$  and  that $(f_0,f_1)$ solves 
\begin{equation}\label{ODE}(f_1h_1)'= \lambda (f_0-f_1)\mbox{ and }(f_0h_0)'= \lambda (f_1-f_0)\mbox{ over }(0,1).\end{equation} 
Let $\mu_\theta\in \Delta([0,1])$ be the probability measure with density $f_\theta$, and let
$\mu :=\frac12 \delta_{0}\otimes \mu_0 +\frac12 \delta_1 \otimes \mu_1\in \Delta(\Theta\times [0,1])$ the measure with conditionals $\mu_0$ and $\mu_1$. Then $\mu$ is  invariant for 
$\sigma $ in $\Gamma(\lambda,\rho)$. 
\end{lemma}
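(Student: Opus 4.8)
The plan is to verify invariance in the weak (generator) sense: I will show that $\int_{\Theta\times[0,1]}\mathcal{L}g\,d\mu=0$ for every test function $g$ that is $C^1$ in the $x$-variable, where $\mathcal{L}$ is the generator of the piecewise-deterministic Markov process $(\theta_t,x_t)$, namely
\[
\mathcal{L}g(\theta,x)=h_\theta(x)\,\partial_x g(\theta,x)+\lambda\bigl(g(1-\theta,x)-g(\theta,x)\bigr).
\]
The motivation is that the coupled system \eqref{ODE} is precisely the stationary forward Kolmogorov (Fokker--Planck) equation $\partial_x(h_\theta f_\theta)=\lambda(f_{1-\theta}-f_\theta)$ for this process, so one expects its solutions to yield invariant densities; the content of the lemma is to turn this heuristic into a genuine verification, the only delicate point being the behaviour at the endpoints $x\in\{0,1\}$.

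First I would carry out the algebraic computation. Writing $2\int\mathcal{L}g\,d\mu=T+J$ with transport part $T=\sum_\theta\int_0^1 h_\theta f_\theta\,\partial_x g(\theta,\cdot)\,dx$ and jump part $J=\lambda\sum_\theta\int_0^1\bigl(g(1-\theta,\cdot)-g(\theta,\cdot)\bigr)f_\theta\,dx$, I integrate $T$ by parts in $x$. This produces boundary terms $\sum_\theta[h_\theta f_\theta\,g(\theta,\cdot)]_0^1$ together with an interior term $-\sum_\theta\int_0^1 g(\theta,\cdot)(h_\theta f_\theta)'\,dx$. Substituting $(h_\theta f_\theta)'=\lambda(f_{1-\theta}-f_\theta)$ from \eqref{ODE} and relabelling, a direct check shows the interior part of $T$ equals $\lambda\int_0^1\bigl(g(0,\cdot)-g(1,\cdot)\bigr)(f_0-f_1)\,dx$, which is exactly $-J$. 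Hence the interior contributions cancel and only the boundary terms remain.

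The main obstacle is therefore to show that these boundary terms vanish, i.e. that the probability flux $f_\theta h_\theta$ tends to $0$ at each endpoint. Here I would use two facts. First, for the relevant strategies (those that herd on unanimous samples, for which $g_\theta(0)=0$ and $g_\theta(1)=1$) one has $h_\theta(0)=h_\theta(1)=0$; since $h_\theta$ is a polynomial this gives $h_\theta(x)=O(x)$ near $0$ and $h_\theta(x)=O(1-x)$ near $1$. Second, because $f_0,f_1\in L^1(0,1)$ the right-hand side of \eqref{ODE} is integrable, so each $f_\theta h_\theta$ is $C^1$ with integrable derivative and thus has finite limits at $0$ and $1$. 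It then remains to identify these limits as $0$: integrability of $f_\theta$ forces $\liminf_{x\to0^+}x f_\theta(x)=0$ (otherwise $f_\theta\gtrsim 1/x$ would be non-integrable), and along such a sequence $f_\theta h_\theta=O(xf_\theta)\to0$; since the limit exists it must equal $0$, and symmetrically at $x=1$. (As a consistency check, this forces the conserved quantity $f_0h_0+f_1h_1$ to be identically $0$, matching the later relation $f_1/f_0=-h_0/h_1$.) Consequently $[h_\theta f_\theta\,g(\theta,\cdot)]_0^1=0$ for every bounded $g$, so $\int\mathcal{L}g\,d\mu=0$.

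Finally I would invoke the standard invariance criterion for such processes (the Echeverría--Weiss theorem, or Davis' theory of PDMPs as in \cite{benaim}): since the functions that are $C^1$ in $x$ form a core for $\mathcal{L}$ and $\int\mathcal{L}g\,d\mu=0$ holds on this core, $\mu$ is invariant for $(\theta_t,x_t)$ in $\Gamma(\lambda,\rho)$. The two steps I expect to demand the most care are the endpoint/flux argument just described --- which is exactly where the hypotheses $\int f_\theta\,dx=1$ and the herding boundary behaviour of $h_\theta$ are genuinely used --- and the justification that the chosen test class is a core; the cancellation $T=-J$ is routine once \eqref{ODE} is substituted.
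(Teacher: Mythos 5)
Your proposal is correct and shares its backbone with the paper's proof: both verify invariance via the generator criterion $\int \mathcal{L}\xi\,d\mu=0$ (with the same citation-level appeal to Bena\"{\i}m et al./Varadhan-type results), integrate the transport term by parts, and substitute the ODE \eqref{ODE} so that the interior terms cancel the jump terms. The genuine difference is the class of test functions. The paper takes $\xi_\theta$ to be $C^1$ with \emph{compact support} in $(0,1)$, so boundary terms vanish by fiat and the verification is three lines. You test against all bounded $C^1$ functions, which forces you to prove that the flux $f_\theta h_\theta$ vanishes at both endpoints; your argument --- $(f_\theta h_\theta)'=\lambda(f_{1-\theta}-f_\theta)\in L^1(0,1)$ gives existence of the endpoint limits, while $\liminf_{x\to 0^+}xf_\theta(x)=0$ (from $f_\theta\in L^1$) together with $h_\theta(x)=O\left(x(1-x)\right)$ identifies those limits as $0$ --- is sound. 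This extra step buys something real: the compact-support criterion alone cannot detect mass entering from the boundary (e.g.\ a pure transport flow with $h(0^+)>0$ and $fh\equiv c>0$ passes that test without being invariant), so the no-flux property you establish is precisely what ties the stationary density to the actual process of $\Gamma(\lambda,\rho)$, whose trajectories never reach $\{0,1\}$ from the interior. In this respect your proof is more careful than the paper's, at the cost of one extra lemma.

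Two caveats. First, your flux argument needs $h_\theta(0)=h_\theta(1)=0$, i.e.\ that agents herd on unanimous samples; this holds for every strategy to which the paper applies the lemma (the strategies $b\in\left(\tfrac{2}{3},\tfrac{1}{3(1-\pi)}\right)$, for which $h_\theta(x)=x(1-x)\left(x(2-3b)+3b\pi_\theta-1\right)$), but it is not literally among the lemma's hypotheses, so your proof covers the lemma as used rather than as stated. Second, the assertion that bounded $C^1$ functions form a core for $\mathcal{L}$ is left to the cited theory; that is acceptable, and is in any case the same level of rigor at which the paper itself invokes its invariance criterion.
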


If $h_0$ and $h_1$ do not vanish on $(0,1)$, standard results imply the existence and uniqueness of a solution to (\ref{ODE}) for fixed values of, say, $f_0(\frac12)$ and $f_1(\frac12)$. It is given by
\begin{equation}\label{sol}
f_1(x)= \frac{c_1}{h_1(x)}\exp\left(-\lambda \int_{\frac12}^x \left\{\frac{1}{h_0(t)}+\frac{1}{h_1(t)}\right\}dt \right)\mbox{ and }
f_0(x)= -\frac{c_0}{h_0(x)}\exp\left(-\lambda \int_{\frac12}^x \left\{\frac{1}{h_0(t)}+\frac{1}{h_1(t)}\right\}dt \right),\end{equation}
with $c_1= h_1(\frac12)f_1(\frac12)$ and  $c_0=- h_0(\frac12)f_0(\frac12)$.

\medskip\noindent
\begin{comment}
\begin{proof}
The infinitesimal generator $L$ of the Markov process $(\theta_t,x_t)$ is given by 
\begin{equation}\label{generator}
L\phi(\theta,x)= h_\theta(x)\phi'(x) +\lambda \left(\phi_{1-\theta}(x)-\phi_\theta(x)\right),
\end{equation}
for $(\theta,x)\in \Omega$, and each pair $\phi= (\phi_\theta)$ of functions in $C^1((0,1))$, see Bena\"{\i}m et \emph{al.} (2014).

We need to prove that $\displaystyle \int_\Omega L\phi d\mu= 0$ for each smooth $\phi$, see \cite{varadhan}, Section 7.4.

Let  $\phi_\theta$ ($\theta\in \Theta$) be a $C^1$ function with compact support in $(0,1)$. One has 
\begin{eqnarray*}
2\int_\Omega L\phi d\mu &=& \int_0^1 \phi'_1 h_1f_1 dx +\lambda \int_0^1 \left(\phi_1-\phi_0\right)f_1 dx \\
&& +\int_0^1 \phi'_0h_0f_0dx +\lambda \int_0^1 \left(\phi_0-\phi_1\right)f_0 dx.
\end{eqnarray*}
Using first an integration by parts in the first and third integrals, next the equality $(f_\theta h_\theta)'=\lambda (f_{1-\theta}-f_\theta)$, one obtains $L\phi=0$.
\end{proof}
\end{comment}

\begin{proof}
The infinitesimal generator $L$ of the Markov process $(\theta_t,x_t)$ is given by 
\begin{equation}\label{generator}
L\xi(\theta,x)= h_\theta(x)\xi'(x) +\lambda \left(\xi_{1-\theta}(x)-\xi_\theta(x)\right),
\end{equation}
for $(\theta,x)\in \Omega$, and each pair $\xi= (\xi_\theta)$ of functions in $C^1((0,1))$, see Bena\"{\i}m et \emph{al.} (2014).

We need to prove that $\displaystyle \int_\Omega L\xi d\mu= 0$ for each smooth $\xi$, see \cite{varadhan}, Section 7.4.

Let  $\xi_\theta$ ($\theta\in \Theta$) be a $C^1$ function with compact support in $(0,1)$. One has 
\begin{eqnarray*}
2\int_\Omega L\xi d\mu &=& \int_0^1 \xi'_1 h_1f_1 dx +\lambda \int_0^1 \left(\xi_1-\xi_0\right)f_1 dx \\
&& +\int_0^1 \xi'_0h_0f_0dx +\lambda \int_0^1 \left(\xi_0-\xi_1\right)f_0 dx.
\end{eqnarray*}
Using first an integration by parts in the first and third integrals, next the equality $(f_\theta h_\theta)'=\lambda (f_{1-\theta}-f_\theta)$, one obtains $L\xi=0$.
\end{proof}

\subsubsection{The case of a sample size $n=3$}

We now turn to the analysis of $\Gamma(\lambda, \rho)$, in the specific case where the sample size is $n=3$.  We restrict ourselves to (symmetric) strategies $\sigma=(\beta,\alpha)$ such that $\beta(0)=\beta(3)=0$ and $\beta(1)=\beta(2)>0$. We identify such a strategy with the probability  $b:=\beta(1)=\beta(2)$ of acquiring information with a non-unanimous sample.

For such a strategy $b\in [0,1]$, the law of motion of $x$ in state $\theta$ is \footnote{Though $h_\theta$ depend on $b$, we write $h_\theta^b$ only when we think there is a risk of confusion.}
\[h_\theta(x)= x(1-x)\left(x(2-3b)+3b\pi_\theta-1\right).\]
The function $h_\theta$  vanishes at 0 and 1 and  has a 3rd root if $b\neq \frac23$, which is $\displaystyle \frac{3b\pi_\theta-1}{3b-2}$. 
This third root lies outside of  $(0,1)$ iff $b\in [\frac{1}{3\pi}, \frac{1}{3(1-\pi)}]$.

\subsubsection{A characterization of all invariant distributions}

We denote by $\calU$ the uniform distribution over $\Theta\times \{0,1\}$. 
\begin{proposition}\label{lemm invariant}
For each $b\in \displaystyle \left(\frac{2}{3},\frac{1}{3(1-\pi)}\right)$ and each $\lambda >0$, the Markov process  $(\theta_t,x_t)$ induced by the strategy $b$ has the following properties.
\begin{description}
\item[R1] There is a unique invariant distribution $\mu$ with the property that $\mu(\Theta\times (0,1))=1$. It is symmetric, and its conditionals $\mu_\theta$ have $C^1$ densities $f_\theta$, which solve (\ref{ODE}) on $(0,1)$.
\item[R2] The symmetric invariant distributions are exactly the convex combinations of $\mu$ and of $\calU$.
\end{description}
\end{proposition}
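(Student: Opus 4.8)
The plan is to first read off the geometry of the drift $h_\theta$ on the range $b\in\left(\frac23,\frac{1}{3(1-\pi)}\right)$, then establish \textbf{R1} by an explicit construction together with an irreducibility argument, and finally deduce \textbf{R2} from a decomposition of invariant measures. For such $b$ one has $3b-2>0$, $3b\pi-1>0$ and $3b(1-\pi)-1<0$; feeding these signs into $h_\theta(x)=x(1-x)\left(x(2-3b)+3b\pi_\theta-1\right)$ shows that the affine factor keeps a constant sign on $[0,1]$ in each state (positive for $\theta=1$, negative for $\theta=0$), so $h_1>0$ and $h_0<0$ on $(0,1)$ while $h_\theta(0)=h_\theta(1)=0$. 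Hence $0$ and $1$ are fixed points of the flow in both states, the interval $(0,1)$ is forward-invariant, and in state $1$ (resp.\ $0$) the coordinate $x$ increases toward $1$ (resp.\ decreases toward $0$) without reaching the endpoint in finite time. I would also record the symmetry identity $h_1(1-x)=-h_0(x)$, obtained by direct substitution, which I use repeatedly below.

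For existence in \textbf{R1} I would invoke the verification Lemma \ref{verif}, so it suffices to produce normalizable $C^1$ densities solving \eqref{ODE}. Adding the two equations of \eqref{ODE} gives $(f_0h_0+f_1h_1)'=0$, so the flux $f_0h_0+f_1h_1$ is constant; substituting the general form \eqref{sol} into either equation forces $c_0=c_1=:c$, leaving a one-parameter family $f_1=\frac{c}{h_1}E$, $f_0=-\frac{c}{h_0}E$ with $E(x)=\exp\left(-\lambda\int_{1/2}^x\left(\frac1{h_0}+\frac1{h_1}\right)\right)$. Since $h_\theta$ has no zero in $(0,1)$, these are positive and $C^1$ there. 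The crux is integrability at the endpoints: near $x=0$ the integrand behaves like $A/t$ with $A=\frac{3b-2}{(3b\pi-1)(3b(1-\pi)-1)}<0$, whence $E(x)\sim x^{-\lambda A}$ and $f_\theta(x)\sim x^{\lambda|A|-1}$, which is integrable; the endpoint $x=1$ is symmetric. Finally, $h_1(1-x)=-h_0(x)$ yields $E(1-x)=E(x)$ and therefore $f_0(x)=f_1(1-x)$, so $\mu$ is symmetric and both conditionals normalize with the single constant $c$.

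For uniqueness in \textbf{R1} I would pass to the chain sampled at the successive state-change times $\tau_k$. Since $|\Theta|=2$ the $\theta$-coordinate merely alternates, so irreducibility reduces to the $x$-coordinate: starting in state $1$ at any $x\in(0,1)$, flowing for an exponential time reaches an arbitrary point of $(x,1)$, and after the jump to state $0$ a further flow reaches an arbitrary point below it, so chaining two jumps connects any $x$ to any $y\in(0,1)$. Hence the embedded chain is irreducible on $\Theta\times(0,1)$ and admits at most one invariant probability measure there, which together with the construction proves \textbf{R1}. For \textbf{R2}, one inclusion is immediate: any $\alpha\mu+(1-\alpha)\calU$ is invariant (both $\mu$ and $\calU$ are) and symmetric. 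For the converse I would use that the transition kernel respects the partition of $\Theta\times[0,1]$ into the forward-invariant interior $I=\Theta\times(0,1)$ and the invariant boundary $\partial=\Theta\times\{0,1\}$, no mass crossing in finite time because the endpoints are fixed points reached only asymptotically. Consequently, for any invariant $\nu$ the restrictions $\nu|_{I}$ and $\nu|_{\partial}$ are \emph{separately} invariant, and each inherits symmetry since the involution $(\theta,x)\mapsto(1-\theta,1-x)$ preserves $I$ and $\partial$. By \textbf{R1} the normalized interior part equals $\mu$; on $\partial$ the coordinate $x$ is frozen and $\theta$ flips symmetrically, so the only symmetric invariant boundary measure is $\calU$. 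Thus $\nu=\nu(I)\,\mu+\nu(\partial)\,\calU$, a convex combination, establishing \textbf{R2}. I expect the main obstacle to be the endpoint integrability estimate behind existence, and rigorously justifying both that the kernel respects the interior/boundary partition and that irreducibility delivers uniqueness on the interior.
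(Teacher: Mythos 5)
Your proposal is correct and follows essentially the same route as the paper: an explicit solution of the flux ODE (\ref{ODE}) combined with the verification Lemma \ref{verif} and an endpoint integrability estimate (of the form $f_\theta(x)\sim x^{\lambda|A|-1}$) for existence, the chain embedded at the state-switching times with a two-jump positive-density irreducibility argument for uniqueness, and a decomposition of invariant measures along the invariant interior/boundary sets for \textbf{R2}. The one step you gloss over---that having at most one invariant measure for the embedded jump chain transfers back to the continuous-time process---is exactly what the paper secures by citing Bena\"im et al.'s correspondence between invariant measures of a piecewise-deterministic Markov process and those of its embedded chain, so you would need that (standard) result to close your uniqueness argument.
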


\noindent
\begin{proof}
We prove \textbf{R1} in two independent steps. Using the verification theorem, we first prove the existence of a distribution $\mu$ with the desired properties. Using the  general theory of Markov chains, we next show that $(\theta_t,x_t)$ has at most one invariant distribution supported by $\Theta\times (0,1)$. 
\bigskip

\noindent
\textbf{Step 1: Existence.} Fix $b$ throughout and set 
\begin{equation}\label{formula f}f_1(x):= \frac{1}{h_1(x)}\exp\left(-\lambda\int_{\frac12}^x \left\{\frac{1}{h_0(t)}+\frac{1}{h_1(t)}\right\}dt \right)\end{equation} and
\[f_0(x):= -\frac{1}{h_0(x)}\exp\left(-\lambda \int_{\frac12}^x \left\{\frac{1}{h_0(t)}+\frac{1}{h_1(t)}\right\}dt \right).\]
Since $b\in \displaystyle \left[\frac23,\frac{1}{3(1-\pi)}\right)$, one has $h_0<0<h_1$ over $(0,1)$. 
Obviously, $f_\theta>0$ on $(0,1)$,   $f_1(1-x)= f_0(x)$ for each $x\in (0,1)$, and   $(f_\theta)_{\theta \in \Theta}$ solves (\ref{ODE}). We prove that $f_\theta\in L^1(0,1)$. Since $\|f_\theta\|_1$ is independent of $\theta$ thanks to symmetry, this will imply that the pair $f_\theta:=f_\theta/\|f_\theta\|_1$, $\theta\in \Theta$, is a pair of $C^1$ densities  solving (\ref{ODE}) . The existence of $\mu$ will follow, using Lemma \ref{verif}.

To show the integrability of $f_\theta$, we note that  $f_1\geq f_0$ on $[\frac12,1)$ (resp. $f_1\leq f_0$ on $(0,\frac12]$) so that it suffices to show that $\displaystyle \int_{\frac12}^1 f_1(x) dx <+\infty$. 

Elementary computations show that  in the neighborhood of 1, the functions $1/h_1$ and $1/h_0+1/h_1$ are respectively equivalent to\footnote{The notation $f\sim_1 g$ means that $\lim_{t\to 1} f(t)/g(t)=1$.}
\[\frac{1}{h_1(t)}\sim_1 \frac{c^1_{\pi}}{1-t}\mbox{ and }\frac{1}{h_0(t)}+ \frac{1}{h_1(t)}\sim_1 \frac{c^2_{\pi}}{1-t}\]
in the neighborhood of 1, with 
\[c^1_{\pi}= \frac{1}{1-3b(1-\pi)}>0\mbox{ and } c^2_{\pi}= \frac{3b-2}{\left(1-3b(1-\pi)\right)\left(3b\pi-1\right)}>0.\]
In particular, there exist $c_{\pi}$ such that $0<c_{\pi}<c^2_{\pi}$ and $\displaystyle\frac{1}{h_0(t)}+ \frac{1}{h_1(t)}\geq \frac{c_{\pi}}{1-t}$ for every $t\in [\frac12,1)$.

Using (\ref{formula f}), this yields
\begin{equation}\label{integ}
f_1(x) \leq \frac{1}{h_1(x)} 2^{\lambda c_{\pi}} (1-x)^{\lambda c_{\pi}},\ x\in [\frac12,1).
\end{equation}
Since $\frac{1}{h_1(x)}\sim_1 \frac{c^1_{\pi}}{1-x}$, the RHS of (\ref{integ}) is integrable on $[\frac12,1)$, hence $f_1$ is integrable as well, as desired.

\bigskip\noindent
\textbf{Step 2: Uniqueness.}  Recall that $\tau_m$ is the $m$-th switching time of $(\theta_t)$. We set $Z_t=(\theta_t,x_t)$ and for $m\geq 1$, $\tilde Z_m:=Z_{\tau_m}$, the sampled process at the switching times.
We view $(\tilde Z_m)$ as a Markov chain with state space $\Theta\times (0,1)$, and we denote $\tilde\prob$ its kernel.\footnote{That is, $\tilde\prob(z,A)$ denotes the probability of $\tilde Z_1\in A$, given $\tilde Z_0=z$, and $\tilde\prob_z$ is the law of the sequence $(\tilde Z_m)$, when starting from $\tilde Z_0= z$.}

By Proposition 2.4 in Bena\"im et al. (2014), the sets $\calP_{inv}$ and $\tilde \calP_{inv}\subset \Delta(\Theta\times (0,1))$ of invariant distributions for $(Z_t)$ and $(\tilde Z_m)$ respectively are homeomorphic.\footnote{Bena\"im et al. (2014) assume a compact state space. However, as far as their Proposition 2.4 is concerned, compactness is only used to ensure that $\calP_{inv}\neq\emptyset$.} Thus, \textbf{R1} will follow if we prove  that $|\tilde\calP_{inv}|\leq 1$. 

Let $a\in (0,\frac12)$ be arbitrary,  set $C=\Theta\times [a,1-a]$, and denote by $\sigma_C:=\inf\{m\geq 1: \tilde Z_m\in C\}$ the first return to $C$.

The fact that $(\tilde Z_m)$ has at most one invariant measure follows from Claims \ref{claim1} and \ref{claim2} below, using  results and terminology from \cite{douc}.  Indeed,
  Claim \ref{claim1} states  that $C$ is \emph{accessible} (see Definition 3.5.1), while Claim  \ref{claim2} states that $C$ is a \emph{petite set} (see Definition 9.4.1). By Lemma 9.4.3, this implies that $\tilde\prob$ is an irreducible Markov kernel. By Corollary 9.2.16, it admits at most one invariant measure.\footnote{Definitions 3.5.1, 9.4.1, and Lemma 9.4.3 are from 
  \cite{douc}.}
%  Douc et \textit{al} (2017).}

\begin{claim}\label{claim1}
One has $\tilde\prob_z(\sigma_C<+\infty)>0$ for each $z\in \Theta\times (0,1)$. 
\end{claim}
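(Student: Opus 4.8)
The plan is to exploit the explicit monotone structure of the flows $\Phi^\theta$. Since $b\in\left(\frac23,\frac{1}{3(1-\pi)}\right)$ we have $h_0<0<h_1$ on $(0,1)$, while $h_\theta(0)=h_\theta(1)=0$; hence, for fixed $x\in(0,1)$, the map $s\mapsto \Phi^1(s,x)$ is a continuous strictly increasing bijection of $[0,\infty)$ onto $[x,1)$, and $s\mapsto \Phi^0(s,x)$ is a continuous strictly decreasing bijection of $[0,\infty)$ onto $(0,x]$. I would record this first, together with the observation that from a state $z=(\theta,x)$ the one-step kernel $\tilde\prob(z,\cdot)$ is the law of $\bigl(1-\theta,\Phi^\theta(u,x)\bigr)$, where $u$ is exponentially distributed with parameter $\lambda$ and thus has a density that is strictly positive on $(0,\infty)$. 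Consequently, for any interval $J\subseteq(0,1)$ with nonempty interior contained in the range of $\Phi^\theta(\cdot,x)$, the preimage $\{u:\Phi^\theta(u,x)\in J\}$ is a nondegenerate interval and therefore carries positive mass.

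With these facts in place, I would argue that $C=\Theta\times[a,1-a]$ is reached in at most two steps with positive probability, by a short case analysis that is symmetric in the two states. Take $z=(\theta,x)$ with $\theta=1$ (the case $\theta=0$ is identical after exchanging $0\leftrightarrow1$ and $x\leftrightarrow 1-x$). If $x<1-a$, then $[\max(x,a),1-a]$ is a nondegenerate subinterval of the range $[x,1)$ of $\Phi^1(\cdot,x)$, so with positive probability $\tilde Z_1=\bigl(0,\Phi^1(u_0,x)\bigr)\in C$. If instead $x\geq 1-a$, then for every $u_0>0$ strict monotonicity gives $\Phi^1(u_0,x)>1-a$, so $\tilde Z_1=(0,y_1)$ with $y_1>1-a$; now $[a,1-a]\subseteq(0,y_1]$, the latter being the range of $\Phi^0(\cdot,y_1)$, and this inclusion is nondegenerate since $a<1-a$, so conditionally on any such $u_0$ the second step lands $\tilde Z_2$ in $C$ with positive probability.

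Finally I would assemble these into the statement $\tilde\prob_z(\sigma_C<+\infty)>0$. In the one-step case this is immediate; in the two-step case I would integrate over $u_0$, using independence of the inter-arrival times, writing
\[
\tilde\prob_z\bigl(\tilde Z_2\in C\bigr)\ \geq\ \int_0^\infty \tilde\prob\bigl((0,\Phi^1(u_0,x)),\,C\bigr)\,\lambda e^{-\lambda u_0}\,du_0\ >\ 0,
\]
since the integrand is strictly positive for every $u_0>0$ by the previous paragraph. Either way $\sigma_C<+\infty$ with positive probability, which is the claim.

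The argument is essentially elementary once the monotonicity of the flows is isolated; the only point that requires care---and which I regard as the main (mild) obstacle---is that the current state may push $x$ toward the ``wrong'' boundary (e.g.\ $\theta=1$ with $x$ already at least $1-a$), so that $C$ cannot be hit in a single step and one is forced into a two-step argument, in which positivity must be verified for the \emph{joint} law of two independent holding times rather than a single one. A secondary point worth flagging is that, because $h_\theta$ vanishes at the endpoints, the flow never reaches $0$ or $1$ in finite time; this is harmless here, since we only ever need to land in the interior interval $[a,1-a]$.
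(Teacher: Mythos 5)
Your proposal is correct and follows essentially the same route as the paper: both exploit $h_0<0<h_1$ to get monotone flows with ranges $(0,x]$ and $[x,1)$, together with the strictly positive density of the exponential holding times, to reach $C$ with positive probability in at most two steps. The only cosmetic difference is that the paper avoids your case analysis by noting directly that the two-step composition $\Phi^1_{u_1}\circ\Phi^0_{u_0}(x_0)$ has a positive density on all of $(0,1)$, so that $\tilde\prob^2(z,C)>0$ regardless of where $x_0$ lies relative to $[a,1-a]$.
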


\begin{claim}\label{claim2}
There exists a non-zero measure $\nu\in\Delta(\Theta\times (0,1))$ such that 
\[\tilde \prob(z, A)+\tilde\prob^2(z,A)\geq \nu(A)\]
for every event $A\subset \Theta\times (0,1)$ and each $z\in C$.
\end{claim}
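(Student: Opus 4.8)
The plan is to make the one-step kernel $\tilde\prob$ fully explicit and then exhibit a concrete symmetric minorizing measure supported on two short intervals, one near each endpoint. First I would record that, because the state switches deterministically at each $\tau_m$ and the flow $h_\theta$ is single-signed on $(0,1)$ (namely $h_0<0<h_1$ for the relevant $b$), the sampled chain $(\tilde Z_m)$ has a rigid structure: from $z=(1,x)$ one lands in $\{0\}\times(x,1)$, and from $z=(0,x)$ one lands in $\{1\}\times(0,x)$. Writing $y=\Phi^1(u_0,x)$ with $u_0\sim\mathrm{Exp}(\lambda)$ and changing variables through $ds=dy/h_1(y)$, the transition density from $(1,x)$ to $(0,y)$ is $\dfrac{\lambda}{h_1(y)}\exp\!\Big(-\lambda\int_x^y \tfrac{dt}{h_1(t)}\Big)$ for $y\in(x,1)$, and symmetrically the density from $(0,x)$ to $(1,y)$ is $\dfrac{\lambda}{|h_0(y)|}\exp\!\Big(-\lambda\int_y^x \tfrac{dt}{|h_0(t)|}\Big)$ for $y\in(0,x)$.

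Next I would fix two compact target intervals bounded away from $\{0,1\}$, namely $J_0:=[1-\tfrac a2,\,1-\tfrac a3]$ and $J_1:=[\tfrac a3,\,\tfrac a2]$, chosen so that $J_1=1-J_0$ (this is possible since $a<\tfrac12$). The crux is a lower bound on densities that is \emph{uniform} in $x\in[a,1-a]$. For $z=(1,x)$ and $y\in J_0$ one has $y>1-a\geq x$, so $J_0\subset(x,1)$ is reachable, and $[x,y]\subseteq[a,1-\tfrac a3]$ is a fixed compact subinterval of $(0,1)$ on which $1/h_1$ is bounded; hence both $\lambda/h_1(y)$ and the exponential are bounded below by positive constants independent of $x$, giving $\tilde\prob((1,x),\cdot)\geq c_1\,\mathbf{1}_{\{0\}\times J_0}\,dy$. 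For the two-step kernel I would restrict the intermediate point $y_1$ to $J_0$: for $y_2\in J_1$ one has $y_2<y_1$, the path $[y_2,y_1]\subseteq[\tfrac a3,1-\tfrac a3]$ stays compactly inside $(0,1)$, and integrating the (bounded-below) step-2 density against the (bounded-below) step-1 density over $y_1\in J_0$ yields $\tilde\prob^2((1,x),\cdot)\geq c_2\,\mathbf{1}_{\{1\}\times J_1}\,dy$, again uniformly in $x$.

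Finally I would invoke the reflection symmetry to cover starting points with $\theta=0$ using the \emph{same} measure. Since the strategy is symmetric one checks $h_0(x)=-h_1(1-x)$, so the process is equivariant under $R:(\theta,x)\mapsto(1-\theta,1-x)$, whence $\tilde\prob(Rz,RA)=\tilde\prob(z,A)$ and likewise for $\tilde\prob^2$. Setting $c:=\min(c_1,c_2)$ and $\nu:=c\big(\mathbf{1}_{\{0\}\times J_0}+\mathbf{1}_{\{1\}\times J_1}\big)\,dy$, the choice $J_1=1-J_0$ makes $\nu$ invariant under $R$, so the bound $\tilde\prob(z,\cdot)+\tilde\prob^2(z,\cdot)\geq\nu$ proved for $z=(1,x)$ transfers verbatim to $z=(0,x)=R(1,1-x)$, noting $1-x\in[a,1-a]$. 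As $\nu$ is a non-zero measure on $\Theta\times(0,1)$ (a positive multiple of an element of $\Delta(\Theta\times(0,1))$), the claim follows.

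The hard part will be securing these lower bounds \emph{uniformly} in the starting point $x$, especially for $\tilde\prob^2$. The densities involve $1/h_\theta$, which diverges at the endpoints where $h_\theta$ vanishes, so the integrals $\int dt/h_\theta(t)$ are finite only when the relevant path stays away from $0$ and $1$. Keeping both target intervals $J_0,J_1$ and the intermediate point compactly inside $(0,1)$ is precisely what tames these singularities and yields constants independent of $x$; it is also why two steps are needed, so that $\nu$ can charge both states at once.
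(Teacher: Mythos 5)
Your proposal is correct and follows essentially the same route as the paper's proof: explicit landing densities for the sampled chain, uniform positive lower bounds on two compact target intervals bounded away from $\{0,1\}$ (one charged in one step, its mirror image in two steps), and symmetry to cover both starting states. The only cosmetic differences are that you bound the density formula directly where the paper invokes compactness and continuity of the hitting time $(x_0,x)\mapsto t_{x_0,x}$, and that you make the minorizing measure explicitly invariant under $(\theta,x)\mapsto(1-\theta,1-x)$ where the paper simply appeals to symmetry.
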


\noindent
\begin{proof}[Proof of Claim \ref{claim1}] Let $z=(\theta_0,x_0)\in \Theta\times (0,1)$ be given, and assume for concreteness that $\theta_0=0$. 
We prove that given $\tilde Z_0=z$, there is a positive probability that 
$\tilde Z_2\in C$. Thus,  $\tilde \prob_z(\tilde Z_2\in C)=\tilde \prob^2(z,C)>0$, which implies the result.

Recall that 
\[\tilde Z_1=(\theta_{\tau_1},x_{\tau_1})= (1,\Phi^0(u_0,x_0))\mbox{ and } \tilde Z_2=(\theta_{\tau_2}, x_{\tau_2})= (0,\Phi^1_{u_1}\circ \Phi^0_{u_0}(x_0)).\]
Since $h_0<0<h_1$ on $(0,1)$, the map $t\mapsto \Phi^0(t,x_0)$ is a diffeomorphism from $(0,+\infty)$ to $(0,x_0)$ and the map $t\mapsto \Phi^1(t,x)$ a diffeomorphism from $(0,+\infty)$ to $(x,1)$, for each $x$.
Since $u_0,u_1\sim \calE(\lambda)$, the r.v. $\Phi^0(u_0,x_0)$ has a density, positive on $(0,x_0)$ and similarly, $\Phi^1(u_1,x_1)$ has a density, positive on $(x_1,1)$ (for each $x_1)$. 
We stress that $\Phi^0(u_0,x_0)$ is the solution to the ODE $x'(t)= h_0(x(t))$, $x(0)=x_0$, evaluated at the random time $u_0$. Since $u_0$ and $u_1$ are independent, this implies that $\Phi^1_{u_1}\circ\Phi^0_{u_0}(x_0)$ has a density, positive on $(0,1)$, hence $\tilde\prob_z(\tilde Z_2\in C)>0$. 
\end{proof}

\medskip\noindent
\begin{proof}[Proof of Claim \ref{claim2}]
Fix $\underline{a},\bar a\in (0,1)$ such that $\underline{a}<\bar a<a$, and let $\nu$ be the uniform distribution on 
$B_0\cup B_1$, where 
\[B_0:= \{0\}\times [1-\bar a,1-\underline{a}]\mbox{ and } B_1:=\{1\}\times [\underline{a},\bar a].\]
We prove the existence of $c>0$ s.t. 
\begin{equation}\label{goal}\tilde\prob(z,A)+\tilde\prob^2(z,A)\geq c\nu(A), \mbox{ for each $z\in C$ and each $A$.} \end{equation}

For given $x_0\in (0,1)$, the density of $\Phi^0(u_0,x_0)$ is given  by 
\[\psi_{x_0}(x):=\frac{\lambda e^{-\lambda t}}{|h_0\left(\Phi^0(t,x_0)\right)|},\ x\in (0,x_0)\]
where $t:=t_{x_0,x}$ solves $\Phi^0(t,x_0)=x$: $t$ is the amount of time to reach $x$ starting from $x_0$.

On the compact set $[a,1-a]\times [\underline{a},\bar a]$, the map $(x_0,x)\mapsto t_{x_0,x}$ is continuous and $\psi_{x_0}(x)$ positive, hence there exists $c_0>0$ s.t. $\phi_{x_0}(x)\geq c_0>0$ for every $(x_0,x)\in  [a,1-a]\times [\underline{a},\bar a]$. This implies 
\begin{equation}\label{eq21}\tilde \prob_{0,x_0}(\tilde Z_1\in A) \geq c_0 \nu(A\cap B_1),\mbox{ for each $x_0\in [a,1-a]$.}\end{equation}

A similar argument shows the existence of $c_1\in (0,1)$ such that 
\begin{equation}\label{eq22}\tilde \prob_{1,x_0} (\tilde Z_1\in A)\geq c_1 \nu(A\cap B_0)\mbox{ for each }x_0\in [\underline{a},\bar a].\end{equation}
Combining (\ref{eq21}) and (\ref{eq22}) shows that the inequality
\begin{equation}\label{eq23}
\tilde\prob(z, A)+\tilde\prob^2(z, A)\geq c_0c_1\nu(A)
\end{equation}
holds for each $z=(0,x_0)$ with $x_0\in [a,1-a]$ and by symmetry, also holds for $z=(1,x_0)$. 
\end{proof}
\bigskip

The proof of \textbf{R2} is easy.  The set $\Theta\times \{a\}$ is absorbing for each $a\in A$, and the unique invariant distribution is the uniform distribution $\calU_a$ over the two-point set $\Theta\times \{a\}$. Using \textbf{R1}, the set of invariant distributions is therefore the convex hull of $\{\mu,\calU_1,\calU_0\}$. The symmetric ones are the convex combinations of $\mu$ and $\calU$.
\end{proof}

\subsubsection{Interim beliefs and the existence of an equilibrium exhibiting social learning.}

For a strategy $b\in [\frac{2}{3},\frac{1}{3(1-\pi)})$ and to avoid confusion, we denote by $\mu^b\in \Delta\left(\Theta\times (0,1)\right) $ the unique invariant measure  that is supported by $\Theta\times (0,1)$, and we denote by $f_\theta^b$ the density of $\mu^b_\theta$, as given by \textbf{R1} in Proposition \ref{lemm invariant}. 

For $k\in \{0,1,2,3\}$, we define  $LR_k(b)$ to be the likelihood ratio of one's belief with a sample of $k$, assuming $(\theta,x)\sim \mu^b$. Formally, 

\[LR_k(b):= \frac{\int_0^1 x^k(1-x)^{3-k}  d\mu^b_1(x)}{\int_0^1 x^k(1-x)^{3-k} d\mu^b_0(x)} = \frac{\int_0^1 x^k(1-x)^{3-k}  f^b_1(x)dx}{\int_0^1 x^k(1-x)^{3-k} f^b_0(x)dx.}\]

The next lemma summarizes the key properties of $LR_k$.

\begin{lemma}\label{lemm LR}
The following properties hold: 
\begin{description}
\item[P1] For each $k$, the function $LR_k$ is continuous over $\left[\frac23,\frac{1}{3(1-\pi)}\right)$.
\item[P2] For each $b\in (\frac23,\frac{1}{3(1-\pi)})$  $LR_k(b)$ is (strictly) increasing in $k$.
\item[P3] $LR_2(\frac23)=1$ for each $\lambda$.
\item[P4] $\lim_{b\to \frac{1}{3(1-\pi)}}\lim_{\lambda\to 0}LR_2(b)=+\infty$.
\end{description}
\end{lemma}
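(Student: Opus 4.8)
The plan is to exploit the explicit form of the invariant densities given by \textbf{R1} in Proposition \ref{lemm invariant}. Write $h_\theta(x)=x(1-x)\ell_\theta(x)$ with $\ell_\theta(x)=sx+3b\pi_\theta-1$, $s:=2-3b$, and recall $\pi_1=\pi$, $\pi_0=1-\pi$; set $A:=\ell_1(0)=3b\pi-1>0$ and $B:=\ell_0(0)=3b(1-\pi)-1<0$, so that $\ell_1>0>\ell_0$ on $(0,1)$ and $A-B=3b(2\pi-1)>0$. Using the unnormalized densities from (\ref{sol}), namely $f_1=E/h_1$ and $f_0=-E/h_0$ with $E(x)=\exp\!\big(-\lambda\int_{1/2}^x(1/h_0+1/h_1)\,dt\big)$, and noting that the common normalizing constant cancels in a ratio, I would reduce each likelihood ratio (using $x^k(1-x)^{3-k}/h_\theta=x^{k-1}(1-x)^{2-k}/\ell_\theta$) to
\[LR_k(b)=\frac{\int_0^1 x^{k-1}(1-x)^{2-k}\,\ell_1(x)^{-1}E(x)\,dx}{\int_0^1 x^{k-1}(1-x)^{2-k}\,(-\ell_0(x))^{-1}E(x)\,dx}.\]
The single most useful fact is that $E$ cancels pointwise in the ratio of densities: $f_1/f_0=-h_0/h_1=-\ell_0/\ell_1$.

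\textbf{P3} is then immediate: at $b=\frac23$ one has $s=0$, hence $1/h_0+1/h_1=\frac{1}{x(1-x)}\big(\ell_0^{-1}+\ell_1^{-1}\big)\equiv0$ because $\ell_0=2(1-\pi)-1=-(2\pi-1)=-\ell_1$; thus $E\equiv1$ and $f_1=f_0$, and since the $k=2$ integrals converge, $LR_2(\frac23)=1$. For \textbf{P2} I would first check that $f_1/f_0=-\ell_0/\ell_1$ is strictly increasing on $(0,1)$: its derivative equals $-s(A-B)/\ell_1(x)^2>0$ for $b>\frac23$ (since $s<0$ and $A-B>0$). Writing $d\nu_k\propto x^k(1-x)^{3-k}f_0\,dx$, a probability measure on $(0,1)$ when $\lambda>0$, one has $LR_k(b)=\E_{\nu_k}[f_1/f_0]$, and $d\nu_{k+1}/d\nu_k\propto x/(1-x)$ is increasing, so $\nu_{k+1}$ dominates $\nu_k$ in the likelihood-ratio (hence stochastic) order. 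Since $f_1/f_0$ is strictly increasing, $LR_{k+1}(b)>LR_k(b)$, which is \textbf{P2}.

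For \textbf{P1} I would prove continuity of $b\mapsto LR_k(b)$ (at fixed $\lambda$) by dominated convergence on compact subintervals $[\frac23,b_1]\subset[\frac23,\frac{1}{3(1-\pi)})$. A sign analysis of $h_0+h_1=x(1-x)\,s(2x-1)$ against $h_0h_1<0$ shows $1/h_0+1/h_1\ge0$ on $(\frac12,1)$ and $\le0$ on $(0,\frac12)$, so $E(x)\le1$ uniformly in $b$. Combined with the uniform lower bound $\min(\ell_1,-\ell_0)\ge 1-3b_1(1-\pi)>0$ on $[0,1]$, this gives a $b$-uniform integrable dominating function for $k=1,2$ (the integrands being manifestly continuous in $b$ pointwise); the ratio formula thus extends continuously to the left endpoint $b=\frac23$ even though $\mu^b$ is not normalizable there. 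The remaining cases $k=0,3$ follow from the symmetry relation $LR_k=1/LR_{3-k}$, obtained by the substitution $x\mapsto1-x$ together with $f_1(1-x)=f_0(x)$.

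The main obstacle is \textbf{P4}, which I would treat as a two-stage limit. For the inner limit, fix $b\in(\frac23,\frac{1}{3(1-\pi)})$: then $\ell_1>0>\ell_0$ on all of $[0,1]$, the integrands $\frac{x}{\ell_1}E$ and $\frac{-x}{\ell_0}E$ are bounded by $b$-dependent constants (using $E\le1$), and $E(x)\to1$ pointwise as $\lambda\to0$, so dominated convergence yields $\lim_{\lambda\to0}LR_2(b)=N(b)/D(b)$, where elementary integration gives
\[N(b)=\int_0^1\frac{x}{\ell_1(x)}\,dx=\frac1s-\frac{A}{s^2}\ln\frac{s+A}{A},\qquad D(b)=\int_0^1\frac{-x}{\ell_0(x)}\,dx=-\frac1s+\frac{B}{s^2}\ln\frac{s+B}{B}.\]
For the outer limit $b\to\frac{1}{3(1-\pi)}^-$, the critical point is that $\ell_1(1)=s+A=1-3b(1-\pi)\to0^+$, so $\ln\frac{s+A}{A}\to-\infty$ and $N(b)\to+\infty$, whereas $B=\ell_0(0)\to0^-$ with $B\ln\frac{s+B}{B}\to0$, so $D(b)\to-\frac1s$, a finite positive limit. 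Hence $LR_2(b)=N(b)/D(b)\to+\infty$. The delicate points I would have to nail down in the appendix are justifying that the inner limit commutes with these improper integrals (the $b$-dependent domination degenerates only in the \emph{outer} limit, so it is safe here) and correctly isolating the logarithmic blow-up of $N$ against the convergence of $D$.
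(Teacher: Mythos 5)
Your proof is correct on every point that matters, and for \textbf{P1}--\textbf{P3} it is essentially the paper's own argument: the same unnormalized densities $f_1=E/h_1$, $f_0=-E/h_0$ from (\ref{sol}) (with the common normalization cancelling in each ratio), the same monotone-likelihood-ratio fact that $f_1/f_0=-h_0/h_1$ is increasing driving \textbf{P2} via stochastic dominance (the paper tilts the measures by $x^k(1-x)^{2-k}$ and compares means of $x$ and $1-x$; you take expectations of the increasing function $f_1/f_0$ under LR-ordered measures $\nu_k$ --- the same mechanism), and the same degeneration $h_0+h_1\equiv 0$, $E\equiv 1$ at $b=\frac23$ for \textbf{P3}.

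Where you genuinely depart from the paper is the inner limit in \textbf{P4}, and your version is the one that actually withstands scrutiny. The paper's Claim \ref{claim3} asserts that the measure $\nu_\lambda$ with density proportional to $e^{-\lambda g}$ converges weakly to $\frac12\left(\delta_0+\delta_1\right)$ as $\lambda\to 0$, and deduces $\lim_{\lambda\to 0}LR_2(b)=-h_0^{b,*}(1)/h_1^{b,*}(1)$. But since $g\geq 0$ is finite on $(0,1)$ with $g(\frac12)=0$, one has $e^{-\lambda g}\to 1$ pointwise and $\int_0^1 e^{-\lambda g}\,dx\to 1$, so $\nu_\lambda$ converges to Lebesgue measure, not to the endpoint average (its density even vanishes at the endpoints for every $\lambda>0$). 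Your computation --- dominated (indeed monotone) convergence at fixed $b$, legitimate because $E\leq 1$ and $\ell_1,-\ell_0$ are bounded away from zero on $[0,1]$ --- yields the correct inner limit $N(b)/D(b)$, which differs from the paper's boundary ratio at generic $b$. Since $N(b)\to+\infty$ (logarithmic blow-up as $1-3b(1-\pi)\to 0^+$) while $D(b)$ tends to the finite positive limit $-1/s$ (with $s=2-3b$), \textbf{P4} holds all the same; your route repairs rather than reproduces the paper's argument here.

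The one genuine, though minor and fixable, gap is \textbf{P1} for $k\in\{0,3\}$: invoking $LR_k=1/LR_{3-k}$ is circular, since it merely exchanges $LR_0$ and $LR_3$, neither of whose continuity has been established. At interior values of $b$ these two cases require the same dominated-convergence argument with a locally uniform bound of the form $E(x)\leq C(1-x)^{\lambda c}$ near $x=1$ (as in the paper's estimate (\ref{integ})), and symmetrically near $x=0$; at $b=\frac23$ the $k\in\{0,3\}$ integrals diverge, so those ratios are not even defined there. This defect is shared by the paper's own one-line proof of \textbf{P1}, and it is immaterial for how the lemma is used (Proposition \ref{prop eq CT} only needs $LR_2$ on $\left[\frac23,\bar b\right]$ and $LR_3>LR_2$ at interior points), but you should patch it directly rather than appeal to symmetry.
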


\begin{proof}
We start with \textbf{P1}. Observe that $b\mapsto f_\theta^b(x)$ is continuous for each $x\in (0,1)$. Since $x^k(1-x)^{3-k}f^b_\theta(x)>0$ for each $x\in (0,1)$, the  (general) dominated convergence theorem implies that $b\mapsto\displaystyle \int_0^1x^k(1-x)^{n-k} f_\theta^b(x) dx$ is continuous. 
\medskip

We turn to \textbf{P2}. Let $k<3$ and denote by $\tilde \mu^b_\theta\in \Delta((0,1))$ the probability measure with density $\displaystyle \tilde f_\theta^b(x):= \frac{x^k(1-x)^{2-k} f_\theta^b(x)}{\int_0^1x^k (1-x) ^{2-k} f_\theta^b(x) dx}$. For $b\in \left(\frac23,\frac{1}{3(1-\pi)}  \right)$, 
\[\frac{f_1^b(x)}{f_0^b(x)}= -\frac{h_0^b(x)}{h_1^b(x)}= \frac{1-3b(1-\pi)+x(3b-2)}{3b\pi-1 -x(3b-2)}.\]
Thus, the ratio $\tilde f_1^b/\tilde f_0^b$ is increasing, which implies that $\tilde \mu^b_1$ (strictly) dominates $\tilde \mu_0^b$ in the first-order sense. 
Since $x\mapsto 1-x$ and $x\mapsto x$ are resp. decreasing and increasing, one obtains
\[\frac{\int_0^1 (1-x)d\tilde \mu_1^b(x)}{\int_0^1 (1-x)d\tilde \mu_0^b(x)} < 1 < \frac{\int_0^1 x d\tilde \mu_1^b(x)}{\int_0^1 x d\tilde \mu_0^b(x)},\]
which implies in turn $LR_k(b)<LR_{k+1}(b)$. 
\medskip

\textbf{P3} is straightforward. Indeed, for $b=\frac23$, $h_0^b(\cdot)+h_1^b(\cdot)=0$, hence $f_0^b=f_1^b$.
\bigskip\noindent

We conclude with \textbf{P4}. Define $h_\theta^{b,*}(x):=\displaystyle \frac{h_\theta^b(x)}{x(1-x)}$, and observe that $h_\theta^{b,*}(\cdot)>0$ on the \emph{closed} interval $[0,1]$. \textbf{P4} follows from Claim \ref{claim3} below, since $\displaystyle  -\frac{h_0^{b,*}(1)}{h_0^{b,*}(1)}= \frac{3b\pi-1}{1-3b(1-\pi)}$, which converges to $+\infty$ when $b\to \displaystyle \frac{1}{3(1-\pi)}$. 

\begin{claim}\label{claim3}
For $b\in (\frac23,\frac{1}{3(1-\pi)}) $, one has 
\[\lim_{\lambda\to 0} LR_2(b)= -\frac{h_0^{b,*}(1)}{h_1^{b,*}(1)}\]
\end{claim}

\begin{proof}[Proof of the claim]
Define  $\displaystyle g(t):=\int_{\frac12}^t \left\{\frac{1}{h_0(x)}+\frac{1}{h_1(x)}\right\}dx$. Observe that $g>0$ on $(0,1)$, $g(t)=g(1-t)$ for each $t\in (0,1)$, and that $\lim_{t\to 1} g(t)=+\infty$. These properties ensure that 
the measure $\nu_\lambda$ on $(0,1)$ with density $\displaystyle \frac{e^{-\lambda g(t)}}{\int_{[0,1]} e^{-\lambda g}}$, converges (weakly) to $\nu_*:=\displaystyle \frac12\left(\delta_0 +\delta_1\right)$ as $\lambda\to 0$. 

Since 
\[LR_2(b)= \frac{\int_0^1 \frac{x}{h_1^{b,*}(x)} \exp\left(-\lambda g(x)\right) dx}{\int_0^1 \frac{-x}{h_0^{b,*}(x)} \exp\left(-\lambda g(x)\right) dx}= 
\frac{\int_0^1\frac{x}{h_1^{b,*}(x)} d\nu_\lambda(x)}{\int_0^1\frac{-x}{h_0^{b,*}(x)} d\nu_\lambda(x)},\]

it follows that\footnote{Since the limit denominator is non-zero.}
\[\lim_{\lambda \to 0}LR_2(b)= \frac{\int_0^1\frac{x}{h_1^{b,*}(x)} d\nu_*(x)}{\int_0^1\frac{-x}{h_0^{b,*}(x)} d\nu_*(x)}= -\frac{h_0^{b,*}(1)}{h_1^{b,*}(1)}.\]
\end{proof} 

This concludes the proof of Lemma \ref{lemm LR}.
\end{proof}
\medskip

\subsubsection{Conclusion of the continuous-time analysis}

Proposition \ref{prop eq CT} below concludes, by showing the existence of an ESS of $\Gamma(\lambda,\rho)$ with a welfare that exceeds $\hat p$. 

\begin{proposition}\label{prop eq CT}
For $\lambda>0$ small enough, there  exists  $b_*\in \displaystyle \left(\frac23,\frac{1}{3(1-\pi)}\right)$, and  $\bar a\in (0,1)$ such that the following holds. 
\begin{itemize}
\item $(b_*,\mu)$ is an ESS of $\Gamma (\lambda,\rho)$ if and only if $\mu=a \calU +(1-a) \mu^{b_*}$ for some $a\leq \bar a$.
\end{itemize}
In addition, there exists $\underline{b}\in (\frac23,b_*)$ and $\bar b\in (b_*,\frac{1}{3(1-\pi)})$ such that 
\begin{equation}\label{eq 24}LR_2(\underline{b}) < \frac{\hat p}{1-\hat p} < LR_2(\bar b).\end{equation}
\end{proposition}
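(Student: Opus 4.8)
The plan is to build the equilibrium strategy $b_*$ by an intermediate-value argument on $LR_2$, and then to determine \emph{all} compatible invariant measures from the elementary observation that the uniform boundary component $\calU$ is invisible to balanced samples. First I record that Assumption \ref{Hyp1} forces $\hat p>\frac12$: since $v(\frac12)-c>\frac12=u(\frac12)$, the belief $\frac12$ lies in the open acquisition interval $(1-\hat p,\hat p)$, whence $r:=\frac{\hat p}{1-\hat p}$ is a fixed number in $(1,+\infty)$ (using $\hat p<1$). By \textbf{P3} we have $LR_2(\frac23)=1<r$, so continuity (\textbf{P1}) lets me pick $\underline b$ just above $\frac23$ with $LR_2(\underline b)<r$. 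By Claim \ref{claim3}, $\lim_{\lambda\to0}LR_2(b)=\frac{3b\pi-1}{1-3b(1-\pi)}\to+\infty$ as $b\to\frac{1}{3(1-\pi)}$ (this is \textbf{P4}); hence I fix $\bar b<\frac{1}{3(1-\pi)}$ with $\lim_{\lambda\to0}LR_2(\bar b)>r$, so that $LR_2(\bar b)>r$ for all $\lambda$ small enough. This is exactly \eqref{eq 24}, and by \textbf{P1} and the intermediate value theorem there is $b_*\in(\underline b,\bar b)$ with $LR_2(b_*)=r$; equivalently the interim belief after a sample $k=2$ drawn from $\mu^{b_*}$ equals $\hat p$, and by the symmetry relation $p_1=1-p_2$ the belief after $k=1$ equals $1-\hat p$.

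Next I characterize which invariant measures support $b_*$ as a best reply. By \textbf{R2} of Proposition \ref{lemm invariant}, every symmetric invariant measure for the process induced by $b_*$ has the form $\mu=a\calU+(1-a)\mu^{b_*}$ with $a\in[0,1]$, so invariance is automatic and only optimality needs checking. The key computation is that $\calU$, being supported on $x\in\{0,1\}$, makes the sampling weight $x^k(1-x)^{3-k}$ vanish at both endpoints for the non-unanimous samples $k=1,2$, so it contributes $0$ to both numerator and denominator of the likelihood ratio there; hence for every $a<1$ the values $LR_1,LR_2$ under $\mu$ equal those under $\mu^{b_*}$. Consequently $p_2=\hat p$ and $p_1=1-\hat p$ for all $a\in[0,1)$, agents with a balanced sample stay exactly indifferent, and mixing with the interior probability $b_*$ is optimal (the prescribed majority action — $0$ when $k=1$, $1$ when $k=2$ — is the belief-optimal no-information action, since $p_1<\frac12<p_2$).

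It remains to control the unanimous samples. For $k=0$ write $A_\theta:=\int_0^1(1-x)^3\,d\mu^{b_*}_\theta$; then $LR_0^{(a)}=\frac{\frac a2+(1-a)A_1}{\frac a2+(1-a)A_0}$, with $A_1=A_0\,LR_0(b_*)<A_0$ because $LR_0(b_*)<LR_1(b_*)=\frac{1-\hat p}{\hat p}<1$ by \textbf{P2}. A one-line differentiation gives $\frac{d}{da}LR_0^{(a)}=\frac{\frac12(A_0-A_1)}{(\frac a2+(1-a)A_0)^2}>0$, so $LR_0^{(a)}$ increases strictly from $LR_0(b_*)<\frac{1-\hat p}{\hat p}$ at $a=0$ up to $1>\frac{1-\hat p}{\hat p}$ as $a\to1$. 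Hence $p_0^{(a)}=\frac{LR_0^{(a)}}{1+LR_0^{(a)}}$ increases strictly, and there is a unique $\bar a\in(0,1)$ with $p_0^{(a)}\le1-\hat p\iff a\le\bar a$; by symmetry $p_3^{(a)}=1-p_0^{(a)}\ge\hat p$ under the same condition. Since optimality of $b_*$ requires precisely $p_0\le1-\hat p$ and $p_3\ge\hat p$ at the unanimous samples (otherwise $p_0\in(1-\hat p,\hat p)$ would force $\beta(0)=1\neq0$), and since by \textbf{R2} every symmetric invariant measure for $b_*$ is of the stated form, the pair $(b_*,\mu)$ is an ESS if and only if $a\le\bar a$. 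This proves both assertions of the proposition.

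The only place where the smallness of $\lambda$ is genuinely used — and the main obstacle — is producing $\bar b$ with $LR_2(\bar b)>r$: the blow-up of $LR_2$ near $\frac{1}{3(1-\pi)}$ holds only after sending $\lambda\to0$ (Claim \ref{claim3}), so one must interchange this limit with the fixed-$b$ evaluation carefully. Once $b_*$ is pinned down, the remainder — the exact indifference at balanced samples and the monotone threshold $\bar a$ — is elementary, the decisive simplification being that the boundary component $\calU$ is invisible to non-unanimous samples and therefore leaves the equilibrium condition $p_2=\hat p$ untouched while only relaxing the herding inequalities at $k=0,3$.
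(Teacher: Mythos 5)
Your proof is correct and follows essentially the same route as the paper's: an intermediate-value argument on $LR_2$ using \textbf{P1}, \textbf{P3}, \textbf{P4} to pin down $b_*$, then Proposition \ref{lemm invariant} (\textbf{R2}) plus the observation that $\calU$ contributes nothing at non-unanimous samples, and finally a monotone threshold in $a$ at the unanimous samples. The only (cosmetic) differences are that you work with the increasing ratio $LR_0^{(a)}$ rather than the paper's decreasing $l_3(a)$, and that you verify the monotonicity by an explicit differentiation where the paper merely asserts it.
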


\begin{proof}
The result follows easily from Lemma \ref{lemm LR}. By  \textbf{P4}, there exist $\bar b$ and $\lambda_0$ such that $LR_2(\bar b)>\frac{\hat p}{1-\hat p}$ for all $\lambda <\lambda_0$. For such a $\lambda$, the existence of $\underline{b},b_*$, such that 
$\frac23<\underline{b}< b_*<\bar b$ and $LR_2(\underline{b})<\frac{\hat p}{1-\hat p}=LR_2(b_*)$ follows from \textbf{P3} and the intermediate value theorem. 
Since $LR_3(b_*)>LR_2(b_*)$, the pair  $(b_*,\mu^{b_*})$ is an ESS of $\Gamma(\lambda, \rho)$. 

Let now $\mu$ be an invariant distribution for $b_*$. By Lemma \ref{lemm invariant}, $\mu= a\calU +(1-a)\mu^{b_*}$ for some $a\in [0,1]$. Write $l_k(a)$ for the likelihood ratio of one's belief with a sample of $k$, assuming $(\theta, x) \sim a\calU +(1-a)\mu^{b_*}$. Since the probability of a 'confusing' sample $k=2$ is 0 under $\calU$, $l_2(a)$ is independent of $a$ and thus,  $l_2(a)=l_2(0)=\frac{\hat p}{1-\hat p}$ for each $a$. Hence $(b_*,\mu)$ is an  ESS if and only if $l_3(a)\geq \frac{\hat p}{1-\hat p}$. Since $l_3$ is continuous and decreasing, with $l_3(0)>\frac{\hat p}{1-\hat p}> 1= l_3(1)$, the existence of $\bar a$ follows. 
\end{proof}

\subsection{Convergence to continuous-time}

This section contains the (mostly technical) issues related to the convergence of the discrete-time analysis to the continuous-time one.  In Section \ref{sec invariant}, we show that if $(\sigma_\ep)$ is a family of strategies that converges to $\sigma$, then invariant measures $\mu^\ep$ for $\sigma_\ep$ in $G(\lambda\ep,\rho\ep)$ converge to invariant measures $\mu$ for $\sigma$ in $\Gamma(\lambda, \rho)$. In Section \ref{sec tight}, we prove that the limit $\mu$ of distributions that are concentrated on $\Theta\times (0,1)$ is also concentrated on $\Theta\times (0,1)$, thereby excluding cases where the limit distribution might put a positive weight on the extreme points. Armed with these technical preliminaries, we show in Section \ref{sec conclusion} that for $\ep$ small enough, the game $G(\lambda\ep,\rho\ep)$ has an equilibrium $(b^\ep,\mu^\ep)$ that is 'close' to the equilibrium $(b_*,\mu^{b_*})$ of Proposition \ref{prop eq CT}, and conclude.

\subsubsection{Limits of invariant measures are invariant measures}\label{sec invariant}

This section is devoted to  Proposition \ref{prop convergence} below. 
\begin{proposition}\label{prop convergence}
For $\ep>0$, let  a strategy $\sigma_\ep$, and a distribution $\mu_\ep\in \Delta(\Theta\times [0,1])$ be given. Assume that $(\sigma_\ep)_\ep$ and $(\mu_\ep)_\ep$ converge, with limits $\sigma=\lim_{\ep\to 0} \sigma_\ep$ and 
$\mu=\lim_{\ep\to 0} \mu_\ep$.

If $\mu_\ep$ is invariant for $\sigma_\ep$ in $G(\lambda\ep,\rho\ep)$, then $\mu$ is invariant for $\sigma$ in $\Gamma(\lambda,\rho)$.
\end{proposition}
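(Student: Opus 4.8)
The plan is to characterize invariance through the infinitesimal generator and to show that the rescaled one-step operators of the discrete chains converge to it. Write $P_\ep$ for the transition operator of the discrete chain $(\theta_t,x_t)$ induced by $\sigma_\ep$ in $G(\lambda\ep,\rho\ep)$, and let $L$ be the continuous-time generator attached to $\sigma$ in $\Gamma(\lambda,\rho)$, namely $L\xi(\theta,x)=h_\theta(x)\xi_\theta'(x)+\lambda\left(\xi_{1-\theta}(x)-\xi_\theta(x)\right)$ as in \eqref{generator}, with $h_\theta=h_\theta^\sigma$. Invariance of $\mu_\ep$ reads $\int(P_\ep f-f)\,d\mu_\ep=0$ for every bounded measurable $f$. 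Recalling that in $G(\lambda\ep,\rho\ep)$ the state flips with probability $\lambda\ep$ and the population then moves \emph{exactly} by $x\mapsto x+\ep\,h^{\sigma_\ep}_{\theta'}(x)$ (the renewal fraction being $\rho\ep$, so the one-step image $(1-\rho\ep)x+\rho\ep\sum_k\binom{n}{k}x^k(1-x)^{n-k}\phi^{\sigma_\ep}_{\theta'}(k)$ is a convex combination of points of $[0,1]$, hence stays in $[0,1]$), I would fix a test pair $\xi=(\xi_\theta)$ with $\xi_\theta\in C^2([0,1])$ and write
\[P_\ep\xi(\theta,x)=(1-\lambda\ep)\,\xi_\theta\!\left(x+\ep h^{\sigma_\ep}_\theta(x)\right)+\lambda\ep\,\xi_{1-\theta}\!\left(x+\ep h^{\sigma_\ep}_{1-\theta}(x)\right).\]

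A first-order Taylor expansion then yields
\[\frac{P_\ep\xi(\theta,x)-\xi(\theta,x)}{\ep}=h^{\sigma_\ep}_\theta(x)\,\xi_\theta'(x)+\lambda\left(\xi_{1-\theta}(x)-\xi_\theta(x)\right)+R_\ep(\theta,x),\]
where $\sup_{\theta,x}|R_\ep(\theta,x)|\le C\ep$ with $C$ depending only on $\|\xi''\|_\infty$, $\sup_\theta\|h_\theta\|_\infty$ and $\lambda$ (the displacement stays in $[0,1]$, so the remainder is controlled by $\xi''$). Because each $h_\theta$ is a polynomial in $x$ whose coefficients depend continuously — indeed affinely — on the induced action probabilities $\phi_\theta(k)$, the convergence $\sigma_\ep\to\sigma$ gives $h^{\sigma_\ep}_\theta\to h^\sigma_\theta$ uniformly on $[0,1]$. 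Combining the two displays, $\bigl(P_\ep\xi-\xi\bigr)/\ep\to L\xi$ uniformly on $\Theta\times[0,1]$.

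With this uniform convergence I would pass to the limit in the invariance identity: $0=\int\tfrac1\ep(P_\ep\xi-\xi)\,d\mu_\ep=\int L\xi\,d\mu_\ep+\int\bigl(\tfrac1\ep(P_\ep\xi-\xi)-L\xi\bigr)\,d\mu_\ep$. The second integrand equals $(h^{\sigma_\ep}_\theta-h^\sigma_\theta)\xi_\theta'+R_\ep$, whose sup-norm tends to $0$, so the second term vanishes since the $\mu_\ep$ are probability measures; meanwhile $L\xi$ is continuous (as $\xi\in C^1$ and $h^\sigma$ is continuous) and bounded on the compact space $\Theta\times[0,1]$, so $\int L\xi\,d\mu_\ep\to\int L\xi\,d\mu$ by the weak convergence $\mu_\ep\to\mu$. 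Hence $\int L\xi\,d\mu=0$ for every $\xi\in C^2(\Theta\times[0,1])$.

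Finally I would invoke the generator characterization of invariant measures (\cite{varadhan}, Section 7.4, as already used in Lemma \ref{verif}; see also \cite{benaim} for piecewise-deterministic Markov processes): since $\int L\xi\,d\mu=0$ for all $\xi$ in the core $C^2(\Theta\times[0,1])$, $\mu$ is invariant for $(\theta_t,x_t)$ under $\sigma$ in $\Gamma(\lambda,\rho)$. The one delicate point — and the step I expect to be the main obstacle — is justifying that no boundary condition at $x\in\{0,1\}$ is required, i.e. that testing against functions that do not vanish at the endpoints is legitimate. This holds because the flow points inward (or is tangent) at both endpoints: $h^\sigma_\theta(0)=\rho\,\phi_\theta(0)\ge0$ and $h^\sigma_\theta(1)=\rho\left(\phi_\theta(n)-1\right)\le0$, so the deterministic flow never exits $[0,1]$ and no probability flux escapes. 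Consequently the first-order operator $L$ on $C^1(\Theta\times[0,1])$ is a genuine generator for which $\int L\xi\,d\mu=0$ characterizes stationarity, and the proof concludes.
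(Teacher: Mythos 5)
Your proof is correct, but it takes a genuinely different route from the paper's. The paper argues at the level of trajectories and semigroups: it identifies the discrete chain of $G(\lambda\ep,\rho\ep)$ with a c\`adl\`ag continuous-time process, couples its geometric sojourn times with the exponential sojourn times of $\Gamma(\lambda,\rho)$ on a single probability space, checks (uniformly in the initial condition) that the composed Euler maps converge to the composed flows $\Phi^\theta$, and then passes to the limit in the identity $\mu_\ep\prob^{t,\ep}f=\mu_\ep f$ by dominated convergence, obtaining $\mu\prob^{t}f=\mu f$ for every $f\in C(\Theta\times[0,1])$ and every $t$ --- which is invariance by definition. You instead work at the level of generators: a Taylor expansion gives $(P_\ep\xi-\xi)/\ep\to L\xi$ uniformly for $C^2$ test pairs, the exact identity $\int(P_\ep\xi-\xi)\,d\mu_\ep=0$ then passes to the limit to yield $\int L\xi\,d\mu=0$, and you conclude from the generator characterization of stationarity. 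Your route is shorter --- it needs neither the clock coupling nor any ODE-approximation estimate, only that the one-step displacement is exactly $\ep h^{\sigma_\ep}_\theta(x)$, stays in $[0,1]$, and that $h^{\sigma_\ep}_\theta\to h^\sigma_\theta$ uniformly --- but its last step is a genuine theorem rather than a definition: the implication ``$\int L\xi\,d\mu=0$ on a core $\Rightarrow$ $\mu$ invariant'' (Echeverr\'{\i}a's theorem; see Ethier and Kurtz, \emph{Markov Processes}, Chapter 4.9) requires the limit process to be Feller and the test class to be a core for $L$. Both hold here, and can even be verified by hand: because the polynomial field $h^\sigma_\theta$ points inward (or is tangent) at $x\in\{0,1\}$, as you note, and its flow depends smoothly on the initial condition, the semigroup $T_t$ of $\Gamma(\lambda,\rho)$ maps $C^2$ into $C^2$; then $\frac{d}{dt}\int T_t\xi\,d\mu=\int L\left(T_t\xi\right)d\mu=0$ for each $\xi\in C^2$, so $\int T_t\xi\,d\mu=\int\xi\,d\mu$, and density of $C^2$ in $C(\Theta\times[0,1])$ gives invariance. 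Since the paper itself invokes exactly this characterization (citing \cite{varadhan}, Section 7.4) in Lemma \ref{verif}, your appeal to it matches the paper's own standard of rigor; just make the Feller/core (or the differentiation) argument explicit, since the paper's proof of Proposition \ref{prop convergence} is structured precisely so as not to need it.
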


\begin{proof}
We let $\sigma_\ep,\sigma,\mu_\ep$ and $\mu$ be given as stated. As above, we denote by $(z_t)_{t\geq 0}=(\theta_t,x_t)_{t\geq 0}$ the continuous-time Markov process induced by $\sigma$ in $\Gamma(\lambda,\rho)$, and by $\prob$ its kernel. In particular, $\prob^t_{z_0}$ is the law of $z_t$ given $z_0$.

For $\ep>0$, we denote by  $(\tilde z_m^\ep)_{m\in \dN}:= (\tilde \theta_m^\ep,\tilde x_m^\ep)_{m\in \dN}$ the discrete-time, Markov chain induced by $\sigma_\ep$ in $G(\lambda\ep,\rho \ep)$. Following the interpretation of $G(\lambda\ep,\rho\ep)$ as a discretized version of $\Gamma(\lambda,\rho)$, we identify $(\tilde z^\ep_m)$  with the c\`adl\`ag, continuous-time process $(z_t^\ep)=  (\theta_t^\ep,x_t^\ep)_{t\geq 0}$ defined by 
$z^\ep_t:= \tilde z^\ep_{\lfloor \frac{t}{\ep}\rfloor}$.\footnote{To avoid confusion between time and stages, we restrict the use of $t$ to denote the underlying, 'physical' continuous-time, and we label the stages of $G(\lambda \ep, \rho\ep)$ as $m=0,1,2,\ldots$. Thus, the $m$-th stage of $G(\lambda\ep,\rho \ep)$ takes place at time $m\ep$.} We also denote by $\prob_{z_0}^{t,\ep}$ the law of $z^\ep_t$ when starting from  $z_0$. Although $(z^\ep_t)$ is not a continuous-time Markov process, the distribution $\mu_\ep$ can nevertheless be thought of as an invariant measure for $(z_t)$, in the sense that $\mu_\ep \prob^{t,\ep}_{\cdot}= \mu_\ep$ for each $t>0$.\footnote{This follows since $\mu_\ep$ is an invariant distribution for the Markov chain $(\tilde z_m^\ep)$.}

We need to prove that $\mu \prob^tf = \mu f $ for each continuous function $f\in C(\Theta\times [0,1])$. Since $(\mu_\ep)$ weakly converges to $\mu $, one has $\lim_{\ep\to 0} \mu_\ep f = \mu f$. Since $\mu_\ep \prob^{t,\ep}_{\cdot}f = \mu_\ep f$, we actually need to prove that 
\begin{equation}\label{goal conv}
\lim_{\ep\to 0} \int_{\Theta\times [0,1]} \E_{(\theta, x)\sim \prob^{t,\ep}_{z_0}}[f(\theta,x)] d\mu_\ep(z_0)=  \int_{\Theta\times [0,1]} \E_{(\theta, x)\sim \prob^{t}_{z_0}}[f(\theta,x)] d\mu(z_0).
\end{equation}

Recall from (\ref{explicit}) that $z_t$ is a function of $z_0$ and of the sojourn times $(u_i)$, which are independent and exponentially distributed. 
A similar representation holds for $z_t^\ep$ as well. We denote the successive jump and sojourn times of $(\theta_t^\ep)$ by $(\tau_i^\ep)$ and $(u_i^\ep)$. Note that the random variables $(u^\ep_i/\ep)$ are idd geometric variables with parameter $\lambda\ep$. 

If $t<\tau_1^\ep$, $x^\ep_t=\tilde x^\ep_{\lfloor\frac{t}{\ep}\rfloor}$ is obtained by applying $id + \ep h_{\theta_0}^{\sigma_\ep}$ iteratively: 
\[x_t^\ep:=\Phi^{\theta_0,\ep}(t,x_0):=\left(id + \ep h_{\theta_0}^{\sigma_\ep}\right)\circ \cdots \circ \left(id + \ep h_{\theta_0}^{\sigma_\ep}\right)(x_0),\]
where $id + \ep h_{\theta_0}^{\sigma_\ep}$ is applied $\lfloor\frac{t}{\ep}\rfloor$ times. 

More generally, $x_t^\ep$ is given  by 
\[x_t^\ep= \Phi^{\alpha_k,\ep}_{t-\tau^\ep_k}\circ \cdots \circ \Phi^{\alpha_0,\ep}_{u_0^\ep}(x_0) \mbox{ on the event $\tau_k^\ep \leq t <\tau^\ep_{k+1}$}.\]

We next introduce a coupling between $(z_t)$ and $(z_t^\ep)$, that allows us to view both $z_t$ and $z_t^\ep$ as functions of the exponential variables $(u_i)$.
For fixed $\ep>0$, set $v_i^\ep:=j\ep$ whenever $a_{j-1}\leq u_i<a_j$, with 
$a_j:= \displaystyle -\frac{j}{\lambda}\ln(1-\lambda \ep)$. 
It is straightforward to check that $\frac{v_i^\ep}{\ep}=j$ follows a geometric distribution with parameter $\lambda\ep$. In addition, the r.v.'s $(v_i^\ep)$ inherit independence from the independence of $(u_i)$. Therefore, the distributions of the sequences $(u_i^\ep)$ and $(v_i^\ep)$ coincide. We may thus assume that $(u_i)$ and $(u_i^\ep)$ are defined on the same probability space, with $u_i^\ep$ be given by $v_i^\ep$. In addition, standard calculations show that $|u_i^\ep-u_i|\leq \lambda \ep u_i^\ep$ for $\ep$ small enough, which implies 
\[|u_i^\ep -u_i|\leq 2\lambda \ep u_i^\ep,\mbox{ for every }\ep \mbox{ small enough.}\]

Denote by $\bf{Q}$ the law of the sequence $\vec{u}=(u_i)_{i\in \dN}$, and let   $N:=\max\{p:\tau_p\leq t\}$ and $N_\ep:=\max\{p: \tau_p^\ep\leq t\}$ stand for  the number of state transitions prior to $t$ for $(\theta_t)$ and $(\theta^\ep_t)$. Then   $\mu \prob^t f$ and $\mu_\ep \prob^{t,\ep}f$ are given by

\[\mu \prob^t f = \E_{\vec{u}\sim\bf{Q}}\E_{(\theta_0,x_0)\sim \mu}\left[f\left(\Phi^{\alpha_N}_{t-\tau_N}\circ \Phi^{\alpha_{N-1}}_{u_{N-1}}\circ \cdots \circ \Phi^{\alpha_0}_{u_0}(x_0)\right)\right]\]
and 

\[\mu_\ep \prob^{t,\ep}f=\E_{\vec{u}\sim\bf{Q}}  \E_{(\theta_0,x_0)\sim \mu_\ep}\left[f\left(\Phi^{\alpha_{N_\ep},\ep}_{t-\tau^\ep_{N_\ep}}\circ \cdots \circ \Phi^{\alpha_0,\ep}_{u_0^\ep}(x_0)\right)\right].\]
For fixed $\vec{u}$, it is easily checked, using standard results on ODE, that $f\left(\Phi^{\alpha_{N_\ep},\ep}_{t-\tau^\ep_{N_\ep}}\circ \cdots \circ \Phi^{\alpha_0,\ep}_{u_0^\ep}(x_0)\right)$ converges to $f\left(\Phi^{\alpha_N}_{t-\tau_N}\circ \Phi^{\alpha_{N-1}}_{u_{N-1}}\circ \cdots \circ \Phi^{\alpha_0}_{u_0}(x_0)\right)$, uniformly over $(\theta_0,x_0)$, which implies 
\[\lim_{\ep \to 0}  \E_{(\theta_0,x_0)\sim \mu_\ep}\left[f\left(\Phi^{\alpha_{N_\ep},\ep}_{t-\tau^\ep_{N_\ep}}\circ \cdots \circ \Phi^{\alpha_0,\ep}_{u_0^\ep}(x_0)\right)\right]= \E_{(\theta_0,x_0)\sim \mu}\left[f\left(\Phi^{\alpha_N}_{t-\tau_N}\circ \Phi^{\alpha_{N-1}}_{u_{N-1}}\circ \cdots \circ \Phi^{\alpha_0}_{u_0}(x_0)\right)\right].\]
The result follows by dominated convergence.
\end{proof}

\subsubsection{A tightness result}\label{sec tight}

Fix $\lambda_0>0$ be such that the conclusion of Proposition \ref{prop eq CT} holds for each $\lambda<\lambda_0$. We let   $\lambda<\lambda_0$ be given, and we then fix $\frac23<\underline{b}<\bar b<\frac{1}{3(1-\pi)}$ such that (\ref{eq 24}) holds. 

In this section, we represent a distribution $\mu\in \Delta(\Theta\times [0,1])$ by its conditional cdf $F_\theta$ in each state $\theta\in  \Theta $.

Given $b$, we define an operator 
$T_{b,\ep}:\Delta(\Theta\times [0,1])\rightarrow \Delta(\Theta\times [0,1])$ by the equation
\[
\left(T_{b,\ep})F\right)_{\theta}\left(x\right)=\left(1-\ep\lambda\right)F_{\theta}\left(g_{\theta}^{-1}\left(x\right)\right)+\ep \lambda F_{1-\theta}\left(g_{\theta}^{-1}\left(x\right)\right).\footnote{For each $b\in (\frac23,\frac{1}{3(1-\pi)})$, the function $g_\theta$ is an increasing bijection on $[0,1]$, hence $g_\theta^{-1}$ is well-defined.}
\]
The operator $F$ is weakly continuous and any fixed-point of $T_{b,\ep}$ is an invariant distribution for the strategy $b$. 

For each $A_{0},A_{1},\gamma>0$, we define  $\Delta_{A_{0},A_{1},\gamma}\subset \Delta(\Theta\times [0,1])$
to be those symmetric distributions  $\left(F_{0},F_{1}\right)\in \Delta(\Theta\times [0,1])$ such that 
\[
F_{\theta}\left(x\right)\leq A_{\theta}x^{\gamma},\mbox{ for each $x$ and $\theta\in \Theta$}.
\]
Notice that such distributions do not put any weight on  $x=0$
 and $x=1$.

\begin{proposition}\label{prop Marcin}
There exist $A_0,A_1,\gamma$ and $\ep_0$ such that the following holds. For every $b\in [\underline{b},\bar b]$ and every $\ep<\ep_0$, the process $(\tilde \theta_m^b,\tilde x_m^b)$ induced by the strategy $b$ in 
$G(\lambda\ep,\rho \ep)$ has an invariant distribution in $\Delta_{A_0,A_1,\gamma}$.
\end{proposition}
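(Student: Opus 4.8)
The plan is to realize the desired invariant distribution as a fixed point of the operator $T_{b,\ep}$ lying inside the set $\Delta_{A_{0},A_{1},\gamma}$, with the constants chosen once and for all. It therefore suffices to exhibit $A_{0},A_{1},\gamma,\ep_{0}$ such that, for every $b\in[\underline{b},\bar{b}]$ and every $\ep<\ep_{0}$, the set $\Delta_{A_{0},A_{1},\gamma}$ is nonempty, convex, weak-$*$ compact, and satisfies $T_{b,\ep}(\Delta_{A_{0},A_{1},\gamma})\subseteq\Delta_{A_{0},A_{1},\gamma}$; since $T_{b,\ep}$ is weakly continuous, Schauder's fixed-point theorem then yields a fixed point $\mu\in\Delta_{A_{0},A_{1},\gamma}$, which is an invariant distribution for the strategy $b$ in $G(\lambda\ep,\rho\ep)$. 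Convexity is immediate, and $\Delta_{A_{0},A_{1},\gamma}$ is weak-$*$ closed because the constraints $F_{\theta}(x)\le A_{\theta}x^{\gamma}$ pass to weak limits at continuity points and extend to all $x$ by right-continuity; it is nonempty as soon as $A_{\theta}\ge x_{0}^{-\gamma}$, since it then contains any symmetric distribution supported on $[x_{0},1-x_{0}]$. All the content is thus in the self-map property.

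To verify $T_{b,\ep}F\in\Delta_{A_{0},A_{1},\gamma}$ for $F\in\Delta_{A_{0},A_{1},\gamma}$, fix a small $x_{0}>0$ and impose $A_{\theta}\ge x_{0}^{-\gamma}$. For $x\ge x_{0}$ the bound is automatic, since $A_{\theta}x^{\gamma}\ge A_{\theta}x_{0}^{\gamma}\ge 1\ge(T_{b,\ep}F)_{\theta}(x)$, so only $x\in[0,x_{0}]$ needs work. There, by the definition of $T_{b,\ep}$ together with $F_{\theta}(y)\le A_{\theta}y^{\gamma}$,
\begin{equation*}
(T_{b,\ep}F)_{\theta}(x)\le\bigl[(1-\ep\lambda)A_{\theta}+\ep\lambda A_{1-\theta}\bigr]\bigl(g_{\theta}^{-1}(x)\bigr)^{\gamma},
\end{equation*}
and it is enough to bound $g_{\theta}^{-1}(x)/x$ from above. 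The key estimate exploits $\phi_{\theta}(0)=0$: writing $g_{\theta}(y)/y=(1-\rho\ep)+\rho\ep\,q_{\theta}(y)$ with $q_{\theta}(0)=3\phi_{\theta}(1)$ and $|q_{\theta}(y)-q_{\theta}(0)|\le Cy$ for a constant $C$ uniform in $b$, one obtains on $[0,2x_{0}]$
\begin{equation*}
\frac{g_{\theta}(y)}{y}\ge 1+\rho\ep\,m_{\theta},\qquad m_{\theta}:=3\phi_{\theta}(1)-1-2Cx_{0},
\end{equation*}
where $\phi_{1}(1)=b\pi$ and $\phi_{0}(1)=b(1-\pi)$, so that $m_{1}>0>m_{0}$ for $x_{0}$ small. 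Since $g_{\theta}$ is increasing and close to the identity, $g_{0}^{-1}(x_{0})\le 2x_{0}$ for $\ep$ small, and the slope bound gives $g_{\theta}^{-1}(x)/x\le(1+\rho\ep m_{\theta})^{-1}$ for all $x\in[0,x_{0}]$.

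Substituting, the self-map property on $[0,x_{0}]$ reduces, for $\theta\in\{0,1\}$, to
\begin{equation*}
\bigl[(1-\ep\lambda)A_{\theta}+\ep\lambda A_{1-\theta}\bigr](1+\rho\ep m_{\theta})^{-\gamma}\le A_{\theta}.
\end{equation*}
Expanding to first order in $\ep$, these two inequalities become
\begin{equation*}
\frac{\gamma\rho|m_{0}|}{\lambda}\,A_{0}\le A_{0}-A_{1}\le\frac{\gamma\rho m_{1}}{\lambda}\,A_{1}.
\end{equation*}
This range is nonempty with $A_{0}>A_{1}>0$ exactly when $A_{0}/A_{1}<m_{1}/|m_{0}|$, and the bound must hold uniformly in $b$. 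Because $m_{1}=3b\pi-1$ is increasing and $|m_{0}|=1-3b(1-\pi)$ decreasing in $b$, the binding case is $b=\underline{b}$, where $m_{1}(\underline{b})-|m_{0}(\underline{b})|=3\underline{b}-2>0$ since $\underline{b}>\tfrac23$; hence $\min_{b}m_{1}>\max_{b}|m_{0}|$, and this stays true after the $-2Cx_{0}$ correction for $x_{0}$ small. One then fixes $A_{0}/A_{1}\in\bigl(1,\min_{b}m_{1}/\max_{b}|m_{0}|\bigr)$, next $\gamma$ small and $A_{1}$ with $A_{\theta}\ge x_{0}^{-\gamma}$, so that the two linearized inequalities hold \emph{strictly} and uniformly in $b$; the $O(\ep)$ remainders of the expansion are then dominated for all $\ep<\ep_{0}$. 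The main obstacle is precisely this simultaneous, uniform-in-$b$ bookkeeping: the exponent $\gamma$, the constants $A_{0},A_{1}$, the cutoff $x_{0}$, and the threshold $\ep_{0}$ must be coordinated so that the linearized inequalities survive both the $O(x_{0})$ perturbation of $m_{\theta}$ and the $O(\ep)$ remainder, the tightest constraint coming from the gap $3\underline{b}-2$ that vanishes as $\underline{b}\downarrow\tfrac23$.
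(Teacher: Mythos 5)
Your proposal is correct and follows essentially the same route as the paper: a Schauder-type fixed point of $T_{b,\ep}$ inside $\Delta_{A_0,A_1,\gamma}$, where the self-map property comes from a linear lower bound $g_\theta(y)\geq y(1+\ep\rho\, m_\theta)$ near $0$ (exploiting $\phi_\theta(0)=0$ and $b>\tfrac23$, so $m_1-|m_0|>0$ uniformly in $b$), followed by a first-order expansion in $\ep$ of the two inequalities linking $A_0,A_1,\gamma$. Your two-sided condition $\frac{\gamma\rho|m_0|}{\lambda}A_0\leq A_0-A_1\leq \frac{\gamma\rho m_1}{\lambda}A_1$ is exactly the paper's product criterion $(\gamma\rho c_0+\lambda)(\gamma\rho c_1+\lambda)>\lambda^2$ in disguise, and your case split at $x_0$ (with the slope bound valid on all of $[0,x_0]$ because $g_\theta^{-1}([0,x_0])\subseteq[0,2x_0]$) even handles slightly more cleanly the intermediate region that the paper's split at $\tfrac12 x_0$ and $x_0$ glosses over.
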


Proposition \ref{prop Marcin} shows the existence of invariant distributions with some regularity, uniformly over $b$ and $\ep$. It ensures in particular that weak limits of such distributions don't put any weight on $x=0$ and $x=1$.

The proof will make use of the following simple observation.
\begin{lemma}
\label{lem:properties of h} There exists $x>0$ and $c_0<0<c_1$ with $c_0+c_1>0$ such that 
\[h_\theta(x) \geq \ep\rho c_\theta x,\mbox{ for each  $b\in\left[\underline{b},\bar b\right]$ and  $x<x_0$}.\]
\end{lemma}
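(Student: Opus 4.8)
\emph{Proof proposal.} The plan is to reduce the statement to an elementary sign analysis of the drift near $x=0$, carried out uniformly in $b\in[\underline b,\bar b]$. Recall that in $G(\lambda\ep,\rho\ep)$ the variable $x$ evolves through $g_\theta(x)=(1-\rho\ep)x+\rho\ep\sum_{k}\binom{n}{k}x^{k}(1-x)^{n-k}\phi_\theta(k)$, so the one-step drift is $h_\theta(x)=g_\theta(x)-x=\rho\ep\big(\sum_k\binom{n}{k}x^k(1-x)^{n-k}\phi_\theta(k)-x\big)$. For $n=3$ and the strategy $b$---for which $\phi_\theta(0)=0$, $\phi_\theta(1)=b\pi_\theta$, $\phi_\theta(2)=1-b+b\pi_\theta$, $\phi_\theta(3)=1$, with $\pi_1=\pi$ and $\pi_0=1-\pi$---the same computation that produces the form of $h_\theta$ in the continuous-time analysis gives
\[h_\theta(x)=\rho\ep\,x(1-x)\big((2-3b)x+3b\pi_\theta-1\big).\]
Setting $\psi_\theta(x,b):=(1-x)\big((2-3b)x+3b\pi_\theta-1\big)$, the target inequality $h_\theta(x)\geq\ep\rho c_\theta x$ becomes, after dividing by $\ep\rho x>0$, simply $\psi_\theta(x,b)\geq c_\theta$, so it suffices to bound $\psi_\theta$ from below near $x=0$.

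First I would evaluate at the boundary: $\psi_\theta(0,b)=3b\pi_\theta-1$. Because $b\geq\underline b>\tfrac23$ and $\pi>\tfrac12$ we get $\psi_1(0,b)\geq 3\underline b\pi-1>0$, while $b\leq\bar b<\tfrac{1}{3(1-\pi)}$ gives $\psi_0(0,b)\leq 3\bar b(1-\pi)-1<0$. The key algebraic fact is that the two boundary values add up cleanly, $\psi_0(0,b)+\psi_1(0,b)=3b-2\geq 3\underline b-2>0$. Writing $m_\theta:=\min_{b\in[\underline b,\bar b]}\psi_\theta(0,b)$, this yields $m_1>0$, $m_0<0$ and $m_0+m_1=3\underline b-2>0$.

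Next I would choose $\eta>0$ with $\eta<m_1$ and $2\eta<m_0+m_1$, and set $c_\theta:=m_\theta-\eta$; then $c_1>0$, $c_0<0$ and $c_0+c_1=m_0+m_1-2\eta>0$. To upgrade the boundary estimate to a neighborhood, uniformly in $b$, I would invoke uniform continuity of the polynomial $(x,b)\mapsto\psi_\theta(x,b)$ on the compact set $[0,\tfrac12]\times[\underline b,\bar b]$: this furnishes $x_0\in(0,\tfrac12)$ such that $\psi_\theta(x,b)\geq\psi_\theta(0,b)-\eta\geq m_\theta-\eta=c_\theta$ for every $x<x_0$ and every $b\in[\underline b,\bar b]$. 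Multiplying back by $\ep\rho x$ recovers $h_\theta(x)\geq\ep\rho c_\theta x$ on $(0,x_0)$.

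The only point requiring care is the compatibility of the three sign conditions $c_1>0$, $c_0<0$, $c_0+c_1>0$: the first two merely say that $x=0$ and $x=1$ repel and attract $x$ respectively, whereas the third---the property the subsequent tightness argument actually exploits---hinges on the identity $\psi_0(0,b)+\psi_1(0,b)=3b-2$ together with the hypothesis $\underline b>\tfrac23$. I would therefore stress that the endpoint choices $\underline b>\tfrac23$ and $\bar b<\tfrac{1}{3(1-\pi)}$ (inherited from \eqref{eq 24} and Proposition \ref{prop eq CT}) are precisely what make all three conditions hold simultaneously and uniformly in $b$; beyond that, only continuity and compactness are needed.
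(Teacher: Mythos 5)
Your proof is correct and follows essentially the same route as the paper's: both normalize the drift by $\ep\rho x$ near $x=0$, reduce to the boundary values $3\underline{b}\pi_\theta-1$ (whose signs and sum are controlled exactly by $\underline{b}>\frac{2}{3}$ and $\bar b<\frac{1}{3(1-\pi)}$), subtract a small slack to define $c_\theta$, and take $x_0$ small. The only differences are cosmetic: the paper obtains $x_0=\ep_1/(3\bar b+1)$ from an explicit linear lower bound where you invoke uniform continuity on a compact set, and your choice to keep the $(1-x)$ factor inside $\psi_\theta$ makes the final division exact, whereas the paper normalizes by $x(1-x)$ and glosses over that factor when passing from $h_\theta\geq\ep\rho c_\theta x(1-x)$ to $h_\theta\geq\ep\rho c_\theta x$.
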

\begin{proof}
Choose $\ep_1>0$ such that $2\ep_1 < 3\underline{b}-2$, and set $c_\theta := 3 \underline{b}\pi_\theta -1 -\ep$. By the choice of $\ep_1>0$, $c_0+c_1= 3\underline{b}-2 -2\ep_1 >0$. 
On the other hand, 
\[h^*_\theta(x):= \frac{h_\theta(x)}{\ep \rho x(1-x)}:= 3 b\pi_\theta +3 x (1-b) -(1+x)= c_\theta +\ep -x(3b +1).\]
Hence $h^*(x) \geq c_\theta$ as soon as $x < x_0:=\ep_1/(3\bar b+1)$. 
\end{proof}

We fix $x_0$, $c_0$ and $c_1$ as given by Lemma \ref{lem:properties of h}. 
The central step in the proof is Lemma \ref{lem:regularization} below.
\begin{lemma}
\label{lem:regularization}There exist $A_0,A_1,\gamma$ and $\ep_0$ such that the following holds. For
each $b\in\left[\underline{b},\bar b\right]$ and $\ep <\ep_0$, one has
\[
T_{b,\ep}\left(\Delta_{A_{0},A_{1},\gamma}\right)\subseteq\Delta_{A_{0},A_{1},\gamma}.
\]
\end{lemma}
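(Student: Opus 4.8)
The plan is to choose the constants $A_0,A_1,\gamma$ and then verify the defining inequality $(T_{b,\ep}F)_\theta(x)\le A_\theta x^\gamma$ directly, exploiting that the one-step map in $G(\lambda\ep,\rho\ep)$ is $g_\theta=\mathrm{id}+h_\theta$ and that $h_\theta$ is controlled near $0$ by Lemma~\ref{lem:properties of h}. The first observation is that the bound only needs to be checked for small $x$: writing $x^*_\theta:=(1/A_\theta)^{1/\gamma}$, for $x\ge x^*_\theta$ one has $A_\theta x^\gamma\ge 1\ge (T_{b,\ep}F)_\theta(x)$ trivially, since a cdf never exceeds $1$. I would choose $A_0,A_1\ge 1$ large enough that $x^*_1<x_0$ (hence also $x^*_0<x^*_1<x_0$, as $A_0>A_1$), where $x_0$ is the threshold of Lemma~\ref{lem:properties of h}; it then suffices to treat $x<x^*_\theta\le x_0$. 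I would also record that $T_{b,\ep}$ preserves symmetry, so only the tail bound is at issue.

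For such small $x$, I would bound $g_\theta^{-1}(x)$. By Lemma~\ref{lem:properties of h}, $h_\theta(y)\ge \ep\rho c_\theta y$ for $y<x_0$, so $g_\theta(y)=y+h_\theta(y)\ge y(1+\ep\rho c_\theta)$ there; since $g_\theta$ is increasing this gives $g_\theta^{-1}(x)\le x/(1+\ep\rho c_\theta)$, valid once $g_\theta^{-1}(x)<x_0$ (immediate for $\theta=1$, since $g_1$ expands near $0$ so $g_1^{-1}(x)<x<x_0$; for $\theta=0$ it follows from $g_0(x_0)\ge x_0(1+\ep\rho c_0)$ together with $x<x^*_0<x_0$ and $\ep$ small). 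Plugging the hypothesis $F_\theta(y)\le A_\theta y^\gamma$, $F_{1-\theta}(y)\le A_{1-\theta}y^\gamma$ into the definition of $T_{b,\ep}$ and using this bound yields
\[
(T_{b,\ep}F)_\theta(x)\le\Big[(1-\ep\lambda)A_\theta+\ep\lambda A_{1-\theta}\Big]\,\big(g_\theta^{-1}(x)\big)^\gamma\le \frac{(1-\ep\lambda)A_\theta+\ep\lambda A_{1-\theta}}{(1+\ep\rho c_\theta)^\gamma}\,x^\gamma .
\]
So it is enough to arrange, for each $\theta\in\Theta$, the inequality $(1-\ep\lambda)+\ep\lambda\,\frac{A_{1-\theta}}{A_\theta}\le(1+\ep\rho c_\theta)^\gamma$.

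To choose the constants, write $r:=A_0/A_1>1$. To first order in $\ep$ the two inequalities read $\lambda(r-1)\le\gamma\rho c_1$ (for $\theta=1$) and $\lambda(1-1/r)\ge\gamma\rho|c_0|$ (for $\theta=0$), i.e. $\frac{\lambda(r-1)}{\rho c_1}\le\gamma\le\frac{\lambda(r-1)}{r\rho|c_0|}$. This interval is nonempty precisely when $r|c_0|<c_1$, which can be met by some $r>1$ because the sign condition $c_0+c_1>0$ of Lemma~\ref{lem:properties of h} forces $c_1>|c_0|$ (this is exactly where that condition enters, mirroring $4\phi_0\phi_1<1$ in the case $n=2$). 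Fixing such an $r$ and a $\gamma$ strictly inside the interval makes both first-order inequalities strict; I would then take $A_1$ (hence $A_0=rA_1$) large enough to ensure $x^*_1<x_0$ and $A_\theta\ge1$. Finally I would upgrade the strict first-order inequalities to genuine ones: since $(1+\ep\rho c_\theta)^\gamma=1+\gamma\ep\rho c_\theta+O(\ep^2)$, the order-$\ep$ gap guarantees an $\ep_0>0$ below which the displayed inequalities hold. Because $c_0,c_1,x_0$ are uniform over $b\in[\underline b,\bar b]$ and the $O(\ep^2)$ remainders are uniformly bounded there, $\ep_0$ can be taken independent of $b$, giving $T_{b,\ep}(\Delta_{A_0,A_1,\gamma})\subseteq\Delta_{A_0,A_1,\gamma}$ for all $b\in[\underline b,\bar b]$ and $\ep<\ep_0$.

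I expect the main obstacle to be the bookkeeping that confines all the work to a neighborhood of $x=0$ — verifying that $g_\theta^{-1}(x)$ stays below $x_0$ and that the trivial regime $x\ge x^*_\theta$ dovetails with the nontrivial one — together with keeping the whole choice of $(r,\gamma,A_0,A_1,\ep_0)$ uniform in $b$. The algebra of the two inequalities, by contrast, is routine once the sign condition $c_1>|c_0|$ is in hand.
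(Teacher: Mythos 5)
Your proposal is correct and takes essentially the same approach as the paper's proof: you propagate the tail bound through $g_\theta^{-1}$ via the linear bounds of Lemma~\ref{lem:properties of h}, reduce to the same pair of scalar inequalities (the paper's (\ref{eq:reg inequalities}) divided through by $A_\theta$), and solve them to first order in $\ep$ using the sign condition $c_0+c_1>0$, with the large-constant scaling disposing of $x$ away from $0$. The only differences are cosmetic: you parametrize by $r=A_0/A_1$ and exhibit the feasible interval for $\gamma$ where the paper instead reduces the system to a single product inequality, and your case split at $x_\theta^*=A_\theta^{-1/\gamma}$ covers all of $[0,1]$ (slightly more cleanly than the paper's split at $\tfrac12 x_0$ and $x_0$).
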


\begin{proof}
We first show the existence of  $A_{0}^{\prime},A_{1}^{\prime}$ and $\gamma>0$ such that 
\begin{align}
A_{0}^{\prime} & \geq\left(1-\ep\lambda\right)A_{0}^{\prime}\left(1+\ep\rho c_{0}\right)^{-\gamma}+\ep\lambda A_{1}^{\prime}\left(1+\ep\rho c_{0}\right)^{-\gamma},\label{eq:reg inequalities}\\
A_{1}^{\prime} & \geq\left(1-\ep\lambda\right)A_{1}^{\prime}\left(1+\ep\rho c_{1}\right)^{-\gamma}+\ep\lambda A_{0}^{\prime}\left(1+\ep\rho c_{1}\right)^{-\gamma}\nonumber 
\end{align}
for all $\ep>0$ small enough.
Dividing  both sides of the two inequalities by $A_{1}^{\prime}$, denoting $A=A_0^\prime/A_1^\prime$, after some algebra, we obtain
\begin{align}
A \left(\left(1+\ep\rho c_{0}\right)^{\gamma} - \left(1-\ep\lambda\right)  \right) & \geq \ep\lambda,\nonumber \\
\left(1+\ep\rho c_{1}\right)^{\gamma} - \left(1-\ep\lambda\right) & \geq \ep\lambda A \nonumber 
\end{align}
Combining the two inequalities and some further algebra shows that the existence of required  $A_{0}^{\prime},A_{1}^{\prime}$ and $\gamma>0$  is equivalent to
\[ \left(\left(1+\ep\rho c_{1}\right)^{\gamma}-(1-\ep\lambda)\right) \left(\left(1+\ep\rho c_{0}\right)^{\gamma}-(1-\ep\lambda)\right) \geq \ep^2\lambda^2.\]
for all $\ep>0$ small enough.

A first-order Taylor expansion of the LHS as $\ep\to 0$ shows that this holds as soon as 
\[\left(\gamma \rho c_0+\lambda\right)\left(\gamma \rho c_1+\lambda\right) > \lambda^2,\]
or equivalently, $\gamma\rho c_0c_1 +\lambda\left(c_0+c_1\right)>0$, which holds for $\gamma$ small enough.

Fix constants $A_{0}^{\prime},A_{1}^{\prime},\gamma$ that satisfy inequalities (\ref{eq:reg inequalities}),
and  $a>\frac{1}{\min A_{\theta}x_{0}^{\gamma}}$, then set $A_{\theta}=aA_{\theta}^{\prime}$
for each state $\theta$. 

Let now  $F=\left(F_{0},F_{1}\right)\in\Delta_{A_{0}^{\prime},A_{1}^{\prime},\gamma}$,  $b\in [\underline{b},\bar b]$ and $\ep<\ep_0$ be arbitrary. We check that $T_{b,\ep}F\in \Delta_{A_0,A_1,\gamma}$.

For $x\leq\frac{1}{2}x_{0}\leq\frac{1}{1+\rho c_{0}}x_{0}$, one has 
\begin{align*}
T_{b,\ep}F_{\theta}\left(x\right) & =\left(1-\ep\lambda\right)F_{\theta}\left(g_{\theta}^{-1}\left(x\right)\right)+\ep\lambda F_{\theta}\left(g_{1-\theta}^{-1}\left(x\right)\right)\\
 & \leq\left(1-\ep\lambda\right)aA_{\theta}^{\prime}\left(\frac{1}{1+\ep\rho c_{\theta}}\right)^{\gamma}x^{\gamma}+\ep\lambda aA_{1-\theta}^{\prime}\left(\frac{1}{1+\ep\rho c_{1-\theta}}\right)^{\gamma}x^{\gamma}\\
 & \leq aA_{\theta}^{\prime}x^{\gamma}=A_{\theta}x^{\gamma}.
\end{align*}
For $x\geq x_0$, one has $T_{b,\ep}F_{\theta}(x)\leq 1\leq A_{\theta}x_{0}^{\gamma}\leq A_{\theta}x^{\gamma}$, by the choice of $a$.
This conclude the proof of the lemma.
\end{proof}
\bigskip

We conclude the proof of Proposition \ref{prop Marcin}.  For $\ep<\ep_0$ and $b\in [\underline{b},\bar b]$, $T_{b,\ep}$ maps $\Delta_{A_0,A_1,\gamma}$  into itself. Since $T_{b,\ep}$ is weakly continuous, and $\Delta_{A_0,A_1,\gamma}$ is weakly compact, $T_{b,\ep}$ has a fixed point.

\subsubsection{Conclusion}\label{sec conclusion}

We conclude the proof of Theorem \ref{thm blurry}, using the previous section. We first argue that 
for each $\ep$ small enough, there exists $b_\ep\in (\underline{b},\bar b)$ and an invariant distribution $\mu_\ep$ for $b_\ep$ in $G(\lambda\ep,\rho\ep)$ such that the interim belief with a sample $k=2$ is $\hat p$. We next prove that, if $\ep$ is small enough, the interim belief with a belief $k=3$ is higher than $\hat p$, and deduce that $(\mu_\ep,b_\ep)$ has the desired properties.

To start with, fix $\ep<\ep_0$. 
Given $b$, and an invariant distribution $\mu$ for $b$ in $G(\lambda\ep,\rho\ep)$  we denote by $LR_2^\ep(b,\mu)$ the likelihood ratio of one's interim  belief  with a sample $k$, assuming $(\theta,x)\sim \mu$. 
Define the set-valued map $\Psi^\ep:[\underline{b},\bar b]\to \dR_+$ as the set of all possible interim beliefs for $k=2$, when considering all invariant measures in $\Delta_{A_0,A_1,\gamma}$. 
\[\Psi^\ep(b):=\{LR_2^\ep(b,\mu):\ \mu\in \Delta_{A_0,A_1,\gamma},\ \mu\mbox{ is  invariant for }b\}.\]

By Proposition \ref{prop Marcin}, $\Psi^\ep$ is non-empty valued. In addition, since $\Delta_{A_0,A_1,\gamma}$ is convex and weakly compact, the map $\Psi^\ep(\cdot)$ is upper hemi-continuous, with convex values. This implies that the range $I:=\cup_b \Psi^\ep(b)$ of $\Psi^\ep$ is a closed interval. 

We claim that for $\ep$ small enough, the interval $I$ contains $\displaystyle \frac{\hat p}{1-\hat p}$. 
Indeed,  if $\mu_\ep\in \Delta_{A_0,A_1,\gamma}$ is invariant for $\underline{b}$ in $G(\lambda\ep,\rho\ep)$, then any limit point $\mu$ of $(\mu_\ep)$ is  invariant for $\underline{b}$ in $\Gamma(\lambda,\rho)$ by Proposition \ref{prop convergence}, and still belongs to $\Delta_{A_0,A_1,\gamma}$ by compactness. By Proposition \ref{lemm invariant}, this implies $\mu= \mu^{\underline{b}}$ and $\lim_{\ep\to 0}LR_2^\ep(\underline{b},\mu_\ep)=LR_2(\underline{b})<\frac{\hat p}{1-\hat p}$. 
A similar argument shows that if $\mu_\ep\in \Delta_{A_0,A_1,\gamma}$ is invariant for $\bar b$ in $G(\lambda \ep,\rho\ep)$, then $
\lim_{\ep\to 0}LR_2^\ep(\bar{b},\mu_\ep)=LR_2(\bar{b})>\frac{\hat p}{1-\hat p}$. 
Hence there is $\ep_1<\ep_0$ such that $\frac{\hat p}{1-\hat p}\in I$. 

For $\ep<\ep_1$, let $b_\ep \in [\underline{b},\bar b]$ and an invariant distribution $\mu_\ep\in \Delta_{A_0,A_1,\gamma}$ for $b_\ep$ in $G(\lambda\ep,\rho\ep)$ be such that $LR_2^\ep(b_\ep,\mu_\ep)=\frac{\hat p}{1-\hat p}$, 
and let $(b_*,\mu_*)$ be an arbitrary limit point of $(b_\ep,\mu_\ep)$. As above, one necessarily has $\mu_*= \mu^{b_*}$, hence \footnote{Where the limit is taken along a sequence $(b_\ep,\mu_\ep)$ that converges to $(b_*,\mu_*)$.} $\lim_{\ep\to 0}LR_k^\ep(b_\ep,\mu_\ep)= LR_k(b_*) $ for $k=2,3$. 

By the choice of $(b_\ep,\mu_\ep)$, it follows that $LR_2(b_*)= \frac{\hat p}{1-p}$. By Lemma \ref{lemm LR}, \textbf{P2}, this implies $LR_3(b_*)>\frac{\hat p}{1-p}$ and therefore, $LR_3^\ep(b_\ep,\mu_\ep)>\frac{\hat p}{1-\hat p}$ for all $\ep$ small enough.
The pair  $(b_\ep,\mu_\ep)$ has all the desired properties.

The same argument as in Proposition \ref{prop eq CT} next implies that $(b_\ep,a\calU +(1-a)\mu_\ep)$ is an ESS as well, provided $a>0$ is small enough.

\section{Proof of Theorem \ref{thm:social planner}}\label{sec proof SP}
\setcounter{equation}{0} \renewcommand{\theequation}{F.\arabic{equation}}

We prove first the result in the case $\rho=1$, and discuss  the case $\rho<1$ in Section \ref{sec rho<1}.

When $\rho=1$, one has $x_t= \chi_t$ for each $t$, and the welfare associated with a pair $(\sigma,\mu)$ reads
\[W(\sigma,\mu)= \int_{\Theta\times [0,1]} \left(1- \left|\theta-x\right|\right) d\mu(\theta,x)- c \int_{\Theta\times [0,1]} \sum_{k=0}^n {n\choose k}x^k (1-x)^{n-k} \beta(k) d\mu(\theta,x).\]
The first integral is the fraction of agents whose action match the state. Indeed, for given $(\theta_t,x_t)$, the fraction of agents whose action \emph{mismatch} the current state $\theta_t$ is 
$\left| \theta_t-x_t\right|$.
The second integral is the cost incurred in acquiring information. 
\medskip

The proof is organized as follows. In Section \ref{sec bounds} , we provide a simple bound on Markov chains. Section \ref{sec def} contains the definition of $\sigma_\lambda$
and state some properties of the dynamics $g^\lambda_\theta$ induced by $\sigma_\lambda$. Section \ref{sec rho=1} concludes the proof for the case $\rho=1$.

\subsection{Preliminary bounds on Markov process}\label{sec bounds}

Let a (symmetric) strategy $\sigma$ be given, with dynamics $g_\theta$. For $0\leq x<y \leq 1$, define 
\[m(x,y):=\inf\left\{l\geq 0: g_1^l(x)\geq y\right\},\]
where $g_1^l$ is the $l$-th iterate of $g_1$ and $\inf\emptyset =+\infty$.

\begin{lemma}\label{lemm bounds}
Assume that $g_\theta$ is increasing over $[0,1]$, with $g_0<g_1$ on $[0,1]$.  Then for each $K<+\infty$ and each invariant measure $\mu\in \Delta(\Theta\times [0,1])$ for $\sigma$, one has 
\[\int_{\Theta\times [0,1]} \left| \theta -x \right| d\mu(\theta,x) \leq \lambda \left( K+m(0,1-K\lambda)\right).\]
\end{lemma}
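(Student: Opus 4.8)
The plan is to pass to the two-sided stationary process $(\theta_t,x_t)_{t\in\mathbf{Z}}$ whose one-dimensional law is $\mu$ and whose transitions are those of the chain induced by $\sigma$ (so that $\theta$ switches with probability $\lambda$ and then $x_t=g_{\theta_t}(x_{t-1})$), and to rewrite the target quantity as $\int_{\Theta\times[0,1]}|\theta-x|\,d\mu=\E\big[\mathbf{1}_{\theta_0=1}(1-x_0)\big]+\E\big[\mathbf{1}_{\theta_0=0}\,x_0\big]$. Because $\sigma$ is symmetric, the symmetry condition \textbf{C4} together with \eqref{defg} yields the reflection identity $g_0(x)=1-g_1(1-x)$, whence $g_0^l(1)=1-g_1^l(0)$ for every $l$; this lets me treat the two terms in parallel, so I will focus on the first. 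Moreover the $\theta$-coordinate is an autonomous symmetric two-state chain, so its stationary marginal is uniform, $\prob(\theta_0=1)=\tfrac12$, and the probability of any event depending only on $(\theta_t)$ can be computed from that chain alone, decoupled from the (highly complex) law of $x$.

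The crux is a monotone-iteration claim. Write $L:=m(0,1-K\lambda)$ and let $A:=\{\theta_{-L+1}=\cdots=\theta_0=1\}$ be the event that the state has equalled $1$ throughout the last $L$ periods. On $A$ the deterministic recursion gives $x_0=g_1^{L}(x_{-L})$, and since $x_{-L}\geq 0$ and $g_1$ is increasing, $x_0\geq g_1^{L}(0)\geq 1-K\lambda$ by the definition of $m$ (using that $l\mapsto g_1^l(0)$ is nondecreasing, which follows from $g_1(0)\geq 0$ and monotonicity). Hence the mismatch satisfies $1-x_0\leq K\lambda$ on $A$. If $L=+\infty$ the asserted bound is vacuous, so I may assume $L<\infty$ and that the infimum defining $m$ is attained.

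I then split the first expectation along $A$ and $A^c$. On $A$ I combine $1-x_0\le K\lambda$ with $\prob(A)=\tfrac12(1-\lambda)^{L-1}\le\tfrac12$, giving a contribution at most $\tfrac12 K\lambda$. On $A^c$ I bound $1-x_0\le 1$ and use $\prob(\theta_0=1,A^c)=\tfrac12\big(1-(1-\lambda)^{L-1}\big)\le\tfrac12(L-1)\lambda$, the last step being Bernoulli's inequality. Thus $\E\big[\mathbf{1}_{\theta_0=1}(1-x_0)\big]\le\tfrac{\lambda}{2}(K+L-1)$. The reflection identity gives the identical bound for the $\theta_0=0$ term, via the event $\{\theta_{-L+1}=\cdots=\theta_0=0\}$, on which $x_0=g_0^{L}(x_{-L})\le g_0^{L}(1)=1-g_1^{L}(0)\le K\lambda$. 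Adding the two contributions yields $\int|\theta-x|\,d\mu\le\lambda(K+L-1)\le\lambda\big(K+m(0,1-K\lambda)\big)$, as claimed.

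The only genuinely delicate points are the bookkeeping in the monotone-iteration claim — getting the number of applications of $g_1$ exactly right, so that a constant run of length $L=m(0,1-K\lambda)$ forces $x_0\ge 1-K\lambda$ — and the observation that events depending only on $(\theta_t)$ decouple from $x$ because $\theta$ evolves autonomously, which is what makes $\prob(A)$ and $\prob(\theta_0=1,A^c)$ exactly computable without controlling the joint law of $x$. Everything else reduces to the elementary estimate $1-(1-\lambda)^{L-1}\le(L-1)\lambda$.
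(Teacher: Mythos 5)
Your proof is correct, and while it rests on the same core estimate as the paper's --- the age of the current state is geometrically distributed, and over a run of $L=m(0,1-K\lambda)$ periods in state $1$, monotonicity forces $x_0=g_1^L(x_{-L})\geq g_1^L(0)\geq 1-K\lambda$ --- the packaging is genuinely different. The paper introduces an auxiliary ``reset'' process $(y_t)$ that jumps to the wrong consensus $1-\theta_t$ at every state change, shows by the same monotone-iteration comparison that $|y_t-\theta_t|\geq |x_t-\theta_t|$ pathwise, identifies the invariant law $\nu$ of $(\theta_t,y_t)$ explicitly (geometric weights $\tfrac{\lambda}{2}(1-\lambda)^l$ on the countable set of points $(\theta,g_\theta^l(1-\theta))$), and bounds $\int|\theta-y|\,d\nu$ by splitting the series at $l=m$. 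You bypass the auxiliary chain entirely: working in the two-sided stationary realization of $\mu$, you make a single binary split on the run event $A=\{\theta_{-L+1}=\cdots=\theta_0=1\}$, bound the mismatch by $K\lambda$ on $A$ and by $1$ on $A^c\cap\{\theta_0=1\}$, and compute both probabilities from the autonomous $\theta$-chain, handling the state-$0$ term by the reflection identity $g_0^l(1)=1-g_1^l(0)$. What your route buys is self-containedness: the paper's domination step tacitly compares the stationary law $\mu$ of $(\theta_t,x_t)$ with the invariant law of a \emph{different} chain $(\theta_t,y_t)$ (an ergodic/Ces\`aro argument left implicit) and needs the explicit form of $\nu$, whereas your argument uses only stationarity of $\mu$, autonomy and symmetry of $(\theta_t)$, and Bernoulli's inequality --- and it even yields the marginally sharper constant $\lambda(K+L-1)$. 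The edge cases you leave aside ($L=0$, i.e.\ $K\lambda\geq 1$) are trivial since the left-hand side is at most $1$, so nothing is missing.
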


\begin{proof}
Fix $K$ and $\mu$. The conclusion is trivial if $m(0,1-K\lambda)=+\infty$, so let's assume that $m:=m(0,1-K\lambda)<+\infty$. We introduce an auxiliary process $(y_t)$ defined by 
$y_0=x_0$ and 
\[
y_{t}=\begin{cases}
g_{\theta_{t}}\left(y_{t-1}\right), & \text{if }\theta_{t}=\theta_{t-1}\\
1-\theta_{t} & \text{if }\theta_{t}\neq\theta_{t-1}.
\end{cases}
\]
That is, $(y_t)$ follows the same dynamics as $(x_t)$, except that $(y_t)$ swings to the 'wrong' consensus whenever the state changes. 

The sequence $(\theta_t,y_t)$ has a countable support in $\Theta\times [0,1]$, which consists 
of the pairs $z_{\theta,l}= (\theta,g^l(1-\theta))$ for $\theta\in \Theta$ and $l\geq 0$. It also has  a unique invariant measure $\nu$, given by
\[\nu(z_{\theta,l})= (1-\lambda) \nu(z_{\theta,l-1})= (1-\lambda)^l \nu(z_{\theta,0})\mbox{ for each } l\geq 0\mbox{ and } \theta\in \Theta.\]
Using the symmetry of $\mu$, it follows that $\nu(z_{\theta,0})= \frac12 \lambda$. 

Since $g_\theta$ is increasing, one has $\left| y_t-\theta_t\right |\geq \left| x_t-\theta_t\right |$ for each $t$ and therefore, 
\[\int_{\Theta\times [0,1]} |\theta-x| d\mu(\theta,x) \leq  \int_{\Theta\times [0,1]} |\theta-y| d\nu(\theta,y).\]

On the other hand,
\begin{align*}
\int_{\Theta \times [0,1]} \left|\theta-y\right|d\nu\left(\theta,y\right) & =2\sum_{k\geq 0}\left|1-g^{k}\left(0\right)\right|\nu\left(y_{1,k}\right)\\
 & =\left(\sum_{k\geq 0}\lambda\left(1-\lambda\right)^{k}\left|1-g_1^{k}\left(0\right)\right|\right)\\
 & \leq\sum_{k<m}\lambda\left(1-\lambda\right)^{k}+K\lambda\sum_{k\geq m}\lambda\left(1-\lambda\right)^{k}\\
 & \leq \left(1-\left(1-\lambda\right)^{m}\right)+K\lambda\leq\lambda\left(K+m\right).
\end{align*}
\end{proof}
\bigskip

The bound of Lemma \ref{lemm bounds} is also useful to provide an upper bound of the  fraction of agents who buy information, as we next show.

\begin{lemma}\label{lemm bounds2}
For every invariant measure $\mu$ for $\sigma$, one has 
\[\int_{\Theta\times [0,1]} \sum_{k=0}^n {n\choose k} x^k (1-x)^{n-k} \beta(k) d\mu(\theta,x)\leq 2\beta(0) +(n+1)! \int_{\Theta\times [0,1]}|\theta-x| d\mu(\theta,x).\]
\end{lemma}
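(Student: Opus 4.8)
The plan is to decompose the binomial sum by sample composition $k$ and treat the extreme compositions $k=0$ and $k=n$ separately from the intermediate ones $1\le k\le n-1$. The point is that the intermediate terms all carry a factor of both $x$ and $1-x$, and can therefore be controlled by $\int|\theta-x|\,d\mu$, whereas the extreme terms cannot (near a consensus $x\simeq\theta$ the quantity $x^n$ or $(1-x)^n$ is close to $1$ while $|\theta-x|$ is close to $0$), so they must be absorbed into the $2\beta(0)$ slack. Concretely I would write
\[
I:=\int_{\Theta\times[0,1]}\sum_{k=0}^n\binom{n}{k}x^k(1-x)^{n-k}\beta(k)\,d\mu
=\int\beta(0)(1-x)^n\,d\mu+\int\beta(n)x^n\,d\mu+\int\sum_{k=1}^{n-1}\binom{n}{k}x^k(1-x)^{n-k}\beta(k)\,d\mu.
\]

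For the two extreme terms I would use that $\mu$ is a probability measure together with symmetry of the strategy. Since $(1-x)^n\le1$ and $x^n\le1$, the first term is at most $\beta(0)$ and the second is at most $\beta(n)$; by condition \textbf{C4} one has $\beta(n)=\beta(0)$, so these two contributions together are bounded by $2\beta(0)$. This is exactly the additive slack appearing in the statement, and it is the only place symmetry of $\sigma$ is invoked.

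For the intermediate block I would first use $\beta(k)\le1$, and then the elementary pointwise bound that for $1\le k\le n-1$ (so that both $k\ge1$ and $n-k\ge1$) one has $x^k(1-x)^{n-k}\le x(1-x)$, because $x^{k-1}\le1$ and $(1-x)^{n-k-1}\le1$ on $[0,1]$. Summing the binomial coefficients gives
\[
\sum_{k=1}^{n-1}\binom{n}{k}x^k(1-x)^{n-k}\beta(k)\le x(1-x)\sum_{k=1}^{n-1}\binom{n}{k}\le 2^n\,x(1-x)\le (n+1)!\,x(1-x),
\]
using $\sum_{k=0}^n\binom{n}{k}=2^n$ and $2^n\le(n+1)!$ for all $n\ge0$.

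The final and conceptually central step is the pointwise inequality $x(1-x)\le|\theta-x|$, valid for every $(\theta,x)\in\Theta\times[0,1]$: if $\theta=0$ then $|\theta-x|=x\ge x(1-x)$, and if $\theta=1$ then $|\theta-x|=1-x\ge x(1-x)$, so in either case $x(1-x)\le\min(x,1-x)\le|\theta-x|$. Integrating against $\mu$ then yields
\[
\int\sum_{k=1}^{n-1}\binom{n}{k}x^k(1-x)^{n-k}\beta(k)\,d\mu\le (n+1)!\int_{\Theta\times[0,1]}|\theta-x|\,d\mu,
\]
and adding the $2\beta(0)$ bound from the extreme terms gives the claim. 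There is no genuine obstacle here beyond correctly isolating the $k=0,n$ terms (which are what force the $2\beta(0)$ term rather than a clean $|\theta-x|$ bound) and noting that the crude estimate $2^n\le(n+1)!$ already delivers the stated constant.
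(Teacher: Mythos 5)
Your proof is correct and follows essentially the same route as the paper's: isolate the $k=0$ and $k=n$ terms (bounded by $2\beta(0)$ via the symmetry $\beta(n)=\beta(0)$), bound the intermediate sum by $(n+1)!\,x(1-x)$, and finish with the pointwise inequality $x(1-x)\leq|\theta-x|$. The only difference is that you spell out the elementary estimates (e.g.\ $2^n\leq(n+1)!$ and the case check for $x(1-x)\leq|\theta-x|$) that the paper leaves implicit.
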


\begin{proof}
The result follows from the series of inequalities
\begin{eqnarray*}
\int_{\Theta\times [0,1]} \sum_{k=0}^n {n\choose k} x^k (1-x)^{n-k} \beta(k) d\mu(\theta,x) &\leq &2\beta(0) +\int_{\Theta\times [0,1]} \sum_{k=1}^{n-1} {n\choose k} x^k (1-x)^{n-k} \beta(k)d\mu(\theta,x)  \\
&\leq & 2\beta(0) +(n+1)! \int_{\Theta\times [0,1]} x(1-x) d\mu(\theta,x)  \\
&\leq & 2\beta(0) +(n+1)! \int_{\Theta\times [0,1]} \left|\theta-x\right| d\mu(\theta,x) .
\end{eqnarray*}
\end{proof}

\subsection{The strategy $\sigma_\lambda$}\label{sec def}
We introduce a family $(\sigma_\lambda)$ parametrized by $\lambda$. We will denote $g^\lambda_\theta$ the dynamics induced by $\sigma_\lambda$ and, later, use $m_\lambda(x,y)$ to denote the function $m(x,y)$ introduced in the previous section, whenever we refer to $\sigma_\lambda$. 

We assume for simplicity that $H_\theta$ is continuous at $\frac12$. If it is not, $H_\theta(\frac12)$ should be replaced by $\frac12\left(H_\theta(\frac12)+H_\theta(\frac12)_-\right)$ in what follows. Denote $h:=1-H_1(\frac12)= H_0(\frac12)$ and observe that $h>\frac12$. 

The probability $\beta^\lambda(k)$ of acquiring information increases linearly between $k=0$ and $k=\lfloor \frac{n}{2}\rfloor$, with $\beta^\lambda(0)= \frac{\lambda}{h}$ and $\beta^\lambda(\frac{n}{2})= 1$ (if $n$ is even). Formally, 
\[\beta^\lambda(k):= \frac{\lambda}{h} +\frac{2}{n} \left(1-\frac{\lambda}{h}\right)k\mbox{ for }k\leq \lfloor\frac{n}{2}\rfloor,\]
and $\beta^\lambda(n-k)= \beta^\lambda(k)$ for $k>\frac{n}{2}$. 

The decision $\alpha^\lambda(k,q)$ depends on $q$, but not on $k$, with $\alpha^\lambda(k,q)=1$ if $q>\frac12$, $\alpha^\lambda(k,q)=0$ if $q<\frac12$ and $\alpha^\lambda(k,\frac12)= \frac12$.

When agents do not acquire information (which implies $k\neq \frac{n}{2}$), they follow the majority action in their sample. Note that $\sigma_\lambda$ is symmetric.

Denote by $\phi_{\theta}^\lambda(k)$ the probability of playing action 1 in state $\theta$ with a sample $k$:

\begin{equation}\label{eq p}\phi_{\theta}^\lambda(k):=  \begin{cases} \beta^\lambda(k) \left(1-H_\theta\left(\frac12\right)\right) & \mbox{ if } k\leq \frac{n}{2} \\ \left(1-\beta^\lambda(k)\right)  +\beta^\lambda(k) \left(1-H_\theta\left(\frac12\right)\right)
\mbox{ if } k> \frac{n}{2} \end{cases}\end{equation}
With such notation, one has 
\[g_\theta^\lambda(x)= \sum_{k=0}^n{n\choose k}x^k (1-x)^{n-k}\phi_{\theta}^\lambda(k).\]
Standard computations show that 
\begin{equation}\label{derive}(g^\lambda_\theta)'(x)= n\sum_{k=0}^{n-1}{n-1 \choose k} x^k (1-x)^{n-1-k}\left(\phi^\lambda_{\theta}(k+1)-\phi^\lambda_{\theta}(k)\right),\end{equation}
and 
\[(g^\lambda_\theta)^{''}(x)= n(n-1)\sum_{k=0}^{n-2}{n-2 \choose k} x^k (1-x)^{n-2-k}\left(\left(\phi^\lambda_{\theta}(k+2)-\phi^\lambda_{\theta}(k+1)\right)-\left(\phi^\lambda_{\theta}(k+1)-\phi^\lambda_{\theta}(k)\right)\right).\]
From (\ref{eq p}), one has 
\[\phi^\lambda_{\theta}(k+1)-\phi^\lambda_{\theta}(k)=\begin{cases}
\left(1-H_\theta\left(\frac12\right)\right) \frac{2}{n} \left(1-\frac{\lambda}{h}\right) & \mbox { if } k+1\leq \frac{n}{2} \\
H_\theta\left(\frac12\right) \frac{2}{n} \left(1-\frac{\lambda}{h}\right) & \mbox { if } k\geq \frac{n}{2} \\
\frac{1}{n}\left(1-\frac{\lambda}{h}\right)& \mbox { if } k< \frac{n}{2} < k+1
\end{cases}\]

In particular, $g^\lambda_\theta$ is increasing (if $\lambda <h$). In addition, $\phi^\lambda_{\theta}(k+1)-\phi^\lambda_{\theta}(k)$ is piecewise constant as $k$ increases, and is easily seen to be non-increasing if $\theta=1$ and non-decreasing if $\theta=0$. Thus, $g^\lambda_1$ is concave and $g^\lambda_0$ is convex. 
\bigskip

As $\lambda$ decreases, $\phi^\lambda_{\theta}(k)$ decreases for $k\leq \frac{n}{2}$ and increases for $k\geq \frac{n}{2}$, with limit
\[\phi_{\theta}(k):= \left(1-H_\theta\left(\frac12\right)\right) \times \frac{2k}{n} \mbox{ for } k\leq \frac{n}{2}.\]
It follows that $(g^\lambda_\theta)$ and $((g^\lambda_\theta)')$ converge uniformly as $\lambda\to 0$, with limits $g_\theta$ and $g'_\theta$.

From (\ref{derive}) one has $g'_1(0)= 2h>1$ and $g'_0(0)= 2(1-h)<1$ therefore there is $a>0$  and $\ep>0$ such that 
\[g'_1(x) > 1+2a \mbox{ and } g'_0(x) < 1-2a\mbox{ for each } x\leq \ep.\]
By uniform convergence, there is  $\lambda_0< a h \ep$ such that 
\begin{equation}\label{der2}
(g^\lambda_1)'(x) > 1+2a \mbox{ and } (g^\lambda_0)'(x) < 1-2a\mbox{ for each } x\leq \ep.
\end{equation}
Let $\lambda <\lambda_0$. By (\ref{der2}), one has 
\[g^\lambda_1(x) \geq g^\lambda_1(0) + \ep (1+2a)\geq \ep +a\ep,\]
\[g_0^\lambda(\ep) \leq g_0^\lambda (0) +(1-2a)\ep = \lambda\frac{1-h}{h} +(1-2a)\ep \leq \ep -a \ep.\]
Finally, with $K:=\frac{1}{2ah}$, one has 
\begin{eqnarray*}
g_0^\lambda(K\lambda) &\leq & g^\lambda_0(0) +(1-a) K\lambda \\
&\leq & K\lambda +\lambda \left(\frac{1-h}{h} -a K\right) \\
&\leq & K\lambda -\lambda.
\end{eqnarray*}

Lemma \ref{lemm summ} collects these results.
\begin{lemma} \label{lemm summ}
There is $a,\ep,\lambda_0,K>0$ such that the following holds for each $\lambda <\lambda_0$:
\begin{enumerate}
\item $g_1^\lambda(x) \geq x + a\ep$ for all $x\in [\ep,1-\ep]$.
\item $g^\lambda_1(1-K\lambda) \geq 1-K\lambda +\lambda$.
\end{enumerate}
\end{lemma}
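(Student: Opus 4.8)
The plan is to read both inequalities off the pointwise estimates on $g^\lambda_1(\ep)$, $g^\lambda_0(\ep)$ and $g^\lambda_0(K\lambda)$ already obtained in the paragraph preceding the statement, upgrading them where necessary by exploiting two structural facts about $\sigma_\lambda$: the concavity of $g^\lambda_1$ (established just above) and the symmetry identity $g^\lambda_1(1-x)=1-g^\lambda_0(x)$, valid for all $x\in[0,1]$. I would record this identity first. Since $\sigma_\lambda$ is symmetric, the probabilities in (\ref{eq p}) satisfy $\phi^\lambda_1(k)=1-\phi^\lambda_0(n-k)$ for every $k$; substituting $j=n-k$ in $g^\lambda_1(1-x)=\sum_{k=0}^n {n\choose k}(1-x)^k x^{n-k}\phi^\lambda_1(k)$ and using $\sum_{j=0}^n {n\choose j} x^j(1-x)^{n-j}=1$ then yields $g^\lambda_1(1-x)=1-g^\lambda_0(x)$. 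Throughout, $a,\ep,\lambda_0,K$ are exactly the constants fixed in the preceding computation, so no new quantifiers are introduced.

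For part 1, I would set $d(x):=g^\lambda_1(x)-x$. Because $g^\lambda_1$ is concave and $x\mapsto x$ is affine, $d$ is concave on $[0,1]$, so its minimum over $[\ep,1-\ep]$ is attained at one of the two endpoints; it therefore suffices to bound $d(\ep)$ and $d(1-\ep)$ from below by $a\ep$. At the left endpoint, the derivative bound $(g^\lambda_1)'>1+2a$ on $[0,\ep]$ from (\ref{der2}), together with $g^\lambda_1(0)=\lambda>0$, gives $g^\lambda_1(\ep)>(1+2a)\ep$, hence $d(\ep)>2a\ep\geq a\ep$. At the right endpoint, the symmetry identity gives $d(1-\ep)=\bigl(1-g^\lambda_0(\ep)\bigr)-(1-\ep)=\ep-g^\lambda_0(\ep)$, and the estimate $g^\lambda_0(\ep)\leq \ep-a\ep$ established above yields $d(1-\ep)\geq a\ep$. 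Concavity then propagates the bound to the whole interval $[\ep,1-\ep]$.

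Part 2 is immediate from the same symmetry identity: $g^\lambda_1(1-K\lambda)=1-g^\lambda_0(K\lambda)$, and the estimate $g^\lambda_0(K\lambda)\leq K\lambda-\lambda$ obtained above gives $g^\lambda_1(1-K\lambda)\geq 1-(K\lambda-\lambda)=1-K\lambda+\lambda$, as claimed.

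The only non-routine point — and hence the main, if mild, obstacle — is the passage in part 1 from the pointwise control at $x=\ep$ to a lower bound valid uniformly on $[\ep,1-\ep]$. This is precisely what concavity of $g^\lambda_1$ buys, since it reduces the claim to the two endpoints, while the symmetry relation lets the right endpoint be handled by the already-available bound on $g^\lambda_0(\ep)$ rather than requiring a fresh estimate of $g^\lambda_1$ near $x=1$.
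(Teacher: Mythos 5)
Your proof is correct and follows essentially the same route as the paper's: part 2 is read off the symmetry identity $g^\lambda_1(1-x)=1-g^\lambda_0(x)$ together with the pre-established bound on $g^\lambda_0(K\lambda)$, and part 1 combines concavity of $g^\lambda_1$ (hence of $d(x)=g^\lambda_1(x)-x$) with the two endpoint estimates, the left one from \eqref{der2} and the right one again via symmetry from the bound on $g^\lambda_0(\ep)$. Your write-up is merely more explicit than the paper's two-line proof — in particular in deriving the symmetry identity from $\phi^\lambda_1(k)=1-\phi^\lambda_0(n-k)$ and in stating the endpoint bounds as $d(\ep),d(1-\ep)\geq a\ep$ (which is exactly what concavity needs, and slightly cleaner than the paper's phrasing) — but the argument is the same.
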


\begin{proof}
The second item holds since $g^\lambda_0(K\lambda)+g^\lambda_1(1-K\lambda )=1$ by symmetry of $g_\theta^\lambda$.

The first item holds since $g^\lambda_1$ is concave and since the inequality $g^\lambda_1(x)\geq x +ax$ holds for $x=\ep$ and $x=1-\ep$ (again using the symmetry of $g^\lambda_\theta$).
\end{proof}
\medskip

We henceforth that $a, \ep, \lambda$ and $K$  are as given in Lemma \ref{lemm summ}. 

\subsection{Conclusion: the case $\rho=1$}\label{sec rho=1} 

The conclusion of the theorem follows from the lemma below,  from the estimates in Section \ref{sec bounds}, and from the formula for $\beta^\lambda(0)$.

\begin{lemma}
There exists $A$ such that 
\[m_\lambda(0,1-K\lambda)\leq -A\ln \lambda \mbox{ for all } \lambda <\lambda_0.\]
\end{lemma}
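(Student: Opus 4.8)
The plan is to follow the orbit $x_l:=(g_1^\lambda)^l(0)$ and split its ascent from $0$ to $1-K\lambda$ into three phases, bounding the number of iterations in each. I will use the facts already assembled in the construction of $\sigma_\lambda$: that $g_1^\lambda(0)=\phi_1^\lambda(0)=\beta^\lambda(0)(1-H_1(\tfrac12))=\tfrac{\lambda}{h}\,h=\lambda$ and $g_0^\lambda(0)=\lambda\tfrac{1-h}{h}$; that $g_\theta^\lambda$ is increasing; the derivative bounds $(g_1^\lambda)'(x)>1+2a$ and $(g_0^\lambda)'(x)<1-2a$ for $x\le\ep$ from (\ref{der2}); the symmetry $g_1^\lambda(x)=1-g_0^\lambda(1-x)$; and the increment estimate of Lemma \ref{lemm summ}. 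First I would shrink $\lambda_0$ if necessary so that $K\lambda<\ep$, which guarantees $1-K\lambda>1-\ep$ and that the three phases are genuinely ordered. Note also that $x_1=g_1^\lambda(0)=\lambda>0$, so the orbit starts moving.

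In the first phase (escape from $0$ to $\ep$), for $x_l\le\ep$ the mean value theorem together with $(g_1^\lambda)'>1+2a$ gives $x_{l+1}\ge g_1^\lambda(0)+(1+2a)x_l\ge(1+2a)x_l$, so that $x_l\ge(1+2a)^{l-1}\lambda$ as long as the orbit stays below $\ep$. Hence the first index $l_1$ with $x_{l_1}>\ep$ obeys $(1+2a)^{l_1-2}\lambda\le\ep$, i.e. $l_1\le 2+\frac{\ln(\ep/\lambda)}{\ln(1+2a)}=O(-\ln\lambda)$. In the second phase (crossing $[\ep,1-\ep]$), item~1 of Lemma \ref{lemm summ} says each step with $x_l\in[\ep,1-\ep]$ advances the orbit by at least $a\ep$, so after at most $\lceil 1/(a\ep)\rceil$ further steps the orbit exits above $1-\ep$ (or already exceeds $1-K\lambda$, in which case we are finished sooner); this contributes only a constant number of iterations.

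In the third phase (approach to $1-K\lambda$) I would pass to the distance-from-one variable $z_l:=1-x_l$. By symmetry $z_{l+1}=g_0^\lambda(z_l)$, and for $z_l\le\ep$ the bound $(g_0^\lambda)'<1-2a$ with $g_0^\lambda(0)=\lambda\tfrac{1-h}{h}$ yields the affine contraction $z_{l+1}\le(1-2a)z_l+\lambda\tfrac{1-h}{h}$. Its fixed point is $z^*=\frac{\lambda(1-h)}{2ah}=K\lambda(1-h)$ (using $K=\tfrac1{2ah}$), which is strictly below $K\lambda$; iterating from $z\le\ep$ gives $z_l-z^*\le(1-2a)^l\ep$, so $z_l\le K\lambda$ as soon as $(1-2a)^l\ep\le K\lambda\,h$, i.e. after $O(-\ln\lambda)$ steps, at which point $x_l\ge 1-K\lambda$. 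Summing the three bounds gives $m_\lambda(0,1-K\lambda)\le -A\ln\lambda$ for a suitable constant $A$ and all $\lambda<\lambda_0$, after possibly shrinking $\lambda_0$ so that the additive constants are absorbed into the logarithmic term (which diverges as $\lambda\to0$).

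The genuinely nontrivial steps are the first and third phases: the multiplicative escape from $0$ and the geometric approach to $1$ are what produce the $-\ln\lambda$ order, while the middle phase is immediate from the uniform increment. I expect the main obstacle to be the third phase, where one must use the symmetry and the contraction computation to verify that $1-K\lambda$ lies below the attracting fixed point $1-z^*$ and is therefore actually reached, and to extract the logarithmic step count from the geometric rate $1-2a$ rather than merely concluding convergence.
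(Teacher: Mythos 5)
Your proposal is correct and follows essentially the same route as the paper's proof: the same three-phase decomposition $m_\lambda(0,\ep)+m_\lambda(\ep,1-\ep)+m_\lambda(1-\ep,1-K\lambda)$, with geometric escape from $0$ at rate $1+2a$ via the derivative bound (\ref{der2}), the uniform increment $a\ep$ from Lemma \ref{lemm summ} in the middle, and geometric contraction near $1$ yielding the $-\ln\lambda$ order. The only difference is presentational: you track multiplicative growth of the orbit itself rather than of its increments, and your explicit affine-contraction computation (fixed point $z^*=K\lambda(1-h)<K\lambda$) spells out what the paper compresses into ``the same argument'' for the third phase.
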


\begin{proof}
Fix $\lambda$. One has 
\[m_\lambda(0,1-K\lambda)\leq m_\lambda(0,\ep)+m_\lambda(\ep,1-\ep) + m_\lambda(1-\ep, 1-K\lambda).\]
For each $x\leq \ep$ and $l$ such that $(g^\lambda_1)^l(x) \leq \ep$, one has 
\begin{eqnarray*}
(g^\lambda_1)^l(x) -(g^\lambda_1)^{l-1}(x) &\geq & (1+a) \left((g^\lambda_1)^{l-1}(x) -(g^\lambda_1)^{l-2}(x)\right)  \\
&\geq &(1+a)^{l-1} \left(g^\lambda_1(x)-x\right),
\end{eqnarray*}
 and therefore, $(g^\lambda_1)^l(x) \geq (1+a)^{l-1} \lambda$, since $g^\lambda_1(0)= \lambda$. This  implies $\displaystyle m_\lambda(0,\ep) \leq -\frac{\ln \lambda}{\ln (1+a)}$.

The same argument shows that $\displaystyle m(1-\ep,1-K\lambda) \leq \frac{\ln \lambda}{\ln (1-a)}$. 

Finally, thanks to the first proof of Lemma \ref{lemm summ} again, one has $m(\ep,1-\ep) \leq \frac{1-2\ep}{a\ep}$. 

\end{proof}
\subsection{Conclusion: the case $\rho<1$}\label{sec rho<1}

For $\rho=1$, the steady-state fraction of agents who play the correct action is given by $\displaystyle I_0:= \int_{\Theta\times [0,1]} (1-|\theta-x|)d\mu(\theta,x)$. For $\rho<1$, this is no longer true since $\mu$ is the steady-state distribution of $(\theta_t,x_t)$, where the definition of $x_t$ accounts for the fact that agents sample from the (possibly) distant past. The correct formula involves the fraction 
\[\chi_{t+1}= \frac{1}{\rho}\left(x_{t+1} -(1-\rho) x_t\right)\]
of agents who play action 1 in period $t+1$.

Denote by $\mu^*$ the (steady-state) distribution of $(\theta_t,x_t,\theta_{t+1},x_{t+1})$ induced by $\mu$ over two consecutive periods. The correct formula is then given by 
\[I_1 :=\int \left(1-\left|\theta_{t+1}-\frac{1}{\rho} \left(x_{t+1}-(1-\rho)x_t\right)\right|\right) d\mu^*(\theta_t,x_t,\theta_{t+1},x_{t+1}),\]
where the integral is taken over $\Theta\times [0,1]\times \Theta\times [0,1]$.

Lemma \ref{lemm fin} compares $I_0$ and $I_1$.

\begin{lemma}\label{lemm fin}
One has $\displaystyle I_1\geq I_0 -2\frac{1-\rho}{\rho} \lambda$.
\end{lemma}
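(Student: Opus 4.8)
The plan is to reduce the claim to a single bound on the average one-step movement of the population state, namely $\int|x_{t+1}-x_t|\,d\mu^*\le 2\lambda$, and then to establish that bound from stationarity together with the monotonicity of the planner dynamics $g^\lambda_\theta$. First I would record the elementary identity linking $\chi_{t+1}$ to $x$: since $x_{t+1}=(1-\rho)x_t+\rho\chi_{t+1}$ by \eqref{motionX}, one has $\chi_{t+1}-x_{t+1}=\frac{1-\rho}{\rho}(x_{t+1}-x_t)$. The marginal of $\mu^*$ on $(\theta_{t+1},x_{t+1})$ is $\mu$, so $I_0=1-\int|\theta_{t+1}-x_{t+1}|\,d\mu^*$ while $I_1=1-\int|\theta_{t+1}-\chi_{t+1}|\,d\mu^*$, and the reverse triangle inequality gives
\[ I_0-I_1=\int\big(|\theta_{t+1}-\chi_{t+1}|-|\theta_{t+1}-x_{t+1}|\big)\,d\mu^*\le\int|\chi_{t+1}-x_{t+1}|\,d\mu^*=\tfrac{1-\rho}{\rho}\int|x_{t+1}-x_t|\,d\mu^*. \]
Everything thus reduces to showing $\int|x_{t+1}-x_t|\,d\mu^*\le 2\lambda$.

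Next I would exploit the sign structure of the increments $D:=x_{t+1}-x_t=g^\lambda_{\theta_{t+1}}(x_t)-x_t$. Because $g^\lambda_1$ is increasing and concave (Lemma \ref{lemm summ}) with $g^\lambda_1(0)=\lambda>0$ and $g^\lambda_1(1)<1$, it has a unique fixed point $x_1^*\in(0,1)$ with $g^\lambda_1\ge\mathrm{id}$ on $[0,x_1^*]$; by symmetry $g^\lambda_0$ has fixed point $1-x_1^*$ and $g^\lambda_0\le\mathrm{id}$ on $[1-x_1^*,1]$. The band $[1-x_1^*,x_1^*]$ is forward-invariant under both maps and attracting, so $\mu$ is supported on $\Theta\times[1-x_1^*,x_1^*]$, where $g^\lambda_1(x)\ge x\ge g^\lambda_0(x)$. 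Hence $\mu^*$-a.s. $D\ge 0$ when $\theta_{t+1}=1$ and $D\le 0$ when $\theta_{t+1}=0$. Since stationarity gives $\int D\,d\mu^*=\E[x_{t+1}]-\E[x_t]=0$, we get $\int D^+\,d\mu^*=\int D^-\,d\mu^*$ and therefore $\int|x_{t+1}-x_t|\,d\mu^*=2\int(x_{t+1}-x_t)\mathbf{1}_{\theta_{t+1}=1}\,d\mu^*$.

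It then remains to bound $\int(x_{t+1}-x_t)\mathbf{1}_{\theta_{t+1}=1}\,d\mu^*$ by $\lambda$, which I would obtain from a conservation identity for $x\mathbf{1}_{\theta=1}$. Invariance of $\mu$ gives $\E[x_{t+1}\mathbf{1}_{\theta_{t+1}=1}]=\E[x_t\mathbf{1}_{\theta_t=1}]$; using $x_{t+1}=g^\lambda_1(x_t)$ on $\{\theta_{t+1}=1\}$ and conditioning on $(\theta_t,x_t)$ via $\prob(\theta_{t+1}=1\mid\theta_t)=(1-\lambda)\mathbf{1}_{\theta_t=1}+\lambda\mathbf{1}_{\theta_t=0}$ yields
\[ \E\big[(g^\lambda_1(x_t)-x_t)\mathbf{1}_{\theta_t=1}\big]=\lambda\Big(\E[g^\lambda_1(x_t)\mathbf{1}_{\theta_t=1}]-\E[g^\lambda_1(x_t)\mathbf{1}_{\theta_t=0}]\Big), \]
whose right-hand side is at most $\lambda\,\E[g^\lambda_1(x_t)\mathbf{1}_{\theta_t=1}]\le\lambda/2$ because $0\le g^\lambda_1\le 1$ and $\prob(\theta_t=1)=\tfrac12$. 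Decomposing the target the same way, $\int(x_{t+1}-x_t)\mathbf{1}_{\theta_{t+1}=1}\,d\mu^*=(1-\lambda)\E[(g^\lambda_1(x_t)-x_t)\mathbf{1}_{\theta_t=1}]+\lambda\E[(g^\lambda_1(x_t)-x_t)\mathbf{1}_{\theta_t=0}]$, and both terms are at most $\lambda/2$ — the first by the identity above, the second because $0\le g^\lambda_1(x_t)-x_t\le 1$ on the support. Thus the integral is at most $\lambda$, giving $\int|x_{t+1}-x_t|\,d\mu^*\le 2\lambda$, and combining with the first display yields $I_1\ge I_0-2\tfrac{1-\rho}{\rho}\lambda$.

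I expect the main obstacle to be the second step: justifying the exact $\mu^*$-a.s.\ sign pattern of $D$, i.e.\ that $\mu$ charges only the invariant band $[1-x_1^*,x_1^*]$ on which $g^\lambda_1\ge\mathrm{id}\ge g^\lambda_0$. This is where the structural properties of $g^\lambda_\theta$ (increasing, concave for $\theta=1$/convex for $\theta=0$, and symmetric) collected near Lemma \ref{lemm summ} do the work, through the unique–fixed–point and attracting–interval argument. Once that sign structure is secured, the conversion of $\E|D|$ into a signed integral and the conservation identity are routine algebra, and the factor $2$ emerges precisely.
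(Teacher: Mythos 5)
Your proof is correct, but it takes a genuinely different route from the paper's. The paper's own argument is a short, strategy-free computation: it expands $1-I_1$ exactly, writing $|\theta_{t+1}-\chi_{t+1}|=\mathbf{1}_{\theta_{t+1}=1}(1-\chi_{t+1})+\mathbf{1}_{\theta_{t+1}=0}\chi_{t+1}$ with $\chi_{t+1}=\frac{1}{\rho}\left(x_{t+1}-(1-\rho)x_t\right)$, to get $1-I_1=\frac{1}{\rho}\int|\theta_{t+1}-x_{t+1}|\,d\mu^*-\frac{1-\rho}{\rho}\int\left\{\mathbf{1}_{\theta_{t+1}=1}(1-x_t)+\mathbf{1}_{\theta_{t+1}=0}x_t\right\}d\mu^*$; it then swaps $\theta_{t+1}$ for $\theta_t$ in the second integral at a cost of at most $2\lambda$ (the state switches with probability $\lambda$ and the integrands are bounded by $1$), and concludes because both marginals of $\mu^*$ equal $\mu$, so $\int|\theta_{t+1}-x_{t+1}|\,d\mu^*=\int|\theta_t-x_t|\,d\mu^*=1-I_0$. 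Nothing about $\sigma_\lambda$ is used: the paper's bound holds for any strategy and any invariant measure. You instead reduce the lemma, via $\chi_{t+1}-x_{t+1}=\frac{1-\rho}{\rho}(x_{t+1}-x_t)$ and the triangle inequality, to the displacement bound $\int|x_{t+1}-x_t|\,d\mu^*\le 2\lambda$, and prove that using the specific structure of the planner's dynamics (monotone, concave/convex, symmetric, hence a unique interior fixed point $x_1^*$ and the invariant band $[1-x_1^*,x_1^*]$) plus stationarity; your conservation identity and the factor $2$ check out. Two caveats. First, your argument is strategy-specific where the paper's is not, so it proves the lemma only for $\sigma_\lambda$ (which is all that is needed here, but it is less robust). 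Second, the support claim $\mu\left(\Theta\times[1-x_1^*,x_1^*]\right)=1$, which you rightly flag as the crux, still needs to be closed: one way is to note that for each $\delta>0$ no point below $x_1^*+\delta$ can map above $x_1^*+\delta$ (both $g^\lambda_\theta$ are increasing and send $x_1^*+\delta$ strictly below itself), while every point above $x_1^*+\delta$ moves down by at least some $\eta(\delta)>0$ under either map; stationarity then forces $\mu\left(\{x>x_1^*+\delta\}\right)=0$, since otherwise trajectories starting there would almost surely remain in the set (mass cannot leak out of a set it cannot enter, by invariance) yet decrease by $\eta(\delta)$ every period. You should also verify $x_1^*>\frac12$ (e.g., from $g^\lambda_1(\frac12)>\frac12$) so the band is nonempty. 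What your route buys, at the price of length and generality, is the intermediate estimate $\int|x_{t+1}-x_t|\,d\mu^*\le 2\lambda$, a clean bound on the average speed of the population state that is of independent interest.
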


This shows that the arguments in the earlier sections also deliver the result in the case $\rho<1$.
\medskip\noindent

\begin{proof}
One has 
\begin{eqnarray*}
1-I_1 & = & \int \left\{ 1_{\theta_{t+1}=1} \left(1-\frac{1}{\rho}\left(x_{t+1}-(1-\rho)x_t\right)\right) +1_{\theta_{t+1}=0}\times \frac{1}{\rho} \left(x_{t+1}-(1-\rho)x_t\right)\right\}d \mu^*(\theta_t,x_t,\theta_{t+1},x_{t+1}) \\
&=& \frac{1}{\rho} \int |\theta_{t+1}-x_{t+1}| d\mu^*  - \frac{1-\rho}{\rho}\int \left\{ 1_{\theta_{t+1}=1}(1-x_t) +1_{\theta_{t+1}=0}x_t \right\}d \mu^*.
\end{eqnarray*}

On the other hand, and because the state changes with probability $\lambda$ in each period, one has 
\[\int  1_{\theta_{t+1}=1}(1-x_t) d \mu^*  \leq  \int 1_{\theta_{t}=1}(1-x_t) d \mu^* +\lambda\]
and \[\int 1_{\theta_{t+1}=0}x_t d \mu^* \leq \int 1_{\theta_{t}=0}x_t d \mu^* +\lambda.\]
Substituting in the previous equality, we obtain
\[1-I_1 \leq  \frac{1}{\rho} \int |\theta_{t+1}-x_{t+1}| d\mu^* -\frac{1-\rho}{\rho}\int |\theta_t-x_t| d\mu^* +2\lambda\frac{1-\rho}{\rho}.\]
Since the marginals of $\mu^*$ on $(\theta_t,x_t)$ and $(\theta_{t+1}, x_{t+1})$ are equal to $\mu$, it follows that 
\[1-I_1 \leq \int_{\Theta\times [0,1]} |\theta-x| d\mu(\theta,x) +2\lambda  = 1-I_0 +2\lambda\frac{1-\rho}{\rho}.\]
\end{proof}

\end{appendices}

 \end{document}